\newif\ifdraft \draftfalse
   \newcommand\SkipToFmtEnd{}%
   \newcommand\EndFmtInput{}%
   \long\def\SkipToFmtEnd#1\EndFmtInput{}%
\newcommand\ReadOnlyOnce[1]{\@ifundefined{#1}{\@namedef{#1}{}}\SkipToFmtEnd}
\DeclareFontFamily{OT1}{cmtex}{}
\DeclareFontShape{OT1}{cmtex}{m}{n}
  {<5><6><7><8>cmtex8
   <9>cmtex9
   <10><10.95><12><14.4><17.28><20.74><24.88>cmtex10}{}
\DeclareFontShape{OT1}{cmtex}{m}{it}
  {<-> ssub * cmtt/m/it}{}
\DeclareFontShape{OT1}{cmtt}{bx}{n}
  {<5><6><7><8>cmtt8
   <9>cmbtt9
   <10><10.95><12><14.4><17.28><20.74><24.88>cmbtt10}{}
\DeclareFontShape{OT1}{cmtex}{bx}{n}
  {<-> ssub * cmtt/bx/n}{}
\newcommand{\Conid}[1]{\mathit{#1}}
\newcommand{\Varid}[1]{\mathit{#1}}
\newcommand{\anonymous}{\kern0.06em \vbox{\hrule\@width.5em}}
\newcommand{\bind}{\mathbin{>\!\!\!>\mkern-6.7mu=}}
\newdimen\mathindent\mathindent\leftmargini}%
\def\resethooks{%
  \global\let\SaveRestoreHook\empty
  \global\let\ColumnHook\empty}
\newcommand*{\savecolumns}[1][default]%
  {\g@addto@macro\SaveRestoreHook{\savecolumns[#1]}}
\newcommand*{\restorecolumns}[1][default]%
  {\g@addto@macro\SaveRestoreHook{\restorecolumns[#1]}}
\newcommand*{\aligncolumn}[2]%
  {\g@addto@macro\ColumnHook{\column{#1}{#2}}}
\newcommand{\onelinecommentchars}{\quad-{}- }
\newcommand{\commentbeginchars}{\enskip\{-}
\newcommand{\commentendchars}{-\}\enskip}
\newcommand{\visiblecomments}{%
  \let\onelinecomment=\onelinecommentchars
  \let\commentbegin=\commentbeginchars
  \let\commentend=\commentendchars}
\newcommand{\invisiblecomments}{%
  \let\onelinecomment=\empty
  \let\commentbegin=\empty
  \let\commentend=\empty}
\newlength{\blanklineskip}
\newcommand{\hsindent}[1]{\quad}
\let\hspre\empty
\let\hspost\empty
\newcommand{\hsnewpar}[1]%
  {{\parskip=0pt\parindent=0pt\par\vskip #1\noindent}}
\newcommand{\hscodestyle}{}
\newcommand{\sethscode}[1]%
  {\expandafter\let\expandafter\hscode\csname #1\endcsname
   \expandafter\let\expandafter\endhscode\csname end#1\endcsname}
   \let\hspre\(\let\hspost\)%
   \let\hspre\(\let\hspost\)%
\newcommand{\plainhs}{\sethscode{plainhscode}}
\def\codeframewidth{\arrayrulewidth}
   \let\endoflinesave=\\
   \framedhslinecorrect\endoflinesave{.5ex}\hline
\newcommand{\framedhslinecorrect}[2]%
  {#1[#2]}
\def\column##1##2{}%
   \newcommand\>[1][]{}\newcommand\<[1][]{}\newcommand\\[1][]{}%
   \def\fromto##1##2##3{##3}%
\let\orighscode=\hscode
   \let\origendhscode=\endhscode
   \def\endhscode{\def\hscode{\endgroup\def\@currenvir{hscode}\\}\begingroup}
\def\hscode{\endgroup\def\@currenvir{hscode}}}%
   \global\let\hscode=\orighscode
   \global\let\endhscode=\origendhscode}%
\let\HaskellResetHook\empty
\newcommand*{\AtHaskellReset}[1]{%
  \g@addto@macro\HaskellResetHook{#1}}
\newcommand*{\HaskellReset}{\HaskellResetHook}
\newcommand\hsforall{\global\let\hsdot=\hsperiodonce}
\newcommand*\hsperiodonce[2]{#2\global\let\hsdot=\hscompose}
\newcommand*\hscompose[2]{#1}
\def\get#1{\Varid{get}_{\mskip-2mu#1}}
\def\set#1{\Varid{set}_{\mskip-2mu#1}}
\def\commentbegin{\quad$[\![\enskip$}
\def\commentend{$\enskip]\!]$}
\title{Reflections on monadic lenses}
\def\oxford{\inst{1}}
\def\edinburgh{\inst{2}}
\author{Faris~Abou-Saleh\oxford \and 
  James~Cheney\edinburgh \and 
  Jeremy~Gibbons\oxford \and 
  James~McKinna\edinburgh \and 
  Perdita~Stevens\edinburgh}
\authorrunning{F.\,Abou-Saleh,
  J.\,Cheney,
  J.\,Gibbons,
  J.\,McKinna,
  P.\,Stevens}
\institute{University of Oxford, \email{firstname.lastname@cs.ox.ac.uk}
\and University of Edinburgh, \email{firstname.lastname@ed.ac.uk}
}
\begin{document}
\maketitle

\begin{abstract}
  Bidirectional transformations (bx) have primarily been modeled as
  pure functions, and do not account for the possibility of the
  side-effects  that are available in most programming languages. Recently
  several formulations of bx that use monads to account for effects
  have been proposed, both among practitioners and in academic
  research. The combination of bx with effects turns out to be
  surprisingly subtle, leading to problems with some of these
  proposals and increasing the complexity of others. This paper
  reviews the proposals for monadic lenses to date,
  and offers some improved definitions, paying particular attention to
  the obstacles to naively adding monadic effects to existing
  definitions of pure bx such as lenses and symmetric lenses, and the
  subtleties of equivalence of symmetric bidirectional transformations
  in the presence of effects.
\end{abstract}


\if 0

Bidirectional transformations (bx) have primarily been modeled as pure
functions, and do not account for the possibility of the side-effects
that are available in most programming languages. Recently several
formulations of bx that use monads to account for effects have been
proposed, both among practitioners and in academic research. The
combination of bx with effects turns out to be surprisingly subtle,
leading to problems with some of these proposals and increasing the
complexity of others. This paper reviews the proposals for monadic
lenses to date, and offers some improved definitions, paying
particular attention to the obstacles to naively adding monadic
effects to existing definitions of pure bx such as lenses and
symmetric lenses, and the subtleties of equivalence of symmetric
bidirectional transformations in the presence of effects.

\fi
\section{Introduction}

Programming with multiple concrete representations of the same conceptual
information is a commonplace, and challenging, problem. It is commonplace
because data is everywhere, and not all of it is relevant or appropriate
for every task: for example, one may want to work with only a subset of
one's full email account on a mobile phone or other low-bandwidth device.
It is challenging because the most direct approach to mapping data across
sources \ensuremath{\Conid{A}} and \ensuremath{\Conid{B}} is to write separate functions,
one mapping to \ensuremath{\Conid{B}} and one to \ensuremath{\Conid{A}},
following some (not always explicit) specification of what it
means for an \ensuremath{\Conid{A}} value and a \ensuremath{\Conid{B}} value to be \emph{consistent}. Keeping
these transformations coherent with each other, and with the specification,
is a considerable maintenance burden, yet it remains the main approach
found in practice.

Over the past decade, a number of promising proposals to ease
programming such \emph{bidirectional transformations} have emerged,
including \emph{lenses}~\citep{lens-toplas}, bx based on consistency
relations~\citep{stevens09:sosym}, \emph{symmetric
  lenses}~\citep{symlens}, and a number of variants and
extensions (e.g.~\citep{pacheco14pepm,johnson14bx}).  
Most of these proposals consist of an interface
with pure functions and some equational laws that characterise
good behaviour; the interaction of bidirectionality with
other effects has received comparatively little attention.  


Some programmers and researchers have already proposed ways to combine
lenses and monadic effects~\citep{reddit,pacheco14pepm}.  Recently, we
have proposed symmetric notions of bidirectional computation based on
\emph{entangled state monads}~\citep{cheney14bx2,abousaleh15mpc} and
\emph{coalgebras}~\citep{abousaleh15bx}.  As a result, there are now
several alternative proposals for bidirectional transformations with
effects.  While this diversity is natural and healthy, reflecting an
active research area, the different proposals tend to employ somewhat
different terminology, and the relationships among them are not well
understood.  Small differences in definitions can have
disproportionate impact.  

In this paper we summarise and compare the existing proposals, offer
some new alternatives, and attempt to provide general and useful
definitions of ``monadic lenses'' and ``symmetric monadic lenses''.
Perhaps surprisingly, it appears challenging even to define the
composition of lenses in the presence of effects, especially in the
symmetric case.  We first review the definition of pure asymmetric
lenses and two
prior
proposals for extending them with monadic effects. These
definitions have some limitations, and we propose a new definition of
monadic lens that overcomes them.  

Next we consider the symmetric case. The effectful bx and coalgebraic
bx in our previous work are symmetric, but their definitions rely on
relatively heavyweight machinery (monad transformers and morphisms,
coalgebra).  It seems natural to ask whether just adding monadic
effects to symmetric lenses in the style of Hofmann et al.~\citep{symlens} would also
work.  We show that, as for asymmetric lenses, adding monadic effects 
to symmetric lenses is challenging, and give examples illustrating
the problems with the most obvious generalisation.
We then briefly discuss our recent work on symmetric forms of bx with
monadic effects~\citep{cheney14bx2,abousaleh15mpc,abousaleh15bx}.
Defining composition for these approaches also turns out to be tricky,
and our definition of monadic lenses arose out of exploring this
space.  The essence of composition of symmetric monadic bx, we now
believe, can be presented most easily in terms of monadic lenses, by
considering \emph{spans}, an approach also advocated (in the pure
case) by \citet{johnson14bx}.

Symmetric pure bx need to be equipped with a notion of equivalence, to
abstract away inessential differences of representation of their
``state'' or ``complement'' spaces.  As noted by \citet{symlens} and \citet{johnson14bx},
isomorphism of state spaces is unsatisfactory, and there are competing
proposals for equivalence of symmetric lenses and spans.  In the case
of spans of monadic lenses, the right notion of equivalence seem even
less obvious.  We compare three, increasingly coarse, equivalences
of spans based on isomorphism (following~\citet{abousaleh15mpc}), span
equivalence (following~\citet{johnson14bx}), and bisimulation
(following~\citet{symlens} and \citet{abousaleh15bx}).  In addition, we show a (we
think surprising) result: in the pure case, span equivalence and
bisimulation equivalence coincide.

In this paper we employ Haskell-like notation to describe and compare
formalisms, with a few conventions: we write function composition \ensuremath{\Varid{f}\hsdot{\mathbin{\cdot}}{\mathrel{.}}\Varid{g}} with a centred dot, and use a lowered dot for field lookup
\ensuremath{\Varid{x}\mathord{.}\Varid{f}}, in contrast to Haskell's notation \ensuremath{\Varid{f}\;\Varid{x}}.  Throughout the
paper, we introduce a number of different representations of lenses,
and rather than pedantically disambiguating them all, we freely
redefine identifiers as we go.  We assume familiarity with common uses
of monads in Haskell to encapsulate
effects (following \citet{DBLP:conf/afp/Wadler95}), and with the
\ensuremath{\mathbf{do}}-notation (following Wadler's
monad comprehensions~\citet{DBLP:journals/mscs/Wadler92}).
Although some of these ideas are present or implicit in recent
papers~\citep{symlens,johnson14bx,cheney14bx2,abousaleh15mpc,abousaleh15bx},
this paper reflects our desire to clarify these ideas and expose them
in their clearest form --- a desire that is strongly influenced by
Wadler's work on a wide variety of related
topics~\citep{DBLP:journals/mscs/Wadler92,DBLP:conf/fp/KingW92,DBLP:conf/afp/Wadler95},
and by our interactions with him as a colleague.

\section{Asymmetric monadic lenses}\label{sec:asymmetric}

Recall that a \emph{lens}~\citep{lens-toplas,foster10ssgip} is a pair of functions, usually called
\emph{get} and \emph{put}:
\begin{hscode}\SaveRestoreHook
\column{B}{@{}>{\hspre}l<{\hspost}@{}}%
\column{28}{@{}>{\hspre}l<{\hspost}@{}}%
\column{E}{@{}>{\hspre}l<{\hspost}@{}}%
\>[B]{}\mathbf{data}\;\alpha\mathbin{\leadsto}\beta\mathrel{=}\Conid{Lens}\;\{\mskip1.5mu {}\<[28]%
\>[28]{}\Varid{get}\mathbin{::}\alpha\to \beta,\Varid{put}\mathbin{::}\alpha\to \beta\to \alpha\mskip1.5mu\}{}\<[E]%
\ColumnHook
\end{hscode}\resethooks
satisfying (at least) the following \emph{well-behavedness} laws:
\begin{hscode}\SaveRestoreHook
\column{B}{@{}>{\hspre}l<{\hspost}@{}}%
\column{3}{@{}>{\hspre}l<{\hspost}@{}}%
\column{12}{@{}>{\hspre}c<{\hspost}@{}}%
\column{12E}{@{}l@{}}%
\column{16}{@{}>{\hspre}l<{\hspost}@{}}%
\column{31}{@{}>{\hspre}l<{\hspost}@{}}%
\column{E}{@{}>{\hspre}l<{\hspost}@{}}%
\>[3]{}\mathsf{(GetPut)}{}\<[12]%
\>[12]{}\quad{}\<[12E]%
\>[16]{}\Varid{put}\;\Varid{a}\;(\Varid{get}\;\Varid{a}){}\<[31]%
\>[31]{}\mathrel{=}\Varid{a}{}\<[E]%
\\
\>[3]{}\mathsf{(PutGet)}{}\<[12]%
\>[12]{}\quad{}\<[12E]%
\>[16]{}\Varid{get}\;(\Varid{put}\;\Varid{a}\;\Varid{b}){}\<[31]%
\>[31]{}\mathrel{=}\Varid{b}{}\<[E]%
\ColumnHook
\end{hscode}\resethooks
The idea is that a lens of type \ensuremath{\Conid{A}\mathbin{\leadsto}\Conid{B}} maintains a source of type
\ensuremath{\Conid{A}}, providing a view of type \ensuremath{\Conid{B}} onto it; the well-behavedness laws
capture the intuition that the view faithfully reflects the source: if
we ``get'' a \ensuremath{\Varid{b}} from a source \ensuremath{\Varid{a}} and then ``put'' the same \ensuremath{\Varid{b}} value
back into \ensuremath{\Varid{a}}, this leaves \ensuremath{\Varid{a}} unchanged; and if we ``put'' a \ensuremath{\Varid{b}} into
a source \ensuremath{\Varid{a}} and then ``get'' from the result, we get \ensuremath{\Varid{b}} itself.
Lenses are often
equipped with a \ensuremath{\Varid{create}} function
\begin{hscode}\SaveRestoreHook
\column{B}{@{}>{\hspre}l<{\hspost}@{}}%
\column{28}{@{}>{\hspre}l<{\hspost}@{}}%
\column{69}{@{}>{\hspre}l<{\hspost}@{}}%
\column{E}{@{}>{\hspre}l<{\hspost}@{}}%
\>[B]{}\mathbf{data}\;\alpha\mathbin{\leadsto}\beta\mathrel{=}\Conid{Lens}\;\{\mskip1.5mu {}\<[28]%
\>[28]{}\Varid{get}\mathbin{::}\alpha\to \beta,\Varid{put}\mathbin{::}\alpha\to \beta\to \alpha,{}\<[69]%
\>[69]{}\Varid{create}\mathbin{::}\beta\to \alpha\mskip1.5mu\}{}\<[E]%
\ColumnHook
\end{hscode}\resethooks
 satisfying an additional law:
\begin{hscode}\SaveRestoreHook
\column{B}{@{}>{\hspre}l<{\hspost}@{}}%
\column{5}{@{}>{\hspre}l<{\hspost}@{}}%
\column{14}{@{}>{\hspre}c<{\hspost}@{}}%
\column{14E}{@{}l@{}}%
\column{18}{@{}>{\hspre}l<{\hspost}@{}}%
\column{34}{@{}>{\hspre}l<{\hspost}@{}}%
\column{E}{@{}>{\hspre}l<{\hspost}@{}}%
\>[5]{}\mathsf{(CreateGet)}{}\<[14]%
\>[14]{}\quad{}\<[14E]%
\>[18]{}\Varid{get}\;(\Varid{create}\;\Varid{b}){}\<[34]%
\>[34]{}\mathrel{=}\Varid{b}{}\<[E]%
\ColumnHook
\end{hscode}\resethooks
When the distinction is important, we use the term \emph{full} for
well-behaved lenses equipped with a \ensuremath{\Varid{create}} operation.  It is easy to show that
the source and view types of a full lens must either both be empty or
both non-empty, and that the \ensuremath{\Varid{get}} operation of a full lens is
surjective.

Lenses have been investigated extensively; see for example \citet{foster10ssgip} for a recent tutorial overview.  For the
purposes of this paper, we just recall the definition of
\emph{composition} of lenses:
\begin{hscode}\SaveRestoreHook
\column{B}{@{}>{\hspre}l<{\hspost}@{}}%
\column{3}{@{}>{\hspre}l<{\hspost}@{}}%
\column{28}{@{}>{\hspre}l<{\hspost}@{}}%
\column{E}{@{}>{\hspre}l<{\hspost}@{}}%
\>[3]{}(\mathbin{;})\mathbin{::}(\alpha\mathbin{\leadsto}\beta)\to (\beta\mathbin{\leadsto}\gamma)\to (\alpha\mathbin{\leadsto}\gamma){}\<[E]%
\\
\>[3]{}\Varid{l}_{1}\mathbin{;}\Varid{l}_{2}\mathrel{=}\Conid{Lens}\;{}\<[28]%
\>[28]{}(\Varid{l}_{2}\mathord{.}\Varid{get}\hsdot{\mathbin{\cdot}}{\mathrel{.}}\Varid{l}_{1}\mathord{.}\Varid{get})\;{}\<[E]%
\\
\>[28]{}(\lambda \Varid{a}\;\Varid{c}\to \Varid{l}_{1}\mathord{.}\Varid{put}\;\Varid{a}\;(\Varid{l}_{2}\mathord{.}\Varid{put}\;(\Varid{l}_{1}\mathord{.}\Varid{get}\;\Varid{a})\;\Varid{c}))\;{}\<[E]%
\\
\>[28]{}(\Varid{l}_{1}\mathord{.}\Varid{create}\hsdot{\mathbin{\cdot}}{\mathrel{.}}\Varid{l}_{1}\mathord{.}\Varid{create}){}\<[E]%
\ColumnHook
\end{hscode}\resethooks
which preserves well-behavedness.  


\subsection{A naive approach}\label{sec:naive-lens}

As a first attempt, consider simply adding a monadic effect \ensuremath{\mu} to
the result types of both \ensuremath{\Varid{get}} and \ensuremath{\Varid{put}}.  
\begin{hscode}\SaveRestoreHook
\column{B}{@{}>{\hspre}l<{\hspost}@{}}%
\column{35}{@{}>{\hspre}l<{\hspost}@{}}%
\column{E}{@{}>{\hspre}l<{\hspost}@{}}%
\>[B]{}\mathbf{data}\;\monadic{\alpha\rightsquigarrow_0\beta}{\mu}\mathrel{=}MLens_0\;\{\mskip1.5mu {}\<[35]%
\>[35]{}\Varid{mget}\mathbin{::}\alpha\to \mu\;\beta,\Varid{mput}\mathbin{::}\alpha\to \beta\to \mu\;\alpha\mskip1.5mu\}{}\<[E]%
\ColumnHook
\end{hscode}\resethooks
Such an approach has been considered and discussed in some recent
Haskell libraries and online discussions~\citep{reddit}.  A natural
question arises immediately: what laws should a lens \ensuremath{\Varid{l}\mathbin{::}\monadic{\Conid{A}\rightsquigarrow_0\Conid{B}}{\Conid{M}}} satisfy?  The following generalisations of the laws appear natural:
\begin{hscode}\SaveRestoreHook
\column{B}{@{}>{\hspre}l<{\hspost}@{}}%
\column{3}{@{}>{\hspre}l<{\hspost}@{}}%
\column{12}{@{}>{\hspre}l<{\hspost}@{}}%
\column{46}{@{}>{\hspre}l<{\hspost}@{}}%
\column{E}{@{}>{\hspre}l<{\hspost}@{}}%
\>[3]{}\mathsf{(MGetPut_0)}{}\<[12]%
\>[12]{}\quad\mathbf{do}\;\{\mskip1.5mu \Varid{b}\leftarrow \Varid{mget}\;\Varid{a};\Varid{mput}\;\Varid{a}\;\Varid{b}\mskip1.5mu\}{}\<[46]%
\>[46]{}\mathrel{=}\Varid{return}\;\Varid{a}{}\<[E]%
\\
\>[3]{}\mathsf{(MPutGet_0)}{}\<[12]%
\>[12]{}\quad\mathbf{do}\;\{\mskip1.5mu \Varid{a'}\leftarrow \Varid{mput}\;\Varid{a}\;\Varid{b};\Varid{mget}\;\Varid{a'}\mskip1.5mu\}{}\<[46]%
\>[46]{}\mathrel{=}\mathbf{do}\;\{\mskip1.5mu \Varid{a'}\leftarrow \Varid{mput}\;\Varid{a}\;\Varid{b};\Varid{return}\;\Varid{b}\mskip1.5mu\}{}\<[E]%
\ColumnHook
\end{hscode}\resethooks
that is, if we ``get'' \ensuremath{\Varid{b}} from \ensuremath{\Varid{a}} and then ``put'' the same \ensuremath{\Varid{b}}
value back into \ensuremath{\Varid{a}}, this has the same effect as just returning \ensuremath{\Varid{a}}
(and doing nothing else), and if we ``put'' a value $b$ and then
``get'' the result, this has the same effect as just returning $b$
after doing the ``put''.  
The obvious generalisation of composition from the pure case for these
operations is:
\begin{hscode}\SaveRestoreHook
\column{B}{@{}>{\hspre}l<{\hspost}@{}}%
\column{3}{@{}>{\hspre}l<{\hspost}@{}}%
\column{31}{@{}>{\hspre}l<{\hspost}@{}}%
\column{E}{@{}>{\hspre}l<{\hspost}@{}}%
\>[3]{}(\mathbin{;})\mathbin{::}\monadic{\alpha\rightsquigarrow_0\beta}{\mu}\to \monadic{\beta\rightsquigarrow_0\gamma}{\mu}\to \monadic{\alpha\rightsquigarrow_0\gamma}{\mu}{}\<[E]%
\\
\>[3]{}\Varid{l}_{1}\mathbin{;}\Varid{l}_{2}\mathrel{=}MLens_0\;{}\<[31]%
\>[31]{}(\lambda \Varid{a}\to \mathbf{do}\;\{\mskip1.5mu \Varid{b}\leftarrow \Varid{l}_{1}\mathord{.}\Varid{mget}\;\Varid{a};\Varid{l}_{2}\mathord{.}\Varid{mget}\;\Varid{b}\mskip1.5mu\})\;{}\<[E]%
\\
\>[31]{}(\lambda \Varid{a}\;\Varid{c}\to \mathbf{do}\;\{\mskip1.5mu \Varid{b}\leftarrow \Varid{l}_{1}\mathord{.}\Varid{mget}\;\Varid{a};\Varid{b'}\leftarrow \Varid{l}_{2}\mathord{.}\Varid{mput}\;\Varid{b}\;\Varid{c};\Varid{l}_{1}\mathord{.}\Varid{mput}\;\Varid{a}\;\Varid{b'}\mskip1.5mu\}){}\<[E]%
\ColumnHook
\end{hscode}\resethooks

This proposal has at least two apparent problems.  First, the
\ensuremath{\mathsf{(MGetPut_0)}} law appears to sharply constrain \ensuremath{\Varid{mget}}: indeed, if \ensuremath{\Varid{mget}\;\Varid{a}}
has an irreversible side-effect then \ensuremath{\mathsf{(MGetPut_0)}} 
cannot hold.  This
suggests that \ensuremath{\Varid{mget}} must either be pure, or have side-effects that
are reversible by \ensuremath{\Varid{mput}}, ruling out behaviours such as performing I/O
during \ensuremath{\Varid{mget}}.  Second, it appears difficult to compose these
structures in a way that preserves the laws, unless we again make
fairly draconian assumptions about \ensuremath{\mu}.  
In order to show \ensuremath{\mathsf{(MGetPut_0)}} for the composition \ensuremath{\Varid{l}_{1}\mathbin{;}\Varid{l}_{2}}, it seems necessary to be able to commute \ensuremath{\Varid{l}_{2}\mathord{.}\Varid{mget}} with \ensuremath{\Varid{l}_{1}\mathord{.}\Varid{mget}}
and we also need to know that doing \ensuremath{\Varid{l}_{1}\mathord{.}\Varid{mget}} 
twice is the same as doing it
just once. Likewise, to show \ensuremath{\mathsf{(MPutGet_0)}} we need to commute \ensuremath{\Varid{l}_{2}\mathord{.}\Varid{mget}} with
\ensuremath{\Varid{l}_{1}\mathord{.}\Varid{mput}}.  


\subsection{Monadic put-lenses}

\citet{pacheco14pepm} proposed a variant of lenses
called \emph{monadic putback-oriented lenses}.  For the purposes of
this paper, the putback-orientation of their approach is irrelevant:
we focus on their use of monads, and we provide a slightly simplified
version of their definition:
\begin{hscode}\SaveRestoreHook
\column{B}{@{}>{\hspre}l<{\hspost}@{}}%
\column{35}{@{}>{\hspre}l<{\hspost}@{}}%
\column{E}{@{}>{\hspre}l<{\hspost}@{}}%
\>[B]{}\mathbf{data}\;\monadic{\alpha\rightsquigarrow_1\beta}{\mu}\mathrel{=}MLens_1\;\{\mskip1.5mu {}\<[35]%
\>[35]{}\Varid{mget}\mathbin{::}\alpha\to \beta,\Varid{mput}\mathbin{::}\alpha\to \beta\to \mu\;\alpha\mskip1.5mu\}{}\<[E]%
\ColumnHook
\end{hscode}\resethooks
The main difference from their version is that we remove the \ensuremath{\Conid{Maybe}}
type constructors from the return type of \ensuremath{\Varid{mget}} and the first
argument of \ensuremath{\Varid{mput}}.
Pacheco et al. state laws for these monadic lenses.  First, they
assume that the monad \ensuremath{\mu} has a \emph{monad membership} operation 
\begin{hscode}\SaveRestoreHook
\column{B}{@{}>{\hspre}l<{\hspost}@{}}%
\column{3}{@{}>{\hspre}l<{\hspost}@{}}%
\column{E}{@{}>{\hspre}l<{\hspost}@{}}%
\>[3]{}(\in)\mathbin{::}\alpha\to \mu\;\alpha\to \Conid{Bool}{}\<[E]%
\ColumnHook
\end{hscode}\resethooks
satisfying the following two laws:
\begin{hscode}\SaveRestoreHook
\column{B}{@{}>{\hspre}l<{\hspost}@{}}%
\column{3}{@{}>{\hspre}l<{\hspost}@{}}%
\column{17}{@{}>{\hspre}l<{\hspost}@{}}%
\column{41}{@{}>{\hspre}c<{\hspost}@{}}%
\column{41E}{@{}l@{}}%
\column{46}{@{}>{\hspre}l<{\hspost}@{}}%
\column{58}{@{}>{\hspre}l<{\hspost}@{}}%
\column{E}{@{}>{\hspre}l<{\hspost}@{}}%
\>[3]{}({\in}\text{-ID}){}\<[17]%
\>[17]{}\quad\Varid{x}\in\Varid{return}\;\Varid{x}{}\<[41]%
\>[41]{}\Leftrightarrow{}\<[41E]%
\>[46]{}\Conid{True}{}\<[E]%
\\
\>[3]{}({\in}\text{-}{\bind }){}\<[17]%
\>[17]{}\quad\Varid{y}\in(\Varid{m}\bind \Varid{f}){}\<[41]%
\>[41]{}\Leftrightarrow{}\<[41E]%
\>[46]{}\exists \Varid{x}\hsforall \hsdot{\mathbin{\cdot}}{\mathrel{.}}{}\<[58]%
\>[58]{}\Varid{x}\in\Varid{m}\mathrel{\wedge}\Varid{y}\in(\Varid{f}\;\Varid{x}){}\<[E]%
\ColumnHook
\end{hscode}\resethooks
Then the laws for \ensuremath{MLens_1} 
(adapted from \citet[Prop. 3, p49]{pacheco14pepm}) 
are as follows:
\begin{hscode}\SaveRestoreHook
\column{B}{@{}>{\hspre}l<{\hspost}@{}}%
\column{3}{@{}>{\hspre}l<{\hspost}@{}}%
\column{14}{@{}>{\hspre}c<{\hspost}@{}}%
\column{14E}{@{}l@{}}%
\column{18}{@{}>{\hspre}l<{\hspost}@{}}%
\column{40}{@{}>{\hspre}l<{\hspost}@{}}%
\column{E}{@{}>{\hspre}l<{\hspost}@{}}%
\>[3]{}\mathsf{(MGetPut_1)}{}\<[14]%
\>[14]{}\quad{}\<[14E]%
\>[18]{}\Varid{v}\mathrel{=}\Varid{mget}\;\Varid{s}{}\<[40]%
\>[40]{}\Longrightarrow\Varid{mput}\;\Varid{s}\;\Varid{v}\mathrel{=}\Varid{return}\;\Varid{s}{}\<[E]%
\\
\>[3]{}\mathsf{(MPutGet_1)}{}\<[14]%
\>[14]{}\quad{}\<[14E]%
\>[18]{}\Varid{s'}\in\Varid{mput}\;\Varid{s}\;\Varid{v'}{}\<[40]%
\>[40]{}\Longrightarrow\Varid{v'}\mathrel{=}\Varid{mget}\;\Varid{s'}{}\<[E]%
\ColumnHook
\end{hscode}\resethooks
In the first law we correct an apparent typo in the original paper,
as well as removing the \ensuremath{\Conid{Just}} constructors from both laws.  By making \ensuremath{\Varid{mget}}
pure, this definition avoids the immediate problems with composition
discussed above, and Pacheco et al. outline a proof that their laws
are preserved by composition.  However, it is not obvious how to
generalise their approach beyond monads that admit a sensible \ensuremath{\in}
operation.

Many interesting monads do have a sensible \ensuremath{\in} operation
(e.g. \ensuremath{\Conid{Maybe}}, \ensuremath{[\mskip1.5mu \mskip1.5mu]}).  Pacheco et al. suggest that \ensuremath{\in} can be
defined for any monad as $x \in m \equiv (\exists h: h\,m = x)$, where
$h$ is what they call a ``(polymorphic) algebra for the monad at hand,
essentially, a
 of type \ensuremath{\Varid{m}\;\Varid{a}\to \Varid{a}} for any type \ensuremath{\Varid{a}}.''
However, this
definition doesn't appear satisfactory for monads such as \ensuremath{\Conid{IO}}, for
which there is no such (pure) function: the \ensuremath{({\in}\text{-ID})} law can
never hold in this case.  It is not clear that we can define a useful
\ensuremath{\in} operation directly for \ensuremath{\Conid{IO}} either: given that \ensuremath{\Varid{m}\mathbin{::}\Conid{IO}\;\Varid{a}}
could ultimately return any \ensuremath{\Varid{a}}-value, 
it seems safe, if perhaps
overly conservative, to define \ensuremath{\Varid{x}\in\Varid{m}\mathrel{=}\Conid{True}} for any \ensuremath{\Varid{x}} and
\ensuremath{\Varid{m}}. This satisfies the \ensuremath{\in} laws, at least, if we make a
simplifying assumption that all types are inhabited, and indeed, it
seems to be the only thing we could write in Haskell that would
satisfy the laws, since we have no way of looking inside the monadic
computation \ensuremath{\Varid{m}\mathbin{::}\Conid{IO}\;\Varid{a}} to find out what its eventual return value
is. But then the precondition of the \ensuremath{\mathsf{(MPutGet_1)}} law is always true, which
forces the view space to be trivial.  These complications suggest, at
least, that it would be advantageous to find a definition of monadic
lenses that makes sense, and is preserved under composition, for any
monad.

\subsection{Monadic lenses}

We propose the following definition of monadic lenses for any monad \ensuremath{\Conid{M}}:
\begin{definition}[monadic lens]
A \emph{monadic lens} from source type \ensuremath{\Conid{A}} to view type \ensuremath{\Conid{B}} in
which the put operation may have effects from monad \ensuremath{\Conid{M}} (or
``\ensuremath{\Conid{M}}-lens from \ensuremath{\Conid{A}} to \ensuremath{\Conid{B}}''), is represented by the type \ensuremath{\monadic{\Conid{A}\rightsquigarrow\Conid{B}}{\Conid{M}}}, where 
\begin{hscode}\SaveRestoreHook
\column{B}{@{}>{\hspre}l<{\hspost}@{}}%
\column{33}{@{}>{\hspre}l<{\hspost}@{}}%
\column{E}{@{}>{\hspre}l<{\hspost}@{}}%
\>[B]{}\mathbf{data}\;\monadic{\alpha\rightsquigarrow\beta}{\mu}\mathrel{=}\Conid{MLens}\;\{\mskip1.5mu {}\<[33]%
\>[33]{}\Varid{mget}\mathbin{::}\alpha\to \beta,\Varid{mput}\mathbin{::}\alpha\to \beta\to \mu\;\alpha\mskip1.5mu\}{}\<[E]%
\ColumnHook
\end{hscode}\resethooks
(dropping the \ensuremath{\mu} from the return type of \ensuremath{\Varid{mget}}, compared to the definition in Section~\ref{sec:naive-lens}).
We say that \ensuremath{\Conid{M}}-lens \ensuremath{\Varid{l}} is \emph{well-behaved} if it satisfies
\savecolumns
\begin{hscode}\SaveRestoreHook
\column{B}{@{}>{\hspre}l<{\hspost}@{}}%
\column{11}{@{}>{\hspre}c<{\hspost}@{}}%
\column{11E}{@{}l@{}}%
\column{15}{@{}>{\hspre}l<{\hspost}@{}}%
\column{E}{@{}>{\hspre}l<{\hspost}@{}}%
\>[B]{}\mathsf{(MGetPut)}{}\<[11]%
\>[11]{}\quad{}\<[11E]%
\>[15]{}\mathbf{do}\;\{\mskip1.5mu \Varid{l}\mathord{.}\Varid{mput}\;\Varid{a}\;(\Varid{l}\mathord{.}\Varid{mget}\;\Varid{a})\mskip1.5mu\}\mathrel{=}\Varid{return}\;\Varid{a}{}\<[E]%
\\
\>[B]{}\mathsf{(MPutGet)}{}\<[11]%
\>[11]{}\quad{}\<[11E]%
\>[15]{}\mathbf{do}\;\{\mskip1.5mu \Varid{a'}\leftarrow \Varid{l}\mathord{.}\Varid{mput}\;\Varid{a}\;\Varid{b};\Varid{k}\;\Varid{a'}\;(\Varid{l}\mathord{.}\Varid{mget}\;\Varid{a'})\mskip1.5mu\}{}\<[E]%
\\
\>[15]{}\mathrel{=}\mathbf{do}\;\{\mskip1.5mu \Varid{a'}\leftarrow \Varid{l}\mathord{.}\Varid{mput}\;\Varid{a}\;\Varid{b};\Varid{k}\;\Varid{a'}\;\Varid{b}\mskip1.5mu\}{}\<[E]%
\ColumnHook
\end{hscode}\resethooks
\endswithdisplay
\end{definition}
Note that in \ensuremath{\mathsf{(MPutGet)}}, we use a continuation \ensuremath{\Varid{k}\mathbin{::}\alpha\to \beta\to \mu\;\gamma} to quantify over all possible subsequent computations in which
\ensuremath{\Varid{a'}} and \ensuremath{\Varid{l}\mathord{.}\Varid{mget}\;\Varid{a'}} 
might appear.  In fact, using the laws of monads
and simply-typed lambda calculus we can prove this law from just the
special case \ensuremath{\Varid{k}\mathrel{=}\lambda \Varid{a}\;\Varid{b}\to \Varid{return}\;(\Varid{a},\Varid{b})}, so in the sequel when we
prove \ensuremath{\mathsf{(MPutGet)}} we may just prove this case while using the strong
form freely in the proof.

The ordinary asymmetric lenses are exactly the monadic lenses over
\ensuremath{\mu\mathrel{=}\Conid{Id}}; the laws then specialise to the standard
equational laws.  Monadic lenses where \ensuremath{\mu\mathrel{=}\Conid{Id}} are called
\emph{pure}, and we may refer to ordinary lenses as pure lenses also.
\begin{definition}
We can also define an operation that lifts a pure lens to a monadic lens:
\begin{hscode}\SaveRestoreHook
\column{B}{@{}>{\hspre}l<{\hspost}@{}}%
\column{E}{@{}>{\hspre}l<{\hspost}@{}}%
\>[B]{}\Varid{lens2mlens}\mathbin{::}\Conid{Monad}\;\mu\Rightarrow \alpha\mathbin{\leadsto}\beta\to \monadic{\alpha\rightsquigarrow\beta}{\mu}{}\<[E]%
\\
\>[B]{}\Varid{lens2mlens}\;\Varid{l}\mathrel{=}\Conid{MLens}\;(\Varid{l}\mathord{.}\Varid{get})\;(\lambda \Varid{a}\;\Varid{b}\to \Varid{return}\;(\Varid{l}\mathord{.}\Varid{put}\;\Varid{a}\;\Varid{b})){}\<[E]%
\ColumnHook
\end{hscode}\resethooks
\endswithdisplay
\end{definition}
\begin{lemma}\label{lem:lens2mlens-wb}
  If \ensuremath{\Varid{l}\mathbin{::}\Conid{Lens}\;\alpha\;\beta} is well-behaved, then so is \ensuremath{\Varid{lens2mlens}\;\Varid{l}}.
\end{lemma}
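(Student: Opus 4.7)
The plan is to unfold the definition of \ensuremath{\Varid{lens2mlens}} and then reduce each monadic law to the corresponding pure law, with the only extra ingredient being the left-unit law of the monad: \ensuremath{\Varid{return}\;\Varid{x}\bind \Varid{f}\mathrel{=}\Varid{f}\;\Varid{x}}. Writing \ensuremath{\Varid{m}\mathrel{=}\Varid{lens2mlens}\;\Varid{l}}, we have by definition \ensuremath{\Varid{m}\mathord{.}\Varid{mget}\mathrel{=}\Varid{l}\mathord{.}\Varid{get}} and \ensuremath{\Varid{m}\mathord{.}\Varid{mput}\;\Varid{a}\;\Varid{b}\mathrel{=}\Varid{return}\;(\Varid{l}\mathord{.}\Varid{put}\;\Varid{a}\;\Varid{b})}, so monadic effects appear only as the trivial \ensuremath{\Varid{return}} wrapping and can be eliminated at each step.

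For \ensuremath{\mathsf{(MGetPut)}}, I would compute directly: \ensuremath{\Varid{m}\mathord{.}\Varid{mput}\;\Varid{a}\;(\Varid{m}\mathord{.}\Varid{mget}\;\Varid{a})\mathrel{=}\Varid{return}\;(\Varid{l}\mathord{.}\Varid{put}\;\Varid{a}\;(\Varid{l}\mathord{.}\Varid{get}\;\Varid{a}))\mathrel{=}\Varid{return}\;\Varid{a}}, where the last step is \ensuremath{\mathsf{(GetPut)}} for \ensuremath{\Varid{l}} applied under \ensuremath{\Varid{return}}.

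For \ensuremath{\mathsf{(MPutGet)}}, I would use the remark in the paper that it suffices to establish the special case \ensuremath{\Varid{k}\mathrel{=}\lambda \Varid{a}\;\Varid{b}\to \Varid{return}\;(\Varid{a},\Varid{b})}, but the calculation is equally clean for general \ensuremath{\Varid{k}}. Expanding the left-hand side, \ensuremath{\mathbf{do}\;\{\mskip1.5mu \Varid{a'}\leftarrow \Varid{return}\;(\Varid{l}\mathord{.}\Varid{put}\;\Varid{a}\;\Varid{b});\Varid{k}\;\Varid{a'}\;(\Varid{l}\mathord{.}\Varid{get}\;\Varid{a'})\mskip1.5mu\}}, the left-unit law reduces this to \ensuremath{\Varid{k}\;(\Varid{l}\mathord{.}\Varid{put}\;\Varid{a}\;\Varid{b})\;(\Varid{l}\mathord{.}\Varid{get}\;(\Varid{l}\mathord{.}\Varid{put}\;\Varid{a}\;\Varid{b}))}; then \ensuremath{\mathsf{(PutGet)}} for \ensuremath{\Varid{l}} rewrites the second argument to \ensuremath{\Varid{b}}, giving \ensuremath{\Varid{k}\;(\Varid{l}\mathord{.}\Varid{put}\;\Varid{a}\;\Varid{b})\;\Varid{b}}. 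Running the left-unit law in reverse recovers the right-hand side \ensuremath{\mathbf{do}\;\{\mskip1.5mu \Varid{a'}\leftarrow \Varid{m}\mathord{.}\Varid{mput}\;\Varid{a}\;\Varid{b};\Varid{k}\;\Varid{a'}\;\Varid{b}\mskip1.5mu\}}.

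There is no real obstacle here: because \ensuremath{\Varid{mput}} only produces a \ensuremath{\Varid{return}}-computation, the two directions where monadic composition usually causes trouble collapse to a single application of the monad unit law, and the pure lens laws do the rest. The proof is essentially a two-line calculation per law, and in particular does not require \ensuremath{\mu} to satisfy any property beyond the standard monad laws.
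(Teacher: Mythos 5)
Your proof is correct, and it is exactly the routine calculation the paper has in mind (the paper omits the proof of this lemma as immediate): unfold \ensuremath{\Varid{lens2mlens}}, apply the monad left-unit law to strip the \ensuremath{\Varid{return}}, and invoke \ensuremath{\mathsf{(GetPut)}} and \ensuremath{\mathsf{(PutGet)}} for \ensuremath{\Varid{l}}. Nothing further is needed.
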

\begin{example}
  To illustrate, some simple pure lenses include:
\begin{hscode}\SaveRestoreHook
\column{B}{@{}>{\hspre}l<{\hspost}@{}}%
\column{E}{@{}>{\hspre}l<{\hspost}@{}}%
\>[B]{}id_{\mathrm{l}}\mathbin{::}\alpha\mathbin{\leadsto}\alpha{}\<[E]%
\\
\>[B]{}id_{\mathrm{l}}\mathrel{=}\Conid{Lens}\;(\lambda \Varid{a}\to \Varid{a})\;(\lambda \anonymous \;\Varid{a}\to \Varid{a}){}\<[E]%
\\[\blanklineskip]%
\>[B]{}fst_{\mathrm{l}}\mathbin{::}(\alpha,\beta)\mathbin{\leadsto}\alpha{}\<[E]%
\\
\>[B]{}fst_{\mathrm{l}}\mathrel{=}\Conid{MLens}\;\Varid{fst}\;(\lambda (\Varid{s}_{1},\Varid{s}_{2})\;\Varid{s}_{1}'\to (\Varid{s}_{1}',\Varid{s}_{2})){}\<[E]%
\ColumnHook
\end{hscode}\resethooks
Many more examples of pure lenses are to be found in the
literature~\citep{lens-toplas,foster10ssgip}, all of which lift to
well-behaved monadic lenses.
\end{example}





As more interesting examples, we present asymmetric versions of the
partial and logging lenses presented by \citet{abousaleh15mpc}.  Pure lenses are usually defined using
total functions, which means that  \ensuremath{\Varid{get}} must be surjective whenever
\ensuremath{\Conid{A}} is nonempty, and \ensuremath{\Varid{put}}
must be defined for all source and view pairs.  One way to accommodate
partiality is to adjust the return type of \ensuremath{\Varid{get}} to \ensuremath{\Conid{Maybe}\;\Varid{b}} or give
\ensuremath{\Varid{put}} the return type \ensuremath{\Conid{Maybe}\;\Varid{a}} to allow for failure if we attempt to
put a \ensuremath{\Varid{b}}-value that is not in the range of \ensuremath{\Varid{get}}. In either case, the
laws need to be adjusted somehow.  Monadic lenses allow for partiality
without requiring such an ad hoc change. A trivial example is
 \begin{hscode}\SaveRestoreHook
\column{B}{@{}>{\hspre}l<{\hspost}@{}}%
\column{4}{@{}>{\hspre}l<{\hspost}@{}}%
\column{26}{@{}>{\hspre}l<{\hspost}@{}}%
\column{E}{@{}>{\hspre}l<{\hspost}@{}}%
\>[4]{}\Varid{constMLens}\mathbin{::}\beta\to \monadic{\alpha\rightsquigarrow\beta}{\Conid{Maybe}}{}\<[E]%
\\
\>[4]{}\Varid{constMLens}\;\Varid{b}\mathrel{=}\Conid{MLens}\;{}\<[26]%
\>[26]{}(\Varid{const}\;\Varid{b})\;{}\<[E]%
\\
\>[26]{}(\lambda \Varid{a}\;\Varid{b'}\to \mathbf{if}\;\Varid{b}\equals\Varid{b'}\;\mathbf{then}\;\Conid{Just}\;\Varid{a}\;\mathbf{else}\;\Conid{Nothing}){}\<[E]%
\ColumnHook
\end{hscode}\resethooks
which is well-behaved because both sides of \ensuremath{\mathsf{(MPutGet)}} fail if the view
is changed to a value different from \ensuremath{\Varid{b}}.  Of course, this example also
illustrates that the \ensuremath{\Varid{mget}} function of a monadic lens need not be surjective.

As a more interesting example, consider:
  \begin{hscode}\SaveRestoreHook
\column{B}{@{}>{\hspre}l<{\hspost}@{}}%
\column{5}{@{}>{\hspre}l<{\hspost}@{}}%
\column{22}{@{}>{\hspre}l<{\hspost}@{}}%
\column{32}{@{}>{\hspre}l<{\hspost}@{}}%
\column{E}{@{}>{\hspre}l<{\hspost}@{}}%
\>[5]{}\Varid{absLens}\mathbin{::}\monadic{\Conid{Int}\rightsquigarrow\Conid{Int}}{\Conid{Maybe}}{}\<[E]%
\\
\>[5]{}\Varid{absLens}\mathrel{=}\Conid{MLens}\;{}\<[22]%
\>[22]{}\Varid{abs}\;{}\<[E]%
\\
\>[22]{}(\lambda \Varid{a}\;\Varid{b}\to {}\<[32]%
\>[32]{}\mathbf{if}\;\Varid{b}\mathbin{<}\mathrm{0}{}\<[E]%
\\
\>[32]{}\mathbf{then}\;\Conid{Nothing}{}\<[E]%
\\
\>[32]{}\mathbf{else}\;\Conid{Just}\;(\mathbf{if}\;\Varid{a}\mathbin{<}\mathrm{0}\;\mathbf{then}\mathbin{-}\Varid{b}\;\mathbf{else}\;\Varid{b})){}\<[E]%
\ColumnHook
\end{hscode}\resethooks
In the \ensuremath{\Varid{mget}} direction, this lens maps a source number to its absolute value;
in the reverse direction, it fails if the view \ensuremath{\Varid{b}} is negative, and otherwise uses
the sign of the previous source \ensuremath{\Varid{a}} to determine the sign of the updated source.

The following \emph{logging lens} takes a pure lens \ensuremath{\Varid{l}} and, whenever
the source value \ensuremath{\Varid{a}} changes, records the previous \ensuremath{\Varid{a}} value.
  \begin{hscode}\SaveRestoreHook
\column{B}{@{}>{\hspre}l<{\hspost}@{}}%
\column{5}{@{}>{\hspre}l<{\hspost}@{}}%
\column{18}{@{}>{\hspre}l<{\hspost}@{}}%
\column{26}{@{}>{\hspre}l<{\hspost}@{}}%
\column{47}{@{}>{\hspre}l<{\hspost}@{}}%
\column{E}{@{}>{\hspre}l<{\hspost}@{}}%
\>[5]{}\Varid{logLens}\mathbin{::}\Conid{Eq}\;\alpha\Rightarrow {}\<[26]%
\>[26]{}\alpha\mathbin{\leadsto}\beta\to \monadic{\alpha\rightsquigarrow\beta}{\Conid{Writer}\;\alpha}{}\<[E]%
\\
\>[5]{}\Varid{logLens}\;\Varid{l}\mathrel{=}\Conid{MLens}\;(\Varid{l}\mathord{.}\Varid{get})\;(\lambda \Varid{a}\;\Varid{b}\to {}\<[E]%
\\
\>[5]{}\hsindent{13}{}\<[18]%
\>[18]{}\mathbf{let}\;\Varid{a'}\mathrel{=}\Varid{l}\mathord{.}\Varid{put}\;\Varid{a}\;\Varid{b}\;\mathbf{in}\;\mathbf{do}\;\{\mskip1.5mu {}\<[47]%
\>[47]{}\mathbf{if}\;\Varid{a}\notequals\Varid{a'}\;\mathbf{then}\;\Varid{tell}\;\Varid{a}\;\mathbf{else}\;\Varid{return}\;();\Varid{return}\;\Varid{a'}\mskip1.5mu\}){}\<[E]%
\ColumnHook
\end{hscode}\resethooks
We presented a number of more involved examples of effectful symmetric bx in 
\citep{abousaleh15mpc}.  They show how monadic lenses can employ user
interaction, state, or nondeterminism to restore consistency.  Most of
these examples are equivalently definable as \emph{spans} of monadic
lenses, which we will discuss in the next section.


In practical use, it is usually also
necessary to equip lenses with an \emph{initialisation} mechanism.
Indeed, as already mentioned, Pacheco et al.'s monadic put-lenses make
the \ensuremath{\alpha} argument optional (using \ensuremath{\Conid{Maybe}}), to allow for
initialisation when only a \ensuremath{\beta} is available; we chose to
exclude this from our version of monadic lenses above.  

We propose the following alternative:
\begin{hscode}\SaveRestoreHook
\column{B}{@{}>{\hspre}l<{\hspost}@{}}%
\column{3}{@{}>{\hspre}l<{\hspost}@{}}%
\column{35}{@{}>{\hspre}l<{\hspost}@{}}%
\column{E}{@{}>{\hspre}l<{\hspost}@{}}%
\>[3]{}\mathbf{data}\;\monadic{\alpha\rightsquigarrow\beta}{\mu}\mathrel{=}\Conid{MLens}\;\{\mskip1.5mu {}\<[35]%
\>[35]{}\Varid{mget}\mathbin{::}\alpha\to \beta,{}\<[E]%
\\
\>[35]{}\Varid{mput}\mathbin{::}\alpha\to \beta\to \mu\;\alpha,{}\<[E]%
\\
\>[35]{}\Varid{mcreate}\mathbin{::}\beta\to \mu\;\alpha\mskip1.5mu\}{}\<[E]%
\ColumnHook
\end{hscode}\resethooks
and we consider such initialisable monadic lenses to be well-behaved
when they satisfy the following additional law:
\begin{hscode}\SaveRestoreHook
\column{B}{@{}>{\hspre}l<{\hspost}@{}}%
\column{11}{@{}>{\hspre}c<{\hspost}@{}}%
\column{11E}{@{}l@{}}%
\column{15}{@{}>{\hspre}l<{\hspost}@{}}%
\column{E}{@{}>{\hspre}l<{\hspost}@{}}%
\>[B]{}\mathsf{(MCreateGet)}{}\<[11]%
\>[11]{}\quad{}\<[11E]%
\>[15]{}\mathbf{do}\;\{\mskip1.5mu \Varid{a}\leftarrow \Varid{mcreate}\;\Varid{b};\Varid{k}\;\Varid{a}\;(\Varid{mget}\;\Varid{a})\mskip1.5mu\}\mathrel{=}\mathbf{do}\;\{\mskip1.5mu \Varid{a}\leftarrow \Varid{mcreate}\;\Varid{b};\Varid{k}\;\Varid{a}\;\Varid{b}\mskip1.5mu\}{}\<[E]%
\ColumnHook
\end{hscode}\resethooks
As with \ensuremath{\mathsf{(MPutGet)}}, this property follows from the special case \ensuremath{\Varid{k}\mathrel{=}\lambda \Varid{x}\;\Varid{y}\to \Varid{return}\;(\Varid{x},\Varid{y})},
and we will use this fact freely.

This approach, in our view, helps keep the \ensuremath{\mathsf{(GetPut)}} and \ensuremath{\mathsf{(PutGet)}}
laws simple and clear, and avoids the need to wrap \ensuremath{\Varid{mput}}'s first
argument in \ensuremath{\Conid{Just}} whenever it is called.  

Next, we consider composition of monadic lenses.
\begin{hscode}\SaveRestoreHook
\column{B}{@{}>{\hspre}l<{\hspost}@{}}%
\column{3}{@{}>{\hspre}l<{\hspost}@{}}%
\column{28}{@{}>{\hspre}l<{\hspost}@{}}%
\column{E}{@{}>{\hspre}l<{\hspost}@{}}%
\>[B]{}(\mathbin{;})\mathbin{::}\Conid{Monad}\;\mu\Rightarrow \monadic{\alpha\rightsquigarrow\beta}{\mu}\to \monadic{\beta\rightsquigarrow\gamma}{\mu}\to \monadic{\alpha\rightsquigarrow\gamma}{\mu}{}\<[E]%
\\
\>[B]{}\Varid{l}_{1}\mathbin{;}\Varid{l}_{2}\mathrel{=}\Conid{MLens}\;{}\<[28]%
\>[28]{}(\Varid{l}_{2}\mathord{.}\Varid{mget}\hsdot{\mathbin{\cdot}}{\mathrel{.}}\Varid{l}_{1}\mathord{.}\Varid{mget})\;\Varid{mput}\;\Varid{mcreate}\;\mathbf{where}{}\<[E]%
\\
\>[B]{}\hsindent{3}{}\<[3]%
\>[3]{}\Varid{mput}\;\Varid{a}\;\Varid{c}\mathrel{=}\mathbf{do}\;\{\mskip1.5mu \Varid{b}\leftarrow \Varid{l}_{2}\mathord{.}\Varid{mput}\;(\Varid{l}_{1}\mathord{.}\Varid{mget}\;\Varid{a})\;\Varid{c};\Varid{l}_{1}\mathord{.}\Varid{mput}\;\Varid{a}\;\Varid{b}\mskip1.5mu\}{}\<[E]%
\\
\>[B]{}\hsindent{3}{}\<[3]%
\>[3]{}\Varid{mcreate}\;\Varid{c}\mathrel{=}\mathbf{do}\;\{\mskip1.5mu \Varid{b}\leftarrow \Varid{l}_{2}\mathord{.}\Varid{mcreate};\Varid{l}_{1}\mathord{.}\Varid{mcreate}\mskip1.5mu\}{}\<[E]%
\ColumnHook
\end{hscode}\resethooks
Note that we consider only the simple case in which the lenses share a
common monad \ensuremath{\mu}.  Composing lenses with effects in different monads
would require determining how to compose the monads themselves, which is nontrivial~\citep{DBLP:conf/fp/KingW92,Jones&Duponcheel93:Composing}.
\begin{theorem}\label{thm:mlens-comp-wb}
  If \ensuremath{\Varid{l}_{1}\mathbin{::}\monadic{\Conid{A}\rightsquigarrow\Conid{B}}{\Conid{M}},\Varid{l}_{2}\mathbin{::}\monadic{\Conid{B}\rightsquigarrow\Conid{C}}{\Conid{M}}} are well-behaved,
  then so is \ensuremath{\Varid{l}_{1}\mathbin{;}\Varid{l}_{2}}.  
\end{theorem}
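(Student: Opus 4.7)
The plan is to verify each of the three well-behavedness laws \textsf{(MGetPut)}, \textsf{(MPutGet)}, and \textsf{(MCreateGet)} for $l_1;l_2$ in turn, using the corresponding laws for $l_1$ and $l_2$ together with the monad laws (unit, associativity) and $\beta$-reduction. Throughout, the key manoeuvre is to unfold the definition of $(l_1;l_2)$, re-associate the \textbf{do}-blocks, and then apply the component laws under the appropriate continuation.

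For \textsf{(MGetPut)}, I would compute $(l_1;l_2).mput\ a\ ((l_1;l_2).mget\ a)$ by expanding both sides. Writing $b_0 = l_1.mget\ a$, the inner \textbf{do}-block becomes $l_2.mput\ b_0\ (l_2.mget\ b_0)$, which collapses to $return\ b_0$ by \textsf{(MGetPut)} for $l_2$; a monad-unit step then reduces the outer bind to $l_1.mput\ a\ b_0 = l_1.mput\ a\ (l_1.mget\ a)$, which collapses to $return\ a$ by \textsf{(MGetPut)} for $l_1$. This is straightforward.

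For \textsf{(MPutGet)}, I would unfold the composed $mput$ and, by associativity, flatten the nested binds to
\[
  \mathbf{do}\;\{\,b \leftarrow l_2.mput\ (l_1.mget\ a)\ c;\;\, a' \leftarrow l_1.mput\ a\ b;\;\, k\ a'\ (l_2.mget\ (l_1.mget\ a'))\,\}.
\]
Now I would apply the \emph{continuation form} of \textsf{(MPutGet)} for $l_1$, with continuation $\lambda a'\,b.\ k\ a'\ (l_2.mget\ b)$, to rewrite the occurrence of $l_1.mget\ a'$ to $b$. Then I would apply the continuation form of \textsf{(MPutGet)} for $l_2$, with continuation $\lambda b\,c'.\ \mathbf{do}\{a' \leftarrow l_1.mput\ a\ b;\ k\ a'\ c'\}$, to rewrite $l_2.mget\ b$ to $c$. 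Re-associating yields the desired RHS. The main obstacle here is precisely the point flagged in the remark after the definition: without the continuation form of \textsf{(MPutGet)}, one could not push the $l_2.mget\ b$ out past the subsequent $l_1.mput\ a\ b$ (which may have effects), so the strengthened continuation form is essential. \textsf{(MCreateGet)} is handled by an entirely parallel argument: unfold $(l_1;l_2).mcreate\ c$, associate, apply \textsf{(MCreateGet)} for $l_1$ to eliminate $l_1.mget\ a$, then \textsf{(MCreateGet)} for $l_2$ to eliminate $l_2.mget\ b$, and re-associate.

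The only genuinely delicate aspect of the whole proof is recognising that the naive equational forms of \textsf{(MPutGet)} and \textsf{(MCreateGet)} are too weak when intermediate computations carry effects; once the continuation-parameterised forms are in hand, each law reduces to a mechanical application of monad associativity followed by two uses of the component laws. No assumption about commutativity or idempotence of effects in $\mu$ is needed, in contrast to the naive $MLens_0$ definition in Section~\ref{sec:naive-lens}.
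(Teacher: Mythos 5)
Your proof is correct and follows essentially the same route as the paper's: unfold the composite, apply $\mathsf{(MGetPut)}$ for $l_2$ and then $l_1$, and for $\mathsf{(MPutGet)}$ use the continuation-parameterised form first on $l_1$ (to replace $l_1.\mathit{mget}\;a'$ by $b$) and then on $l_2$ (to push $l_2.\mathit{mget}\;b$ back to $c$ past the effectful $l_1.\mathit{mput}$), which is exactly the point where the naive equational law would be too weak. Your parallel treatment of $\mathsf{(MCreateGet)}$ handles the initialisable case, which the paper's printed proof omits but which goes through just as you describe.
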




\section{Symmetric monadic lenses and spans}\label{sec:symmetric}

\citet{symlens} proposed \emph{symmetric
  lenses} that use a \emph{complement} to store (at least) the
information that is not present in both views. 
\begin{hscode}\SaveRestoreHook
\column{B}{@{}>{\hspre}l<{\hspost}@{}}%
\column{33}{@{}>{\hspre}l<{\hspost}@{}}%
\column{39}{@{}>{\hspre}l<{\hspost}@{}}%
\column{42}{@{}>{\hspre}l<{\hspost}@{}}%
\column{E}{@{}>{\hspre}l<{\hspost}@{}}%
\>[B]{}\mathbf{data}\;\alpha\mathbin{\stackrel{\gamma}{\longleftrightarrow}}\beta\mathrel{=}\Conid{SLens}\;\{\mskip1.5mu {}\<[33]%
\>[33]{}\Varid{put}_\mathrm{R}{}\<[42]%
\>[42]{}\mathbin{::}(\alpha,\gamma)\to (\beta,\gamma),{}\<[E]%
\\
\>[33]{}\Varid{put}_\mathrm{L}{}\<[39]%
\>[39]{}\mathbin{::}(\beta,\gamma)\to (\alpha,\gamma),{}\<[E]%
\\
\>[33]{}\Varid{missing}{}\<[42]%
\>[42]{}\mathbin{::}\gamma\mskip1.5mu\}{}\<[E]%
\ColumnHook
\end{hscode}\resethooks
Informally, \ensuremath{\Varid{put}_\mathrm{R}} turns an \ensuremath{\alpha} into a \ensuremath{\beta}, modifying a complement \ensuremath{\gamma} as it goes, and symmetrically for \ensuremath{\Varid{put}_\mathrm{L}}; and \ensuremath{\Varid{missing}} is an initial complement, to get the ball rolling.
Well-behavedness for symmetric lenses amounts to the following
equational laws:
\begin{hscode}\SaveRestoreHook
\column{B}{@{}>{\hspre}l<{\hspost}@{}}%
\column{13}{@{}>{\hspre}c<{\hspost}@{}}%
\column{13E}{@{}l@{}}%
\column{20}{@{}>{\hspre}l<{\hspost}@{}}%
\column{22}{@{}>{\hspre}c<{\hspost}@{}}%
\column{22E}{@{}l@{}}%
\column{25}{@{}>{\hspre}l<{\hspost}@{}}%
\column{53}{@{}>{\hspre}l<{\hspost}@{}}%
\column{E}{@{}>{\hspre}l<{\hspost}@{}}%
\>[B]{}\mathsf{(PutRL)}{}\<[13]%
\>[13]{}\quad{}\<[13E]%
\>[20]{}\mathbf{let}\;(\Varid{b},\Varid{c'})\mathrel{=}\Varid{sl}\mathord{.}\Varid{put}_\mathrm{R}\;(\Varid{a},\Varid{c})\;\mathbf{in}\;\Varid{sl}\mathord{.}\Varid{put}_\mathrm{L}\;(\Varid{b},\Varid{c'}){}\<[E]%
\\
\>[20]{}\hsindent{2}{}\<[22]%
\>[22]{}\mathrel{=}{}\<[22E]%
\>[25]{}\mathbf{let}\;(\Varid{b},\Varid{c'})\mathrel{=}\Varid{sl}\mathord{.}\Varid{put}_\mathrm{R}\;(\Varid{a},\Varid{c})\;\mathbf{in}\;(\Varid{a},\Varid{c'}){}\<[E]%
\\
\>[B]{}\mathsf{(PutLR)}{}\<[13]%
\>[13]{}\quad{}\<[13E]%
\>[20]{}\mathbf{let}\;(\Varid{a},\Varid{c'})\mathrel{=}\Varid{sl}\mathord{.}\Varid{put}_\mathrm{L}\;(\Varid{b},\Varid{c})\;\mathbf{in}\;\Varid{sl}\mathord{.}\Varid{put}_\mathrm{R}\;(\Varid{a},\Varid{c'}){}\<[E]%
\\
\>[20]{}\hsindent{2}{}\<[22]%
\>[22]{}\mathrel{=}{}\<[22E]%
\>[25]{}\mathbf{let}\;(\Varid{a},\Varid{c'})\mathrel{=}\Varid{sl}\mathord{.}\Varid{put}_\mathrm{L}\;(\Varid{b},\Varid{c})\;{}\<[53]%
\>[53]{}\mathbf{in}\;(\Varid{b},\Varid{c'}){}\<[E]%
\ColumnHook
\end{hscode}\resethooks
Furthermore, the composition of two symmetric lenses preserves
well-behavedness, and can be defined as
follows:
\begin{hscode}\SaveRestoreHook
\column{B}{@{}>{\hspre}l<{\hspost}@{}}%
\column{3}{@{}>{\hspre}l<{\hspost}@{}}%
\column{5}{@{}>{\hspre}l<{\hspost}@{}}%
\column{25}{@{}>{\hspre}l<{\hspost}@{}}%
\column{30}{@{}>{\hspre}l<{\hspost}@{}}%
\column{E}{@{}>{\hspre}l<{\hspost}@{}}%
\>[3]{}(\mathbin{;})\mathbin{::}(\alpha\mathbin{\stackrel{\sigma_1}{\longleftrightarrow}}\beta)\to (\beta\mathbin{\stackrel{\sigma_2}{\longleftrightarrow}}\gamma)\to (\alpha\mathbin{\stackrel{(\sigma_1,\sigma_2)}{\longleftrightarrow}}\gamma){}\<[E]%
\\
\>[3]{}\Varid{l}_{1}\mathbin{;}\Varid{l}_{2}\mathrel{=}\Conid{SLens}\;\Varid{put}_\mathrm{R}\;\Varid{put}_\mathrm{L}\;(\Varid{l}_{1}\mathord{.}\Varid{missing},\Varid{l}_{2}\mathord{.}\Varid{missing})\;\mathbf{where}{}\<[E]%
\\
\>[3]{}\hsindent{2}{}\<[5]%
\>[5]{}\Varid{put}_\mathrm{R}\;(\Varid{a},(\Varid{s}_{1},\Varid{s}_{2}))\mathrel{=}{}\<[25]%
\>[25]{}\mathbf{let}\;{}\<[30]%
\>[30]{}(\Varid{b},\Varid{s}_{1}')\mathrel{=}\Varid{put}_\mathrm{R}\;(\Varid{a},\Varid{s}_{1}){}\<[E]%
\\
\>[30]{}(\Varid{c},\Varid{s}_{2}')\mathrel{=}\Varid{put}_\mathrm{R}\;(\Varid{b},\Varid{s}_{2}){}\<[E]%
\\
\>[25]{}\mathbf{in}\;(\Varid{c},(\Varid{s}_{1}',\Varid{s}_{2}')){}\<[E]%
\\
\>[3]{}\hsindent{2}{}\<[5]%
\>[5]{}\Varid{put}_\mathrm{L}\;(\Varid{c},(\Varid{s}_{1},\Varid{s}_{2}))\mathrel{=}{}\<[25]%
\>[25]{}\mathbf{let}\;{}\<[30]%
\>[30]{}(\Varid{b},\Varid{s}_{2}')\mathrel{=}\Varid{put}_\mathrm{L}\;(\Varid{c},\Varid{s}_{2}){}\<[E]%
\\
\>[30]{}(\Varid{a},\Varid{s}_{1}')\mathrel{=}\Varid{put}_\mathrm{L}\;(\Varid{b},\Varid{s}_{1}){}\<[E]%
\\
\>[25]{}\mathbf{in}\;(\Varid{a},(\Varid{s}_{1}',\Varid{s}_{2}')){}\<[E]%
\ColumnHook
\end{hscode}\resethooks



We can define an \emph{identity} symmetric lens as follows:
\begin{hscode}\SaveRestoreHook
\column{B}{@{}>{\hspre}l<{\hspost}@{}}%
\column{3}{@{}>{\hspre}l<{\hspost}@{}}%
\column{E}{@{}>{\hspre}l<{\hspost}@{}}%
\>[3]{}id_{\mathrm{sl}}\mathbin{::}\alpha\mathbin{\stackrel{()}{\longleftrightarrow}}\alpha{}\<[E]%
\\
\>[3]{}id_{\mathrm{sl}}\mathrel{=}\Conid{SLens}\;\Varid{id}\;\Varid{id}\;(){}\<[E]%
\ColumnHook
\end{hscode}\resethooks
It is natural to wonder whether symmetric lens composition satisfies
identity and associativity laws making symmetric lenses into a
category.  This is complicated by the fact that the complement types of
the composition \ensuremath{id_{\mathrm{sl}};\Varid{sl}} and of \ensuremath{\Varid{sl}} differ, so it is not even 
type-correct to ask whether \ensuremath{id_{\mathrm{sl}};\Varid{sl}} and \ensuremath{\Varid{sl}} are equal.  To make
it possible to relate the behaviour of symmetric lenses with different
complement types, \HPWlong{} defined equivalence of symmetric lenses
as follows:
\begin{definition}
  Suppose \ensuremath{\Conid{R}\subseteq\Conid{C}_{1} \times \Conid{C}_{2}}.  Then \ensuremath{\Varid{f} \sim_{\Conid{R}} \Varid{g}} 
  means that for
  all \ensuremath{\Varid{c}_{1},\Varid{c}_{2},\Varid{x}}, if \ensuremath{(\Varid{c}_{1},\Varid{c}_{2})\in\Conid{R}} and \ensuremath{(\Varid{y},\Varid{c}_{1}')\mathrel{=}\Varid{f}\;(\Varid{x},\Varid{c}_{1})} and \ensuremath{(\Varid{y'},\Varid{c}_{2}')\mathrel{=}\Varid{g}\;(\Varid{y},\Varid{c}_{2})}, then \ensuremath{\Varid{y}\mathrel{=}\Varid{y'}} and \ensuremath{(\Varid{c}_{1}',\Varid{c}_{2}')\in\Conid{R}}.
\end{definition}
\begin{definition}[Symmetric lens equivalence]\label{def:hpw-equiv}
  Two symmetric lenses \ensuremath{\Varid{sl}_{1}\mathbin{::}\Conid{X}\mathbin{\stackrel{\Conid{C}_{1}}{\longleftrightarrow}}\Conid{Y}} and \ensuremath{\Varid{sl}_{2}\mathbin{::}\Conid{X}\mathbin{\stackrel{\Conid{C}_{2}}{\longleftrightarrow}}\Conid{Y}}
  are considered \emph{equivalent} (\ensuremath{\Varid{sl}_{1}\equiv_{\mathrm{sl}} \Varid{sl}_{2}}) if there is a
  relation \ensuremath{\Conid{R}\subseteq\Conid{C}_{1} \times \Conid{C}_{2}} such that 
  \begin{enumerate}
\item     \ensuremath{(\Varid{sl}_{1}\mathord{.}\Varid{missing},\Varid{sl}_{2}\mathord{.}\Varid{missing})\in\Conid{R}}, 
\item \ensuremath{\Varid{sl}_{1}\mathord{.}\Varid{put}_\mathrm{R} \sim_{\Conid{R}} \Varid{sl}_{2}\mathord{.}\Varid{put}_\mathrm{R}}, 
and 
\item \ensuremath{\Varid{sl}_{1}\mathord{.}\Varid{put}_\mathrm{L} \sim_{\Conid{R}} \Varid{sl}_{2}\mathord{.}\Varid{put}_\mathrm{L}}.
  \end{enumerate}\endswithdisplay
\end{definition}
\HPWlong  show that \ensuremath{\equiv_{\mathrm{sl}} } is an equivalence relation; moreover it is sufficiently strong to validate identity, associativity and congruence laws:
\begin{theorem}
  [\citep{symlens}]
  If \ensuremath{\Varid{sl}_{1}\mathbin{::}\Conid{X}\mathbin{\stackrel{\Conid{C}_{1}}{\longleftrightarrow}}\Conid{Y}} and \ensuremath{\Varid{sl}_{2}\mathbin{::}\Conid{Y}\mathbin{\stackrel{\Conid{C}_{2}}{\longleftrightarrow}}\Conid{Z}} are
  well-behaved, then so is \ensuremath{\Varid{sl}_{1}\mathbin{;}\Varid{sl}_{2}}.  In addition,
  composition satisfies the laws:
\begin{hscode}\SaveRestoreHook
\column{B}{@{}>{\hspre}l<{\hspost}@{}}%
\column{14}{@{}>{\hspre}l<{\hspost}@{}}%
\column{18}{@{}>{\hspre}l<{\hspost}@{}}%
\column{E}{@{}>{\hspre}l<{\hspost}@{}}%
\>[B]{}\mathsf{(Identity)}{}\<[14]%
\>[14]{}\quad{}\<[18]%
\>[18]{}\Varid{sl}\mathbin{;}id_{\mathrm{sl}}\equiv_{\mathrm{sl}} \Varid{sl}\equiv_{\mathrm{sl}} id_{\mathrm{sl}}\mathbin{;}\Varid{sl}{}\<[E]%
\\
\>[B]{}\mathsf{(Assoc)}{}\<[14]%
\>[14]{}\quad\Varid{sl}_{1}\mathbin{;}(\Varid{sl}_{2}\mathbin{;}\Varid{sl}_{3})\equiv_{\mathrm{sl}} (\Varid{sl}_{1}\mathbin{;}\Varid{sl}_{2})\mathbin{;}\Varid{sl}_{3}{}\<[E]%
\\
\>[B]{}\mathsf{(Cong)}{}\<[14]%
\>[14]{}\quad\Varid{sl}_{1}\equiv_{\mathrm{sl}} \Varid{sl}_{1}'\mathrel{\wedge}\Varid{sl}_{2}\equiv_{\mathrm{sl}} \Varid{sl}_{2}'\Longrightarrow\Varid{sl}_{1}\mathbin{;}\Varid{sl}_{2}\equiv_{\mathrm{sl}} \Varid{sl}_{1}'\mathbin{;}\Varid{sl}_{2}'{}\<[E]%
\ColumnHook
\end{hscode}\resethooks
\endswithdisplay
\end{theorem}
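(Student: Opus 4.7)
My plan is to prove the three laws and well-behavedness separately, since they decompose naturally once the right witnessing relations are in hand. The core fact to establish first is well-behavedness of $sl_1;sl_2$: unfolding the definition of its $\Varid{put}_\mathrm{R}$ and $\Varid{put}_\mathrm{L}$, proving $\mathsf{(PutRL)}$ amounts to tracing a round-trip through the pair complement $(\Varid{s}_{1},\Varid{s}_{2})$ and observing that the intermediate $\beta$-value produced on the way right is the same one consumed on the way back left. I would apply $\mathsf{(PutRL)}$ of $\Varid{sl}_{2}$ first (to the inner $\beta$-to-$\gamma$ hop) to recover the intermediate $\beta$ and its updated complement, then apply $\mathsf{(PutRL)}$ of $\Varid{sl}_{1}$ to the outer hop to recover the original $\alpha$; $\mathsf{(PutLR)}$ is symmetric.

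For $\mathsf{(Identity)}$, the complements of $\Varid{sl};id_{\mathrm{sl}}$ and $\Varid{sl}$ are $(\Conid{C},())$ and $\Conid{C}$ respectively, so the natural witness is $\Conid{R}=\{((c,()),c)\mid c\in\Conid{C}\}$. The missing values are related by construction, and because $id_{\mathrm{sl}}$'s $\Varid{put}_\mathrm{R},\Varid{put}_\mathrm{L}$ are identities and do not touch the unit complement, tracing either direction through $\Varid{sl};id_{\mathrm{sl}}$ reduces to one application of the corresponding operation of $\Varid{sl}$, preserving $\Conid{R}$; the mirror case $id_{\mathrm{sl}};\Varid{sl}$ is analogous. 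For $\mathsf{(Assoc)}$, I would take the obvious bijection $\Conid{R}=\{((s_1,(s_2,s_3)),((s_1,s_2),s_3))\}$ as the witness; the missing values match, and a routine unfolding of the two nested compositions shows that the same three component $\Varid{put}_\mathrm{R}$'s (respectively $\Varid{put}_\mathrm{L}$'s) fire in the same order on both sides, so they produce matching $\Conid{R}$-related updated complements.

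For $\mathsf{(Cong)}$, suppose $\Varid{sl}_{1}\equiv_{\mathrm{sl}}\Varid{sl}_{1}'$ via $\Conid{R}_{1}\subseteq\Conid{C}_{1}\times\Conid{C}_{1}'$ and $\Varid{sl}_{2}\equiv_{\mathrm{sl}}\Varid{sl}_{2}'$ via $\Conid{R}_{2}\subseteq\Conid{C}_{2}\times\Conid{C}_{2}'$. The natural witness for $\Varid{sl}_{1};\Varid{sl}_{2}\equiv_{\mathrm{sl}}\Varid{sl}_{1}';\Varid{sl}_{2}'$ is the product relation $\Conid{R}=\{((s_1,s_2),(s_1',s_2'))\mid (s_1,s_1')\in\Conid{R}_{1}\wedge(s_2,s_2')\in\Conid{R}_{2}\}$. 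The missing pair is in $\Conid{R}$ by assumption. To show $(\Varid{sl}_{1};\Varid{sl}_{2}).\Varid{put}_\mathrm{R}\sim_{\Conid{R}}(\Varid{sl}_{1}';\Varid{sl}_{2}').\Varid{put}_\mathrm{R}$ I would take $\Conid{R}$-related inputs, unfold both compositions, apply $\Varid{sl}_{1}.\Varid{put}_\mathrm{R}\sim_{\Conid{R}_{1}}\Varid{sl}_{1}'.\Varid{put}_\mathrm{R}$ to deduce that the intermediate $\beta$-values agree and the updated first components remain $\Conid{R}_{1}$-related, and then feed that common $\beta$ into $\Varid{sl}_{2}.\Varid{put}_\mathrm{R}\sim_{\Conid{R}_{2}}\Varid{sl}_{2}'.\Varid{put}_\mathrm{R}$ to conclude the final $\gamma$'s agree and the second components are $\Conid{R}_{2}$-related; the $\Varid{put}_\mathrm{L}$ case is symmetric.

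The main obstacle I anticipate is purely bookkeeping in the congruence case: $\sim_{\Conid{R}}$ is defined in terms of the outputs that actually occur in the $\mathbf{let}$ destructurings, so the chaining argument has to be phrased carefully to avoid accidentally assuming that the two lenses produce the same intermediate $\beta$ before $\sim_{\Conid{R}_{1}}$ has licensed that conclusion. Once the compositional structure of $\sim$ on product relations is isolated as a small lemma (\emph{viz.}, that if $f\sim_{R}g$ and $f'\sim_{S}g'$ then the left-to-right threading of $f;f'$ and $g;g'$ is related by $R\times S$), all three laws fall out as easy consequences, and well-behavedness is just the instance where both sides are the same lens.
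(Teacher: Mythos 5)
Your proposal is correct and follows the standard argument: well-behavedness of the composite by peeling off $\mathsf{(PutRL)}$/$\mathsf{(PutLR)}$ for the inner lens and then the outer one, and the three laws by exhibiting the evident witnessing relations (the graph of the unit/reassociation bijections for $\mathsf{(Identity)}$ and $\mathsf{(Assoc)}$, and the product relation $R_1\times R_2$ for $\mathsf{(Cong)}$, threading $\sim_{R_1}$ before $\sim_{R_2}$ to justify that the intermediate values agree). The paper does not reprove this result but cites it from Hofmann, Pierce and Wagner, whose proof proceeds in essentially the same way, so there is nothing to reconcile.
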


\subsection{Naive monadic symmetric lenses}

We now consider an obvious
monadic generalisation of symmetric lenses, in which the \ensuremath{\Varid{put}_\mathrm{L}} and \ensuremath{\Varid{put}_\mathrm{R}} functions are allowed to have effects in some monad \ensuremath{\Conid{M}}:
\begin{definition}
  A \emph{monadic symmetric lens} from \ensuremath{\Conid{A}} to \ensuremath{\Conid{B}} with complement type \ensuremath{\Conid{C}}
  and effects \ensuremath{\Conid{M}} consists of two functions converting \ensuremath{\Conid{A}} to \ensuremath{\Conid{B}} and vice
  versa, each also operating on \ensuremath{\Conid{C}} and possibly having effects in \ensuremath{\Conid{M}}, and
  a complement value \ensuremath{\Varid{missing}} used for initialisation: 
\begin{hscode}\SaveRestoreHook
\column{B}{@{}>{\hspre}l<{\hspost}@{}}%
\column{39}{@{}>{\hspre}l<{\hspost}@{}}%
\column{49}{@{}>{\hspre}l<{\hspost}@{}}%
\column{E}{@{}>{\hspre}l<{\hspost}@{}}%
\>[B]{}\mathbf{data}\;\monadic{\alpha\mathbin{\overset{\gamma}{\longleftrightarrow}}\beta}{\mu}\mathrel{=}\Conid{SMLens}\;\{\mskip1.5mu {}\<[39]%
\>[39]{}\Varid{mput}_\mathrm{R}{}\<[49]%
\>[49]{}\mathbin{::}(\alpha,\gamma)\to \mu\;(\beta,\gamma),{}\<[E]%
\\
\>[39]{}\Varid{mput}_\mathrm{L}{}\<[49]%
\>[49]{}\mathbin{::}(\beta,\gamma)\to \mu\;(\alpha,\gamma),{}\<[E]%
\\
\>[39]{}\Varid{missing}{}\<[49]%
\>[49]{}\mathbin{::}\gamma\mskip1.5mu\}{}\<[E]%
\ColumnHook
\end{hscode}\resethooks
Such a lens \ensuremath{\Varid{sl}} is called \emph{well-behaved} if:
\begin{hscode}\SaveRestoreHook
\column{B}{@{}>{\hspre}l<{\hspost}@{}}%
\column{14}{@{}>{\hspre}c<{\hspost}@{}}%
\column{14E}{@{}l@{}}%
\column{21}{@{}>{\hspre}l<{\hspost}@{}}%
\column{23}{@{}>{\hspre}c<{\hspost}@{}}%
\column{23E}{@{}l@{}}%
\column{26}{@{}>{\hspre}l<{\hspost}@{}}%
\column{E}{@{}>{\hspre}l<{\hspost}@{}}%
\>[B]{}\mathsf{(PutRLM)}{}\<[14]%
\>[14]{}\quad{}\<[14E]%
\>[21]{}\mathbf{do}\;\{\mskip1.5mu (\Varid{b},\Varid{c'})\leftarrow \Varid{sl}\mathord{.}\Varid{mput}_\mathrm{R}\;(\Varid{a},\Varid{c});\Varid{sl}\mathord{.}\Varid{mput}_\mathrm{L}\;(\Varid{b},\Varid{c'})\mskip1.5mu\}{}\<[E]%
\\
\>[21]{}\hsindent{2}{}\<[23]%
\>[23]{}\mathrel{=}{}\<[23E]%
\>[26]{}\mathbf{do}\;\{\mskip1.5mu (\Varid{b},\Varid{c'})\leftarrow \Varid{sl}\mathord{.}\Varid{mput}_\mathrm{R}\;(\Varid{a},\Varid{c});\Varid{return}\;(\Varid{a},\Varid{c'})\mskip1.5mu\}{}\<[E]%
\\
\>[B]{}\mathsf{(PutLRM)}{}\<[14]%
\>[14]{}\quad{}\<[14E]%
\>[21]{}\mathbf{do}\;\{\mskip1.5mu (\Varid{a},\Varid{c'})\leftarrow \Varid{sl}\mathord{.}\Varid{mput}_\mathrm{L}\;(\Varid{b},\Varid{c});\Varid{sl}\mathord{.}\Varid{mput}_\mathrm{R}\;(\Varid{a},\Varid{c'})\mskip1.5mu\}{}\<[E]%
\\
\>[21]{}\hsindent{2}{}\<[23]%
\>[23]{}\mathrel{=}{}\<[23E]%
\>[26]{}\mathbf{do}\;\{\mskip1.5mu (\Varid{a},\Varid{c'})\leftarrow \Varid{sl}\mathord{.}\Varid{mput}_\mathrm{L}\;(\Varid{b},\Varid{c});\Varid{return}\;(\Varid{b},\Varid{c'})\mskip1.5mu\}{}\<[E]%
\ColumnHook
\end{hscode}\resethooks
\endswithdisplay
\end{definition}

The above monadic generalisation of symmetric lenses appears natural,
but it turns out to have some idiosyncrasies, similar to those of the
naive version of monadic lenses we considered in
Section~\ref{sec:naive-lens}.


\paragraph{Composition and well-behavedness}

Consider the following candidate definition of composition for
monadic symmetric lenses:
\begin{hscode}\SaveRestoreHook
\column{B}{@{}>{\hspre}l<{\hspost}@{}}%
\column{9}{@{}>{\hspre}l<{\hspost}@{}}%
\column{16}{@{}>{\hspre}l<{\hspost}@{}}%
\column{27}{@{}>{\hspre}l<{\hspost}@{}}%
\column{29}{@{}>{\hspre}l<{\hspost}@{}}%
\column{35}{@{}>{\hspre}l<{\hspost}@{}}%
\column{E}{@{}>{\hspre}l<{\hspost}@{}}%
\>[B]{}(\mathbin{;})\mathbin{::}{}\<[16]%
\>[16]{}\Conid{Monad}\;\mu\Rightarrow {}\<[29]%
\>[29]{}\monadic{\alpha\mathbin{\overset{\sigma_1}{\longleftrightarrow}}\beta}{\mu}\to \monadic{\beta\mathbin{\overset{\sigma_2}{\longleftrightarrow}}\gamma}{\mu}\to \monadic{\alpha\mathbin{\overset{(\sigma_1,\sigma_2)}{\longleftrightarrow}}\gamma}{\mu}{}\<[E]%
\\
\>[B]{}\Varid{sl}_{1}\mathbin{;}\Varid{sl}_{2}\mathrel{=}\Conid{SMLens}\;\Varid{put}_\mathrm{R}\;\Varid{put}_\mathrm{L}\;\Varid{missing}\;\mathbf{where}{}\<[E]%
\\
\>[B]{}\hsindent{9}{}\<[9]%
\>[9]{}\Varid{put}_\mathrm{R}\;(\Varid{a},(\Varid{s}_{1},\Varid{s}_{2})){}\<[27]%
\>[27]{}\mathrel{=}\mathbf{do}\;\{\mskip1.5mu {}\<[35]%
\>[35]{}(\Varid{b},\Varid{s}_{1}')\leftarrow \Varid{sl}_{1}\mathord{.}\Varid{mput}_\mathrm{R}\;(\Varid{a},\Varid{s}_{1});{}\<[E]%
\\
\>[35]{}(\Varid{c},\Varid{s}_{2}')\leftarrow \Varid{sl}_{2}\mathord{.}\Varid{mput}_\mathrm{R}\;(\Varid{b},\Varid{s}_{2});{}\<[E]%
\\
\>[35]{}\Varid{return}\;(\Varid{c},(\Varid{s}_{1}',\Varid{s}_{2}'))\mskip1.5mu\}{}\<[E]%
\\
\>[B]{}\hsindent{9}{}\<[9]%
\>[9]{}\Varid{put}_\mathrm{L}\;(\Varid{c},(\Varid{s}_{1},\Varid{s}_{2})){}\<[27]%
\>[27]{}\mathrel{=}\mathbf{do}\;\{\mskip1.5mu {}\<[35]%
\>[35]{}(\Varid{b},\Varid{s}_{2}')\leftarrow \Varid{sl}_{2}\mathord{.}\Varid{mput}_\mathrm{L}\;(\Varid{c},\Varid{s}_{2});{}\<[E]%
\\
\>[35]{}(\Varid{a},\Varid{s}_{1}')\leftarrow \Varid{sl}_{1}\mathord{.}\Varid{mput}_\mathrm{L}\;(\Varid{b},\Varid{s}_{1});{}\<[E]%
\\
\>[35]{}\Varid{return}\;(\Varid{a},(\Varid{s}_{1}',\Varid{s}_{2}'))\mskip1.5mu\}{}\<[E]%
\\
\>[B]{}\hsindent{9}{}\<[9]%
\>[9]{}\Varid{missing}{}\<[27]%
\>[27]{}\mathrel{=}(\Varid{sl}_{1}\mathord{.}\Varid{missing},\Varid{sl}_{2}\mathord{.}\Varid{missing}){}\<[E]%
\ColumnHook
\end{hscode}\resethooks
which seems to be the obvious generalisation of pure symmetric lens
composition to the monadic case.  However, it does not always preserve
well-behavedness.
\begin{example}\label{ex:counterexample2}
Consider the following construction:
  \begin{hscode}\SaveRestoreHook
\column{B}{@{}>{\hspre}l<{\hspost}@{}}%
\column{5}{@{}>{\hspre}l<{\hspost}@{}}%
\column{16}{@{}>{\hspre}c<{\hspost}@{}}%
\column{16E}{@{}l@{}}%
\column{20}{@{}>{\hspre}l<{\hspost}@{}}%
\column{E}{@{}>{\hspre}l<{\hspost}@{}}%
\>[5]{}\Varid{setBool}{}\<[16]%
\>[16]{}\mathbin{::}{}\<[16E]%
\>[20]{}\Conid{Bool}\to \monadic{()\mathbin{\overset{()}{\longleftrightarrow}}()}{\Conid{State}\;\Conid{Bool}}{}\<[E]%
\\
\>[5]{}\Varid{setBool}\;\Varid{b}{}\<[16]%
\>[16]{}\mathrel{=}{}\<[16E]%
\>[20]{}\Conid{SMLens}\;\Varid{m}\;\Varid{m}\;()\;\mathbf{where}\;\Varid{m}\;\anonymous \mathrel{=}\mathbf{do}\;\{\mskip1.5mu \set\;\Varid{b};\Varid{return}\;((),())\mskip1.5mu\}{}\<[E]%
\ColumnHook
\end{hscode}\resethooks
  The lens \ensuremath{\Varid{setBool}\;\Conid{True}} has no effect on the complement or values,
  but sets the state to \ensuremath{\Conid{True}}.  Both \ensuremath{\Varid{setBool}\;\Conid{True}} and
  \ensuremath{\Varid{setBool}\;\Conid{False}} are well-behaved, but their composition (in either
  direction) is not: \ensuremath{\mathsf{(PutRLM)}} fails for
  \ensuremath{\Varid{setBool}\;\Conid{True};\Varid{setBool}\;\Conid{False}}
  because \ensuremath{\Varid{setBool}\;\Conid{True}} and \ensuremath{\Varid{setBool}\;\Conid{False}} share a single \ensuremath{\Conid{Bool}}
  state value.
\end{example}

\begin{proposition}\label{prop:setBool-wb-comp-not-wb}
  \ensuremath{\Varid{setBool}\;\Varid{b}} is well-behaved for \ensuremath{\Varid{b}\in\{\mskip1.5mu \Conid{True},\Conid{False}\mskip1.5mu\}}, but 
  \ensuremath{\Varid{setBool}\;\Conid{True}\mathbin{;}\Varid{setBool}\;\Conid{False}} is not well-behaved. 
\end{proposition}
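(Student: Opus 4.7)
The plan is to handle the two parts of the proposition separately: first verify well-behavedness of each $\Varid{setBool}\;\Varid{b}$ by direct calculation with the state monad laws, then exhibit a concrete failure of $\mathsf{(PutRLM)}$ for the composition by tracking the effect on the $\Conid{Bool}$ state.

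For the first part, note that in $\Varid{setBool}\;\Varid{b}$ the two directions $\Varid{mput}_\mathrm{R}$ and $\Varid{mput}_\mathrm{L}$ are the same function $m\;\_ = \mathbf{do}\;\{\set\;\Varid{b};\Varid{return}\;((),())\}$. To check $\mathsf{(PutRLM)}$, I unfold both sides: the LHS is $\mathbf{do}\;\{\set\;\Varid{b};\set\;\Varid{b};\Varid{return}\;((),())\}$ and the RHS is $\mathbf{do}\;\{\set\;\Varid{b};\Varid{return}\;((),())\}$. These are equal because $\set\;\Varid{b};\set\;\Varid{b} = \set\;\Varid{b}$ for the state monad (a standard idempotence property of repeated assignment to the same value). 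The argument for $\mathsf{(PutLRM)}$ is identical by symmetry. Since the value components are trivial (both sides return $((),())$), the only content of well-behavedness lies in the effects, which match.

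For the second part, I unfold the composed lens $\Varid{sl} = \Varid{setBool}\;\Conid{True};\Varid{setBool}\;\Conid{False}$ explicitly. Its $\Varid{put}_\mathrm{R}$ sequences $\Varid{setBool}\;\Conid{True}.\Varid{mput}_\mathrm{R}$ then $\Varid{setBool}\;\Conid{False}.\Varid{mput}_\mathrm{R}$, whose combined effect is $\set\;\Conid{True};\set\;\Conid{False}$, leaving the state at $\Conid{False}$. Its $\Varid{put}_\mathrm{L}$ sequences $\Varid{setBool}\;\Conid{False}.\Varid{mput}_\mathrm{L}$ then $\Varid{setBool}\;\Conid{True}.\Varid{mput}_\mathrm{L}$, with combined effect $\set\;\Conid{False};\set\;\Conid{True}$, leaving the state at $\Conid{True}$. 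Now I instantiate $\mathsf{(PutRLM)}$ at $a = ()$ and $c = ((),())$: the LHS runs $\Varid{put}_\mathrm{R}$ followed by $\Varid{put}_\mathrm{L}$, so the total effect is $\set\;\Conid{True};\set\;\Conid{False};\set\;\Conid{False};\set\;\Conid{True}$, ending with state $\Conid{True}$; the RHS runs just $\Varid{put}_\mathrm{R}$ and a pure return, so the total effect is $\set\;\Conid{True};\set\;\Conid{False}$, ending with state $\Conid{False}$.

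Since two $\Conid{State}\;\Conid{Bool}$ computations are equal only if, for every initial state, they produce identical final states and return values, and these two differ on the final state regardless of initial state, the equation $\mathsf{(PutRLM)}$ fails for $\Varid{sl}$. The main obstacle, such as it is, is purely bookkeeping: keeping the interleaving of the four $\set$ actions straight and making sure the reader sees that the trivial return values are genuinely irrelevant, so that the inequality reduces entirely to the observable difference in the final $\Conid{Bool}$ state.
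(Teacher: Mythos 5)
Your proposal is correct and follows essentially the same route as the paper: well-behavedness of each \ensuremath{\Varid{setBool}\;\Varid{b}} via the idempotence \ensuremath{\set\;\Varid{b};\set\;\Varid{b}\mathrel{=}\set\;\Varid{b}}, and failure of \ensuremath{\mathsf{(PutRLM)}} for the composition by observing that the left-hand side leaves the shared state at \ensuremath{\Conid{True}} (its last effect being \ensuremath{\set\;\Conid{True}} from \ensuremath{\Varid{setBool}\;\Conid{True}}'s \ensuremath{\Varid{mput}_\mathrm{L}}) while the right-hand side leaves it at \ensuremath{\Conid{False}}. The paper carries out the same calculation in more explicit \ensuremath{\mathbf{do}}-notation steps, but the key ideas and the witnessing discrepancy are identical to yours.
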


Composition does preserve well-behavedness for commutative monads,
i.e. those for which 
\begin{hscode}\SaveRestoreHook
\column{B}{@{}>{\hspre}l<{\hspost}@{}}%
\column{E}{@{}>{\hspre}l<{\hspost}@{}}%
\>[B]{}\mathbf{do}\;\{\mskip1.5mu \Varid{a}\leftarrow \Varid{x};\Varid{b}\leftarrow \Varid{y};\Varid{return}\;(\Varid{a},\Varid{b})\mskip1.5mu\}\mathrel{=}\mathbf{do}\;\{\mskip1.5mu \Varid{b}\leftarrow \Varid{y};\Varid{a}\leftarrow \Varid{x};\Varid{return}\;(\Varid{a},\Varid{b})\mskip1.5mu\}{}\<[E]%
\ColumnHook
\end{hscode}\resethooks
but this rules out many interesting
monads, such as \ensuremath{\Conid{State}} and \ensuremath{\Conid{IO}}.






\subsection{Entangled state monads}


The types of the \ensuremath{\Varid{mput}_\mathrm{R}} and \ensuremath{\Varid{mput}_\mathrm{L}} operations of symmetric lenses
can be seen (modulo mild reordering) as stateful operations in the
\emph{state monad} \ensuremath{\Conid{State}\;\gamma\;\alpha\mathrel{=}\gamma\to (\alpha,\gamma)}, where the state
\ensuremath{\gamma\mathrel{=}\Conid{C}}.  This observation was also anticipated by Hofmann et al.
In a sequence of papers, we considered generalising these operations and their
laws to an arbitrary monad~\citep{cheney14bx2,abousaleh15mpc,abousaleh15bx}.  
In our
initial workshop paper, we proposed the following definition:
\begin{hscode}\SaveRestoreHook
\column{B}{@{}>{\hspre}l<{\hspost}@{}}%
\column{3}{@{}>{\hspre}l<{\hspost}@{}}%
\column{35}{@{}>{\hspre}l<{\hspost}@{}}%
\column{E}{@{}>{\hspre}l<{\hspost}@{}}%
\>[3]{}\mathbf{data}\;\monadic{\alpha\britishrail\beta}{\mu}\mathrel{=}\Conid{SetBX}\;\{\mskip1.5mu {}\<[35]%
\>[35]{}\get{L}\mathbin{::}\mu\;\alpha,\set{L}\mathbin{::}\alpha\to \mu\;(),{}\<[E]%
\\
\>[35]{}\get{R}\mathbin{::}\mu\;\beta,\set{R}\mathbin{::}\beta\to \mu\;()\mskip1.5mu\}{}\<[E]%
\ColumnHook
\end{hscode}\resethooks
subject to a subset of the  \ensuremath{\Conid{State}} monad laws~\citep{Plotkin&Power2002:Notions}, such as:
\begin{hscode}\SaveRestoreHook
\column{B}{@{}>{\hspre}l<{\hspost}@{}}%
\column{3}{@{}>{\hspre}l<{\hspost}@{}}%
\column{14}{@{}>{\hspre}l<{\hspost}@{}}%
\column{41}{@{}>{\hspre}l<{\hspost}@{}}%
\column{E}{@{}>{\hspre}l<{\hspost}@{}}%
\>[3]{}\mathsf{(Get_LSet_L)}{}\<[14]%
\>[14]{}\quad\mathbf{do}\;\{\mskip1.5mu \Varid{a}\leftarrow \get{L};\set{L}\;\Varid{a}\mskip1.5mu\}{}\<[41]%
\>[41]{}\mathrel{=}\Varid{return}\;(){}\<[E]%
\\
\>[3]{}\mathsf{(Set_LGet_L)}{}\<[14]%
\>[14]{}\quad\mathbf{do}\;\{\mskip1.5mu \set{L}\;\Varid{a};\get{L}\mskip1.5mu\}{}\<[41]%
\>[41]{}\mathrel{=}\mathbf{do}\;\{\mskip1.5mu \set{L}\;\Varid{a};\Varid{return}\;\Varid{a}\mskip1.5mu\}{}\<[E]%
\ColumnHook
\end{hscode}\resethooks
This presentation makes clear that
bidirectionality can be viewed as a state effect in which two ``views'' of
some common state are \emph{entangled}.  That is, rather than storing
a pair of views, each independently variable, they are entangled, in
the sense that a change to either may also change the other.
Accordingly, the entangled state monad operations do \emph{not}
satisfy all of the usual laws of state: for example, the \ensuremath{\set{L}} and
\ensuremath{\set{R}} operations do not commute.


However, one difficulty with the entangled state monad formalism is
that, as discussed in Section~\ref{sec:naive-lens}, effectful
\ensuremath{\Varid{mget}} operations cause problems for composition.  It turned out to be nontrivial
to define a satisfactory notion of composition, even for the well-behaved
special case where \ensuremath{\mu\mathrel{=}\Conid{StateT}\;\sigma\;\nu} for some \ensuremath{\nu}, where \ensuremath{\Conid{StateT}} is
the \emph{state monad transformer}~\citep{liang95popl}, i.e. \ensuremath{\Conid{StateT}\;\sigma\;\nu\;\alpha\mathrel{=}\sigma\to \nu\;(\alpha,\sigma)}.
We formulated the definition of \emph{monadic lenses} given earlier in
this paper in the process of exploring this design space.



\subsection{Spans of monadic lenses}

\citet{symlens} showed that a symmetric lens is
equivalent to a \emph{span} of two ordinary lenses, and later work by
\citet{johnson14bx} investigated such \emph{spans
  of lenses} in greater depth.  Accordingly, we propose the following
definition:

\begin{definition}[Monadic lens spans]
  A span of monadic lenses (``\ensuremath{\Conid{M}}-lens span'') is a pair of \ensuremath{\Conid{M}}-lenses having the same source:
\begin{hscode}\SaveRestoreHook
\column{B}{@{}>{\hspre}l<{\hspost}@{}}%
\column{5}{@{}>{\hspre}l<{\hspost}@{}}%
\column{73}{@{}>{\hspre}l<{\hspost}@{}}%
\column{E}{@{}>{\hspre}l<{\hspost}@{}}%
\>[5]{}\mathbf{type}\;\monadic{\alpha\mathbin{\reflectbox{$\rightsquigarrow$}}\sigma\rightsquigarrow\beta}{\mu}\mathrel{=}\Conid{Span}\;\{\mskip1.5mu \Varid{left}\mathbin{::}\monadic{\sigma\rightsquigarrow\alpha}{\mu},\Varid{right}{}\<[73]%
\>[73]{}\mathbin{::}\monadic{\sigma\rightsquigarrow\beta}{\mu}\mskip1.5mu\}{}\<[E]%
\ColumnHook
\end{hscode}\resethooks
We say that an \ensuremath{\Conid{M}}-lens span is well-behaved if both of its components are.
\end{definition}
We first note that we can extend either leg of a span with a monadic lens (preserving well-behavedness if the arguments are well-behaved):
\begin{hscode}\SaveRestoreHook
\column{B}{@{}>{\hspre}l<{\hspost}@{}}%
\column{3}{@{}>{\hspre}l<{\hspost}@{}}%
\column{13}{@{}>{\hspre}l<{\hspost}@{}}%
\column{E}{@{}>{\hspre}l<{\hspost}@{}}%
\>[3]{}(\triangleleft){}\<[13]%
\>[13]{}\mathbin{::}\Conid{Monad}\;\mu\Rightarrow \monadic{\alpha_1\rightsquigarrow\alpha_2}{\mu}\to \monadic{\alpha_1\mathbin{\reflectbox{$\rightsquigarrow$}}\sigma\rightsquigarrow\beta}{\mu}\to \monadic{\alpha_2\mathbin{\reflectbox{$\rightsquigarrow$}}\sigma\rightsquigarrow\beta}{\mu}{}\<[E]%
\\
\>[3]{}\Varid{ml}\triangleleft\Varid{sp}{}\<[13]%
\>[13]{}\mathrel{=}\Conid{Span}\;(\Varid{sp}\mathord{.}\Varid{left}\mathbin{;}\Varid{ml})\;(\Varid{sp}\mathord{.}\Varid{right}){}\<[E]%
\\
\>[3]{}(\triangleright){}\<[13]%
\>[13]{}\mathbin{::}\Conid{Monad}\;\mu\Rightarrow \monadic{\alpha\mathbin{\reflectbox{$\rightsquigarrow$}}\sigma\rightsquigarrow\beta_1}{\mu}\to \monadic{\beta_1\rightsquigarrow\beta_2}{\mu}\to \monadic{\alpha\mathbin{\reflectbox{$\rightsquigarrow$}}\sigma\rightsquigarrow\beta_2}{\mu}{}\<[E]%
\\
\>[3]{}\Varid{sp}\triangleright\Varid{ml}{}\<[13]%
\>[13]{}\mathrel{=}\Conid{Span}\;\Varid{sp}\mathord{.}\Varid{left}\;(\Varid{sp}\mathord{.}\Varid{right}\mathbin{;}\Varid{ml}){}\<[E]%
\ColumnHook
\end{hscode}\resethooks

To define composition, the basic idea is as follows.  Given two spans
\ensuremath{\monadic{\Conid{A}\mathbin{\reflectbox{$\rightsquigarrow$}}\Conid{S}_{1}\rightsquigarrow\Conid{B}}{\Conid{M}}} and \ensuremath{\monadic{\Conid{B}\mathbin{\reflectbox{$\rightsquigarrow$}}\Conid{S}_{2}\rightsquigarrow\Conid{C}}{\Conid{M}}}
with a common type \ensuremath{\Conid{B}} ``in the middle'', we want to form a single
span from \ensuremath{\Conid{A}} to \ensuremath{\Conid{C}}.  The obvious thing to try is to form a pullback
of the two monadic lenses from \ensuremath{\Conid{S}_{1}} and \ensuremath{\Conid{S}_{2}} to the common type \ensuremath{\Conid{B}},
obtaining a span from some common state type \ensuremath{\Conid{S}} to the state types
\ensuremath{\Conid{S}_{1}} and \ensuremath{\Conid{S}_{2}}, and composing with the outer legs.  (See Figure~\ref{fig:span-construction}.)
However, the category of monadic lenses doesn't have pullbacks (as Johnson and Rosebrugh note, this is already the case for ordinary lenses).  Instead, we construct the appropriate span as follows.  
\begin{hscode}\SaveRestoreHook
\column{B}{@{}>{\hspre}l<{\hspost}@{}}%
\column{3}{@{}>{\hspre}l<{\hspost}@{}}%
\column{5}{@{}>{\hspre}l<{\hspost}@{}}%
\column{23}{@{}>{\hspre}c<{\hspost}@{}}%
\column{23E}{@{}l@{}}%
\column{26}{@{}>{\hspre}l<{\hspost}@{}}%
\column{E}{@{}>{\hspre}l<{\hspost}@{}}%
\>[3]{}({\Join})\mathbin{::}\Conid{Monad}\;\mu\Rightarrow \monadic{\sigma_1\rightsquigarrow\beta}{\mu}\to \monadic{\sigma_2\rightsquigarrow\beta}{\mu}\to \monadic{\sigma_1\mathbin{\reflectbox{$\rightsquigarrow$}}(\sigma_1\mathbin{\!\Join\!}\sigma_2)\rightsquigarrow\sigma_2}{\mu}{}\<[E]%
\\
\>[3]{}\Varid{l}_{1}\mathbin{\Join}\Varid{l}_{2}\mathrel{=}\Conid{Span}\;(\Conid{MLens}\;\Varid{fst}\;\Varid{put}_\mathrm{L}\;\Varid{create}_\mathrm{L})\;(\Conid{MLens}\;\Varid{snd}\;\Varid{put}_\mathrm{R}\;\Varid{create}_\mathrm{R})\;\mathbf{where}{}\<[E]%
\\
\>[3]{}\hsindent{2}{}\<[5]%
\>[5]{}\Varid{put}_\mathrm{L}\;(\anonymous ,\Varid{s}_{2})\;\Varid{s}_{1}'{}\<[23]%
\>[23]{}\mathrel{=}{}\<[23E]%
\>[26]{}\mathbf{do}\;\{\mskip1.5mu \Varid{s}_{2}'\leftarrow \Varid{l}_{2}\mathord{.}\Varid{mput}\;\Varid{s}_{2}\;(\Varid{l}_{1}\mathord{.}\Varid{mget}\;\Varid{s}_{1}');\Varid{return}\;(\Varid{s}_{1}',\Varid{s}_{2}')\mskip1.5mu\}{}\<[E]%
\\
\>[3]{}\hsindent{2}{}\<[5]%
\>[5]{}\Varid{create}_\mathrm{L}\;\Varid{s}_{1}{}\<[23]%
\>[23]{}\mathrel{=}{}\<[23E]%
\>[26]{}\mathbf{do}\;\{\mskip1.5mu \Varid{s}_{2}'\leftarrow \Varid{l}_{2}\mathord{.}\Varid{mcreate}\;(\Varid{l}_{1}\mathord{.}\Varid{mget}\;\Varid{s}_{1});\Varid{return}\;(\Varid{s}_{1},\Varid{s}_{2}')\mskip1.5mu\}{}\<[E]%
\\
\>[3]{}\hsindent{2}{}\<[5]%
\>[5]{}\Varid{put}_\mathrm{R}\;(\Varid{s}_{1},\anonymous )\;\Varid{s}_{2}'{}\<[23]%
\>[23]{}\mathrel{=}{}\<[23E]%
\>[26]{}\mathbf{do}\;\{\mskip1.5mu \Varid{s}_{1}'\leftarrow \Varid{l}_{1}\mathord{.}\Varid{mput}\;\Varid{s}_{1}\;(\Varid{l}_{2}\mathord{.}\Varid{mget}\;\Varid{s}_{2}');\Varid{return}\;(\Varid{s}_{1}',\Varid{s}_{2}')\mskip1.5mu\}{}\<[E]%
\\
\>[3]{}\hsindent{2}{}\<[5]%
\>[5]{}\Varid{create}_\mathrm{R}\;\Varid{s}_{1}{}\<[23]%
\>[23]{}\mathrel{=}{}\<[23E]%
\>[26]{}\mathbf{do}\;\{\mskip1.5mu \Varid{s}_{1}'\leftarrow \Varid{l}_{1}\mathord{.}\Varid{mcreate}\;(\Varid{l}_{2}\mathord{.}\Varid{mget}\;\Varid{s}_{2});\Varid{return}\;(\Varid{s}_{1}',\Varid{s}_{2})\mskip1.5mu\}{}\<[E]%
\ColumnHook
\end{hscode}\resethooks
where we write \ensuremath{\Conid{S}_{1}\mathbin{\!\Join\!}\Conid{S}_{2}} for the type of \emph{consistent} state
pairs \ensuremath{\{\mskip1.5mu (\Varid{s}_{1},\Varid{s}_{2})\in\Conid{S}_{1}\times\Conid{S}_{2}\mid \Varid{l}_{1}\mathord{.}\Varid{mget}\;(\Varid{s}_{1})\mathrel{=}\Varid{l}_{2}\mathord{.}\Varid{mget}\;(\Varid{s}_{2})\mskip1.5mu\}}.  In the absence of
dependent types, we represent this type as \ensuremath{(\Conid{S}_{1},\Conid{S}_{2})} in Haskell, and
we need to check that the \ensuremath{\Varid{mput}} and \ensuremath{\Varid{mcreate}} operations respect the consistency invariant.

\begin{lemma}\label{lem:cospan2span}
  If  \ensuremath{\Varid{ml}_{\mathrm{1}}\mathbin{::}\monadic{\Conid{S}_{1}\rightsquigarrow\Conid{B}}{\Conid{M}}} and \ensuremath{\Varid{ml}_{\mathrm{2}}\mathbin{::}\monadic{\Conid{S}_{2}\rightsquigarrow\Conid{B}}{\Conid{M}}} are well-behaved then so is \ensuremath{\Varid{ml}_{\mathrm{1}}\mathbin{\Join}\Varid{ml}_{\mathrm{2}}\mathbin{::}\monadic{\Conid{S}_{1}\mathbin{\reflectbox{$\rightsquigarrow$}}(\Conid{S}_{1}\mathbin{\!\Join\!}\Conid{S}_{2})\rightsquigarrow\Conid{S}_{2}}{\mu}}.
\end{lemma}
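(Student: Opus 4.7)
The plan is to verify the three monadic lens laws MGetPut, MPutGet, and MCreateGet for each of the two legs of the span \ensuremath{\Varid{ml}_{\mathrm{1}}\mathbin{\Join}\Varid{ml}_{\mathrm{2}}}, after first establishing that \ensuremath{\Varid{put}_\mathrm{L}}, \ensuremath{\Varid{put}_\mathrm{R}}, \ensuremath{\Varid{create}_\mathrm{L}} and \ensuremath{\Varid{create}_\mathrm{R}} all produce elements of the consistency type \ensuremath{\Conid{S}_{1}\mathbin{\!\Join\!}\Conid{S}_{2}}. Since well-behavedness of a span is definitionally well-behavedness of both legs, this splits the argument into two symmetric halves; I will describe the left leg and remark that the right leg follows by swapping the roles of \ensuremath{\Varid{ml}_{\mathrm{1}}} and \ensuremath{\Varid{ml}_{\mathrm{2}}}.

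First I would check invariant preservation, since all remaining steps presuppose the consistency of the returned state pair. For \ensuremath{\Varid{put}_\mathrm{L}\;(s_1,s_2)\;s_1'}, the monadic action returns \ensuremath{(s_1',s_2')} where \ensuremath{s_2'\leftarrow \Varid{ml}_{\mathrm{2}}\mathord{.}\Varid{mput}\;s_2\;(\Varid{ml}_{\mathrm{1}}\mathord{.}\Varid{mget}\;s_1')}; applying MPutGet for \ensuremath{\Varid{ml}_{\mathrm{2}}} (in the strong continuation form) gives \ensuremath{\Varid{ml}_{\mathrm{2}}\mathord{.}\Varid{mget}\;s_2'\mathrel{=}\Varid{ml}_{\mathrm{1}}\mathord{.}\Varid{mget}\;s_1'}, so \ensuremath{(s_1',s_2')\in\Conid{S}_{1}\mathbin{\!\Join\!}\Conid{S}_{2}}. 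For \ensuremath{\Varid{create}_\mathrm{L}\;s_1}, the same reasoning using MCreateGet for \ensuremath{\Varid{ml}_{\mathrm{2}}} shows \ensuremath{\Varid{ml}_{\mathrm{2}}\mathord{.}\Varid{mget}\;s_2'\mathrel{=}\Varid{ml}_{\mathrm{1}}\mathord{.}\Varid{mget}\;s_1}. The right-leg operations are symmetric.

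The two easy laws for the left leg are MPutGet and MCreateGet: because the get is simply \ensuremath{\Varid{fst}} and both \ensuremath{\Varid{put}_\mathrm{L}} and \ensuremath{\Varid{create}_\mathrm{L}} construct a pair whose first component is exactly the input \ensuremath{s_1'} (respectively \ensuremath{s_1}), the equalities \ensuremath{\Varid{fst}\;(s_1',s_2')\mathrel{=}s_1'} and \ensuremath{\Varid{fst}\;(s_1,s_2')\mathrel{=}s_1} let the continuation forms of the two laws reduce to syntactic identities by the monad laws alone; no hypotheses on \ensuremath{\Varid{ml}_{\mathrm{1}}} or \ensuremath{\Varid{ml}_{\mathrm{2}}} are used beyond well-typedness.

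The main step, and the only place that genuinely uses the invariant, is MGetPut for the left leg: I must show that \ensuremath{\Varid{put}_\mathrm{L}\;(s_1,s_2)\;(\Varid{fst}\;(s_1,s_2))\mathrel{=}\Varid{return}\;(s_1,s_2)} for every \ensuremath{(s_1,s_2)\in\Conid{S}_{1}\mathbin{\!\Join\!}\Conid{S}_{2}}. Unfolding, the left-hand side is \ensuremath{\mathbf{do}\;\{\mskip1.5mu s_2'\leftarrow \Varid{ml}_{\mathrm{2}}\mathord{.}\Varid{mput}\;s_2\;(\Varid{ml}_{\mathrm{1}}\mathord{.}\Varid{mget}\;s_1);\Varid{return}\;(s_1,s_2')\mskip1.5mu\}}; rewriting \ensuremath{\Varid{ml}_{\mathrm{1}}\mathord{.}\Varid{mget}\;s_1} to \ensuremath{\Varid{ml}_{\mathrm{2}}\mathord{.}\Varid{mget}\;s_2} by the consistency invariant and then applying MGetPut for \ensuremath{\Varid{ml}_{\mathrm{2}}} collapses the inner action to \ensuremath{\Varid{return}\;s_2}, whence a left-unit step gives \ensuremath{\Varid{return}\;(s_1,s_2)}. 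The corresponding MGetPut for the right leg uses the invariant in the opposite direction and MGetPut for \ensuremath{\Varid{ml}_{\mathrm{1}}}. The only subtle point is the reliance on the invariant being an inhabitant-level property rather than a Haskell type-level constraint; as the statement already says, in the absence of dependent types we treat \ensuremath{\Conid{S}_{1}\mathbin{\!\Join\!}\Conid{S}_{2}} as the subset of \ensuremath{(\Conid{S}_{1},\Conid{S}_{2})} cut out by the get-equation, and the invariant-preservation step above discharges the corresponding proof obligation.
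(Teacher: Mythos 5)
Your proof is correct and follows essentially the same route as the paper's: establish that \ensuremath{\Varid{put}_\mathrm{L}}, \ensuremath{\Varid{put}_\mathrm{R}}, \ensuremath{\Varid{create}_\mathrm{L}}, \ensuremath{\Varid{create}_\mathrm{R}} preserve the consistency invariant via \ensuremath{\mathsf{(MPutGet)}}/\ensuremath{\mathsf{(MCreateGet)}} of the component lenses, observe that \ensuremath{\mathsf{(MPutGet)}} and \ensuremath{\mathsf{(MCreateGet)}} for the span's legs hold by construction without using the hypotheses, and prove \ensuremath{\mathsf{(MGetPut)}} by rewriting under the invariant and invoking \ensuremath{\mathsf{(MGetPut)}} of the other lens, with the second leg handled by symmetry. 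The only difference is cosmetic (you discharge the invariant obligation first, the paper does it last), and your handling of the trailing \ensuremath{\Varid{return}\;(\Varid{s}_{1},\Varid{s}_{2}')} continuation in the \ensuremath{\mathsf{(MGetPut)}} calculation is, if anything, slightly more explicit than the paper's.
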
 
Note that \ensuremath{\mathsf{(MPutGet)}} and \ensuremath{\mathsf{(MCreateGet)}} hold by construction and do not need the corresponding properties for \ensuremath{\Varid{l}_{1}} and \ensuremath{\Varid{l}_{2}}, but these properties are needed to show that consistency is established by \ensuremath{\Varid{mcreate}} and preserved by \ensuremath{\Varid{mput}}.

We can now define composition as follows:
\begin{hscode}\SaveRestoreHook
\column{B}{@{}>{\hspre}l<{\hspost}@{}}%
\column{41}{@{}>{\hspre}l<{\hspost}@{}}%
\column{89}{@{}>{\hspre}l<{\hspost}@{}}%
\column{E}{@{}>{\hspre}l<{\hspost}@{}}%
\>[B]{}(\mathbin{;})\mathbin{::}\Conid{Monad}\;\mu\Rightarrow \monadic{\alpha\mathbin{\reflectbox{$\rightsquigarrow$}}\sigma_1\rightsquigarrow\beta}{\mu}\to \monadic{\beta\mathbin{\reflectbox{$\rightsquigarrow$}}\sigma_2\rightsquigarrow\gamma}{\mu}\to {}\<[89]%
\>[89]{}\monadic{\alpha\mathbin{\reflectbox{$\rightsquigarrow$}}(\sigma_1\mathbin{\!\Join\!}\sigma_2)\rightsquigarrow\gamma}{\mu}{}\<[E]%
\\
\>[B]{}\Varid{sp}_{\mathrm{1}}\mathbin{;}\Varid{sp}_{\mathrm{2}}\mathrel{=}\Varid{sp}_{\mathrm{1}}\mathord{.}\Varid{left}{}\<[41]%
\>[41]{}\triangleleft(\Varid{sp}_{\mathrm{1}}\mathord{.}\Varid{right}\mathbin{\Join}\Varid{sp}_{\mathrm{2}}\mathord{.}\Varid{left})\triangleright\Varid{sp}_{\mathrm{2}}\mathord{.}\Varid{right}{}\<[E]%
\ColumnHook
\end{hscode}\resethooks

\begin{figure}[tb]
  \centering
  \includegraphics{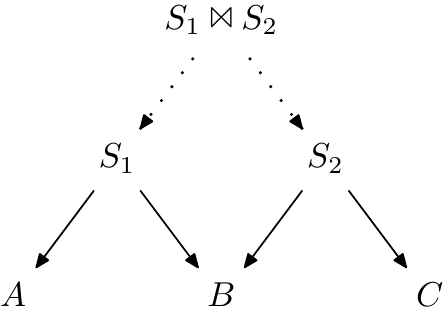}
  \caption{Composing spans of lenses}
  \label{fig:span-construction}
\end{figure}

The well-behavedness of the composition of two well-behaved spans is
immediate because \ensuremath{\triangleleft} and \ensuremath{\triangleright} preserve well-behavedness of their arguments:
\begin{theorem}
  If \ensuremath{\Varid{sp}_{\mathrm{1}}\mathbin{::}\monadic{\Conid{A}\mathbin{\reflectbox{$\rightsquigarrow$}}\Conid{S}_{1}\rightsquigarrow\Conid{B}}{\Conid{M}}} and \ensuremath{\Varid{sp}_{\mathrm{2}}\mathbin{::}\monadic{\Conid{B}\mathbin{\reflectbox{$\rightsquigarrow$}}\Conid{S}_{2}\rightsquigarrow\Conid{C}}{\Conid{M}}}
  are well-behaved spans of monadic lenses, then their composition
  \ensuremath{\Varid{sp}_{\mathrm{1}}\mathbin{;}\Varid{sp}_{\mathrm{2}}} is well-behaved.
\end{theorem}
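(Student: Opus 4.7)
The plan is a direct chain of already-established results, so the main work is bookkeeping rather than genuine calculation. First, unfold the definition
\[
  sp_1\mathbin{;}sp_2 \;=\; sp_1.\mathit{left} \triangleleft (sp_1.\mathit{right} \mathbin{\Join} sp_2.\mathit{left}) \triangleright sp_2.\mathit{right}.
\]
Writing $\mathit{mid} = sp_1.\mathit{right} \mathbin{\Join} sp_2.\mathit{left}$, the definitions of $\triangleleft$ and $\triangleright$ tell us that the left leg of $sp_1\mathbin{;}sp_2$ is the composed monadic lens $\mathit{mid}.\mathit{left}\mathbin{;}sp_1.\mathit{left}$ and the right leg is $\mathit{mid}.\mathit{right}\mathbin{;}sp_2.\mathit{right}$. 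So it suffices to show that each of these two compositions is a well-behaved monadic lens.

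Next, I argue that $\mathit{mid}$ is a well-behaved span. Since $sp_1$ and $sp_2$ are well-behaved spans, both $sp_1.\mathit{right}$ and $sp_2.\mathit{left}$ are well-behaved monadic lenses into the common view type $B$. Applying Lemma~\ref{lem:cospan2span} with $ml_1 = sp_1.\mathit{right}$ and $ml_2 = sp_2.\mathit{left}$ then yields that $\mathit{mid}$ is itself a well-behaved span, and in particular that its two legs $\mathit{mid}.\mathit{left}$ and $\mathit{mid}.\mathit{right}$ are well-behaved monadic lenses (and that the tacit consistency invariant on $S_1 \mathbin{\!\Join\!} S_2$ is established by the relevant $\mathit{mcreate}$ and preserved by the relevant $\mathit{mput}$).

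Finally, $sp_1.\mathit{left}$ and $sp_2.\mathit{right}$ are well-behaved by hypothesis, and $\mathit{mid}.\mathit{left}$ and $\mathit{mid}.\mathit{right}$ are well-behaved by the previous paragraph. Theorem~\ref{thm:mlens-comp-wb} on composition of monadic lenses then immediately gives that both $\mathit{mid}.\mathit{left}\mathbin{;}sp_1.\mathit{left}$ and $\mathit{mid}.\mathit{right}\mathbin{;}sp_2.\mathit{right}$ are well-behaved. That is, both legs of $sp_1\mathbin{;}sp_2$ are well-behaved, so the composition is a well-behaved span.

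I do not expect any genuine obstacle here: all of the non-trivial work has already been done in Lemma~\ref{lem:cospan2span}, which turns a cospan of monadic lenses into a span while maintaining consistency, and in Theorem~\ref{thm:mlens-comp-wb}, which handles composition of monadic lenses. The remark before the theorem --- that well-behavedness of the composition follows ``immediately'' because $\triangleleft$ and $\triangleright$ preserve well-behavedness --- reduces to exactly these two facts, since each of $\triangleleft$ and $\triangleright$ is defined by replacing one leg of a span with a composition of well-behaved monadic lenses.
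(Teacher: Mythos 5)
Your proposal is correct and follows exactly the route the paper intends: the paper itself dismisses this theorem as ``immediate'' from the facts that \ensuremath{\mathbin{\Join}} preserves well-behavedness (Lemma~\ref{lem:cospan2span}) and that \ensuremath{\triangleleft} and \ensuremath{\triangleright} preserve well-behavedness because they are defined via monadic lens composition (Theorem~\ref{thm:mlens-comp-wb}). Your write-up just makes that bookkeeping explicit, so there is nothing to add.
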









Given a span of monadic lenses \ensuremath{\Varid{sp}\mathbin{::}\monadic{\Conid{A}\mathbin{\reflectbox{$\rightsquigarrow$}}\Conid{S}\rightsquigarrow\Conid{B}}{\Conid{M}}},
we can  construct a monadic symmetric lens \ensuremath{\Varid{sl}\mathbin{::}\monadic{\Conid{A}\mathbin{\overset{\Conid{Maybe}\;\Conid{S}}{\longleftrightarrow}}\Conid{B}}{\Conid{M}}}  as follows:

\begin{hscode}\SaveRestoreHook
\column{B}{@{}>{\hspre}l<{\hspost}@{}}%
\column{3}{@{}>{\hspre}l<{\hspost}@{}}%
\column{22}{@{}>{\hspre}c<{\hspost}@{}}%
\column{22E}{@{}l@{}}%
\column{25}{@{}>{\hspre}l<{\hspost}@{}}%
\column{29}{@{}>{\hspre}l<{\hspost}@{}}%
\column{E}{@{}>{\hspre}l<{\hspost}@{}}%
\>[B]{}\Varid{span2smlens}\;(\Varid{left},\Varid{right})\mathrel{=}\Conid{SMLens}\;\Varid{mput}_\mathrm{R}\;\Varid{mput}_\mathrm{L}\;\Conid{Nothing}\;\mathbf{where}{}\<[E]%
\\
\>[B]{}\hsindent{3}{}\<[3]%
\>[3]{}\Varid{mput}_\mathrm{R}\;(\Varid{a},\Conid{Just}\;\Varid{s}){}\<[22]%
\>[22]{}\mathrel{=}{}\<[22E]%
\>[25]{}\mathbf{do}\;{}\<[29]%
\>[29]{}\{\mskip1.5mu \Varid{s'}\leftarrow \Varid{left}\mathord{.}\Varid{mput}\;\Varid{s}\;\Varid{a};\Varid{return}\;(\Varid{right}\mathord{.}\Varid{mget}\;\Varid{s'},\Conid{Just}\;\Varid{s'})\mskip1.5mu\}{}\<[E]%
\\
\>[B]{}\hsindent{3}{}\<[3]%
\>[3]{}\Varid{mput}_\mathrm{R}\;(\Varid{a},\Conid{Nothing}){}\<[22]%
\>[22]{}\mathrel{=}{}\<[22E]%
\>[25]{}\mathbf{do}\;{}\<[29]%
\>[29]{}\{\mskip1.5mu \Varid{s'}\leftarrow \Varid{left}\mathord{.}\Varid{mcreate}\;\Varid{a};\Varid{return}\;(\Varid{right}\mathord{.}\Varid{mget}\;\Varid{s'},\Conid{Just}\;\Varid{s'})\mskip1.5mu\}{}\<[E]%
\\
\>[B]{}\hsindent{3}{}\<[3]%
\>[3]{}\Varid{mput}_\mathrm{L}\;(\Varid{b},\Conid{Just}\;\Varid{s}){}\<[22]%
\>[22]{}\mathrel{=}{}\<[22E]%
\>[25]{}\mathbf{do}\;{}\<[29]%
\>[29]{}\{\mskip1.5mu \Varid{s'}\leftarrow \Varid{right}\mathord{.}\Varid{mput}\;\Varid{s}\;\Varid{b};\Varid{return}\;(\Varid{left}\mathord{.}\Varid{mget}\;\Varid{s'},\Conid{Just}\;\Varid{s'})\mskip1.5mu\}{}\<[E]%
\\
\>[B]{}\hsindent{3}{}\<[3]%
\>[3]{}\Varid{mput}_\mathrm{L}\;(\Varid{b},\Conid{Nothing}){}\<[22]%
\>[22]{}\mathrel{=}{}\<[22E]%
\>[25]{}\mathbf{do}\;{}\<[29]%
\>[29]{}\{\mskip1.5mu \Varid{s'}\leftarrow \Varid{right}\mathord{.}\Varid{mcreate}\;\Varid{b};\Varid{return}\;(\Varid{left}\mathord{.}\Varid{mget}\;\Varid{s'},\Conid{Just}\;\Varid{s'})\mskip1.5mu\}{}\<[E]%
\ColumnHook
\end{hscode}\resethooks
Essentially, these operations 
use the span's \ensuremath{\Varid{mput}} and \ensuremath{\Varid{mget}} operations to update one side and
obtain the new view value for the other side, and use the \ensuremath{\Varid{mcreate}}
operations to build the initial \ensuremath{\Conid{S}} state if the complement is
\ensuremath{\Conid{Nothing}}.

Well-behavedness is preserved by the conversion from  monadic lens spans to
\ensuremath{\Conid{SMLens}}, for arbitrary monads \ensuremath{\Conid{M}}:
\begin{theorem}\label{thm:span2smlens-wb}
If \ensuremath{\Varid{sp}\mathbin{::}\monadic{\Conid{A}\mathbin{\reflectbox{$\rightsquigarrow$}}\Conid{S}\rightsquigarrow\Conid{B}}{\Conid{M}}} is well-behaved, then \ensuremath{\Varid{span2smlens}\;\Varid{sp}} is also well-behaved.
\end{theorem}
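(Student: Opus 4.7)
The plan is to verify the two laws $\mathsf{(PutRLM)}$ and $\mathsf{(PutLRM)}$ for $\Varid{span2smlens}\;\Varid{sp}$ by case analysis on the shape of the complement, which has type $\Conid{Maybe}\;\Conid{S}$. I will treat $\mathsf{(PutRLM)}$ in detail; $\mathsf{(PutLRM)}$ then follows by a fully symmetric argument, exchanging the roles of $\Varid{left}$ and $\Varid{right}$.

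For the subcase in which the starting complement is $\Conid{Just}\;s$, I first unfold the definition of $\Varid{mput}_\mathrm{R}$, obtaining the bind of $s' \leftarrow \Varid{left}.\Varid{mput}\;s\;a$ followed by $\Varid{return}\;(\Varid{right}.\Varid{mget}\;s',\Conid{Just}\;s')$, and then compose with $\Varid{mput}_\mathrm{L}$, which in the $\Conid{Just}\;s'$ branch runs $\Varid{right}.\Varid{mput}\;s'\;(\Varid{right}.\Varid{mget}\;s')$. Applying $\mathsf{(MGetPut)}$ on $\Varid{right}$ collapses this inner call to $\Varid{return}\;s'$, after which the outer result reduces to $(\Varid{left}.\Varid{mget}\;s', \Conid{Just}\;s')$. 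The final step is to invoke the strong form of $\mathsf{(MPutGet)}$ on $\Varid{left}$, with continuation $k\;s'\;b = \Varid{return}\;(b, \Conid{Just}\;s')$, so as to replace $\Varid{left}.\Varid{mget}\;s'$ by $a$; what remains is precisely the right-hand side of $\mathsf{(PutRLM)}$.

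The subcase in which the starting complement is $\Conid{Nothing}$ proceeds identically, except that $\Varid{left}.\Varid{mcreate}\;a$ replaces $\Varid{left}.\Varid{mput}\;s\;a$, and the strong form of $\mathsf{(MCreateGet)}$ on $\Varid{left}$ takes the place of $\mathsf{(MPutGet)}$ to justify $\Varid{left}.\Varid{mget}\;s' = a$. Observe that after $\Varid{mput}_\mathrm{R}$ runs, the complement it produces is always of the form $\Conid{Just}\;s'$, so the subsequent $\Varid{mput}_\mathrm{L}$ uniformly falls into the $\Conid{Just}$ branch and no further case split on the intermediate complement is needed.

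The only mildly tricky point---rather than a real obstacle---is recognising that the calculation binds a state $s'$ and then immediately applies $\Varid{mget}$ to it, so we need the strong (continuation-passing) forms of $\mathsf{(MPutGet)}$ and $\mathsf{(MCreateGet)}$ and not merely their surface equations. Once this is noted, the rest is routine rewriting using the monad laws together with the well-behavedness of the two legs of the span. I expect no reliance on any property of $\Conid{M}$ beyond the standard monad laws, consistent with the theorem's claim that the construction works for arbitrary monads.
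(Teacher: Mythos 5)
Your proposal is correct and follows essentially the same route as the paper's proof: a case split on whether the complement is \ensuremath{\Conid{Nothing}} or \ensuremath{\Conid{Just}\;\Varid{s}}, collapsing the inner \ensuremath{\Varid{right}\mathord{.}\Varid{mput}\;\Varid{s'}\;(\Varid{right}\mathord{.}\Varid{mget}\;\Varid{s'})} via \ensuremath{\mathsf{(MGetPut)}}, and then using the continuation forms of \ensuremath{\mathsf{(MPutGet)}} and \ensuremath{\mathsf{(MCreateGet)}} on the left leg to replace \ensuremath{\Varid{left}\mathord{.}\Varid{mget}\;\Varid{s'}} by \ensuremath{\Varid{a}}. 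Your observation that \ensuremath{\Varid{mput}_\mathrm{R}} always yields a \ensuremath{\Conid{Just}} complement, so no further case split is needed, is exactly the point the paper's calculation relies on implicitly.
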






Given 
\ensuremath{\Varid{sl}\mathbin{::}\monadic{\Conid{A}\mathbin{\overset{\Conid{C}}{\longleftrightarrow}}\Conid{B}}{\Conid{M}}},
let $S \subseteq A \times B \times C$ be the set of \emph{consistent triples}
\ensuremath{(\Varid{a},\Varid{b},\Varid{c})}, that is, those for which \ensuremath{\Varid{sl}\mathord{.}\Varid{mput}_\mathrm{R}\;(\Varid{a},\Varid{c})\mathrel{=}\Varid{return}\;(\Varid{b},\Varid{c})} and \ensuremath{\Varid{sl}\mathord{.}\Varid{mput}_\mathrm{L}\;(\Varid{b},\Varid{c})\mathrel{=}\Varid{return}\;(\Varid{a},\Varid{c})}.
We 
construct \ensuremath{\Varid{sp}\mathbin{::}\monadic{\Conid{A}\mathbin{\reflectbox{$\rightsquigarrow$}}\Conid{S}\rightsquigarrow\Conid{B}}{\Conid{M}}} by
\begin{hscode}\SaveRestoreHook
\column{B}{@{}>{\hspre}l<{\hspost}@{}}%
\column{3}{@{}>{\hspre}l<{\hspost}@{}}%
\column{26}{@{}>{\hspre}l<{\hspost}@{}}%
\column{43}{@{}>{\hspre}l<{\hspost}@{}}%
\column{E}{@{}>{\hspre}l<{\hspost}@{}}%
\>[B]{}\Varid{smlens2span}\;\Varid{sl}\mathrel{=}\Conid{Span}\;(\Conid{MLens}\;\get{L}\;\Varid{put}_\mathrm{L}\;\Varid{create}_\mathrm{L})\;(\Conid{MLens}\;\get{R}\;\Varid{put}_\mathrm{R}\;\Varid{create}_\mathrm{R}){}\<[E]%
\\
\>[B]{}\mathbf{where}{}\<[E]%
\\
\>[B]{}\hsindent{3}{}\<[3]%
\>[3]{}\get{L}\;(\Varid{a},\Varid{b},\Varid{c}){}\<[26]%
\>[26]{}\mathrel{=}\Varid{a}{}\<[E]%
\\
\>[B]{}\hsindent{3}{}\<[3]%
\>[3]{}\Varid{put}_\mathrm{L}\;(\Varid{a},\Varid{b},\Varid{c})\;\Varid{a'}{}\<[26]%
\>[26]{}\mathrel{=}\mathbf{do}\;\{\mskip1.5mu (\Varid{b'},\Varid{c'})\leftarrow \Varid{sl}\mathord{.}\Varid{mput}_\mathrm{R}\;(\Varid{a'},\Varid{c});\Varid{return}\;(\Varid{a'},\Varid{b'},\Varid{c'})\mskip1.5mu\}{}\<[E]%
\\
\>[B]{}\hsindent{3}{}\<[3]%
\>[3]{}\Varid{create}_\mathrm{L}\;\Varid{a}{}\<[26]%
\>[26]{}\mathrel{=}\mathbf{do}\;\{\mskip1.5mu (\Varid{b},\Varid{c})\leftarrow {}\<[43]%
\>[43]{}\Varid{sl}\mathord{.}\Varid{mput}_\mathrm{R}\;(\Varid{a},\Varid{sl}\mathord{.}\Varid{missing});\Varid{return}\;(\Varid{a},\Varid{b},\Varid{c})\mskip1.5mu\}{}\<[E]%
\\
\>[B]{}\hsindent{3}{}\<[3]%
\>[3]{}\get{R}\;(\Varid{a},\Varid{b},\Varid{c}){}\<[26]%
\>[26]{}\mathrel{=}\Varid{b}{}\<[E]%
\\
\>[B]{}\hsindent{3}{}\<[3]%
\>[3]{}\Varid{put}_\mathrm{R}\;(\Varid{a},\Varid{b},\Varid{c})\;\Varid{b'}{}\<[26]%
\>[26]{}\mathrel{=}\mathbf{do}\;\{\mskip1.5mu (\Varid{a'},\Varid{c'})\leftarrow \Varid{sl}\mathord{.}\Varid{mput}_\mathrm{L}\;(\Varid{b'},\Varid{c});\Varid{return}\;(\Varid{a'},\Varid{b'},\Varid{c'})\mskip1.5mu\}{}\<[E]%
\\
\>[B]{}\hsindent{3}{}\<[3]%
\>[3]{}\Varid{create}_\mathrm{R}\;\Varid{b}{}\<[26]%
\>[26]{}\mathrel{=}\mathbf{do}\;\{\mskip1.5mu (\Varid{a},\Varid{c})\leftarrow \Varid{sl}\mathord{.}\Varid{mput}_\mathrm{L}\;(\Varid{b},\Varid{sl}\mathord{.}\Varid{missing});\Varid{return}\;(\Varid{a},\Varid{b},\Varid{c})\mskip1.5mu\}{}\<[E]%
\ColumnHook
\end{hscode}\resethooks

However, \ensuremath{\Varid{smlens2span}} may not preserve well-behavedness even for
simple monads such as \ensuremath{\Conid{Maybe}}, as the following counterexample
illustrates.
\begin{example}\label{ex:counterexamples}
  Consider the following monadic symmetric lens construction:
\begin{hscode}\SaveRestoreHook
\column{B}{@{}>{\hspre}l<{\hspost}@{}}%
\column{E}{@{}>{\hspre}l<{\hspost}@{}}%
\>[B]{}\Varid{fail}\mathbin{::}\monadic{()\mathbin{\overset{()}{\longleftrightarrow}}()}{\Conid{Maybe}}{}\<[E]%
\\
\>[B]{}\Varid{fail}\mathrel{=}\Conid{SMLens}\;\Conid{Nothing}\;\Conid{Nothing}\;(){}\<[E]%
\ColumnHook
\end{hscode}\resethooks


This is well-behaved but \ensuremath{\Varid{smlens2span}\;\Varid{fail}} is not.  In fact, the set
of consistent states of \ensuremath{\Varid{fail}} is empty, and each leg of the
induced span is of the following form:
\begin{hscode}\SaveRestoreHook
\column{B}{@{}>{\hspre}l<{\hspost}@{}}%
\column{E}{@{}>{\hspre}l<{\hspost}@{}}%
\>[B]{}\Varid{failMLens}\mathbin{::}\Conid{MLens}\;\Conid{Maybe}\;\emptyset\;(){}\<[E]%
\\
\>[B]{}\Varid{failMLens}\mathrel{=}\Conid{MLens}\;(\lambda \anonymous \to ())\;(\lambda \anonymous \;()\to \Conid{Nothing})\;(\lambda \anonymous \to \Conid{Nothing}){}\<[E]%
\ColumnHook
\end{hscode}\resethooks
which fails to satisfy \ensuremath{\mathsf{(MGetPut)}}.
\end{example}

For pure symmetric lenses, \ensuremath{\Varid{smlens2span}} does preserve well-behavedness.
\begin{theorem}\label{thm:smlens2span-wb}
  If \ensuremath{\Varid{sl}\mathbin{::}\Conid{SMLens}\;\Conid{Id}\;\Conid{C}\;\Conid{A}\;\Conid{B}} is well-behaved,
  then \ensuremath{\Varid{smlens2span}\;\Varid{sl}} is also well-behaved, with state space \ensuremath{\Conid{S}}
  consisting of the consistent triples of \ensuremath{\Varid{sl}}.
\end{theorem}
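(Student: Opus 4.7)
The plan is to show, for each of the two legs of \ensuremath{\Varid{smlens2span}\;\Varid{sl}}, that the three monadic-lens laws hold over the carrier \ensuremath{\Conid{S}} of consistent triples; by symmetry between \ensuremath{\Varid{put}_\mathrm{R}}/\ensuremath{\Varid{put}_\mathrm{L}} in the construction, it suffices to treat the left leg, and then argue the right leg analogously. Crucially, since \ensuremath{\mu\mathrel{=}\Conid{Id}}, the do-blocks collapse to ordinary let-bindings, so all three laws reduce to their pure lens counterparts \ensuremath{\mathsf{(GetPut)}}, \ensuremath{\mathsf{(PutGet)}}, \ensuremath{\mathsf{(CreateGet)}} on triples.

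First I would record the key consistency observation: if \ensuremath{(\Varid{a},\Varid{b},\Varid{c})\in\Conid{S}} then both \ensuremath{\Varid{sl}\mathord{.}\Varid{put}_\mathrm{R}\;(\Varid{a},\Varid{c})\mathrel{=}(\Varid{b},\Varid{c})} and \ensuremath{\Varid{sl}\mathord{.}\Varid{put}_\mathrm{L}\;(\Varid{b},\Varid{c})\mathrel{=}(\Varid{a},\Varid{c})} hold by definition. Using this, the left-leg \ensuremath{\mathsf{(MGetPut)}} is immediate: given \ensuremath{(\Varid{a},\Varid{b},\Varid{c})\in\Conid{S}}, the expression \ensuremath{\Varid{put}_\mathrm{L}\;(\Varid{a},\Varid{b},\Varid{c})\;\Varid{a}} computes \ensuremath{\Varid{sl}\mathord{.}\Varid{put}_\mathrm{R}\;(\Varid{a},\Varid{c})\mathrel{=}(\Varid{b},\Varid{c})}, so the result is \ensuremath{(\Varid{a},\Varid{b},\Varid{c})}. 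The left-leg \ensuremath{\mathsf{(MPutGet)}} likewise reduces to reading off the first component of \ensuremath{(\Varid{a'},\Varid{b'},\Varid{c'})\mathrel{=}\Varid{put}_\mathrm{L}\;(\Varid{a},\Varid{b},\Varid{c})\;\Varid{a'}}, which is \ensuremath{\Varid{a'}} by construction. \ensuremath{\mathsf{(MCreateGet)}} is equally direct.

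The main obstacle is the typing condition that the outputs of \ensuremath{\Varid{put}_\mathrm{L}} and \ensuremath{\Varid{create}_\mathrm{L}} actually lie in \ensuremath{\Conid{S}}, since otherwise the lens is not even well-typed as a lens on \ensuremath{\Conid{S}}. For \ensuremath{\Varid{put}_\mathrm{L}\;(\Varid{a},\Varid{b},\Varid{c})\;\Varid{a'}\mathrel{=}(\Varid{a'},\Varid{b'},\Varid{c'})}, where \ensuremath{(\Varid{b'},\Varid{c'})\mathrel{=}\Varid{sl}\mathord{.}\Varid{put}_\mathrm{R}\;(\Varid{a'},\Varid{c})}, I would verify the two consistency conditions as follows. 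Applying \ensuremath{\mathsf{(PutRL)}} to \ensuremath{(\Varid{a'},\Varid{c})} yields \ensuremath{\Varid{sl}\mathord{.}\Varid{put}_\mathrm{L}\;(\Varid{b'},\Varid{c'})\mathrel{=}(\Varid{a'},\Varid{c'})}, which gives one half of consistency. Feeding this equation back through \ensuremath{\mathsf{(PutLR)}} then yields \ensuremath{\Varid{sl}\mathord{.}\Varid{put}_\mathrm{R}\;(\Varid{a'},\Varid{c'})\mathrel{=}(\Varid{b'},\Varid{c'})}, completing consistency. The same \ensuremath{\mathsf{(PutRL)}}--\ensuremath{\mathsf{(PutLR)}} chain, starting instead from \ensuremath{(\Varid{a},\Varid{sl}\mathord{.}\Varid{missing})}, shows that \ensuremath{\Varid{create}_\mathrm{L}\;\Varid{a}} lands in \ensuremath{\Conid{S}}.

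Finally, by the evident symmetry swapping the roles of \ensuremath{\Varid{put}_\mathrm{R}}/\ensuremath{\Varid{put}_\mathrm{L}}, \ensuremath{\mathsf{(PutRL)}}/\ensuremath{\mathsf{(PutLR)}}, and the projections, the same argument establishes well-behavedness of the right leg. Thus both legs are well-behaved pure lenses over \ensuremath{\Conid{S}}, and lifting via \ensuremath{\Varid{lens2mlens}} (Lemma~\ref{lem:lens2mlens-wb}) yields the required well-behaved span. The reason the \ensuremath{\Conid{Maybe}} counterexample from Example~\ref{ex:counterexamples} is excluded is precisely that in the pure case the construction above forces \ensuremath{\Conid{S}} to be non-empty whenever \ensuremath{\Conid{A}} (equivalently \ensuremath{\Conid{B}}) is, via the totality of \ensuremath{\Varid{create}_\mathrm{L}} and \ensuremath{\Varid{create}_\mathrm{R}}.
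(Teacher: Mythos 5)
Your proposal is correct and follows essentially the same route as the paper's proof: the substantive step in both is showing that $\Varid{put}_\mathrm{L}$ and $\Varid{create}_\mathrm{L}$ preserve membership in the set of consistent triples via the $\mathsf{(PutRL)}$-then-$\mathsf{(PutLR)}$ chain, after which $\mathsf{(MGetPut)}$ follows from consistency and $\mathsf{(MPutGet)}$/$\mathsf{(MCreateGet)}$ hold by construction. The only cosmetic difference is that you collapse the $\Conid{Id}$-monad do-blocks to pure equations up front, whereas the paper carries out the same calculations in monadic notation.
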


To summarise: spans of monadic lenses are closed under composition,
and correspond to well-behaved symmetric monadic lenses.  However,
there are well-behaved symmetric monadic lenses that do not map to
well-behaved spans.  It seems to be an interesting open problem to give a
direct axiomatisation of the symmetric monadic lenses that are
essentially spans of monadic lenses (and are therefore closed under
composition).
 

\section{Equivalence of spans}\label{sec:equiv}

\citet{symlens} introduced a bisimulation-like notion of equivalence
for pure symmetric lenses, in order to validate laws such as identity,
associativity and congruence of composition.  \citet{johnson14bx}
introduced a definition of equivalence of spans and compared it with
symmetric lens equivalence.  We have considered equivalences based on
isomorphism~\citep{abousaleh15mpc} and
bisimulation~\citep{abousaleh15bx}.  In this section we consider and
relate these approaches in the context of spans of \ensuremath{\Conid{M}}-lenses.

\begin{definition}
  [Isomorphism Equivalence] Two \ensuremath{\Conid{M}}-lens spans \ensuremath{\Varid{sp}_{\mathrm{1}}\mathbin{::}\monadic{\Conid{A}\mathbin{\reflectbox{$\rightsquigarrow$}}\Conid{S}_{1}\rightsquigarrow\Conid{B}}{\Conid{M}}} and \ensuremath{\Varid{sp}_{\mathrm{2}}\mathbin{::}\monadic{\Conid{A}\mathbin{\reflectbox{$\rightsquigarrow$}}\Conid{S}_{2}\rightsquigarrow\Conid{B}}{\Conid{M}}} are isomorphic (\ensuremath{\Varid{sp}\equiv_{\mathrm{i}} \Varid{sp'}}) if there is an isomorphism \ensuremath{\Varid{h}\mathbin{::}\Conid{S}_{1}\to \Conid{S}_{2}} on their
  state spaces such that \ensuremath{\Varid{h}\mathbin{;}\Varid{sp}_{\mathrm{2}}\mathord{.}\Varid{left}\mathrel{=}\Varid{sp}_{\mathrm{1}}\mathord{.}\Varid{left}} and
  \ensuremath{\Varid{h}\mathbin{;}\Varid{sp}_{\mathrm{2}}\mathord{.}\Varid{right}\mathrel{=}\Varid{sp}_{\mathrm{1}}\mathord{.}\Varid{right}}.
\end{definition}
Note that any isomorphism \ensuremath{\Varid{h}\mathbin{::}\Conid{S}_{1}\to \Conid{S}_{2}} can be made into a (monadic)
lens; we omit the explicit conversion.

We consider a second definition of equivalence, inspired by \citet{johnson14bx}, which we call \emph{span equivalence}:
\begin{definition}
  [Span Equivalence] Two \ensuremath{\Conid{M}}-lens spans \ensuremath{\Varid{sp}_{\mathrm{1}}\mathbin{::}\monadic{\Conid{A}\mathbin{\reflectbox{$\rightsquigarrow$}}\Conid{S}_{1}\rightsquigarrow\Conid{B}}{\Conid{M}}}
  and \ensuremath{\Varid{sp}_{\mathrm{2}}\mathbin{::}\monadic{\Conid{A}\mathbin{\reflectbox{$\rightsquigarrow$}}\Conid{S}_{2}\rightsquigarrow\Conid{B}}{\Conid{M}}} are related by \ensuremath{\curvearrowright } if
  there is a full lens \ensuremath{\Varid{h}\mathbin{::}\Conid{S}_{1}\mathbin{\leadsto}\Conid{S}_{2}} such that \ensuremath{\Varid{h}\mathbin{;}\Varid{sp}_{\mathrm{2}}\mathord{.}\Varid{left}\mathrel{=}\Varid{sp}_{\mathrm{1}}\mathord{.}\Varid{left}} and \ensuremath{\Varid{h}\mathbin{;}\Varid{sp}_{\mathrm{2}}\mathord{.}\Varid{right}\mathrel{=}\Varid{sp}_{\mathrm{1}}\mathord{.}\Varid{right}}.  The
  equivalence relation \ensuremath{\equiv_{\mathrm{s}} } is the least equivalence relation
  containing \ensuremath{\curvearrowright }.
\end{definition}
One important consideration emphasised by Johnson and Rosebrugh is
the need to avoid making all compatible spans equivalent to the
``trivial'' span \ensuremath{\monadic{\Conid{A}\mathbin{\reflectbox{$\rightsquigarrow$}}\emptyset\rightsquigarrow\Conid{B}}{\Conid{M}}}.
To avoid this problem, they imposed conditions on \ensuremath{\Varid{h}}:
its \ensuremath{\Varid{get}} function must be surjective and \emph{split}, meaning that
there exists a function \ensuremath{\Varid{c}} such that \ensuremath{\Varid{h}\mathord{.}\Varid{get}\hsdot{\mathbin{\cdot}}{\mathrel{.}}\Varid{c}\mathrel{=}\Varid{id}}.  We chose
instead to require \ensuremath{\Varid{h}} to be a full lens.  This is actually slightly
stronger than Johnson and Rosebrugh's definition, at least from a
constructive perspective, because \ensuremath{\Varid{h}} is equipped with a specific
choice of \ensuremath{\Varid{c}\mathrel{=}\Varid{create}} satisfying \ensuremath{\Varid{h}\mathord{.}\Varid{get}\hsdot{\mathbin{\cdot}}{\mathrel{.}}\Varid{c}\mathrel{=}\Varid{id}}, that is, the
\ensuremath{\mathsf{(CreateGet)}} law.

We have defined span equivalence as the reflexive, symmetric,
transitive closure of \ensuremath{\curvearrowright }.  Interestingly, even though span
equivalence allows for an arbitrary sequence of (pure) lenses between the
respective state spaces, it suffices to consider only spans of
lenses.  To prove this, we first state a lemma about the
\ensuremath{({\Join})} operation used in composition.  Its proof is
straightforward equational reasoning.
\begin{lemma}\label{lem:cospan2span-pullback}
  Suppose \ensuremath{\Varid{l}_{1}\mathbin{::}\Conid{A}\mathbin{\leadsto}\Conid{B}} and \ensuremath{\Varid{l}_{2}\mathbin{::}\Conid{C}\mathbin{\leadsto}\Conid{B}} are pure lenses.  Then \ensuremath{(\Varid{l}_{1}\mathbin{\Join}\Varid{l}_{2})\mathord{.}\Varid{left}\mathbin{;}\Varid{l}_{1}\mathrel{=}(\Varid{l}_{1}\mathbin{\Join}\Varid{l}_{2})\mathord{.}\Varid{right}\mathbin{;}\Varid{l}_{2}}.
\end{lemma}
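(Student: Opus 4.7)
The plan is to establish the equation of monadic lenses component-wise, by unfolding the definitions of $\mathbin{\Join}$, of the left and right projections of the resulting span, and of (monadic) lens composition, and then appealing to $\mathsf{(PutGet)}$ and $\mathsf{(CreateGet)}$ for $\Varid{l}_1$ and $\Varid{l}_2$ together with the consistency invariant $\Varid{l}_1.\Varid{get}\;s_1 = \Varid{l}_2.\Varid{get}\;s_2$ that is built into the type $\sigma_1 \mathbin{\!\Join\!} \sigma_2$ of consistent pairs. Since $\Varid{l}_1$ and $\Varid{l}_2$ are pure, the monadic actions in both composites collapse to pure computations wrapped in $\Varid{return}$, so the equation reduces to equality of the underlying $\Varid{get}$, $\Varid{put}$, and $\Varid{create}$ functions.

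On the $\Varid{get}$ component, the two composites reduce to $\Varid{l}_1.\Varid{get}\cdot\Varid{fst}$ and $\Varid{l}_2.\Varid{get}\cdot\Varid{snd}$ respectively; these agree on every $(s_1,s_2) \in \sigma_1 \mathbin{\!\Join\!} \sigma_2$ by the consistency invariant. On the $\Varid{put}$ component, I would unfold each composite on $(s_1,s_2)$ and a view $b$. The left side computes $s_1' = \Varid{l}_1.\Varid{put}\;s_1\;b$ and then, via $\Varid{put}_\mathrm{L}$, yields $(s_1',\,\Varid{l}_2.\Varid{put}\;s_2\;(\Varid{l}_1.\Varid{get}\;s_1'))$; one application of $\mathsf{(PutGet)}$ for $\Varid{l}_1$ collapses the inner $\Varid{get}$ to $b$, giving $(\Varid{l}_1.\Varid{put}\;s_1\;b,\,\Varid{l}_2.\Varid{put}\;s_2\;b)$. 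A symmetric calculation on the right side, using $\mathsf{(PutGet)}$ for $\Varid{l}_2$, delivers the same pair. The $\Varid{create}$ component is handled analogously with a single application of $\mathsf{(CreateGet)}$ for each of $\Varid{l}_1$ and $\Varid{l}_2$, both sides reducing to $b \mapsto (\Varid{l}_1.\Varid{create}\;b,\,\Varid{l}_2.\Varid{create}\;b)$.

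The only bookkeeping is to ensure that the intermediate pairs produced along the way remain in $\sigma_1 \mathbin{\!\Join\!} \sigma_2$; this is exactly the preservation-of-consistency content of Lemma~\ref{lem:cospan2span}, so I would cite it rather than repeat the verification. I do not expect any genuine obstacle: the whole argument is equational reasoning in the theory of well-behaved pure lenses, and the apparent asymmetry between the two composites dissolves after a single use of $\mathsf{(PutGet)}$ on each side.
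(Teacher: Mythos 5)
Your proposal is correct and follows essentially the same route as the paper's proof: component-wise equational reasoning in which the \ensuremath{\Varid{get}} case is settled by the consistency invariant, and the \ensuremath{\Varid{put}} and \ensuremath{\Varid{create}} cases each reduce both composites to the common pair \ensuremath{(\Varid{l}_{1}\mathord{.}\Varid{put}\;\Varid{s}_{1}\;\Varid{b},\Varid{l}_{2}\mathord{.}\Varid{put}\;\Varid{s}_{2}\;\Varid{b})} (respectively \ensuremath{(\Varid{l}_{1}\mathord{.}\Varid{create}\;\Varid{b},\Varid{l}_{2}\mathord{.}\Varid{create}\;\Varid{b})}) via one application of \ensuremath{\mathsf{(PutGet)}} (respectively \ensuremath{\mathsf{(CreateGet)}}) on each side. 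Your explicit deferral of consistency preservation to Lemma~\ref{lem:cospan2span} is also consistent with how the paper organises the argument.
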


\begin{theorem}\label{thm:span-equivalence}
 Given \ensuremath{\Varid{sp}_{\mathrm{1}}\mathbin{::}\monadic{\Conid{A}\mathbin{\reflectbox{$\rightsquigarrow$}}\Conid{S}_{1}\rightsquigarrow\Conid{B}}{\Conid{M}}}
  and \ensuremath{\Varid{sp}_{\mathrm{2}}\mathbin{::}\monadic{\Conid{A}\mathbin{\reflectbox{$\rightsquigarrow$}}\Conid{S}_{2}\rightsquigarrow\Conid{B}}{\Conid{M}}}, if \ensuremath{\Varid{sp}_{\mathrm{1}}\equiv_{\mathrm{s}} \Varid{sp}_{\mathrm{2}}} then
  there exists \ensuremath{\Varid{sp}\mathbin{::}\Conid{S}_{1}\mathbin{{\reflectbox{$\rightsquigarrow$}}}\Conid{S}\mathbin{{\rightsquigarrow}}\Conid{S}_{2}} such
  that \ensuremath{\Varid{sp}\mathord{.}\Varid{left}\mathbin{;}\Varid{sp}_{\mathrm{1}}\mathord{.}\Varid{left}\mathrel{=}\Varid{sp}\mathord{.}\Varid{right}\mathbin{;}\Varid{sp}_{\mathrm{2}}\mathord{.}\Varid{left}} and \ensuremath{\Varid{sp}\mathord{.}\Varid{left}\mathbin{;}\Varid{sp}_{\mathrm{1}}\mathord{.}\Varid{right}\mathrel{=}\Varid{sp}\mathord{.}\Varid{right}\mathbin{;}\Varid{sp}_{\mathrm{2}}\mathord{.}\Varid{right}}.
\end{theorem}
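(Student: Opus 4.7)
The plan is to proceed by induction on the derivation of $sp_1 \equiv_s sp_2$, exploiting the fact that $\equiv_s$ is the reflexive, symmetric, transitive closure of $\curvearrowright$. In each case we construct a span of pure (well-behaved) lenses between the two state spaces, satisfying the required commutation with the outer legs of $sp_1$ and $sp_2$.

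For reflexivity, the trivial span with apex $S_1$ and both legs equal to $id_l$ suffices. For a single generator step $sp_1 \curvearrowright sp_2$, the definition supplies a full pure lens $h : S_1 \leadsto S_2$ with $h ; sp_2.\mathit{left} = sp_1.\mathit{left}$ and $h ; sp_2.\mathit{right} = sp_1.\mathit{right}$, so the span with apex $S_1$, left leg $id_l$, and right leg $h$ witnesses $sp_1 \equiv_s sp_2$ directly. Symmetry is handled by swapping the legs of the span supplied by the inductive hypothesis.

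The main obstacle is transitivity. Suppose the derivation factors as $sp_1 \equiv_s sp' \equiv_s sp_2$ through some intermediate $sp'$ with state space $S'$, and the induction hypothesis provides witnessing spans $sp_L : S_1 \mathbin{\reflectbox{$\rightsquigarrow$}} T_L \mathbin{\rightsquigarrow} S'$ and $sp_R : S' \mathbin{\reflectbox{$\rightsquigarrow$}} T_R \mathbin{\rightsquigarrow} S_2$. The key construction is to form the pullback-like span $sp_L.\mathit{right} \mathbin{\Join} sp_R.\mathit{left}$ over $S'$, with apex $T_L \mathbin{\!\Join\!} T_R$, and then extend it along the outer legs of $sp_L$ and $sp_R$: the required span $sp$ has apex $T_L \mathbin{\!\Join\!} T_R$, left leg $(sp_L.\mathit{right} \mathbin{\Join} sp_R.\mathit{left}).\mathit{left} ; sp_L.\mathit{left}$, and right leg $(sp_L.\mathit{right} \mathbin{\Join} sp_R.\mathit{left}).\mathit{right} ; sp_R.\mathit{right}$.

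Verifying the commutation equation for the left legs, $sp.\mathit{left} ; sp_1.\mathit{left} = sp.\mathit{right} ; sp_2.\mathit{left}$, reduces to a three-step chain: first use the inductive witness for $sp_L$ to rewrite $sp_L.\mathit{left} ; sp_1.\mathit{left}$ as $sp_L.\mathit{right} ; sp'.\mathit{left}$, then apply Lemma~\ref{lem:cospan2span-pullback} to commute through the pullback span, replacing $(sp_L.\mathit{right} \mathbin{\Join} sp_R.\mathit{left}).\mathit{left} ; sp_L.\mathit{right}$ by $(sp_L.\mathit{right} \mathbin{\Join} sp_R.\mathit{left}).\mathit{right} ; sp_R.\mathit{left}$, and finally use the inductive witness for $sp_R$ to rewrite $sp_R.\mathit{left} ; sp'.\mathit{left}$ as $sp_R.\mathit{right} ; sp_2.\mathit{left}$. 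The right-leg equation is entirely analogous. The delicate point is that Lemma~\ref{lem:cospan2span-pullback} is stated for pure lenses; but this is precisely the sub-category our construction stays in, since $\curvearrowright$ is generated by pure full lenses and composition of pure lenses is again pure.
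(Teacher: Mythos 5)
Your proof is correct and follows essentially the same route as the paper's: induction over the zig-zag of $\curvearrowright$/$\curvearrowleft$ steps, the identity span for a single generator step, and Lemma~\ref{lem:cospan2span-pullback} applied to the $\Join$ construction to glue pieces together, staying within pure well-behaved lenses throughout. The only organizational difference is that you discharge transitivity by composing two arbitrary pure spans via $\Join$, whereas the paper peels off one $\curvearrowright$ or $\curvearrowleft$ step at a time so that $\Join$ is only ever applied to a single lens against a span --- a degenerate instance of your construction resting on the same lemma.
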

\begin{proof}
  Let \ensuremath{\Varid{sp}_{\mathrm{1}}} and \ensuremath{\Varid{sp}_{\mathrm{2}}} be given such that \ensuremath{\Varid{sp}_{\mathrm{1}}\equiv_{\mathrm{s}} \Varid{sp}_{\mathrm{2}}}.  The
  proof is by induction on the length of the sequence of
  \ensuremath{\curvearrowright } or \ensuremath{\curvearrowleft } steps linking \ensuremath{\Varid{sp}_{\mathrm{1}}} to \ensuremath{\Varid{sp}_{\mathrm{2}}}.

  If \ensuremath{\Varid{sp}_{\mathrm{1}}\mathrel{=}\Varid{sp}_{\mathrm{2}}} then the result is immediate. If \ensuremath{\Varid{sp}_{\mathrm{1}}\curvearrowright \Varid{sp}_{\mathrm{2}}} then we can complete a span between \ensuremath{\Conid{S}_{1}} and \ensuremath{\Conid{S}_{2}} using
  the identity lens.  For the inductive case, suppose that the result
  holds for sequences of up to $n$ \ensuremath{\curvearrowright } or \ensuremath{\curvearrowleft } steps, and suppose
  \ensuremath{\Varid{sp}_{\mathrm{1}}\equiv_{\mathrm{s}} \Varid{sp}_{\mathrm{2}}} holds in $n$ 
\ensuremath{\curvearrowright } or \ensuremath{\curvearrowleft } steps.  There are two cases, depending on
the direction of the first step.  If \ensuremath{\Varid{sp}_{\mathrm{1}}\curvearrowleft \Varid{sp}_{\mathrm{3}}\equiv_{\mathrm{s}} \Varid{sp}_{\mathrm{2}}} then by induction we
  must have a pure span \ensuremath{\Varid{sp}} between \ensuremath{\Conid{S}_{3}} and \ensuremath{\Conid{S}_{2}} and \ensuremath{\Varid{sp}_{\mathrm{1}}\curvearrowleft \Varid{sp}_{\mathrm{3}}} holds by virtue of a lens \ensuremath{\Varid{h}\mathbin{::}\Conid{S}_{3}\to \Conid{S}_{1}}, so we
  can simply compose \ensuremath{\Varid{h}} with \ensuremath{\Varid{sp}\mathord{.}\Varid{left}} to obtain the required span
  between \ensuremath{\Conid{S}_{1}} and \ensuremath{\Conid{S}_{2}}.  Otherwise, if \ensuremath{\Varid{sp}_{\mathrm{1}}\curvearrowright \Varid{sp}_{\mathrm{3}}\equiv_{\mathrm{s}} \Varid{sp}_{\mathrm{2}}} then by induction we
  must have a pure span \ensuremath{\Varid{sp}} between \ensuremath{\Conid{S}_{3}} and \ensuremath{\Conid{S}_{2}} and we must have a lens \ensuremath{\Varid{h}\mathbin{::}\Conid{S}_{1}\to \Conid{S}_{3}}, so we use Lemma~\ref{lem:cospan2span-pullback} to form a span
  \ensuremath{\Varid{sp}_{\mathrm{0}}\mathbin{::}\Conid{S}_{1}\mathbin{{\reflectbox{$\rightsquigarrow$}}}(\Conid{S}_{1}\mathbin{\!\Join\!}\Conid{S}_{3})\mathbin{{\rightsquigarrow}}\Conid{S}_{3}} and extend \ensuremath{\Varid{sp}_{\mathrm{0}}\mathord{.}\Varid{right}} with
  \ensuremath{\Varid{sp}\mathord{.}\Varid{right}} to form the required span between \ensuremath{\Conid{S}_{1}} and \ensuremath{\Conid{S}_{3}}.
\end{proof}

Thus, span equivalence is a doubly appropriate name for \ensuremath{\equiv_{\mathrm{s}} }:
it is an equivalence of spans witnessed by a (pure) span.

Finally, we consider a third notion of equivalence, inspired by the natural
bisimulation equivalence for coalgebraic bx~\citep{abousaleh15bx}: 
\begin{definition}
  [Base map]
  Given \ensuremath{\Conid{M}}-lenses \ensuremath{\Varid{l}_{1}\mathbin{::}\monadic{\Conid{S}_{1}\rightsquigarrow\Conid{V}}{\Conid{M}}} and \ensuremath{\Varid{l}_{2}\mathbin{::}\monadic{\Conid{S}_{2}\rightsquigarrow\Conid{V}}{\Conid{M}}}, we
  say that \ensuremath{\Varid{h}\mathbin{:}\Conid{S}_{1}\to \Conid{S}_{2}} is a \emph{base map} from \ensuremath{\Varid{l}_{1}} to \ensuremath{\Varid{l}_{2}} if
  \begin{hscode}\SaveRestoreHook
\column{B}{@{}>{\hspre}l<{\hspost}@{}}%
\column{5}{@{}>{\hspre}l<{\hspost}@{}}%
\column{46}{@{}>{\hspre}l<{\hspost}@{}}%
\column{E}{@{}>{\hspre}l<{\hspost}@{}}%
\>[5]{}\Varid{l}_{1}\mathord{.}\Varid{mget}\;\Varid{s}{}\<[46]%
\>[46]{}\mathrel{=}\Varid{l}_{2}\mathord{.}\Varid{mget}\;(\Varid{h}\;\Varid{s}){}\<[E]%
\\
\>[5]{}\mathbf{do}\;\{\mskip1.5mu \Varid{s}\leftarrow \Varid{l}_{1}\mathord{.}\Varid{mput}\;\Varid{s}\;\Varid{v};\Varid{return}\;(\Varid{h}\;\Varid{s})\mskip1.5mu\}{}\<[46]%
\>[46]{}\mathrel{=}\Varid{l}_{2}\mathord{.}\Varid{mput}\;(\Varid{h}\;\Varid{s})\;\Varid{v}{}\<[E]%
\\
\>[5]{}\mathbf{do}\;\{\mskip1.5mu \Varid{s}\leftarrow \Varid{l}_{1}\mathord{.}\Varid{mcreate}\;\Varid{v};\Varid{return}\;(\Varid{h}\;\Varid{s})\mskip1.5mu\}{}\<[46]%
\>[46]{}\mathrel{=}\Varid{l}_{2}\mathord{.}\Varid{mcreate}\;\Varid{v}{}\<[E]%
\ColumnHook
\end{hscode}\resethooks
  Similarly, given two \ensuremath{\Conid{M}}-lens spans \ensuremath{\Varid{sp}_{\mathrm{1}}\mathbin{::}\monadic{\Conid{A}\mathbin{\reflectbox{$\rightsquigarrow$}}\Conid{S}_{1}\rightsquigarrow\Conid{B}}{\Conid{M}}} and
  \ensuremath{\Varid{sp}_{\mathrm{2}}\mathbin{::}\monadic{\Conid{A}\mathbin{\reflectbox{$\rightsquigarrow$}}\Conid{S}_{2}\rightsquigarrow\Conid{B}}{\Conid{M}}} we say that \ensuremath{\Varid{h}\mathbin{::}\Conid{S}_{1}\to \Conid{S}_{2}} is a base
  map from \ensuremath{\Varid{sp}_{\mathrm{1}}}
  to \ensuremath{\Varid{sp}_{\mathrm{2}}} if \ensuremath{\Varid{h}} is a base map from \ensuremath{\Varid{sp}_{\mathrm{1}}\mathord{.}\Varid{left}} to \ensuremath{\Varid{sp}_{\mathrm{2}}\mathord{.}\Varid{left}} and from
  \ensuremath{\Varid{sp}_{\mathrm{1}}\mathord{.}\Varid{right}} to \ensuremath{\Varid{sp}_{\mathrm{2}}\mathord{.}\Varid{right}}.
  \end{definition}

\begin{definition}[Bisimulation equivalence]
  A \emph{bisimulation} of \ensuremath{\Conid{M}}-lens spans \ensuremath{\Varid{sp}_{\mathrm{1}}\mathbin{::}\monadic{\Conid{A}\mathbin{\reflectbox{$\rightsquigarrow$}}\Conid{S}_{1}\rightsquigarrow\Conid{B}}{\Conid{M}}}
  and \ensuremath{\Varid{sp}_{\mathrm{2}}\mathbin{::}\monadic{\Conid{A}\mathbin{\reflectbox{$\rightsquigarrow$}}\Conid{S}_{2}\rightsquigarrow\Conid{B}}{\Conid{M}}} is a \ensuremath{\Conid{M}}-lens span \ensuremath{\Varid{sp}\mathbin{::}\monadic{\Conid{A}\mathbin{\reflectbox{$\rightsquigarrow$}}\Conid{R}\rightsquigarrow\Conid{B}}{\Conid{M}}} where \ensuremath{\Conid{R}\subseteq\Conid{S}_{1} \times \Conid{S}_{2}} and \ensuremath{\Varid{fst}} is a base map from \ensuremath{\Varid{sp}} to
  \ensuremath{\Varid{sp}_{\mathrm{1}}} and \ensuremath{\Varid{snd}} is a base map  from \ensuremath{\Varid{sp}} to \ensuremath{\Varid{sp}_{\mathrm{2}}}.  We write \ensuremath{\Varid{sp}_{\mathrm{1}}\equiv_{\mathrm{b}} \Varid{sp}_{\mathrm{2}}} when there is a bisimulation of spans \ensuremath{\Varid{sp}_{\mathrm{1}}} and
  \ensuremath{\Varid{sp}_{\mathrm{2}}}.
\end{definition}

Figure~\ref{fig:equivalences} illustrates the three equivalences diagrammatically.
\begin{figure}[tb]
  \centering
  \begin{tabular}{ccc}
\includegraphics[width=0.25\textwidth]{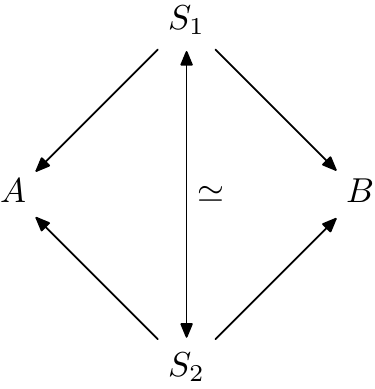}
&
\quad
\includegraphics[width=0.25\textwidth]{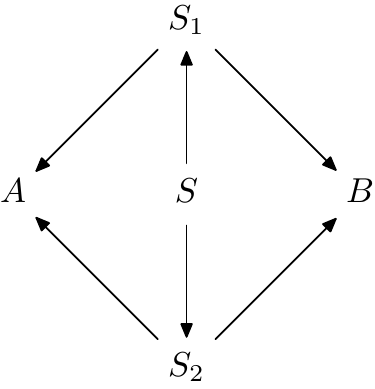}
&
\quad
\includegraphics[width=0.25\textwidth]{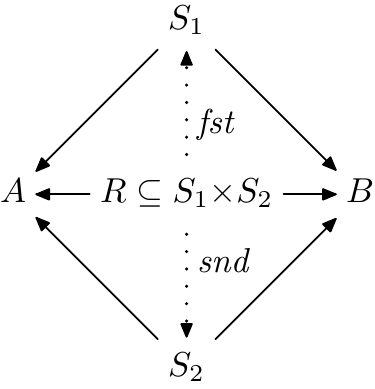}
\\
(a) &\quad
 (b) &\quad
 (c)
  \end{tabular}
  \caption{(a) Isomorphism equivalence \ensuremath{(\equiv_{\mathrm{i}} )}, (b) span
    equivalence \ensuremath{(\equiv_{\mathrm{s}} )}, and (c) bisimulation
    \ensuremath{(\equiv_{\mathrm{b}} )} equivalence.  
    In (c), the dotted arrows are base maps; all other
    arrows are (monadic) lenses.}
  \label{fig:equivalences}
\end{figure}
\begin{proposition}
  Each of the relations \ensuremath{\equiv_{\mathrm{i}} }, \ensuremath{\equiv_{\mathrm{s}} } and \ensuremath{\equiv_{\mathrm{b}} } are
  equivalence relations on compatible spans of \ensuremath{\Conid{M}}-lenses and satisfy
  \ensuremath{\mathsf{(Identity)}}, \ensuremath{\mathsf{(Assoc)}} and \ensuremath{\mathsf{(Cong)}}.
\end{proposition}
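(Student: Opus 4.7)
The plan is to treat the three equivalences in turn, in each case first verifying the equivalence-relation axioms and then the three laws. The guiding observation is that the equivalences are progressively coarser: any isomorphism is a (particularly well-behaved) lens, and any span of lenses yields a bisimulation via its legs; so most of the work is absorbed by constructing the same underlying witness in each setting and then checking the relevant extra structure.

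For $\equiv_{\mathrm{i}}$, reflexivity, symmetry and transitivity are immediate from identity, inverse, and composition of isomorphisms. For $\mathsf{(Identity)}$ and $\mathsf{(Assoc)}$, I would exhibit explicit isomorphisms on state spaces: the state space of $sp \mathbin{;} id_{sp}$ is the subset $\{(s,a)\in S\times A \mid sp.right.mget\;s = a\}$, which is isomorphic to $S$ via $s\mapsto(s, sp.right.mget\;s)$; an analogous calculation gives the two associativity isomorphisms by rebracketing nested \ensuremath{\!\Join\!} subsets. Commutation with the legs of the spans is then routine using $\mathsf{(MPutGet)}$ and $\mathsf{(MCreateGet)}$ for the consistency constraint. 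For $\mathsf{(Cong)}$, given isomorphisms $h_i : S_i \cong S_i'$ witnessing $sp_i \equiv_{\mathrm{i}} sp_i'$, the pair $(h_1\times h_2)$ restricts to an isomorphism of the consistent-pair subsets because the legs commute with the $h_i$.

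For $\equiv_{\mathrm{s}}$, the relation is already defined as a reflexive-symmetric-transitive closure, so the equivalence-relation part is free. For the three laws, Theorem~\ref{thm:span-equivalence} lets me replace any $\equiv_{\mathrm{s}}$ witness by a single pure span of lenses, so it suffices to exhibit such a mediating span. For $\mathsf{(Identity)}$ and $\mathsf{(Assoc)}$ I would reuse the isomorphisms constructed for $\equiv_{\mathrm{i}}$, lifted to full pure lenses (an isomorphism trivially becomes a lens whose \ensuremath{\Varid{get}}, \ensuremath{\Varid{put}} and \ensuremath{\Varid{create}} are inversion); this is the easier direction. For $\mathsf{(Cong)}$, given mediating pure spans $sp^i : S_i \reflectbox{$\rightsquigarrow$} T_i \rightsquigarrow S_i'$, I would build a pure mediating span between the joined state spaces using Lemma~\ref{lem:cospan2span-pullback} on each leg, combining the $T_i$ via pullback-style joins; the key diagram chase uses Lemma~\ref{lem:cospan2span-pullback} twice to show that the resulting constructed span commutes with both outer legs of the composed spans.

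For $\equiv_{\mathrm{b}}$, reflexivity uses the diagonal relation $R=\{(s,s)\mid s\in S\}$ equipped with the identity-derived projections as base maps; symmetry swaps the components of $R$. Transitivity is the main obstacle: given bisimulations $R_{12}\subseteq S_1\times S_2$ and $R_{23}\subseteq S_2\times S_3$, the relational composition $R_{12}\mathbin{;}R_{23}$ need not inherit a well-behaved span structure, because witnesses in $S_2$ are existentially quantified. I would work around this by taking instead the pullback $R = \{(r,r')\in R_{12}\times R_{23}\mid \snd(r)=\fst(r')\}$ as the carrier, with the new span built component-wise from the two given bisimulations; the base-map conditions for $\fst$ and $\snd$ then follow by composing the given base maps, using the $\mathsf{(MGetPut)}$/$\mathsf{(MPutGet)}$ laws to see that successive $mput$ and $mcreate$ calls produce states still in the pullback. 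For $\mathsf{(Identity)}$ and $\mathsf{(Assoc)}$, the isomorphisms from $\equiv_{\mathrm{i}}$ immediately yield bisimulations via their graphs. For $\mathsf{(Cong)}$, given bisimulations $R_i\subseteq S_i\times S_i'$, the natural candidate carrier is the subset of $(S_1\!\Join\! S_2)\times(S_1'\!\Join\! S_2')$ consisting of pairs $((s_1,s_2),(s_1',s_2'))$ with $(s_i,s_i')\in R_i$; the span structure is inherited from the $R_i$-spans, and the projection base maps combine those of the $R_i$. I expect the bookkeeping for transitivity of $\equiv_{\mathrm{b}}$ and for $\mathsf{(Cong)}$ in the $\equiv_{\mathrm{s}}$ case to be the most delicate parts, since both require juggling several consistency constraints simultaneously while respecting monadic sequencing.
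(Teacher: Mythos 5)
The paper states this proposition without an accompanying proof, so there is no official argument to compare yours against; I have assessed your outline on its own terms. Most of it is sound: the treatment of $\equiv_{\mathrm{i}}$ via explicit state-space bijections, the observation that the equivalence-relation axioms for $\equiv_{\mathrm{s}}$ are free from its definition as a closure, the derivation of $\mathsf{(Identity)}$ and $\mathsf{(Assoc)}$ for the coarser relations from the $\equiv_{\mathrm{i}}$ witnesses, and your congruence construction for $\equiv_{\mathrm{b}}$ --- the last of these works precisely because the composite span itself sequences the effects of both components, so a bisimulation span mirroring that sequencing projects correctly after discarding the unused components.

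The genuine gap is transitivity of $\equiv_{\mathrm{b}}$. Your carrier (the pullback of $R_{12}$ and $R_{23}$ over $S_2$) is the right shape, but a span ``built component-wise from the two given bisimulations'' cannot satisfy the base-map conditions for a general monad: if the new $\Varid{mput}$ runs both $sp_{12}$'s and $sp_{23}$'s $\Varid{mput}$, then the $\Varid{fst}$-projection of the result performs the effects of \emph{both} calls, whereas the base-map equation demands it equal $sp_1$'s $\Varid{mput}$ alone; the second call's effects (which project to $sp_2$'s $\Varid{mput}$) do not vanish, so the equation already fails for $\Conid{State}$ or $\Conid{Writer}$ whenever those effects are nontrivial, and one also cannot conclude that the two middle components agree pointwise rather than merely as monadic computations after projection. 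Running only one of the two puts is no better, since you then have no handle on the other outer component. What is really needed is to glue two computations in $\mu\,R_{12}$ and $\mu\,R_{23}$ that agree after projection to $\mu\,S_2$ into a single computation over the pullback --- a weak-pullback-preservation property of $\mu$ that holds for many concrete monads but not all, so this step needs either a different construction or an explicit hypothesis on $\mu$. Separately, your route to $\mathsf{(Cong)}$ for $\equiv_{\mathrm{s}}$ via Theorem~\ref{thm:span-equivalence} and ``pullback-style joins'' is heavier and vaguer than necessary: since $\equiv_{\mathrm{s}}$ is generated by $\curvearrowright$, it suffices to lift a single witnessing full lens $h :: S_1 \mathbin{\leadsto} S_1'$ to a full lens $S_1 \mathbin{\!\Join\!} S_2 \mathbin{\leadsto} S_1' \mathbin{\!\Join\!} S_2$ acting as $h$ on the first component, check that it preserves consistency using $\mathsf{(PutGet)}$ and $\mathsf{(CreateGet)}$, and then chain steps.
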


\begin{theorem}
  \ensuremath{\Varid{sp}_{\mathrm{1}}\equiv_{\mathrm{i}} \Varid{sp}_{\mathrm{2}}} implies \ensuremath{\Varid{sp}_{\mathrm{1}}\equiv_{\mathrm{s}} \Varid{sp}_{\mathrm{2}}}, but not the converse.
\end{theorem}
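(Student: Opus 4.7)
The plan has two parts: the forward implication is a short construction, and the non-converse part is a counterexample based on spans whose state spaces differ in cardinality.

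For $\Varid{sp}_{\mathrm{1}}\equiv_{\mathrm{i}} \Varid{sp}_{\mathrm{2}} \Rightarrow \Varid{sp}_{\mathrm{1}}\equiv_{\mathrm{s}} \Varid{sp}_{\mathrm{2}}$, suppose an isomorphism $h\colon S_1 \to S_2$ witnesses $\Varid{sp}_{\mathrm{1}}\equiv_{\mathrm{i}} \Varid{sp}_{\mathrm{2}}$. I would package $h$ as a pure full lens $\tilde h \colon S_1 \mathbin{\leadsto} S_2$ with $\tilde h.\Varid{get} = h$, $\tilde h.\Varid{put}\,s\,s_2' = h^{-1}(s_2')$, and $\tilde h.\Varid{create} = h^{-1}$. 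Checking $\mathsf{(GetPut)}$, $\mathsf{(PutGet)}$ and $\mathsf{(CreateGet)}$ for $\tilde h$ is immediate from $h \circ h^{-1} = \Varid{id}$ and $h^{-1} \circ h = \Varid{id}$. The two equations $\tilde h\mathbin{;}\Varid{sp}_{\mathrm{2}}.\Varid{left} = \Varid{sp}_{\mathrm{1}}.\Varid{left}$ and $\tilde h\mathbin{;}\Varid{sp}_{\mathrm{2}}.\Varid{right} = \Varid{sp}_{\mathrm{1}}.\Varid{right}$ follow directly from the isomorphism-equivalence hypothesis (after lifting via $\Varid{lens2mlens}$), so $\Varid{sp}_{\mathrm{1}}\curvearrowright \Varid{sp}_{\mathrm{2}}$ in one step, hence $\Varid{sp}_{\mathrm{1}}\equiv_{\mathrm{s}} \Varid{sp}_{\mathrm{2}}$.

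For the non-converse, I would exhibit a pair of compatible spans related by $\equiv_{\mathrm{s}} $ whose state spaces cannot be isomorphic. Take $A = B = \{\ast\}$ (or any singleton type) and consider the trivial span $\Varid{sp}_{\mathrm{1}} : A \mathbin{\reflectbox{$\rightsquigarrow$}} \{\ast\} \rightsquigarrow B$ with identity legs, versus a fatter span $\Varid{sp}_{\mathrm{2}} : A \mathbin{\reflectbox{$\rightsquigarrow$}} \Conid{Bool} \rightsquigarrow B$ whose two legs constantly return $\ast$, and whose $\Varid{mput}$ and $\Varid{mcreate}$ operations are the obvious total functions on $\Conid{Bool}$. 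The full lens $\tilde h : \Conid{Bool} \mathbin{\leadsto} \{\ast\}$ with $\Varid{get} = \mathit{const}\,\ast$, $\Varid{put}\,b\,\ast = b$, $\Varid{create}\,\ast = \Conid{False}$ witnesses $\Varid{sp}_{\mathrm{2}} \curvearrowright \Varid{sp}_{\mathrm{1}}$, so $\Varid{sp}_{\mathrm{1}}\equiv_{\mathrm{s}} \Varid{sp}_{\mathrm{2}}$. But $\{\ast\}$ and $\Conid{Bool}$ are not in bijection, so no isomorphism of state spaces can exist and $\Varid{sp}_{\mathrm{1}} \not\equiv_{\mathrm{i}} \Varid{sp}_{\mathrm{2}}$.

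The main obstacle is not the mathematics (both directions are routine) but making sure the counterexample is genuinely a pair of well-behaved $\Conid{M}$-lens spans in the sense of the paper, so one must verify $\mathsf{(MGetPut)}$, $\mathsf{(MPutGet)}$ and $\mathsf{(MCreateGet)}$ for both legs of $\Varid{sp}_{\mathrm{2}}$; here all three follow trivially because the view type is a singleton. One should also observe that the same construction works for any monad $\Conid{M}$ (using $\Varid{lens2mlens}$), so the non-converse is not an artefact of a particular choice of effects.
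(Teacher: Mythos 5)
Your proposal is correct and follows essentially the same route as the paper: the forward direction packages the isomorphism as a full lens (the paper simply calls this ``obvious''), and your counterexample --- the identity span over a unit state space versus the span over \ensuremath{\Conid{Bool}} obtained by precomposing with the forgetful full lens \ensuremath{\Conid{Bool}\mathbin{\leadsto}()} --- is the same one the paper uses, differing only in the arbitrary choice of \ensuremath{\Varid{create}} and in phrasing the non-isomorphism as a cardinality mismatch rather than via the impossibility of a left inverse for \ensuremath{\Varid{h}\mathord{.}\Varid{get}}.
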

\begin{proof}
  The forward direction is obvious; for the reverse direction, consider
  \begin{hscode}\SaveRestoreHook
\column{B}{@{}>{\hspre}l<{\hspost}@{}}%
\column{5}{@{}>{\hspre}l<{\hspost}@{}}%
\column{E}{@{}>{\hspre}l<{\hspost}@{}}%
\>[5]{}\Varid{h}\mathbin{::}\Conid{Bool}\mathbin{\leadsto}(){}\<[E]%
\\
\>[5]{}\Varid{h}\mathrel{=}\Conid{Lens}\;(\lambda \anonymous \to ())\;(\lambda \Varid{a}\;()\to \Varid{a})\;(\lambda ()\to \Conid{True}){}\<[E]%
\\
\>[5]{}\Varid{sp}_{\mathrm{1}}\mathbin{::}\monadic{()\mathbin{\reflectbox{$\rightsquigarrow$}}()\rightsquigarrow()}{\mu}{}\<[E]%
\\
\>[5]{}\Varid{sp}_{\mathrm{1}}\mathrel{=}\Conid{Span}\;\Varid{idMLens}\;\Varid{idMLens}{}\<[E]%
\\
\>[5]{}\Varid{sp}_{\mathrm{2}}\mathrel{=}(\Varid{h}\mathbin{;}\Varid{sp}_{\mathrm{1}}\mathord{.}\Varid{left},\Varid{h}\mathbin{;}\Varid{sp}_{\mathrm{2}}\mathord{.}\Varid{right}){}\<[E]%
\ColumnHook
\end{hscode}\resethooks
Clearly \ensuremath{\Varid{sp}_{\mathrm{1}}\equiv_{\mathrm{s}} \Varid{sp}_{\mathrm{2}}} by definition and all three structures
are well-behaved, but \ensuremath{\Varid{h}} is not an isomorphism: any \ensuremath{\Varid{k}\mathbin{::}()\mathbin{\leadsto}\Conid{Bool}} must satisfy \ensuremath{\Varid{k}\mathord{.}\Varid{get}\;()\mathrel{=}\Conid{True}} or \ensuremath{\Varid{k}\mathord{.}\Varid{get}\;()\mathrel{=}\Conid{False}}, so \ensuremath{(\Varid{h}\mathbin{;}\Varid{k})\mathord{.}\Varid{get}\mathrel{=}\Varid{k}\mathord{.}\Varid{get}\hsdot{\mathbin{\cdot}}{\mathrel{.}}\Varid{h}\mathord{.}\Varid{get}} cannot be the identity function.
\end{proof}

\begin{theorem}\label{thm:jr-implies-bisim}
  Given \ensuremath{\Varid{sp}_{\mathrm{1}}\mathbin{::}\monadic{\Conid{A}\mathbin{\reflectbox{$\rightsquigarrow$}}\Conid{S}_{1}\rightsquigarrow\Conid{B}}{\Conid{M}},\Varid{sp}_{\mathrm{2}}\mathbin{::}\monadic{\Conid{A}\mathbin{\reflectbox{$\rightsquigarrow$}}\Conid{S}_{2}\rightsquigarrow\Conid{B}}{\Conid{M}}}, if \ensuremath{\Varid{sp}_{\mathrm{1}}\equiv_{\mathrm{s}} \Varid{sp}_{\mathrm{2}}} then \ensuremath{\Varid{sp}_{\mathrm{1}}\equiv_{\mathrm{b}} \Varid{sp}_{\mathrm{2}}}.
\end{theorem}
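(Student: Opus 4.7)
The plan is to invoke Theorem~\ref{thm:span-equivalence} to reduce $\equiv_{\mathrm{s}} $ to a single pure span of lenses, and then to promote that pure span into a bisimulation. Concretely, from $\Varid{sp}_{\mathrm{1}}\equiv_{\mathrm{s}} \Varid{sp}_{\mathrm{2}}$ Theorem~\ref{thm:span-equivalence} supplies a pure span with common apex $S$; write $h_L$ and $h_R$ for its two legs from $S$ to $S_1$ and $S_2$ respectively. These satisfy the compatibility conditions $h_L\mathbin{;}\Varid{sp}_{\mathrm{1}}\mathord{.}\Varid{left} = h_R\mathbin{;}\Varid{sp}_{\mathrm{2}}\mathord{.}\Varid{left}$ and $h_L\mathbin{;}\Varid{sp}_{\mathrm{1}}\mathord{.}\Varid{right} = h_R\mathbin{;}\Varid{sp}_{\mathrm{2}}\mathord{.}\Varid{right}$.

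I then build the bisimulation as an $\Conid{M}$-lens span $\Varid{sp}_{b}$ with state space $S$: set $\Varid{sp}_{b}\mathord{.}\Varid{left} := h_L\mathbin{;}\Varid{sp}_{\mathrm{1}}\mathord{.}\Varid{left}$ and $\Varid{sp}_{b}\mathord{.}\Varid{right} := h_R\mathbin{;}\Varid{sp}_{\mathrm{2}}\mathord{.}\Varid{right}$ (each a composition, via $\Varid{lens2mlens}$, of a well-behaved pure lens with a well-behaved monadic lens, hence well-behaved by Theorem~\ref{thm:mlens-comp-wb}). The two ``projections'' are $h_L\mathord{.}\Varid{get} \mathbin{:} S \to S_1$ and $h_R\mathord{.}\Varid{get} \mathbin{:} S \to S_2$; together these embed $S$ into $S_1 \times S_2$, playing the roles of $\Varid{fst}$ and $\Varid{snd}$ in the definition of bisimulation.

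The substantive step is then to verify that $h_L\mathord{.}\Varid{get}$ is a base map from each leg of $\Varid{sp}_{b}$ to the corresponding leg of $\Varid{sp}_{\mathrm{1}}$, and similarly for $h_R\mathord{.}\Varid{get}$ and $\Varid{sp}_{\mathrm{2}}$. For the straight direction $\Varid{sp}_{b}\mathord{.}\Varid{left} \to \Varid{sp}_{\mathrm{1}}\mathord{.}\Varid{left}$ this is essentially automatic: unfolding composition and applying $\mathsf{(PutGet)}$ and $\mathsf{(CreateGet)}$ for $h_L$ reduces each base-map equation to an identity. The genuinely interesting case is the ``cross'' direction $h_L\mathord{.}\Varid{get} \mathbin{:} \Varid{sp}_{b}\mathord{.}\Varid{right} \to \Varid{sp}_{\mathrm{1}}\mathord{.}\Varid{right}$ (since $\Varid{sp}_{b}\mathord{.}\Varid{right}$ is built from $h_R$ and $\Varid{sp}_{\mathrm{2}}$, not from $h_L$ and $\Varid{sp}_{\mathrm{1}}$); it falls out by post-composing the pure function $h_L\mathord{.}\Varid{get}$ with both sides of the compatibility identity $h_L\mathbin{;}\Varid{sp}_{\mathrm{1}}\mathord{.}\Varid{right} = h_R\mathbin{;}\Varid{sp}_{\mathrm{2}}\mathord{.}\Varid{right}$ and again applying $\mathsf{(PutGet)}$ for $h_L$.

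The main obstacle is definitional rather than mathematical. Taken literally, the definition of bisimulation demands $R \subseteq S_1 \times S_2$ with $\Varid{fst}, \Varid{snd}$ as genuine projections; the natural candidate is then the image of $S$ under $\langle h_L\mathord{.}\Varid{get}, h_R\mathord{.}\Varid{get}\rangle$, but the $\Conid{M}$-lens operations on $S$ need not factor through that image unless the pairing is injective---two $S$-states with equal projections might otherwise admit distinct $\Varid{mput}$ effects. The intended reading, matching Figure~\ref{fig:equivalences}(c) where the base-map arrows carry no extra structure, is that any state object equipped with a pair of base maps to $S_1$ and $S_2$ qualifies; under that reading, taking $R = S$ itself suffices and no quotient construction is required.
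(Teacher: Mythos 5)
Your route is genuinely different from the paper's, and most of it is sound. The paper does \emph{not} first collapse the $\equiv_{\mathrm{s}}$ chain: it proves the claim for a single step $\Varid{sp}_{\mathrm{1}}\curvearrowright\Varid{sp}_{\mathrm{2}}$, witnessed by one full lens $l : S_1\mathbin{\leadsto}S_2$, takes $R=\{(s_1,s_2)\mid s_2 = l.\Varid{get}\;s_1\}\subseteq S_1\times S_2$ (the graph of $l.\Varid{get}$), defines the span on $R$ by running $\Varid{sp}_{\mathrm{1}}$'s operations on the first component and pairing with $l.\Varid{get}$, and then discharges the general case by transitivity of $\equiv_{\mathrm{b}}$. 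Your idea of invoking Theorem~\ref{thm:span-equivalence} first and building a single bisimulation is attractive precisely because it avoids that appeal to transitivity of $\equiv_{\mathrm{b}}$ (which the paper only asserts, and which for general monads is itself a nontrivial relation-composition argument). Your verification sketch of the base-map conditions is correct, including the key ``cross'' case: post-composing $h_L.\Varid{get}$ onto both sides of $h_L\mathbin{;}\Varid{sp}_{\mathrm{1}}.\Varid{right}=h_R\mathbin{;}\Varid{sp}_{\mathrm{2}}.\Varid{right}$ and cancelling with $\mathsf{(PutGet)}$/$\mathsf{(CreateGet)}$ is exactly the right move.

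The gap is the one you flag yourself, and it is not merely cosmetic. The theorem asserts $\Varid{sp}_{\mathrm{1}}\equiv_{\mathrm{b}}\Varid{sp}_{\mathrm{2}}$, and $\equiv_{\mathrm{b}}$ is \emph{defined} to require an apex $R\subseteq S_1\times S_2$ with the literal projections $\Varid{fst}$ and $\Varid{snd}$ as base maps. Your apex $S$ is not of that form, and your own analysis of the image construction is correct: $\langle h_L.\Varid{get},h_R.\Varid{get}\rangle$ need not be injective, and the $\Varid{mput}$ of $h_L\mathbin{;}\Varid{sp}_{\mathrm{1}}.\Varid{left}$ depends on $s$ through $h_L.\Varid{put}\;s$, which is not determined by the pair $(h_L.\Varid{get}\;s,h_R.\Varid{get}\;s)$, so the operations need not factor through the image. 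Declaring that ``any state object with a pair of base maps qualifies'' proves membership in a relation that is a priori \emph{coarser} than $\equiv_{\mathrm{b}}$ (the relaxed notion implies the strict one only when such factorisations exist), so as written you have not established the stated conclusion. The paper sidesteps this because in the single-step case the witnessing lens runs directly from $S_1$ to $S_2$, its graph is a genuine relation isomorphic to $S_1$ via $\Varid{fst}$, and the constructed operations visibly preserve it. To close your argument you should either revert to the single-step-plus-transitivity decomposition, or prove separately that the ``arbitrary apex with base maps'' notion coincides with $\equiv_{\mathrm{b}}$ as defined.
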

\begin{proof}
  For the forward direction, it suffices to show that a single \ensuremath{\Varid{sp}_{\mathrm{1}}\curvearrowright \Varid{sp}_{\mathrm{2}}} step implies \ensuremath{\Varid{sp}_{\mathrm{1}}\equiv_{\mathrm{b}} \Varid{sp}_{\mathrm{2}}}, which is
  straightforward by taking \ensuremath{\Conid{R}} to be the set of pairs \ensuremath{\{\mskip1.5mu (\Varid{s}_{1},\Varid{s}_{2})\mid \Varid{l}_{1}\mathord{.}\Varid{get}\;\Varid{s}_{1}\mathrel{=}\Varid{s}_{2}\mskip1.5mu\}}, and constructing an appropriate span \ensuremath{\Varid{sp}\mathbin{:}\Conid{A}\mathbin{{\reflectbox{$\rightsquigarrow$}}}\Conid{R}\mathbin{{\rightsquigarrow}}\Conid{B}}.  Since bisimulation equivalence is transitive, it
  follows that \ensuremath{\Varid{sp}_{\mathrm{1}}\equiv_{\mathrm{s}} \Varid{sp}_{\mathrm{2}}} implies \ensuremath{\Varid{sp}_{\mathrm{1}}\equiv_{\mathrm{b}} \Varid{sp}_{\mathrm{2}}}
  as well.
\end{proof}

In the pure case, we can also show a converse:
\begin{theorem}\label{thm:pure-bisim-implies-jr}
  Given \ensuremath{\Varid{sp}_{\mathrm{1}}\mathbin{::}\Conid{A}\mathbin{{\reflectbox{$\rightsquigarrow$}}}\Conid{S}_{1}\mathbin{{\rightsquigarrow}}\Conid{B},\Varid{sp}_{\mathrm{2}}\mathbin{::}\Conid{A}\mathbin{{\reflectbox{$\rightsquigarrow$}}}\Conid{S}_{2}\mathbin{{\rightsquigarrow}}\Conid{B}}, if  \ensuremath{\Varid{sp}_{\mathrm{1}}\equiv_{\mathrm{b}} \Varid{sp}_{\mathrm{2}}} then \ensuremath{\Varid{sp}_{\mathrm{1}}\equiv_{\mathrm{s}} \Varid{sp}_{\mathrm{2}}}.
\end{theorem}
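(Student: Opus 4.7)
The plan is to use Theorem~\ref{thm:span-equivalence} in its (easy) converse direction: to establish $sp_1 \equiv_{\mathrm{s}} sp_2$ it suffices to exhibit a pure span of lenses $\tilde{sp}$ with some state space $S$ and legs $S \rightsquigarrow S_1$ and $S \rightsquigarrow S_2$ such that post-composing these legs with the legs of $sp_1$ and $sp_2$ yields equal lenses $S \rightsquigarrow A$ and $S \rightsquigarrow B$; any such $\tilde{sp}$ immediately witnesses $sp_1 \curvearrowleft sp_3 \curvearrowright sp_2$, where $sp_3$ is the span obtained by extending $\tilde{sp}$'s legs through $sp_1$ (equivalently through $sp_2$). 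I would build the witnessing $\tilde{sp}$ from the bisimulation span $sp$ on $R \subseteq S_1 \times S_2$ provided by $sp_1 \equiv_{\mathrm{b}} sp_2$, exploiting the fact that in the pure case the base maps $\mathit{fst}$ and $\mathit{snd}$ commute on the nose with $\mathit{get}$, $\mathit{put}$, and $\mathit{create}$ of the relevant legs.

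The first attempt would be the most direct construction: take $S = R$, take the $\mathit{get}$ of the leg to $S_1$ (respectively $S_2$) to be $\mathit{fst}$ (respectively $\mathit{snd}$), and define $\mathit{put}$ and $\mathit{create}$ of the $S_1$-leg by $\mathit{put}\,r\,s_1' = sp.\mathit{left}.\mathit{put}\,r\,(sp_1.\mathit{left}.\mathit{get}\,s_1')$ and $\mathit{create}\,s_1 = sp.\mathit{left}.\mathit{create}\,(sp_1.\mathit{left}.\mathit{get}\,s_1)$, with analogous definitions for the leg to $S_2$. With these choices the four required commutation equations follow immediately from the base map conditions, and \textsf{(GetPut)} and \textsf{(CreateGet)} for each leg reduce to the corresponding laws of $sp.\mathit{left}$ (respectively $sp.\mathit{right}$).

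The main obstacle is \textsf{(PutGet)}: it demands $\mathit{fst}(\mathit{put}\,r\,s_1') = s_1'$, whereas unfolding the base map only yields $sp_1.\mathit{left}.\mathit{put}(\mathit{fst}\,r,\,sp_1.\mathit{left}.\mathit{get}\,s_1')$, which in general is not equal to $s_1'$. To sidestep this I would augment the state space: replace $R$ by $\tilde R \subseteq S_1 \times R \times S_2$ consisting of triples $(s_1, r, s_2)$ subject to two-sided compatibility (the $A$-views and $B$-views of $s_1$, $r$, $s_2$ all agree). The explicit $s_1$ and $s_2$ components turn the two new legs' $\mathit{get}$ into honest projections, so \textsf{(PutGet)} becomes trivial: $\mathit{put}$ just overwrites the intended component. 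The hard part will be showing that the compatibility invariant of $\tilde R$ is preserved by $\mathit{put}$ and $\mathit{create}$, since an update forcing $A$-compatibility via $sp_1.\mathit{left}$ may break $B$-compatibility and vice versa; I expect this to require a careful iterated-update argument using the base map conditions together with the lens laws on $sp_1$, $sp_2$, and $sp$ to close the loop and simultaneously satisfy both sides.
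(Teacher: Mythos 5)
Your ``first attempt'' is, essentially verbatim, the construction in the paper's own proof: the witnessing span has state space $R$, its $\mathit{get}$s are the projections, and its $\mathit{put}$s and $\mathit{create}$s are obtained by feeding the view of the intended new state through the bisimulation span's operations. You are also right about where it breaks. The first component of that $\mathit{put}$ applied to $(s_1,s_2)$ and $s_1'$ is $l_1.\mathit{put}\;s_1\;(l_1.\mathit{get}\;s_1')$, which depends on $s_1'$ only through $l_1.\mathit{get}\;s_1'$ and so cannot return $s_1'$ whenever $l_1.\mathit{get}$ is non-injective; the paper discharges this step by citing ``$\mathsf{(GetPut)}$ for $l_1$'', which only applies when $s_1 = s_1'$, and the analogous step for $\mathsf{(CreateGet)}$ has the same problem. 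So your diagnosis identifies a genuine flaw in the paper's argument rather than a gap in your own.

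However, the repair via $\tilde R \subseteq S_1 \times R \times S_2$ cannot be completed, and the ``careful iterated-update argument'' you are hoping for does not exist, because the obstruction is not really $\mathsf{(PutGet)}$ but fullness. Both legs of any witnessing span must be full lenses, hence have surjective $\mathit{get}$s, and the commuting conditions force those $\mathit{get}$s to preserve the pair of $A$- and $B$-views; consequently span equivalence preserves the set of view-pairs $(\mathit{left}.\mathit{get}\;s,\ \mathit{right}.\mathit{get}\;s)$ realised by the state space. Bisimulation does not: $R$ need not be total, and states outside its first projection are invisible to it. Concretely, let $A$ and $B$ be two-element sets, let $S_2$ consist of three of the four view-pairs (with all $\mathit{put}$s and $\mathit{create}$s staying inside $S_2$), and let $S_1 = S_2 \cup \{w\}$ where $w$ realises the fourth view-pair but is never created and never reached by a $\mathit{put}$ from a state of $S_2$. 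The diagonal on $S_2$ is then a bisimulation, so $\mathit{sp}_1 \equiv_{\mathrm{b}} \mathit{sp}_2$, yet $\mathit{sp}_1 \not\equiv_{\mathrm{s}} \mathit{sp}_2$ because $w$'s view-pair is realised on one side only. Your $\tilde R$ runs into exactly this: $\tilde{l}.\mathit{create}\;w$ would have to produce a compatible triple $(w, r, s_2)$, and no suitable $s_2$ exists. Some extra hypothesis (totality of $R$, restriction to create-reachable states, or weakening ``full lens'' in the definition of $\curvearrowright$) is needed before any construction along these lines --- yours or the paper's --- can succeed.
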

\begin{proof}
  Given \ensuremath{\Conid{R}} and a span \ensuremath{\Varid{sp}\mathbin{::}\Conid{A}\mathbin{{\reflectbox{$\rightsquigarrow$}}}\Conid{R}\mathbin{{\rightsquigarrow}}\Conid{B}} constituting a
  bisimulation \ensuremath{\Varid{sp}_{\mathrm{1}}\equiv_{\mathrm{b}} \Varid{sp}_{\mathrm{2}}}, it suffices to
  construct a span \ensuremath{\Varid{sp'}\mathrel{=}(\Varid{l},\Varid{r})\mathbin{::}\Conid{S}_{1}\mathbin{{\reflectbox{$\rightsquigarrow$}}}\Conid{R}\mathbin{{\rightsquigarrow}}\Conid{S}_{2}} satisfying \ensuremath{\Varid{l}\mathbin{;}\Varid{sp}_{\mathrm{1}}\mathord{.}\Varid{left}\mathrel{=}\Varid{r}\mathbin{;}\Varid{sp}_{\mathrm{2}}\mathord{.}\Varid{left}} and \ensuremath{\Varid{l}\mathbin{;}\Varid{sp}_{\mathrm{1}}\mathord{.}\Varid{right}\mathrel{=}\Varid{r}\mathbin{;}\Varid{sp}_{\mathrm{2}}\mathord{.}\Varid{right}}.
\end{proof}

This result is surprising because the two equivalences come from
rather different perspectives.  Johnson and Rosebrugh introduced a
form of span equivalence, and showed that it implies bisimulation
equivalence.  They did not explicitly address the question of whether
this implication is strict.  However, there are some differences
between their presentation and ours; the most important difference is
the fact that we assume lenses to be equipped with a create function,
while they consider lenses without create functions but sometimes
consider spans of lenses to be ``pointed'', or equipped with
designated initial state values.  Likewise, \citet{abousaleh15bx}
considered bisimulation equivalence for coalgebraic bx over pointed
sets (i.e. sets equipped with designated initial values).  It remains
to be determined whether Theorem~\ref{thm:pure-bisim-implies-jr}
transfers to these settings.


We leave it as an open question to determine whether \ensuremath{\equiv_{\mathrm{b}} } is
equivalent to \ensuremath{\equiv_{\mathrm{s}} } for spans of monadic lenses (we conjecture
that they are not), or whether an
analogous result to Theorem~\ref{thm:pure-bisim-implies-jr} carries over
to symmetric lenses (we conjecture that it does).

\section{Conclusions}

Lenses are a popular and powerful abstraction for bidirectional
transformations.  Although they are most often studied in their
conventional, pure form, practical applications of lenses typically
grapple with side-effects, including exceptions, state, and user
interaction.  Some recent proposals for extending lenses with monadic
effects have been made; our proposal for (asymmetric) monadic lenses
improves on them because \ensuremath{\Conid{M}}-lenses are closed under composition for
any fixed monad
\ensuremath{\Conid{M}}. 
Furthermore, we investigated the symmetric case, and showed that
\emph{spans} of monadic lenses are also closed under composition, while the
obvious generalisation of pure symmetric lenses to incorporate monadic
effects is not closed under composition.  Finally, we presented three
notions of equivalence for spans of monadic lenses, related them, and
proved a new result: bisimulation and span
equivalence coincide for pure spans of lenses.  This last result is somewhat
surprising, given that Johnson and Rosebrugh introduced (what we call)
span equivalence to overcome perceived shortcomings in Hofmann et al.'s
bisimulation-based notion of symmetric lens equivalence.  Further 
investigation is necessary to determine whether this result
generalises.

These results illustrate the benefits of our formulation of monadic
lenses and we hope they will
inspire further research and
appreciation of bidirectional programming with effects.
\section*{Acknowledgements}
The work was supported by the UK EPSRC-funded project \textit{A
  Theory of Least Change for Bidirectional Transformations}
\citep{tlcbx} (EP/K020218/1, EP/K020919/1). 


\bibliographystyle{splncsnat}
\bibliography{../shared/strings,monadicLens}

\begin{thebibliography}{18}
\providecommand{\natexlab}[1]{#1}
\providecommand{\url}[1]{\texttt{#1}}
\providecommand{\urlprefix}{}

\bibitem[{Abou-Saleh et~al.(2015{\natexlab{a}})Abou-Saleh, Cheney, Gibbons,
  McKinna, and Stevens}]{abousaleh15mpc}
Abou-Saleh, F., Cheney, J., Gibbons, J., McKinna, J., Stevens, P.: Notions of
  bidirectional computation and entangled state monads.
\newblock In: MPC. pp. 187--214. No. 9129 in LNCS, Springer-Verlag
  (2015{\natexlab{a}})

\bibitem[{Abou-Saleh et~al.(2015{\natexlab{b}})Abou-Saleh, McKinna, and
  Gibbons}]{abousaleh15bx}
Abou-Saleh, F., McKinna, J., Gibbons, J.: Coalgebraic aspects of bidirectional
  computation.
\newblock In: BX 2015. CEUR-WS, vol. 1396, pp. 15--30 (2015{\natexlab{b}})

\bibitem[{Cheney et~al.(2014)Cheney, McKinna, Stevens, Gibbons, and
  Abou-Saleh}]{cheney14bx2}
Cheney, J., McKinna, J., Stevens, P., Gibbons, J., Abou-Saleh, F.: Entangled
  state monads.
\newblock In:  \cite{bx2014}

\bibitem[{Divi\'anszky(2013)}]{reddit}
Divi\'anszky, P.: {LGtk} {API} correction.
\newblock \url{http://people.inf.elte.hu/divip/LGtk/CorrectedAPI.html} (April
  2013)

\bibitem[{Foster et~al.(2007)Foster, Greenwald, Moore, Pierce, and
  Schmitt}]{lens-toplas}
Foster, J.N., Greenwald, M.B., Moore, J.T., Pierce, B.C., Schmitt, A.:
  Combinators for bidirectional tree transformations: {A} linguistic approach
  to the view-update problem.
\newblock TOPLAS 29(3), 17 (2007)

\bibitem[{Foster et~al.(2012)Foster, Matsuda, and
  Voigtl\"{a}nder}]{foster10ssgip}
Foster, N., Matsuda, K., Voigtl\"{a}nder, J.: Three complementary approaches to
  bidirectional programming.
\newblock In: SSGIP'10. LNCS, vol. 7470, pp. 1--46. Springer (2012),
  \urlprefix\url{http://dx.doi.org/10.1007/978-3-642-32202-0_1}

\bibitem[{Hofmann et~al.(2011)Hofmann, Pierce, and Wagner}]{symlens}
Hofmann, M., Pierce, B.C., Wagner, D.: Symmetric lenses.
\newblock In: POPL. pp. 371--384. ACM (2011)

\bibitem[{Johnson and Rosebrugh(2014)}]{johnson14bx}
Johnson, M., Rosebrugh, R.: Spans of lenses.
\newblock In:  \cite{bx2014}

\bibitem[{Jones and Duponcheel(1993)}]{Jones&Duponcheel93:Composing}
Jones, M.P., Duponcheel, L.: Composing monads.
\newblock Tech. Rep. RR-1004, DCS, Yale (1993)

\bibitem[{King and Wadler(1992)}]{DBLP:conf/fp/KingW92}
King, D.J., Wadler, P.: Combining monads.
\newblock In: Proceedings of the 1992 Glasgow Workshop on Functional
  Programming. pp. 134--143 (1992)

\bibitem[{Liang et~al.(1995)Liang, Hudak, and Jones}]{liang95popl}
Liang, S., Hudak, P., Jones, M.P.: Monad transformers and modular interpreters.
\newblock In: POPL. pp. 333--343 (1995)

\bibitem[{Pacheco et~al.(2014)Pacheco, Hu, and Fischer}]{pacheco14pepm}
Pacheco, H., Hu, Z., Fischer, S.: Monadic combinators for ``putback'' style
  bidirectional programming.
\newblock In: PEPM. pp. 39--50. ACM (2014),
  \urlprefix\url{http://doi.acm.org/10.1145/2543728.2543737}

\bibitem[{Plotkin and Power(2002)}]{Plotkin&Power2002:Notions}
Plotkin, G.D., Power, J.: Notions of computation determine monads.
\newblock In: FOSSACS. LNCS, vol. 2303, pp. 342--356. Springer (2002)

\bibitem[{Stevens(2010)}]{stevens09:sosym}
Stevens, P.: Bidirectional model transformations in {QVT}: Semantic issues and
  open questions.
\newblock SoSyM 9(1), 7--20 (2010)

\bibitem[{Terwilliger and Hidaka(2014)}]{bx2014}
Terwilliger, J., Hidaka, S. (eds.): BX Workshop.
  \url{http://ceur-ws.org/Vol-1133/#bx} (2014)

\bibitem[{{TLCBX Project}(2013--2016)}]{tlcbx}
{TLCBX Project}: A theory of least change for bidirectional transformations.
\newblock \url{http://www.cs.ox.ac.uk/projects/tlcbx/},
  \url{http://groups.inf.ed.ac.uk/bx/} (2013--2016)

\bibitem[{Wadler(1992)}]{DBLP:journals/mscs/Wadler92}
Wadler, P.: Comprehending monads.
\newblock Mathematical Structures in Computer Science 2(4), 461--493 (1992),
  \urlprefix\url{http://dx.doi.org/10.1017/S0960129500001560}

\bibitem[{Wadler(1995)}]{DBLP:conf/afp/Wadler95}
Wadler, P.: Monads for functional programming.
\newblock In: Jeuring, J., Meijer, E. (eds.) Advanced Functional Programming.
  LNCS, vol. 925, pp. 24--52. Springer (1995)

\end{thebibliography}

\newpage
\appendix
\section{Proofs for Section~\ref{sec:asymmetric}}

\if 0
\restatableLemma{lem:fstM-sndM-vwb}
\begin{lem:fstM-sndM-vwb}
\ensuremath{\Varid{fstMLens}} and \ensuremath{\Varid{sndMLens}} are very well-behaved. 
\end{lem:fstM-sndM-vwb}

\begin{proof}
We consider only \ensuremath{\Varid{fstMLens}}, as \ensuremath{\Varid{sndMLens}} is symmetric.
For \ensuremath{\mathsf{(MGetPut)}}, we have:
\begin{hscode}\SaveRestoreHook
\column{B}{@{}>{\hspre}c<{\hspost}@{}}%
\column{BE}{@{}l@{}}%
\column{3}{@{}>{\hspre}l<{\hspost}@{}}%
\column{5}{@{}>{\hspre}l<{\hspost}@{}}%
\column{E}{@{}>{\hspre}l<{\hspost}@{}}%
\>[3]{}\mathbf{do}\;\{\mskip1.5mu \Varid{fstMLens}\mathord{.}\Varid{mput}\;(\Varid{s}_{1},\Varid{s}_{2})\;(\Varid{fstMLens}\mathord{.}\Varid{mget}\;(\Varid{s}_{1},\Varid{s}_{2}))\mskip1.5mu\}{}\<[E]%
\\
\>[B]{}\mathrel{=}{}\<[BE]%
\>[5]{}\mbox{\commentbegin  \ensuremath{\Varid{fstMLens}\mathord{.}\Varid{mget}}  \commentend}{}\<[E]%
\\
\>[B]{}\hsindent{3}{}\<[3]%
\>[3]{}\mathbf{do}\;\{\mskip1.5mu \Varid{fstMLens}\mathord{.}\Varid{mput}\;(\Varid{s}_{1},\Varid{s}_{2})\;\Varid{s}_{1}\mskip1.5mu\}{}\<[E]%
\\
\>[B]{}\mathrel{=}{}\<[BE]%
\>[5]{}\mbox{\commentbegin  \ensuremath{\Varid{fstMLens}\mathord{.}\Varid{mput}}  \commentend}{}\<[E]%
\\
\>[B]{}\hsindent{3}{}\<[3]%
\>[3]{}\mathbf{do}\;\{\mskip1.5mu \Varid{return}\;(\Varid{s}_{1},\Varid{s}_{2})\mskip1.5mu\}{}\<[E]%
\ColumnHook
\end{hscode}\resethooks
For \ensuremath{\mathsf{(MPutGet)}}, we have:
\begin{hscode}\SaveRestoreHook
\column{B}{@{}>{\hspre}c<{\hspost}@{}}%
\column{BE}{@{}l@{}}%
\column{3}{@{}>{\hspre}l<{\hspost}@{}}%
\column{5}{@{}>{\hspre}l<{\hspost}@{}}%
\column{E}{@{}>{\hspre}l<{\hspost}@{}}%
\>[3]{}\mathbf{do}\;\{\mskip1.5mu (\Varid{s}_{1}'',\Varid{s}_{2}'')\leftarrow \Varid{fstMLens}\mathord{.}\Varid{mput}\;(\Varid{s}_{1},\Varid{s}_{2})\;\Varid{s}_{1}';\Varid{return}\;((\Varid{s}_{1}'',\Varid{s}_{2}''),\Varid{fstMLens}\mathord{.}\Varid{mget}\;(\Varid{s}_{1}'',\Varid{s}_{2}''))\mskip1.5mu\}{}\<[E]%
\\
\>[B]{}\mathrel{=}{}\<[BE]%
\>[5]{}\mbox{\commentbegin  \ensuremath{\Varid{fstMLens}\mathord{.}\Varid{mput}}  \commentend}{}\<[E]%
\\
\>[B]{}\hsindent{3}{}\<[3]%
\>[3]{}\mathbf{do}\;\{\mskip1.5mu \mathbf{let}\;(\Varid{s}_{1}'',\Varid{s}_{2}'')\mathrel{=}(\Varid{s}_{1}',\Varid{s}_{2});\Varid{return}\;((\Varid{s}_{1}'',\Varid{s}_{2}''),\Varid{fstMLens}\mathord{.}\Varid{mget}\;(\Varid{s}_{1}'',\Varid{s}_{2}''))\mskip1.5mu\}{}\<[E]%
\\
\>[B]{}\mathrel{=}{}\<[BE]%
\>[5]{}\mbox{\commentbegin  \ensuremath{\Varid{fstMLens}\mathord{.}\Varid{mget}}  \commentend}{}\<[E]%
\\
\>[B]{}\hsindent{3}{}\<[3]%
\>[3]{}\mathbf{do}\;\{\mskip1.5mu \mathbf{let}\;(\Varid{s}_{1}'',\Varid{s}_{2}'')\mathrel{=}(\Varid{s}_{1}',\Varid{s}_{2});\Varid{return}\;((\Varid{s}_{1}'',\Varid{s}_{2}''),\Varid{s}_{1}'')\mskip1.5mu\}{}\<[E]%
\\
\>[B]{}\mathrel{=}{}\<[BE]%
\>[5]{}\mbox{\commentbegin  substitute \ensuremath{\mathbf{let}}  \commentend}{}\<[E]%
\\
\>[B]{}\hsindent{3}{}\<[3]%
\>[3]{}\mathbf{do}\;\{\mskip1.5mu \Varid{return}\;((\Varid{s}_{1}',\Varid{s}_{2}),\Varid{s}_{1}')\mskip1.5mu\}{}\<[E]%
\\
\>[B]{}\mathrel{=}{}\<[BE]%
\>[5]{}\mbox{\commentbegin  \ensuremath{\Varid{fstMLens}\mathord{.}\Varid{mput}}  \commentend}{}\<[E]%
\\
\>[B]{}\hsindent{3}{}\<[3]%
\>[3]{}\mathbf{do}\;\{\mskip1.5mu (\Varid{s}_{1}'',\Varid{s}_{2}'')\leftarrow \Varid{fstMLens}\mathord{.}\Varid{mput}\;(\Varid{s}_{1},\Varid{s}_{2})\;\Varid{s}_{1}';\Varid{return}\;((\Varid{s}_{1}'',\Varid{s}_{2}''),\Varid{s}_{1}')\mskip1.5mu\}{}\<[E]%
\ColumnHook
\end{hscode}\resethooks
The reasoning for \ensuremath{\mathsf{(MCreateGet)}} is similar.
\end{proof}

\fi

\restatableTheorem{thm:mlens-comp-wb}
\begin{thm:mlens-comp-wb}
  If \ensuremath{\Varid{l}_{1}\mathbin{::}\monadic{\Conid{A}\rightsquigarrow\Conid{B}}{\Conid{M}}} and \ensuremath{\Varid{l}_{2}\mathbin{::}\monadic{\Conid{B}\rightsquigarrow\Conid{C}}{\Conid{M}}} are well-behaved,
  then so is \ensuremath{\Varid{l}_{1}\mathbin{;}\Varid{l}_{2}}.  
\end{thm:mlens-comp-wb}

\begin{proof}
  Suppose \ensuremath{\Varid{l}_{1}} and \ensuremath{\Varid{l}_{2}} are well-behaved, and let \ensuremath{\Varid{l}\mathrel{=}\Varid{l}_{1}\mathbin{;}\Varid{l}_{2}}.  We reason as follows for
  \ensuremath{\mathsf{(MGetPut)}}:
  \begin{hscode}\SaveRestoreHook
\column{B}{@{}>{\hspre}c<{\hspost}@{}}%
\column{BE}{@{}l@{}}%
\column{4}{@{}>{\hspre}l<{\hspost}@{}}%
\column{6}{@{}>{\hspre}l<{\hspost}@{}}%
\column{10}{@{}>{\hspre}l<{\hspost}@{}}%
\column{47}{@{}>{\hspre}c<{\hspost}@{}}%
\column{47E}{@{}l@{}}%
\column{62}{@{}>{\hspre}c<{\hspost}@{}}%
\column{62E}{@{}l@{}}%
\column{107}{@{}>{\hspre}c<{\hspost}@{}}%
\column{107E}{@{}l@{}}%
\column{E}{@{}>{\hspre}l<{\hspost}@{}}%
\>[4]{}\mathbf{do}\;\{\mskip1.5mu \Varid{l}\mathord{.}\Varid{mput}\;\Varid{a}\;(\Varid{l}\mathord{.}\Varid{mget}\;\Varid{a})\mskip1.5mu\}{}\<[E]%
\\
\>[B]{}\mathrel{=}{}\<[BE]%
\>[6]{}\mbox{\commentbegin  definition   \commentend}{}\<[E]%
\\
\>[B]{}\hsindent{4}{}\<[4]%
\>[4]{}\mathbf{do}\;\{\mskip1.5mu {}\<[10]%
\>[10]{}\Varid{b}\leftarrow \Varid{l}_{2}\mathord{.}\Varid{mput}\;(\Varid{l}_{1}\mathord{.}\Varid{mget}\;\Varid{a})\;(\Varid{l}_{2}\mathord{.}\Varid{mget}\;(\Varid{l}_{1}\mathord{.}\Varid{mget}\;\Varid{a}));\Varid{l}_{1}\mathord{.}\Varid{mput}\;\Varid{a}\;\Varid{b}{}\<[107]%
\>[107]{}\mskip1.5mu\}{}\<[107E]%
\\
\>[B]{}\mathrel{=}{}\<[BE]%
\>[6]{}\mbox{\commentbegin  \ensuremath{\mathsf{(MGetPut)}}  \commentend}{}\<[E]%
\\
\>[B]{}\hsindent{4}{}\<[4]%
\>[4]{}\mathbf{do}\;\{\mskip1.5mu {}\<[10]%
\>[10]{}\Varid{b}\leftarrow \Varid{return}\;(\Varid{l}_{1}\mathord{.}\Varid{mget}\;\Varid{a});\Varid{l}_{1}\mathord{.}\Varid{mput}\;\Varid{a}\;\Varid{b}{}\<[62]%
\>[62]{}\mskip1.5mu\}{}\<[62E]%
\\
\>[B]{}\mathrel{=}{}\<[BE]%
\>[6]{}\mbox{\commentbegin  monad unit  \commentend}{}\<[E]%
\\
\>[B]{}\hsindent{4}{}\<[4]%
\>[4]{}\mathbf{do}\;\{\mskip1.5mu {}\<[10]%
\>[10]{}\Varid{l}_{1}\mathord{.}\Varid{mput}\;\Varid{a}\;(\Varid{l}_{1}\mathord{.}\Varid{mget}\;\Varid{a}){}\<[47]%
\>[47]{}\mskip1.5mu\}{}\<[47E]%
\\
\>[B]{}\mathrel{=}{}\<[BE]%
\>[6]{}\mbox{\commentbegin  \ensuremath{\mathsf{(MGetPut)}}  \commentend}{}\<[E]%
\\
\>[B]{}\hsindent{4}{}\<[4]%
\>[4]{}\Varid{return}\;\Varid{a}{}\<[E]%
\ColumnHook
\end{hscode}\resethooks
For \ensuremath{\mathsf{(MPutGet)}}, the proof is as follows:
\begin{hscode}\SaveRestoreHook
\column{B}{@{}>{\hspre}c<{\hspost}@{}}%
\column{BE}{@{}l@{}}%
\column{4}{@{}>{\hspre}l<{\hspost}@{}}%
\column{6}{@{}>{\hspre}l<{\hspost}@{}}%
\column{10}{@{}>{\hspre}l<{\hspost}@{}}%
\column{E}{@{}>{\hspre}l<{\hspost}@{}}%
\>[4]{}\mathbf{do}\;\{\mskip1.5mu {}\<[10]%
\>[10]{}\Varid{a'}\leftarrow \Varid{l}\mathord{.}\Varid{mput}\;\Varid{a}\;\Varid{c};\Varid{return}\;(\Varid{a'},\Varid{l}\mathord{.}\Varid{mget}\;\Varid{a'})\mskip1.5mu\}{}\<[E]%
\\
\>[B]{}\mathrel{=}{}\<[BE]%
\>[6]{}\mbox{\commentbegin  Definition  \commentend}{}\<[E]%
\\
\>[B]{}\hsindent{4}{}\<[4]%
\>[4]{}\mathbf{do}\;\{\mskip1.5mu {}\<[10]%
\>[10]{}\Varid{b}\leftarrow \Varid{l}_{2}\mathord{.}\Varid{mput}\;(\Varid{l}_{1}\mathord{.}\Varid{mget}\;\Varid{a})\;\Varid{c};{}\<[E]%
\\
\>[10]{}\Varid{a'}\leftarrow \Varid{l}_{1}\mathord{.}\Varid{mput}\;\Varid{a}\;\Varid{b};{}\<[E]%
\\
\>[10]{}\Varid{return}\;(\Varid{a'},\Varid{l}_{2}\mathord{.}\Varid{mget}\;(\Varid{l}_{1}\mathord{.}\Varid{mget}\;\Varid{a'}))\mskip1.5mu\}{}\<[E]%
\\
\>[B]{}\mathrel{=}{}\<[BE]%
\>[6]{}\mbox{\commentbegin  \ensuremath{\mathsf{(MPutGet)}}  \commentend}{}\<[E]%
\\
\>[B]{}\hsindent{4}{}\<[4]%
\>[4]{}\mathbf{do}\;\{\mskip1.5mu {}\<[10]%
\>[10]{}\Varid{b}\leftarrow \Varid{l}_{2}\mathord{.}\Varid{mput}\;(\Varid{l}_{1}\mathord{.}\Varid{mget}\;\Varid{a})\;\Varid{c};{}\<[E]%
\\
\>[10]{}\Varid{a'}\leftarrow \Varid{l}_{1}\mathord{.}\Varid{mput}\;\Varid{a}\;\Varid{b};{}\<[E]%
\\
\>[10]{}\Varid{return}\;(\Varid{a'},\Varid{l}_{2}\mathord{.}\Varid{mget}\;\Varid{b})\mskip1.5mu\}{}\<[E]%
\\
\>[B]{}\mathrel{=}{}\<[BE]%
\>[6]{}\mbox{\commentbegin  \ensuremath{\mathsf{(MPutGet)}}  \commentend}{}\<[E]%
\\
\>[B]{}\hsindent{4}{}\<[4]%
\>[4]{}\mathbf{do}\;\{\mskip1.5mu {}\<[10]%
\>[10]{}\Varid{b}\leftarrow \Varid{l}_{2}\mathord{.}\Varid{mput}\;(\Varid{l}_{1}\mathord{.}\Varid{mget}\;\Varid{a})\;\Varid{c};{}\<[E]%
\\
\>[10]{}\Varid{a'}\leftarrow \Varid{l}_{1}\mathord{.}\Varid{mput}\;\Varid{a}\;\Varid{b};{}\<[E]%
\\
\>[10]{}\Varid{return}\;(\Varid{a'},\Varid{c})\mskip1.5mu\}{}\<[E]%
\\
\>[B]{}\mathrel{=}{}\<[BE]%
\>[6]{}\mbox{\commentbegin  definition  \commentend}{}\<[E]%
\\
\>[B]{}\hsindent{4}{}\<[4]%
\>[4]{}\mathbf{do}\;\{\mskip1.5mu {}\<[10]%
\>[10]{}\Varid{a'}\leftarrow \Varid{l}\mathord{.}\Varid{mput}\;\Varid{a}\;\Varid{c};\Varid{return}\;(\Varid{a'},\Varid{c})\mskip1.5mu\}{}\<[E]%
\ColumnHook
\end{hscode}\resethooks

\endswithdisplay
\end{proof}

\section{Proofs for Section~\ref{sec:symmetric}}

\restatableProposition{prop:setBool-wb-comp-not-wb}
\begin{prop:setBool-wb-comp-not-wb}
  \ensuremath{\Varid{setBool}\;\Varid{x}} is well-behaved for \ensuremath{\Varid{x}\in\{\mskip1.5mu \Conid{True},\Conid{False}\mskip1.5mu\}}, but 
  \ensuremath{\Varid{setBool}\;\Conid{True}\mathbin{;}\Varid{setBool}\;\Conid{False}} is not well-behaved. 
\end{prop:setBool-wb-comp-not-wb}
For the first part: 
\begin{proof}
  Let \ensuremath{\Varid{sl}\mathrel{=}\Varid{setBool}\;\Varid{x}}.  We consider \ensuremath{\mathsf{(PutRLM)}}, and \ensuremath{\mathsf{(PutLRM)}} is symmetric. 
  \begin{hscode}\SaveRestoreHook
\column{B}{@{}>{\hspre}c<{\hspost}@{}}%
\column{BE}{@{}l@{}}%
\column{4}{@{}>{\hspre}l<{\hspost}@{}}%
\column{6}{@{}>{\hspre}l<{\hspost}@{}}%
\column{10}{@{}>{\hspre}l<{\hspost}@{}}%
\column{27}{@{}>{\hspre}l<{\hspost}@{}}%
\column{32}{@{}>{\hspre}l<{\hspost}@{}}%
\column{35}{@{}>{\hspre}l<{\hspost}@{}}%
\column{44}{@{}>{\hspre}l<{\hspost}@{}}%
\column{50}{@{}>{\hspre}l<{\hspost}@{}}%
\column{E}{@{}>{\hspre}l<{\hspost}@{}}%
\>[4]{}\mathbf{do}\;\{\mskip1.5mu {}\<[10]%
\>[10]{}(\Varid{b},\Varid{c'})\leftarrow (\Varid{setBool}\;\Varid{x})\mathord{.}\Varid{mput}_\mathrm{R}\;((),());(\Varid{setBool}\;\Varid{x})\mathord{.}\Varid{mput}_\mathrm{L}\;(\Varid{b},\Varid{c'})\mskip1.5mu\}{}\<[E]%
\\
\>[B]{}\mathrel{=}{}\<[BE]%
\>[6]{}\mbox{\commentbegin  Definition  \commentend}{}\<[E]%
\\
\>[B]{}\hsindent{4}{}\<[4]%
\>[4]{}\mathbf{do}\;\{\mskip1.5mu (\Varid{b},\Varid{c'})\leftarrow \mathbf{do}\;\{\mskip1.5mu \set\;\Varid{x};\Varid{return}\;((),())\mskip1.5mu\};\set\;\Varid{x};\Varid{return}\;((),\Varid{c'})\mskip1.5mu\}{}\<[E]%
\\
\>[B]{}\mathrel{=}{}\<[BE]%
\>[6]{}\mbox{\commentbegin  monad associativity  \commentend}{}\<[E]%
\\
\>[B]{}\hsindent{4}{}\<[4]%
\>[4]{}\mathbf{do}\;\{\mskip1.5mu \set\;\Varid{x};(\Varid{b},\Varid{c'})\leftarrow {}\<[27]%
\>[27]{}\Varid{return}\;((),());\set\;\Varid{x};\Varid{return}\;((),\Varid{c'})\mskip1.5mu\}{}\<[E]%
\\
\>[B]{}\mathrel{=}{}\<[BE]%
\>[6]{}\mbox{\commentbegin  commutativity of \ensuremath{\Varid{return}}  \commentend}{}\<[E]%
\\
\>[B]{}\hsindent{4}{}\<[4]%
\>[4]{}\mathbf{do}\;\{\mskip1.5mu \set\;\Varid{x};\set\;\Varid{x};(\Varid{b},\Varid{c'})\leftarrow {}\<[35]%
\>[35]{}\Varid{return}\;((),());\Varid{return}\;((),\Varid{c'})\mskip1.5mu\}{}\<[E]%
\\
\>[B]{}\mathrel{=}{}\<[BE]%
\>[6]{}\mbox{\commentbegin  \ensuremath{\set\;\Varid{x};\set\;\Varid{x}\mathrel{=}\set\;\Varid{x}}  \commentend}{}\<[E]%
\\
\>[B]{}\hsindent{4}{}\<[4]%
\>[4]{}\mathbf{do}\;\{\mskip1.5mu \set\;\Varid{x};(\Varid{b},\Varid{c'})\leftarrow {}\<[27]%
\>[27]{}\Varid{return}\;((),());{}\<[44]%
\>[44]{}\Varid{return}\;((),\Varid{c'})\mskip1.5mu\}{}\<[E]%
\\
\>[B]{}\mathrel{=}{}\<[BE]%
\>[6]{}\mbox{\commentbegin  monad associativity  \commentend}{}\<[E]%
\\
\>[B]{}\hsindent{4}{}\<[4]%
\>[4]{}\mathbf{do}\;\{\mskip1.5mu (\Varid{b},\Varid{c'})\leftarrow \mathbf{do}\;\{\mskip1.5mu \set\;\Varid{x};{}\<[32]%
\>[32]{}\Varid{return}\;((),())\mskip1.5mu\};{}\<[50]%
\>[50]{}\Varid{return}\;((),\Varid{c'})\mskip1.5mu\}{}\<[E]%
\\
\>[B]{}\mathrel{=}{}\<[BE]%
\>[6]{}\mbox{\commentbegin  Definition  \commentend}{}\<[E]%
\\
\>[B]{}\hsindent{4}{}\<[4]%
\>[4]{}\mathbf{do}\;\{\mskip1.5mu (\Varid{b},\Varid{c'})\leftarrow (\Varid{setBool}\;\Varid{x})\mathord{.}\Varid{mput}_\mathrm{R}\;((),());\Varid{return}\;((),\Varid{c'})\mskip1.5mu\}{}\<[E]%
\ColumnHook
\end{hscode}\resethooks

For the second part, taking \ensuremath{\Varid{sl}\mathrel{=}\Varid{setBool}\;\Conid{True}\mathbin{;}\Varid{setBool}\;\Conid{False}}, we proceed as follows:
  \begin{hscode}\SaveRestoreHook
\column{B}{@{}>{\hspre}c<{\hspost}@{}}%
\column{BE}{@{}l@{}}%
\column{4}{@{}>{\hspre}l<{\hspost}@{}}%
\column{6}{@{}>{\hspre}l<{\hspost}@{}}%
\column{10}{@{}>{\hspre}l<{\hspost}@{}}%
\column{E}{@{}>{\hspre}l<{\hspost}@{}}%
\>[4]{}\mathbf{do}\;\{\mskip1.5mu {}\<[10]%
\>[10]{}(\Varid{c},\Varid{s'})\leftarrow \Varid{sl}\mathord{.}\Varid{mput}_\mathrm{R}\;(\Varid{a},\Varid{s});\Varid{sl}\mathord{.}\Varid{mput}_\mathrm{L}\;(\Varid{c},\Varid{s'})\mskip1.5mu\}{}\<[E]%
\\
\>[B]{}\mathrel{=}{}\<[BE]%
\>[6]{}\mbox{\commentbegin  let \ensuremath{\Varid{s}\mathrel{=}(\Varid{s}_{1},\Varid{s}_{2})} and \ensuremath{\Varid{s'}\mathrel{=}(\Varid{s}_{1}''',\Varid{s}_{2}''')}; definition  \commentend}{}\<[E]%
\\
\>[B]{}\hsindent{4}{}\<[4]%
\>[4]{}\mathbf{do}\;\{\mskip1.5mu {}\<[10]%
\>[10]{}(\Varid{b},\Varid{s}_{1}')\leftarrow (\Varid{setBool}\;\Conid{True})\mathord{.}\Varid{mput}_\mathrm{R}\;(\Varid{a},\Varid{s}_{1});{}\<[E]%
\\
\>[10]{}(\Varid{c},\Varid{s}_{2}')\leftarrow (\Varid{setBool}\;\Conid{False})\mathord{.}\Varid{mput}_\mathrm{R}\;(\Varid{b},\Varid{s}_{2});{}\<[E]%
\\
\>[10]{}(\Varid{c'},(\Varid{s}_{1}'',\Varid{s}_{2}''))\leftarrow \Varid{return}\;(\Varid{c},(\Varid{s}_{1}',\Varid{s}_{2}'));{}\<[E]%
\\
\>[10]{}(\Varid{b'},\Varid{s}_{2}''')\leftarrow (\Varid{setBool}\;\Conid{False})\mathord{.}\Varid{mput}_\mathrm{L}\;(\Varid{c'},\Varid{s}_{2}'');{}\<[E]%
\\
\>[10]{}(\Varid{a'},\Varid{s}_{2}''')\leftarrow (\Varid{setBool}\;\Conid{True})\mathord{.}\Varid{mput}_\mathrm{L}\;(\Varid{b'},\Varid{s}_{1}'');{}\<[E]%
\\
\>[10]{}\Varid{return}\;(\Varid{c},(\Varid{s}_{1}''',\Varid{s}_{2}'''))\mskip1.5mu\}{}\<[E]%
\\
\>[B]{}\mathrel{=}{}\<[BE]%
\>[6]{}\mbox{\commentbegin  monad unit  \commentend}{}\<[E]%
\\
\>[B]{}\hsindent{4}{}\<[4]%
\>[4]{}\mathbf{do}\;\{\mskip1.5mu {}\<[10]%
\>[10]{}(\Varid{b},\Varid{s}_{1}')\leftarrow (\Varid{setBool}\;\Conid{True})\mathord{.}\Varid{mput}_\mathrm{R}\;(\Varid{a},\Varid{s}_{1});{}\<[E]%
\\
\>[10]{}(\Varid{c},\Varid{s}_{2}')\leftarrow (\Varid{setBool}\;\Conid{False})\mathord{.}\Varid{mput}_\mathrm{R}\;(\Varid{b},\Varid{s}_{2});{}\<[E]%
\\
\>[10]{}(\Varid{b'},\Varid{s}_{2}''')\leftarrow (\Varid{setBool}\;\Conid{False})\mathord{.}\Varid{mput}_\mathrm{L}\;(\Varid{c'},\Varid{s}_{2}');{}\<[E]%
\\
\>[10]{}(\Varid{a'},\Varid{s}_{2}''')\leftarrow (\Varid{setBool}\;\Conid{True})\mathord{.}\Varid{mput}_\mathrm{L}\;(\Varid{b'},\Varid{s}_{1}');{}\<[E]%
\\
\>[10]{}\Varid{return}\;(\Varid{c},(\Varid{s}_{1}''',\Varid{s}_{2}'''))\mskip1.5mu\}{}\<[E]%
\\
\>[B]{}\mathrel{=}{}\<[BE]%
\>[6]{}\mbox{\commentbegin  \ensuremath{\mathsf{(PutRLM)}} for \ensuremath{\Varid{setBool}\;\Conid{False}}  \commentend}{}\<[E]%
\\
\>[B]{}\hsindent{4}{}\<[4]%
\>[4]{}\mathbf{do}\;\{\mskip1.5mu {}\<[10]%
\>[10]{}(\Varid{b},\Varid{s}_{1}')\leftarrow (\Varid{setBool}\;\Conid{True})\mathord{.}\Varid{mput}_\mathrm{R}\;(\Varid{a},\Varid{s}_{1});{}\<[E]%
\\
\>[10]{}(\Varid{c},\Varid{s}_{2}')\leftarrow (\Varid{setBool}\;\Conid{False})\mathord{.}\Varid{mput}_\mathrm{R}\;(\Varid{b},\Varid{s}_{2});{}\<[E]%
\\
\>[10]{}(\Varid{b'},\Varid{s}_{2}''')\leftarrow \Varid{return}\;(\Varid{b},\Varid{s}_{2}');{}\<[E]%
\\
\>[10]{}(\Varid{a'},\Varid{s}_{2}''')\leftarrow (\Varid{setBool}\;\Conid{False})\mathord{.}\Varid{mput}_\mathrm{L}\;(\Varid{b'},\Varid{s}_{1}');{}\<[E]%
\\
\>[10]{}\Varid{return}\;(\Varid{c},(\Varid{s}_{1}''',\Varid{s}_{2}'''))\mskip1.5mu\}{}\<[E]%
\\
\>[B]{}\mathrel{=}{}\<[BE]%
\>[6]{}\mbox{\commentbegin  monad unit  \commentend}{}\<[E]%
\\
\>[B]{}\hsindent{4}{}\<[4]%
\>[4]{}\mathbf{do}\;\{\mskip1.5mu {}\<[10]%
\>[10]{}(\Varid{b},\Varid{s}_{1}')\leftarrow (\Varid{setBool}\;\Conid{True})\mathord{.}\Varid{mput}_\mathrm{R}\;(\Varid{a},\Varid{s}_{1});{}\<[E]%
\\
\>[10]{}(\Varid{c},\Varid{s}_{2}')\leftarrow (\Varid{setBool}\;\Conid{False})\mathord{.}\Varid{mput}_\mathrm{R}\;(\Varid{b},\Varid{s}_{2});{}\<[E]%
\\
\>[10]{}(\Varid{a'},\Varid{s}_{2}''')\leftarrow (\Varid{setBool}\;\Conid{True})\mathord{.}\Varid{mput}_\mathrm{L}\;(\Varid{b},\Varid{s}_{1}');{}\<[E]%
\\
\>[10]{}\Varid{return}\;(\Varid{c},(\Varid{s}_{1}''',\Varid{s}_{2}'))\mskip1.5mu\}{}\<[E]%
\ColumnHook
\end{hscode}\resethooks
However, we cannot simplify this any further.  Moreover, it should be clear that the shared state will be \ensuremath{\Conid{True}} after this operation is performed.
 Considering the other side of the desired equation:
\begin{hscode}\SaveRestoreHook
\column{B}{@{}>{\hspre}c<{\hspost}@{}}%
\column{BE}{@{}l@{}}%
\column{4}{@{}>{\hspre}l<{\hspost}@{}}%
\column{6}{@{}>{\hspre}l<{\hspost}@{}}%
\column{10}{@{}>{\hspre}l<{\hspost}@{}}%
\column{E}{@{}>{\hspre}l<{\hspost}@{}}%
\>[4]{}\mathbf{do}\;\{\mskip1.5mu {}\<[10]%
\>[10]{}(\Varid{c},\Varid{s'})\leftarrow \Varid{sl}\mathord{.}\Varid{mput}_\mathrm{R}\;(\Varid{a},\Varid{s});\Varid{sl}\mathord{.}\Varid{mput}_\mathrm{L}\;(\Varid{c},\Varid{s''})\mskip1.5mu\}{}\<[E]%
\\
\>[B]{}\mathrel{=}{}\<[BE]%
\>[6]{}\mbox{\commentbegin   let \ensuremath{\Varid{s}\mathrel{=}(\Varid{s}_{1},\Varid{s}_{2})} and \ensuremath{\Varid{s'}\mathrel{=}(\Varid{s}_{1}''',\Varid{s}_{2}''')}; Definition  \commentend}{}\<[E]%
\\
\>[B]{}\hsindent{4}{}\<[4]%
\>[4]{}\mathbf{do}\;\{\mskip1.5mu {}\<[10]%
\>[10]{}(\Varid{b},\Varid{s}_{1}')\leftarrow (\Varid{setBool}\;\Conid{True})\mathord{.}\Varid{mput}_\mathrm{R}\;(\Varid{a},\Varid{s}_{1});{}\<[E]%
\\
\>[10]{}(\Varid{c},\Varid{s}_{2}')\leftarrow (\Varid{setBool}\;\Conid{False})\mathord{.}\Varid{mput}_\mathrm{R}\;(\Varid{b},\Varid{s}_{2});{}\<[E]%
\\
\>[10]{}(\Varid{c'},(\Varid{s}_{1}'',\Varid{s}_{2}''))\leftarrow \Varid{return}\;(\Varid{c},(\Varid{s}_{1}',\Varid{s}_{2}'));{}\<[E]%
\\
\>[10]{}\Varid{return}\;(\Varid{c'},(\Varid{s}_{1}'',\Varid{s}_{2}''))\mskip1.5mu\}{}\<[E]%
\\
\>[B]{}\mathrel{=}{}\<[BE]%
\>[6]{}\mbox{\commentbegin  Monad unit  \commentend}{}\<[E]%
\\
\>[B]{}\hsindent{4}{}\<[4]%
\>[4]{}\mathbf{do}\;\{\mskip1.5mu {}\<[10]%
\>[10]{}(\Varid{b},\Varid{s}_{1}')\leftarrow (\Varid{setBool}\;\Conid{True})\mathord{.}\Varid{mput}_\mathrm{R}\;(\Varid{a},\Varid{s}_{1});{}\<[E]%
\\
\>[10]{}(\Varid{c},\Varid{s}_{2}')\leftarrow (\Varid{setBool}\;\Conid{False})\mathord{.}\Varid{mput}_\mathrm{R}\;(\Varid{b},\Varid{s}_{2});{}\<[E]%
\\
\>[10]{}\Varid{return}\;(\Varid{c},(\Varid{s}_{1}',\Varid{s}_{2}'))\mskip1.5mu\}{}\<[E]%
\ColumnHook
\end{hscode}\resethooks
it should be clear that the shared state will be \ensuremath{\Conid{False}} after this operation is performed.  Therefore, \ensuremath{\mathsf{(PutRLM)}} is not satisfied by \ensuremath{\Varid{sl}}.
\end{proof}

\restatableLemma{lem:cospan2span}
\begin{lem:cospan2span}
  If  \ensuremath{\Varid{ml}_{\mathrm{1}}\mathbin{::}\monadic{\sigma_1\rightsquigarrow\beta}{\mu}} and \ensuremath{\Varid{ml}_{\mathrm{2}}\mathbin{::}\monadic{\sigma_2\rightsquigarrow\beta}{\mu}} are well-behaved then so is \ensuremath{\Varid{ml}_{\mathrm{1}}\mathbin{\Join}\Varid{ml}_{\mathrm{2}}\mathbin{::}\monadic{\sigma_1\mathbin{\reflectbox{$\rightsquigarrow$}}(\sigma_1\mathbin{\!\Join\!}\sigma_2)\rightsquigarrow\sigma_2}{\mu}}.
\end{lem:cospan2span}
\begin{proof}
  It suffices to consider the two lenses \ensuremath{\Varid{l}_{1}\mathrel{=}\Conid{MLens}\;\Varid{fst}\;\Varid{put}_\mathrm{L}\;\Varid{create}_\mathrm{L}} and
  \ensuremath{\Varid{l}_{2}\mathrel{=}\Conid{MLens}\;\Varid{snd}\;\Varid{put}_\mathrm{R}\;\Varid{create}_\mathrm{R}} in isolation.  Moreover, the two cases are
  completely symmetric, so we only show the first. 

For \ensuremath{\mathsf{(MGetPut)}}, we show:
\begin{hscode}\SaveRestoreHook
\column{B}{@{}>{\hspre}c<{\hspost}@{}}%
\column{BE}{@{}l@{}}%
\column{4}{@{}>{\hspre}l<{\hspost}@{}}%
\column{6}{@{}>{\hspre}l<{\hspost}@{}}%
\column{E}{@{}>{\hspre}l<{\hspost}@{}}%
\>[4]{}\mathbf{do}\;\{\mskip1.5mu \Varid{l}_{1}\mathord{.}\Varid{mput}\;(\Varid{s}_{1},\Varid{s}_{2})\;(\Varid{l}_{1}\mathord{.}\Varid{mget}\;(\Varid{s}_{1},\Varid{s}_{2}))\mskip1.5mu\}{}\<[E]%
\\
\>[B]{}\mathrel{=}{}\<[BE]%
\>[6]{}\mbox{\commentbegin  definition  \commentend}{}\<[E]%
\\
\>[B]{}\hsindent{4}{}\<[4]%
\>[4]{}\mathbf{do}\;\{\mskip1.5mu \Varid{put}_\mathrm{L}\;(\Varid{s}_{1},\Varid{s}_{2})\;(\Varid{fst}\;(\Varid{s}_{1},\Varid{s}_{2}))\mskip1.5mu\}{}\<[E]%
\\
\>[B]{}\mathrel{=}{}\<[BE]%
\>[6]{}\mbox{\commentbegin  definition of \ensuremath{\Varid{put}_\mathrm{L}} and \ensuremath{\Varid{fst}}  \commentend}{}\<[E]%
\\
\>[B]{}\hsindent{4}{}\<[4]%
\>[4]{}\mathbf{do}\;\{\mskip1.5mu \Varid{s}_{2}'\leftarrow \Varid{ml}_{\mathrm{2}}\mathord{.}\Varid{mput}\;\Varid{s}_{2}\;(\Varid{ml}_{\mathrm{1}}\mathord{.}\Varid{mget}\;\Varid{s}_{1})\mskip1.5mu\}{}\<[E]%
\\
\>[B]{}\mathrel{=}{}\<[BE]%
\>[6]{}\mbox{\commentbegin  \ensuremath{(\Varid{s}_{1},\Varid{s}_{2})} consistent  \commentend}{}\<[E]%
\\
\>[B]{}\hsindent{4}{}\<[4]%
\>[4]{}\mathbf{do}\;\{\mskip1.5mu \Varid{s}_{2}'\leftarrow \Varid{ml}_{\mathrm{2}}\mathord{.}\Varid{mput}\;\Varid{s}_{2}\;(\Varid{ml}_{\mathrm{2}}\mathord{.}\Varid{mget}\;\Varid{s}_{2})\mskip1.5mu\}{}\<[E]%
\\
\>[B]{}\mathrel{=}{}\<[BE]%
\>[6]{}\mbox{\commentbegin  \ensuremath{\mathsf{(MGetPut)}}  \commentend}{}\<[E]%
\\
\>[B]{}\hsindent{4}{}\<[4]%
\>[4]{}\Varid{return}\;\Varid{s}{}\<[E]%
\ColumnHook
\end{hscode}\resethooks

The proof for \ensuremath{\mathsf{(MPutGet)}} goes as follows.  Note that it holds by
construction, without appealing to well-behavedness of \ensuremath{\Varid{ml}_{\mathrm{1}}} or \ensuremath{\Varid{ml}_{\mathrm{2}}}.
\begin{hscode}\SaveRestoreHook
\column{B}{@{}>{\hspre}c<{\hspost}@{}}%
\column{BE}{@{}l@{}}%
\column{4}{@{}>{\hspre}l<{\hspost}@{}}%
\column{6}{@{}>{\hspre}l<{\hspost}@{}}%
\column{10}{@{}>{\hspre}l<{\hspost}@{}}%
\column{E}{@{}>{\hspre}l<{\hspost}@{}}%
\>[4]{}\mathbf{do}\;\{\mskip1.5mu (\Varid{s}_{1}',\Varid{s}_{2}')\leftarrow \Varid{l}_{1}\mathord{.}\Varid{mput}\;(\Varid{s}_{1},\Varid{s}_{2})\;\Varid{a};\Varid{return}\;((\Varid{s}_{1}',\Varid{s}_{2}'),\Varid{l}_{1}\mathord{.}\Varid{mget}\;(\Varid{s}_{1}',\Varid{s}_{2}'))\mskip1.5mu\}{}\<[E]%
\\
\>[B]{}\mathrel{=}{}\<[BE]%
\>[6]{}\mbox{\commentbegin  definition  \commentend}{}\<[E]%
\\
\>[B]{}\hsindent{4}{}\<[4]%
\>[4]{}\mathbf{do}\;\{\mskip1.5mu (\Varid{s}_{1}',\Varid{s}_{2}')\leftarrow \Varid{put}_\mathrm{R}\;(\Varid{s}_{1},\Varid{s}_{2})\;\Varid{a};\Varid{return}\;((\Varid{s}_{1}',\Varid{s}_{2}'),\Varid{fst}\;(\Varid{s}_{1}',\Varid{s}_{2}'))\mskip1.5mu\}{}\<[E]%
\\
\>[B]{}\mathrel{=}{}\<[BE]%
\>[6]{}\mbox{\commentbegin  definition  \commentend}{}\<[E]%
\\
\>[B]{}\hsindent{4}{}\<[4]%
\>[4]{}\mathbf{do}\;\{\mskip1.5mu {}\<[10]%
\>[10]{}\Varid{s}_{2}''\leftarrow \Varid{ml}_{\mathrm{2}}\mathord{.}\Varid{mput}\;\Varid{s}_{2}\;(\Varid{ml}_{\mathrm{1}}\mathord{.}\Varid{mget}\;\Varid{a});(\Varid{s}_{1}',\Varid{s}_{2}')\leftarrow \Varid{return}\;(\Varid{a},\Varid{s}_{2}'');{}\<[E]%
\\
\>[10]{}\Varid{return}\;((\Varid{s}_{1}',\Varid{s}_{2}'),\Varid{fst}\;(\Varid{s}_{1}',\Varid{s}_{2}'))\mskip1.5mu\}{}\<[E]%
\\
\>[B]{}\mathrel{=}{}\<[BE]%
\>[6]{}\mbox{\commentbegin  definition of \ensuremath{\Varid{fst}}  \commentend}{}\<[E]%
\\
\>[B]{}\hsindent{4}{}\<[4]%
\>[4]{}\mathbf{do}\;\{\mskip1.5mu {}\<[10]%
\>[10]{}\Varid{s}_{2}''\leftarrow \Varid{ml}_{\mathrm{2}}\mathord{.}\Varid{mput}\;\Varid{s}_{2}\;(\Varid{ml}_{\mathrm{1}}\mathord{.}\Varid{mget}\;\Varid{a});(\Varid{s}_{1}',\Varid{s}_{2}')\leftarrow \Varid{return}\;(\Varid{a},\Varid{s}_{2}'');{}\<[E]%
\\
\>[10]{}\Varid{return}\;((\Varid{s}_{1}',\Varid{s}_{2}'),\Varid{s}_{1}'){}\<[E]%
\\
\>[B]{}\mathrel{=}{}\<[BE]%
\>[6]{}\mbox{\commentbegin  monad laws  \commentend}{}\<[E]%
\\
\>[B]{}\hsindent{4}{}\<[4]%
\>[4]{}\mathbf{do}\;\{\mskip1.5mu {}\<[10]%
\>[10]{}\Varid{s}_{2}''\leftarrow \Varid{ml}_{\mathrm{2}}\mathord{.}\Varid{mput}\;\Varid{s}_{2}\;(\Varid{ml}_{\mathrm{1}}\mathord{.}\Varid{mget}\;\Varid{a});(\Varid{s}_{1}',\Varid{s}_{2}')\leftarrow \Varid{return}\;(\Varid{a},\Varid{s}_{2}'');{}\<[E]%
\\
\>[10]{}\Varid{return}\;((\Varid{s}_{1}',\Varid{s}_{2}'),\Varid{a})\mskip1.5mu\}{}\<[E]%
\\
\>[B]{}\mathrel{=}{}\<[BE]%
\>[6]{}\mbox{\commentbegin  definition  \commentend}{}\<[E]%
\\
\>[B]{}\hsindent{4}{}\<[4]%
\>[4]{}\mathbf{do}\;\{\mskip1.5mu (\Varid{s}_{1}',\Varid{s}_{2}')\leftarrow \Varid{put}_\mathrm{L}\;(\Varid{s}_{1},\Varid{s}_{2})\;\Varid{a};\Varid{return}\;((\Varid{s}_{1}',\Varid{s}_{2}'),\Varid{a})\mskip1.5mu\}{}\<[E]%
\\
\>[B]{}\mathrel{=}{}\<[BE]%
\>[6]{}\mbox{\commentbegin  definition  \commentend}{}\<[E]%
\\
\>[B]{}\hsindent{4}{}\<[4]%
\>[4]{}\mathbf{do}\;\{\mskip1.5mu (\Varid{s}_{1}',\Varid{s}_{2}')\leftarrow \Varid{l}_{1}\mathord{.}\Varid{mput}\;(\Varid{s}_{1},\Varid{s}_{2})\;\Varid{a};\Varid{return}\;((\Varid{s}_{1}',\Varid{s}_{2}'),\Varid{a})\mskip1.5mu\}{}\<[E]%
\ColumnHook
\end{hscode}\resethooks
The proof for \ensuremath{\mathsf{(MCreateGet)}} is similar.  

Finally, we show that \ensuremath{\Varid{put}_\mathrm{L}\mathbin{::}(\sigma_1\mathbin{\!\Join\!}\sigma_2)\to \sigma_1\to \mu\;(\sigma_1\mathbin{\!\Join\!}\sigma_2)}, and in particular, that it maintains the consistency
invariant on the state space \ensuremath{\sigma_1\mathbin{\!\Join\!}\sigma_2}.
Assume that \ensuremath{(\Varid{s}_{1},\Varid{s}_{2})\mathbin{::}\sigma_1\mathbin{\!\Join\!}\sigma_2} and \ensuremath{\Varid{s}_{1}'\mathbin{::}\sigma_1} are
given.  Thus, \ensuremath{\Varid{ml}_{\mathrm{1}}\mathord{.}\Varid{mget}\;\Varid{s}_{1}\mathrel{=}\Varid{ml}_{\mathrm{2}}\mathord{.}\Varid{mget}\;\Varid{s}_{2}}.  We must show that any value returned by \ensuremath{\Varid{put}_\mathrm{L}} also satisfies this consistency criterion.   By definition, 
\begin{hscode}\SaveRestoreHook
\column{B}{@{}>{\hspre}l<{\hspost}@{}}%
\column{E}{@{}>{\hspre}l<{\hspost}@{}}%
\>[B]{}\Varid{put}_\mathrm{L}\;(\Varid{s}_{1},\Varid{s}_{2})\;\Varid{s}_{1}'\mathrel{=}\mathbf{do}\;\{\mskip1.5mu \Varid{s}_{2}'\leftarrow \Varid{ml}_{\mathrm{2}}\mathord{.}\Varid{mput}\;\Varid{s}_{2}\;(\Varid{ml}_{\mathrm{1}}\mathord{.}\Varid{mget}\;\Varid{s}_{1}');\Varid{return}\;(\Varid{s}_{1}',\Varid{s}_{2}')\mskip1.5mu\}{}\<[E]%
\ColumnHook
\end{hscode}\resethooks
By \ensuremath{\mathsf{(MPutGet)}}, any \ensuremath{\Varid{s}_{2}'} resulting from \ensuremath{\Varid{ml}_{\mathrm{2}}\mathord{.}\Varid{mput}\;\Varid{s}_{2}\;(\Varid{ml}_{\mathrm{1}}\mathord{.}\Varid{mget}\;\Varid{s}_{1}')} will satisfy \ensuremath{\Varid{ml}_{\mathrm{2}}\mathord{.}\Varid{mget}\;\Varid{s}_{2}'\mathrel{=}\Varid{ml}_{\mathrm{1}}\mathord{.}\Varid{mget}\;\Varid{s}_{1}'}.  The proof that
\ensuremath{\Varid{create}_\mathrm{L}\mathbin{::}\sigma_1\to \mu\;(\sigma_1\mathbin{\!\Join\!}\sigma_2)} is similar, but simpler.
\end{proof}

\restatableTheorem{thm:span2smlens-wb}
\begin{thm:span2smlens-wb}
  If \ensuremath{\Varid{sp}\mathbin{::}\monadic{\Conid{A}\mathbin{\reflectbox{$\rightsquigarrow$}}\Conid{S}\rightsquigarrow\Conid{B}}{\Conid{M}}} is well-behaved, then \ensuremath{\Varid{span2smlens}\;\Varid{sp}} is also well-behaved.
\end{thm:span2smlens-wb}
\begin{proof}
Let \ensuremath{\Varid{sl}\mathrel{=}\Varid{span2smlens}\;\Varid{sp}}.  We need to show that the laws \ensuremath{\mathsf{(PutRLM)}}
and \ensuremath{\mathsf{(PutLRM)}} hold.  We show \ensuremath{\mathsf{(PutRLM)}}, and \ensuremath{\mathsf{(PutLRM)}} is
symmetric.

We need to show that
\begin{hscode}\SaveRestoreHook
\column{B}{@{}>{\hspre}c<{\hspost}@{}}%
\column{BE}{@{}l@{}}%
\column{4}{@{}>{\hspre}l<{\hspost}@{}}%
\column{10}{@{}>{\hspre}l<{\hspost}@{}}%
\column{E}{@{}>{\hspre}l<{\hspost}@{}}%
\>[4]{}\mathbf{do}\;\{\mskip1.5mu {}\<[10]%
\>[10]{}(\Varid{b'},\Varid{mc'})\leftarrow \Varid{sl}\mathord{.}\Varid{mput}_\mathrm{R}\;(\Varid{a},\Varid{mc});\Varid{sl}\mathord{.}\Varid{mput}_\mathrm{L}\;(\Varid{b'},\Varid{mc'})\mskip1.5mu\}{}\<[E]%
\\
\>[B]{}\mathrel{=}{}\<[BE]%
\\
\>[B]{}\hsindent{4}{}\<[4]%
\>[4]{}\mathbf{do}\;\{\mskip1.5mu {}\<[10]%
\>[10]{}(\Varid{b'},\Varid{mc'})\leftarrow \Varid{sl}\mathord{.}\Varid{mput}_\mathrm{R}\;(\Varid{a},\Varid{mc});\Varid{return}\;(\Varid{a},\Varid{mc'})\mskip1.5mu\}{}\<[E]%
\ColumnHook
\end{hscode}\resethooks
There are two cases, depending on whether the initial state \ensuremath{\Varid{mc}} is \ensuremath{\Conid{Nothing}} or \ensuremath{\Conid{Just}\;\Varid{c}} for some \ensuremath{\Varid{c}}.  

If \ensuremath{\Varid{mc}\mathrel{=}\Conid{Nothing}} then we reason as follows:
\begin{hscode}\SaveRestoreHook
\column{B}{@{}>{\hspre}c<{\hspost}@{}}%
\column{BE}{@{}l@{}}%
\column{4}{@{}>{\hspre}l<{\hspost}@{}}%
\column{6}{@{}>{\hspre}l<{\hspost}@{}}%
\column{10}{@{}>{\hspre}l<{\hspost}@{}}%
\column{E}{@{}>{\hspre}l<{\hspost}@{}}%
\>[4]{}\mathbf{do}\;\{\mskip1.5mu {}\<[10]%
\>[10]{}(\Varid{b'},\Varid{mc'})\leftarrow \Varid{sl}\mathord{.}\Varid{mput}_\mathrm{R}\;(\Varid{a},\Conid{Nothing});\Varid{sl}\mathord{.}\Varid{mput}_\mathrm{L}\;(\Varid{b'},\Varid{mc'})\mskip1.5mu\}{}\<[E]%
\\
\>[B]{}\mathrel{=}{}\<[BE]%
\>[6]{}\mbox{\commentbegin  Definition  \commentend}{}\<[E]%
\\
\>[B]{}\hsindent{4}{}\<[4]%
\>[4]{}\mathbf{do}\;\{\mskip1.5mu {}\<[10]%
\>[10]{}\Varid{s'}\leftarrow \Varid{sp}\mathord{.}\Varid{left}\mathord{.}\Varid{mcreate}\;\Varid{a};(\Varid{b'},\Varid{mc'})\leftarrow \Varid{return}\;(\Varid{sp}\mathord{.}\Varid{right}\mathord{.}\Varid{mget}\;\Varid{s'},\Conid{Just}\;\Varid{s'});{}\<[E]%
\\
\>[10]{}\Varid{sl}\mathord{.}\Varid{mput}_\mathrm{L}\;(\Varid{b'},\Varid{mc'})\mskip1.5mu\}{}\<[E]%
\\
\>[B]{}\mathrel{=}{}\<[BE]%
\>[6]{}\mbox{\commentbegin  monad unit  \commentend}{}\<[E]%
\\
\>[B]{}\hsindent{4}{}\<[4]%
\>[4]{}\mathbf{do}\;\{\mskip1.5mu {}\<[10]%
\>[10]{}\Varid{s'}\leftarrow \Varid{sp}\mathord{.}\Varid{left}\mathord{.}\Varid{mcreate}\;\Varid{a};\Varid{sl}\mathord{.}\Varid{mput}_\mathrm{L}\;(\Varid{sp}\mathord{.}\Varid{right}\mathord{.}\Varid{mget}\;\Varid{s'},\Conid{Just}\;\Varid{s'})\mskip1.5mu\}{}\<[E]%
\\
\>[B]{}\mathrel{=}{}\<[BE]%
\>[6]{}\mbox{\commentbegin  definition  \commentend}{}\<[E]%
\\
\>[B]{}\hsindent{4}{}\<[4]%
\>[4]{}\mathbf{do}\;\{\mskip1.5mu {}\<[10]%
\>[10]{}\Varid{s'}\leftarrow \Varid{sp}\mathord{.}\Varid{left}\mathord{.}\Varid{mcreate}\;\Varid{a};\Varid{s''}\leftarrow \Varid{sp}\mathord{.}\Varid{right}\mathord{.}\Varid{mput}\;\Varid{s'}\;(\Varid{sp}\mathord{.}\Varid{right}\mathord{.}\Varid{mget}\;\Varid{s'});{}\<[E]%
\\
\>[10]{}\Varid{return}\;(\Varid{sp}\mathord{.}\Varid{left}\mathord{.}\Varid{mget}\;\Varid{s''},\Conid{Just}\;\Varid{s''})\mskip1.5mu\}{}\<[E]%
\\
\>[B]{}\mathrel{=}{}\<[BE]%
\>[6]{}\mbox{\commentbegin  \ensuremath{\mathsf{(MGetPut)}}  \commentend}{}\<[E]%
\\
\>[B]{}\hsindent{4}{}\<[4]%
\>[4]{}\mathbf{do}\;\{\mskip1.5mu {}\<[10]%
\>[10]{}\Varid{s'}\leftarrow \Varid{sp}\mathord{.}\Varid{left}\mathord{.}\Varid{mcreate}\;\Varid{a};\Varid{s''}\leftarrow \Varid{return}\;\Varid{s'};{}\<[E]%
\\
\>[10]{}\Varid{return}\;(\Varid{sp}\mathord{.}\Varid{left}\mathord{.}\Varid{mget}\;\Varid{s''},\Conid{Just}\;\Varid{s''})\mskip1.5mu\}{}\<[E]%
\\
\>[B]{}\mathrel{=}{}\<[BE]%
\>[6]{}\mbox{\commentbegin  monad unit  \commentend}{}\<[E]%
\\
\>[B]{}\hsindent{4}{}\<[4]%
\>[4]{}\mathbf{do}\;\{\mskip1.5mu {}\<[10]%
\>[10]{}\Varid{s'}\leftarrow \Varid{sp}\mathord{.}\Varid{left}\mathord{.}\Varid{mcreate}\;\Varid{a};\Varid{return}\;(\Varid{sp}\mathord{.}\Varid{left}\mathord{.}\Varid{mget}\;\Varid{s'},\Conid{Just}\;\Varid{s'})\mskip1.5mu\}{}\<[E]%
\\
\>[B]{}\mathrel{=}{}\<[BE]%
\>[6]{}\mbox{\commentbegin  \ensuremath{\mathsf{(MCreateGet)}}  \commentend}{}\<[E]%
\\
\>[B]{}\hsindent{4}{}\<[4]%
\>[4]{}\mathbf{do}\;\{\mskip1.5mu {}\<[10]%
\>[10]{}\Varid{s'}\leftarrow \Varid{sp}\mathord{.}\Varid{left}\mathord{.}\Varid{mcreate}\;\Varid{a};\Varid{return}\;(\Varid{a},\Conid{Just}\;\Varid{s'})\mskip1.5mu\}{}\<[E]%
\\
\>[B]{}\mathrel{=}{}\<[BE]%
\>[6]{}\mbox{\commentbegin  monad unit  \commentend}{}\<[E]%
\\
\>[B]{}\hsindent{4}{}\<[4]%
\>[4]{}\mathbf{do}\;\{\mskip1.5mu {}\<[10]%
\>[10]{}\Varid{s'}\leftarrow \Varid{sp}\mathord{.}\Varid{left}\mathord{.}\Varid{mcreate}\;\Varid{a};(\Varid{b'},\Varid{mc'})\leftarrow (\Varid{sp}\mathord{.}\Varid{right}\mathord{.}\Varid{get}\;\Varid{s'},\Conid{Just}\;\Varid{s'});{}\<[E]%
\\
\>[10]{}\Varid{return}\;(\Varid{a},\Varid{mc'})\mskip1.5mu\}{}\<[E]%
\\
\>[B]{}\mathrel{=}{}\<[BE]%
\>[6]{}\mbox{\commentbegin  Definition  \commentend}{}\<[E]%
\\
\>[B]{}\hsindent{4}{}\<[4]%
\>[4]{}\mathbf{do}\;\{\mskip1.5mu {}\<[10]%
\>[10]{}(\Varid{b'},\Varid{mc'})\leftarrow \Varid{sl}\mathord{.}\Varid{mput}_\mathrm{R}\;(\Varid{a},\Conid{Nothing});\Varid{return}\;(\Varid{a},\Varid{mc'})\mskip1.5mu\}{}\<[E]%
\ColumnHook
\end{hscode}\resethooks

If \ensuremath{\Varid{mc}\mathrel{=}\Conid{Just}\;\Varid{c}} then we reason as follows:
\begin{hscode}\SaveRestoreHook
\column{B}{@{}>{\hspre}c<{\hspost}@{}}%
\column{BE}{@{}l@{}}%
\column{4}{@{}>{\hspre}l<{\hspost}@{}}%
\column{6}{@{}>{\hspre}l<{\hspost}@{}}%
\column{10}{@{}>{\hspre}l<{\hspost}@{}}%
\column{37}{@{}>{\hspre}l<{\hspost}@{}}%
\column{E}{@{}>{\hspre}l<{\hspost}@{}}%
\>[4]{}\mathbf{do}\;\{\mskip1.5mu {}\<[10]%
\>[10]{}(\Varid{b'},\Varid{mc'})\leftarrow \Varid{sl}\mathord{.}\Varid{mput}_\mathrm{R}\;(\Varid{a},\Conid{Just}\;\Varid{s});\Varid{sl}\mathord{.}\Varid{mput}_\mathrm{L}\;(\Varid{b'},\Varid{mc'})\mskip1.5mu\}{}\<[E]%
\\
\>[B]{}\mathrel{=}{}\<[BE]%
\>[6]{}\mbox{\commentbegin  Definition    \commentend}{}\<[E]%
\\
\>[B]{}\hsindent{4}{}\<[4]%
\>[4]{}\mathbf{do}\;\{\mskip1.5mu {}\<[10]%
\>[10]{}\Varid{s'}\leftarrow \Varid{sp}\mathord{.}\Varid{left}\mathord{.}\Varid{mput}\;\Varid{s}\;\Varid{a};(\Varid{b'},\Varid{mc'})\leftarrow (\Varid{sp}\mathord{.}\Varid{right}\mathord{.}\Varid{mget}\;\Varid{s'},\Conid{Just}\;\Varid{s'});{}\<[E]%
\\
\>[10]{}\Varid{sl}\mathord{.}\Varid{mput}_\mathrm{L}\;(\Varid{b'},\Varid{mc'})\mskip1.5mu\}{}\<[E]%
\\
\>[B]{}\mathrel{=}{}\<[BE]%
\>[6]{}\mbox{\commentbegin  monad unit   \commentend}{}\<[E]%
\\
\>[B]{}\hsindent{4}{}\<[4]%
\>[4]{}\mathbf{do}\;\{\mskip1.5mu {}\<[10]%
\>[10]{}\Varid{s'}\leftarrow \Varid{sp}\mathord{.}\Varid{left}\mathord{.}\Varid{mput}\;\Varid{s}\;\Varid{a};\Varid{sl}\mathord{.}\Varid{mput}_\mathrm{L}\;(\Varid{sp}\mathord{.}\Varid{right}\mathord{.}\Varid{mget}\;\Varid{s'},\Conid{Just}\;\Varid{s'})\mskip1.5mu\}{}\<[E]%
\\
\>[B]{}\mathrel{=}{}\<[BE]%
\>[6]{}\mbox{\commentbegin  definition   \commentend}{}\<[E]%
\\
\>[B]{}\hsindent{4}{}\<[4]%
\>[4]{}\mathbf{do}\;\{\mskip1.5mu {}\<[10]%
\>[10]{}\Varid{s'}\leftarrow \Varid{sp}\mathord{.}\Varid{left}\mathord{.}\Varid{mput}\;\Varid{s}\;\Varid{a};\Varid{s''}\leftarrow \Varid{sp}\mathord{.}\Varid{right}\mathord{.}\Varid{mput}\;\Varid{s'}\;(\Varid{sp}\mathord{.}\Varid{right}\mathord{.}\Varid{mget}\;\Varid{s'});{}\<[E]%
\\
\>[10]{}\Varid{return}\;(\Varid{sp}\mathord{.}\Varid{left}\mathord{.}\Varid{mget}\;\Varid{s''},\Conid{Just}\;\Varid{s''})\mskip1.5mu\}{}\<[E]%
\\
\>[B]{}\mathrel{=}{}\<[BE]%
\>[6]{}\mbox{\commentbegin  \ensuremath{\mathsf{(MGetPut)}}   \commentend}{}\<[E]%
\\
\>[B]{}\hsindent{4}{}\<[4]%
\>[4]{}\mathbf{do}\;\{\mskip1.5mu {}\<[10]%
\>[10]{}\Varid{s'}\leftarrow \Varid{sp}\mathord{.}\Varid{left}\mathord{.}\Varid{mput}\;\Varid{s}\;\Varid{a};\Varid{s''}\leftarrow \Varid{return}\;\Varid{s'};{}\<[E]%
\\
\>[10]{}\Varid{return}\;(\Varid{sp}\mathord{.}\Varid{left}\mathord{.}\Varid{mget}\;\Varid{s''},\Conid{Just}\;\Varid{s''})\mskip1.5mu\}{}\<[E]%
\\
\>[B]{}\mathrel{=}{}\<[BE]%
\>[6]{}\mbox{\commentbegin  monad unit  \commentend}{}\<[E]%
\\
\>[B]{}\hsindent{4}{}\<[4]%
\>[4]{}\mathbf{do}\;\{\mskip1.5mu {}\<[10]%
\>[10]{}\Varid{s'}\leftarrow \Varid{sp}\mathord{.}\Varid{left}\mathord{.}\Varid{mput}\;\Varid{s}\;\Varid{a};{}\<[37]%
\>[37]{}\Varid{return}\;(\Varid{sp}\mathord{.}\Varid{left}\mathord{.}\Varid{mget}\;\Varid{s'},\Conid{Just}\;\Varid{s'})\mskip1.5mu\}{}\<[E]%
\\
\>[B]{}\mathrel{=}{}\<[BE]%
\>[6]{}\mbox{\commentbegin  \ensuremath{\mathsf{(MPutGet)}}  \commentend}{}\<[E]%
\\
\>[B]{}\hsindent{4}{}\<[4]%
\>[4]{}\mathbf{do}\;\{\mskip1.5mu {}\<[10]%
\>[10]{}\Varid{s'}\leftarrow \Varid{sp}\mathord{.}\Varid{left}\mathord{.}\Varid{mput}\;\Varid{s}\;\Varid{a};{}\<[37]%
\>[37]{}\Varid{return}\;(\Varid{a},\Conid{Just}\;\Varid{s'})\mskip1.5mu\}{}\<[E]%
\\
\>[B]{}\mathrel{=}{}\<[BE]%
\>[6]{}\mbox{\commentbegin  monad unit  \commentend}{}\<[E]%
\\
\>[B]{}\hsindent{4}{}\<[4]%
\>[4]{}\mathbf{do}\;\{\mskip1.5mu {}\<[10]%
\>[10]{}\Varid{s'}\leftarrow \Varid{sp}\mathord{.}\Varid{left}\mathord{.}\Varid{mput}\;\Varid{s}\;\Varid{a};(\Varid{b'},\Varid{mc'})\leftarrow (\Varid{sp}\mathord{.}\Varid{right}\mathord{.}\Varid{get}\;\Varid{s'},\Conid{Just}\;\Varid{s'});{}\<[E]%
\\
\>[10]{}\Varid{return}\;(\Varid{a},\Varid{mc'})\mskip1.5mu\}{}\<[E]%
\\
\>[B]{}\mathrel{=}{}\<[BE]%
\>[6]{}\mbox{\commentbegin  Definition   \commentend}{}\<[E]%
\\
\>[B]{}\hsindent{4}{}\<[4]%
\>[4]{}\mathbf{do}\;\{\mskip1.5mu {}\<[10]%
\>[10]{}(\Varid{b'},\Varid{mc'})\leftarrow \Varid{sl}\mathord{.}\Varid{mput}_\mathrm{R}\;(\Varid{a},\Conid{Just}\;\Varid{c});\Varid{return}\;(\Varid{a},\Varid{mc'})\mskip1.5mu\}{}\<[E]%
\ColumnHook
\end{hscode}\resethooks
\endswithdisplay
\end{proof}


\restatableTheorem{thm:smlens2span-wb}
\begin{thm:smlens2span-wb}
    If \ensuremath{\Varid{sl}\mathbin{::}\Conid{SMLens}\;\Conid{Id}\;\Conid{C}\;\Conid{A}\;\Conid{B}} is well-behaved,
  then \ensuremath{\Varid{smlens2span}\;\Varid{sl}} is also well-behaved, with state space \ensuremath{\Conid{S}}
  consisting of the consistent triples of \ensuremath{\Varid{sl}}.
\end{thm:smlens2span-wb}
\begin{proof}
First we show that, given a symmetric lens \ensuremath{\Varid{sl}}, the operations of
\ensuremath{\Varid{sp}\mathrel{=}\Varid{smlens2span}\;\Varid{sl}} preserve consistency of the state.  Assume \ensuremath{(\Varid{a},\Varid{b},\Varid{c})}
is consistent. To show that \ensuremath{\Varid{sp}\mathord{.}\Varid{left}\mathord{.}\Varid{mput}\;(\Varid{a},\Varid{b},\Varid{c})\;\Varid{a'}} 
is consistent for any
\ensuremath{\Varid{a'}}, we have to show that \ensuremath{(\Varid{a'},\Varid{b'},\Varid{c'})} is consistent, where \ensuremath{\Varid{a'}} is
arbitrary and \ensuremath{\Varid{return}\;(\Varid{b'},\Varid{c'})\mathrel{=}\Varid{sl}\mathord{.}\Varid{mput}_\mathrm{R}\;(\Varid{a'},\Varid{c})}. For one half of consistency, 
we have:
\begin{hscode}\SaveRestoreHook
\column{B}{@{}>{\hspre}c<{\hspost}@{}}%
\column{BE}{@{}l@{}}%
\column{4}{@{}>{\hspre}l<{\hspost}@{}}%
\column{5}{@{}>{\hspre}l<{\hspost}@{}}%
\column{E}{@{}>{\hspre}l<{\hspost}@{}}%
\>[4]{}\Varid{sl}\mathord{.}\Varid{mput}_\mathrm{L}\;(\Varid{b'},\Varid{c'}){}\<[E]%
\\
\>[B]{}\mathrel{=}{}\<[BE]%
\>[5]{}\mbox{\commentbegin  \ensuremath{\Varid{sl}\mathord{.}\Varid{mput}_\mathrm{R}\;(\Varid{a'},\Varid{c})\mathrel{=}\Varid{return}\;(\Varid{b'},\Varid{c'})}, and \ensuremath{\mathsf{(PutRLM)}}  \commentend}{}\<[E]%
\\
\>[B]{}\hsindent{4}{}\<[4]%
\>[4]{}\Varid{return}\;(\Varid{a'},\Varid{c'}){}\<[E]%
\ColumnHook
\end{hscode}\resethooks
The proof that \ensuremath{\Varid{sl}\mathord{.}\Varid{mput}_\mathrm{R}\;(\Varid{a'},\Varid{c'})\mathrel{=}\Varid{return}\;(\Varid{b'},\Varid{c'})} is symmetric.
\begin{hscode}\SaveRestoreHook
\column{B}{@{}>{\hspre}c<{\hspost}@{}}%
\column{BE}{@{}l@{}}%
\column{3}{@{}>{\hspre}l<{\hspost}@{}}%
\column{5}{@{}>{\hspre}l<{\hspost}@{}}%
\column{E}{@{}>{\hspre}l<{\hspost}@{}}%
\>[3]{}\Varid{sl}\mathord{.}\Varid{mput}_\mathrm{R}\;(\Varid{a'},\Varid{c'}){}\<[E]%
\\
\>[B]{}\mathrel{=}{}\<[BE]%
\>[5]{}\mbox{\commentbegin  above, and \ensuremath{\mathsf{(PutLRM)}}  \commentend}{}\<[E]%
\\
\>[B]{}\hsindent{3}{}\<[3]%
\>[3]{}\Varid{return}\;(\Varid{b'},\Varid{c'}){}\<[E]%
\ColumnHook
\end{hscode}\resethooks
as required.
The proof that \ensuremath{\Varid{sp}\mathord{.}\Varid{right}\mathord{.}\Varid{mput}\;(\Varid{a},\Varid{b},\Varid{c})\;\Varid{b'}} is consistent is dual.

We will now show that \ensuremath{\Varid{sp}\mathrel{=}\Varid{smlens2span}\;\Varid{sl}} is a well-behaved span for any
symmetric lens \ensuremath{\Varid{sl}}.
For \ensuremath{\mathsf{(MGetPut)}}, we proceed as follows:
\begin{hscode}\SaveRestoreHook
\column{B}{@{}>{\hspre}c<{\hspost}@{}}%
\column{BE}{@{}l@{}}%
\column{4}{@{}>{\hspre}l<{\hspost}@{}}%
\column{6}{@{}>{\hspre}l<{\hspost}@{}}%
\column{E}{@{}>{\hspre}l<{\hspost}@{}}%
\>[4]{}\Varid{sp}\mathord{.}\Varid{left}\mathord{.}\Varid{mput}\;(\Varid{a},\Varid{b},\Varid{c})\;(\Varid{sp}\mathord{.}\Varid{left}\mathord{.}\Varid{mget}\;(\Varid{a},\Varid{b},\Varid{c})){}\<[E]%
\\
\>[B]{}\mathrel{=}{}\<[BE]%
\>[6]{}\mbox{\commentbegin  Definition  \commentend}{}\<[E]%
\\
\>[B]{}\hsindent{4}{}\<[4]%
\>[4]{}\mathbf{do}\;\{\mskip1.5mu (\Varid{b'},\Varid{c'})\leftarrow \Varid{sl}\mathord{.}\Varid{mput}_\mathrm{R}\;(\Varid{a},\Varid{c});\Varid{return}\;(\Varid{a},\Varid{b'},\Varid{c'})\mskip1.5mu\}{}\<[E]%
\\
\>[B]{}\mathrel{=}{}\<[BE]%
\>[6]{}\mbox{\commentbegin  Consistency of \ensuremath{(\Varid{a},\Varid{b},\Varid{c})}  \commentend}{}\<[E]%
\\
\>[B]{}\hsindent{4}{}\<[4]%
\>[4]{}\mathbf{do}\;\{\mskip1.5mu (\Varid{b'},\Varid{c'})\leftarrow \Varid{return}\;(\Varid{b},\Varid{c});\Varid{return}\;(\Varid{a},\Varid{b'},\Varid{c'})\mskip1.5mu\}{}\<[E]%
\\
\>[B]{}\mathrel{=}{}\<[BE]%
\>[6]{}\mbox{\commentbegin  monad unit  \commentend}{}\<[E]%
\\
\>[B]{}\hsindent{4}{}\<[4]%
\>[4]{}\Varid{return}\;(\Varid{a},\Varid{b},\Varid{c}){}\<[E]%
\ColumnHook
\end{hscode}\resethooks

For \ensuremath{\mathsf{(MPutGet)}}, we have:
\begin{hscode}\SaveRestoreHook
\column{B}{@{}>{\hspre}c<{\hspost}@{}}%
\column{BE}{@{}l@{}}%
\column{4}{@{}>{\hspre}l<{\hspost}@{}}%
\column{6}{@{}>{\hspre}l<{\hspost}@{}}%
\column{10}{@{}>{\hspre}l<{\hspost}@{}}%
\column{40}{@{}>{\hspre}l<{\hspost}@{}}%
\column{E}{@{}>{\hspre}l<{\hspost}@{}}%
\>[4]{}\mathbf{do}\;\{\mskip1.5mu \Varid{s'}\leftarrow \Varid{sp}\mathord{.}\Varid{left}\mathord{.}\Varid{put}\;(\Varid{a},\Varid{b},\Varid{c})\;\Varid{a'};\Varid{return}\;(\Varid{s'},\Varid{sp}\mathord{.}\Varid{left}\mathord{.}\Varid{mget}\;\Varid{s'})\mskip1.5mu\}{}\<[E]%
\\
\>[B]{}\mathrel{=}{}\<[BE]%
\>[6]{}\mbox{\commentbegin  Definition  \commentend}{}\<[E]%
\\
\>[B]{}\hsindent{4}{}\<[4]%
\>[4]{}\mathbf{do}\;\{\mskip1.5mu {}\<[10]%
\>[10]{}(\Varid{b'},\Varid{c'})\leftarrow \Varid{sl}\mathord{.}\Varid{mput}_\mathrm{R}\;(\Varid{a'},\Varid{c});\Varid{s'}\leftarrow \Varid{return}\;(\Varid{a'},\Varid{b'},\Varid{c'});{}\<[E]%
\\
\>[10]{}\Varid{return}\;(\Varid{s'},\Varid{sp}\mathord{.}\Varid{left}\mathord{.}\Varid{mget}\;\Varid{s'})\mskip1.5mu\}{}\<[E]%
\\
\>[B]{}\mathrel{=}{}\<[BE]%
\>[6]{}\mbox{\commentbegin  monad unit  \commentend}{}\<[E]%
\\
\>[B]{}\hsindent{4}{}\<[4]%
\>[4]{}\mathbf{do}\;\{\mskip1.5mu {}\<[10]%
\>[10]{}(\Varid{b'},\Varid{c'})\leftarrow \Varid{sl}\mathord{.}\Varid{mput}_\mathrm{R}\;(\Varid{a'},\Varid{c});{}\<[E]%
\\
\>[10]{}\Varid{return}\;((\Varid{a'},\Varid{b'},\Varid{c'}),\Varid{sp}\mathord{.}\Varid{left}\mathord{.}\Varid{mget}\;(\Varid{a'},\Varid{b'},\Varid{c'}))\mskip1.5mu\}{}\<[E]%
\\
\>[B]{}\mathrel{=}{}\<[BE]%
\>[6]{}\mbox{\commentbegin  Definition  \commentend}{}\<[E]%
\\
\>[B]{}\hsindent{4}{}\<[4]%
\>[4]{}\mathbf{do}\;\{\mskip1.5mu (\Varid{b'},\Varid{c'})\leftarrow \Varid{sl}\mathord{.}\Varid{mput}_\mathrm{R}\;(\Varid{a'},\Varid{c});{}\<[40]%
\>[40]{}\Varid{return}\;((\Varid{a'},\Varid{b'},\Varid{c'}),\Varid{a'})\mskip1.5mu\}{}\<[E]%
\\
\>[B]{}\mathrel{=}{}\<[BE]%
\>[6]{}\mbox{\commentbegin  monad unit  \commentend}{}\<[E]%
\\
\>[B]{}\hsindent{4}{}\<[4]%
\>[4]{}\mathbf{do}\;\{\mskip1.5mu (\Varid{b'},\Varid{c'})\leftarrow \Varid{sl}\mathord{.}\Varid{mput}_\mathrm{R}\;(\Varid{a'},\Varid{c});\Varid{s'}\leftarrow \Varid{return}\;(\Varid{a'},\Varid{b'},\Varid{c'});\Varid{return}\;(\Varid{s'},\Varid{a'})\mskip1.5mu\}{}\<[E]%
\\
\>[B]{}\mathrel{=}{}\<[BE]%
\>[6]{}\mbox{\commentbegin  Definition  \commentend}{}\<[E]%
\\
\>[B]{}\hsindent{4}{}\<[4]%
\>[4]{}\mathbf{do}\;\{\mskip1.5mu {}\<[10]%
\>[10]{}\Varid{s'}\leftarrow \Varid{sp}\mathord{.}\Varid{left}\mathord{.}\Varid{put}\;(\Varid{a},\Varid{b},\Varid{c})\;\Varid{a'};\Varid{return}\;(\Varid{s'},\Varid{a'})\mskip1.5mu\}{}\<[E]%
\ColumnHook
\end{hscode}\resethooks

The proof for \ensuremath{\mathsf{(MCreateGet)}} is similar. 
\begin{hscode}\SaveRestoreHook
\column{B}{@{}>{\hspre}c<{\hspost}@{}}%
\column{BE}{@{}l@{}}%
\column{4}{@{}>{\hspre}l<{\hspost}@{}}%
\column{6}{@{}>{\hspre}l<{\hspost}@{}}%
\column{10}{@{}>{\hspre}l<{\hspost}@{}}%
\column{39}{@{}>{\hspre}l<{\hspost}@{}}%
\column{E}{@{}>{\hspre}l<{\hspost}@{}}%
\>[4]{}\mathbf{do}\;\{\mskip1.5mu \Varid{s}\leftarrow \Varid{sp}\mathord{.}\Varid{left}\mathord{.}\Varid{create}\;\Varid{a};\Varid{return}\;(\Varid{s},\Varid{sp}\mathord{.}\Varid{left}\mathord{.}\Varid{mget}\;\Varid{s})\mskip1.5mu\}{}\<[E]%
\\
\>[B]{}\mathrel{=}{}\<[BE]%
\>[6]{}\mbox{\commentbegin  Definition  \commentend}{}\<[E]%
\\
\>[B]{}\hsindent{4}{}\<[4]%
\>[4]{}\mathbf{do}\;\{\mskip1.5mu {}\<[10]%
\>[10]{}(\Varid{b'},\Varid{c'})\leftarrow \Varid{sl}\mathord{.}\Varid{mput}_\mathrm{R}\;(\Varid{a},\Varid{c});\Varid{s}\leftarrow \Varid{return}\;(\Varid{a},\Varid{b'},\Varid{c'});{}\<[E]%
\\
\>[10]{}\Varid{return}\;(\Varid{s},\Varid{sp}\mathord{.}\Varid{left}\mathord{.}\Varid{mget}\;\Varid{s})\mskip1.5mu\}{}\<[E]%
\\
\>[B]{}\mathrel{=}{}\<[BE]%
\>[6]{}\mbox{\commentbegin  monad unit  \commentend}{}\<[E]%
\\
\>[B]{}\hsindent{4}{}\<[4]%
\>[4]{}\mathbf{do}\;\{\mskip1.5mu {}\<[10]%
\>[10]{}(\Varid{b'},\Varid{c'})\leftarrow \Varid{sl}\mathord{.}\Varid{mput}_\mathrm{R}\;(\Varid{a},\Varid{c});{}\<[E]%
\\
\>[10]{}\Varid{return}\;((\Varid{a},\Varid{b'},\Varid{c'}),\Varid{sp}\mathord{.}\Varid{left}\mathord{.}\Varid{mget}\;(\Varid{a},\Varid{b'},\Varid{c'}))\mskip1.5mu\}{}\<[E]%
\\
\>[B]{}\mathrel{=}{}\<[BE]%
\>[6]{}\mbox{\commentbegin  Definition  \commentend}{}\<[E]%
\\
\>[B]{}\hsindent{4}{}\<[4]%
\>[4]{}\mathbf{do}\;\{\mskip1.5mu (\Varid{b'},\Varid{c'})\leftarrow \Varid{sl}\mathord{.}\Varid{mput}_\mathrm{R}\;(\Varid{a},\Varid{c});{}\<[39]%
\>[39]{}\Varid{return}\;((\Varid{a},\Varid{b'},\Varid{c'}),\Varid{a})\mskip1.5mu\}{}\<[E]%
\\
\>[B]{}\mathrel{=}{}\<[BE]%
\>[6]{}\mbox{\commentbegin  monad unit  \commentend}{}\<[E]%
\\
\>[B]{}\hsindent{4}{}\<[4]%
\>[4]{}\mathbf{do}\;\{\mskip1.5mu (\Varid{b'},\Varid{c'})\leftarrow \Varid{sl}\mathord{.}\Varid{mput}_\mathrm{R}\;(\Varid{a},\Varid{c});\Varid{s}\leftarrow \Varid{return}\;(\Varid{a},\Varid{b'},\Varid{c'});\Varid{return}\;(\Varid{s},\Varid{a})\mskip1.5mu\}{}\<[E]%
\\
\>[B]{}\mathrel{=}{}\<[BE]%
\>[6]{}\mbox{\commentbegin  Definition  \commentend}{}\<[E]%
\\
\>[B]{}\hsindent{4}{}\<[4]%
\>[4]{}\mathbf{do}\;\{\mskip1.5mu {}\<[10]%
\>[10]{}\Varid{s}\leftarrow \Varid{sp}\mathord{.}\Varid{left}\mathord{.}\Varid{create}\;\Varid{a};\Varid{return}\;(\Varid{a},\Varid{s})\mskip1.5mu\}{}\<[E]%
\ColumnHook
\end{hscode}\resethooks

\endswithdisplay
\end{proof}

\section{Proofs for Section~\ref{sec:equiv}}

\restatableLemma{lem:cospan2span-pullback}
\begin{lem:cospan2span-pullback}
  Suppose \ensuremath{\Varid{l}_{1}\mathbin{::}\Conid{A}\mathbin{\leadsto}\Conid{B}} and \ensuremath{\Varid{l}_{2}\mathbin{::}\Conid{C}\mathbin{\leadsto}\Conid{B}} are pure lenses.  Then \ensuremath{(\Varid{l}_{1}\mathbin{\Join}\Varid{l}_{2})\mathord{.}\Varid{left}\mathbin{;}\Varid{l}_{1}\mathrel{=}(\Varid{l}_{1}\mathbin{\Join}\Varid{l}_{2})\mathord{.}\Varid{right}\mathbin{;}\Varid{l}_{2}}.
\end{lem:cospan2span-pullback}
\begin{proof}
  Let \ensuremath{(\Varid{l},\Varid{r})\mathrel{=}\Varid{l}_{1}\mathbin{\Join}\Varid{l}_{2}}.  
  We show that each component of \ensuremath{\Varid{l}\mathbin{;}\Varid{l}_{1}} equals
  the corresponding component of \ensuremath{\Varid{r}\mathbin{;}\Varid{l}_{2}}.

For \ensuremath{\Varid{get}}:
\begin{hscode}\SaveRestoreHook
\column{B}{@{}>{\hspre}c<{\hspost}@{}}%
\column{BE}{@{}l@{}}%
\column{4}{@{}>{\hspre}l<{\hspost}@{}}%
\column{7}{@{}>{\hspre}l<{\hspost}@{}}%
\column{E}{@{}>{\hspre}l<{\hspost}@{}}%
\>[4]{}(\Varid{l}\mathbin{;}\Varid{l}_{1})\mathord{.}\Varid{get}\;(\Varid{a},\Varid{c}){}\<[E]%
\\
\>[B]{}\mathrel{=}{}\<[BE]%
\>[7]{}\mbox{\commentbegin  Definition  \commentend}{}\<[E]%
\\
\>[B]{}\hsindent{4}{}\<[4]%
\>[4]{}\Varid{l}_{1}\mathord{.}\Varid{get}\;(\Varid{l}\mathord{.}\Varid{get}\;(\Varid{a},\Varid{c})){}\<[E]%
\\
\>[B]{}\mathrel{=}{}\<[BE]%
\>[7]{}\mbox{\commentbegin  Definition  \commentend}{}\<[E]%
\\
\>[B]{}\hsindent{4}{}\<[4]%
\>[4]{}\Varid{l}_{1}\mathord{.}\Varid{get}\;\Varid{a}{}\<[E]%
\\
\>[B]{}\mathrel{=}{}\<[BE]%
\>[7]{}\mbox{\commentbegin  Consistency  \commentend}{}\<[E]%
\\
\>[B]{}\hsindent{4}{}\<[4]%
\>[4]{}\Varid{l}_{2}\mathord{.}\Varid{get}\;\Varid{c}{}\<[E]%
\\
\>[B]{}\mathrel{=}{}\<[BE]%
\>[7]{}\mbox{\commentbegin  Definition  \commentend}{}\<[E]%
\\
\>[B]{}\hsindent{4}{}\<[4]%
\>[4]{}\Varid{l}_{2}\mathord{.}\Varid{get}\;(\Varid{r}\mathord{.}\Varid{get}\;(\Varid{a},\Varid{c})){}\<[E]%
\\
\>[B]{}\mathrel{=}{}\<[BE]%
\>[7]{}\mbox{\commentbegin  Definition  \commentend}{}\<[E]%
\\
\>[B]{}\hsindent{4}{}\<[4]%
\>[4]{}(\Varid{r}\mathbin{;}\Varid{l}_{2})\mathord{.}\Varid{get}\;(\Varid{a},\Varid{c}){}\<[E]%
\ColumnHook
\end{hscode}\resethooks

For \ensuremath{\Varid{put}}:
\begin{hscode}\SaveRestoreHook
\column{B}{@{}>{\hspre}c<{\hspost}@{}}%
\column{BE}{@{}l@{}}%
\column{4}{@{}>{\hspre}l<{\hspost}@{}}%
\column{7}{@{}>{\hspre}l<{\hspost}@{}}%
\column{E}{@{}>{\hspre}l<{\hspost}@{}}%
\>[4]{}(\Varid{l}\mathbin{;}\Varid{l}_{1})\mathord{.}\Varid{put}\;(\Varid{a},\Varid{c})\;\Varid{b}{}\<[E]%
\\
\>[B]{}\mathrel{=}{}\<[BE]%
\>[7]{}\mbox{\commentbegin  Definition  \commentend}{}\<[E]%
\\
\>[B]{}\hsindent{4}{}\<[4]%
\>[4]{}\Varid{l}\mathord{.}\Varid{put}\;(\Varid{a},\Varid{c})\;(\Varid{l}_{1}\mathord{.}\Varid{put}\;(\Varid{l}\mathord{.}\Varid{get}\;(\Varid{a},\Varid{c}))\;\Varid{b}){}\<[E]%
\\
\>[B]{}\mathrel{=}{}\<[BE]%
\>[7]{}\mbox{\commentbegin  Definition  \commentend}{}\<[E]%
\\
\>[B]{}\hsindent{4}{}\<[4]%
\>[4]{}\Varid{l}\mathord{.}\Varid{put}\;(\Varid{a},\Varid{c})\;(\Varid{l}_{1}\mathord{.}\Varid{put}\;\Varid{a}\;\Varid{b}){}\<[E]%
\\
\>[B]{}\mathrel{=}{}\<[BE]%
\>[7]{}\mbox{\commentbegin  Definition  \commentend}{}\<[E]%
\\
\>[B]{}\hsindent{4}{}\<[4]%
\>[4]{}\mathbf{let}\;\Varid{a'}\mathrel{=}\Varid{l}_{1}\mathord{.}\Varid{put}\;\Varid{a}\;\Varid{b}\;\mathbf{in}{}\<[E]%
\\
\>[B]{}\hsindent{4}{}\<[4]%
\>[4]{}\mathbf{let}\;\Varid{c'}\mathrel{=}\Varid{l}_{2}\mathord{.}\Varid{put}\;\Varid{c}\;(\Varid{l}_{1}\mathord{.}\Varid{get}\;\Varid{a'})\;\mathbf{in}\;(\Varid{a'},\Varid{c'}){}\<[E]%
\\
\>[B]{}\mathrel{=}{}\<[BE]%
\>[7]{}\mbox{\commentbegin  inline \ensuremath{\mathbf{let}}  \commentend}{}\<[E]%
\\
\>[B]{}\hsindent{4}{}\<[4]%
\>[4]{}(\Varid{l}_{1}\mathord{.}\Varid{put}\;\Varid{a}\;\Varid{b},\Varid{l}_{2}\mathord{.}\Varid{put}\;\Varid{c}\;(\Varid{l}_{1}\mathord{.}\Varid{get}\;(\Varid{l}_{1}\mathord{.}\Varid{put}\;\Varid{a}\;\Varid{b}))){}\<[E]%
\\
\>[B]{}\mathrel{=}{}\<[BE]%
\>[7]{}\mbox{\commentbegin  \ensuremath{\mathsf{(PutGet)}}  \commentend}{}\<[E]%
\\
\>[B]{}\hsindent{4}{}\<[4]%
\>[4]{}(\Varid{l}_{1}\mathord{.}\Varid{put}\;\Varid{a}\;\Varid{b},\Varid{l}_{2}\mathord{.}\Varid{put}\;\Varid{c}\;\Varid{b}){}\<[E]%
\\
\>[B]{}\mathrel{=}{}\<[BE]%
\>[7]{}\mbox{\commentbegin  reverse above steps  \commentend}{}\<[E]%
\\
\>[B]{}\hsindent{4}{}\<[4]%
\>[4]{}(\Varid{r}\mathbin{;}\Varid{l}_{2})\mathord{.}\Varid{put}\;(\Varid{a},\Varid{c})\;\Varid{b}{}\<[E]%
\ColumnHook
\end{hscode}\resethooks

Finally, for \ensuremath{\Varid{create}}:
\begin{hscode}\SaveRestoreHook
\column{B}{@{}>{\hspre}c<{\hspost}@{}}%
\column{BE}{@{}l@{}}%
\column{4}{@{}>{\hspre}l<{\hspost}@{}}%
\column{7}{@{}>{\hspre}l<{\hspost}@{}}%
\column{E}{@{}>{\hspre}l<{\hspost}@{}}%
\>[4]{}(\Varid{l}\mathbin{;}\Varid{l}_{1})\mathord{.}\Varid{create}\;\Varid{b}{}\<[E]%
\\
\>[B]{}\mathrel{=}{}\<[BE]%
\>[7]{}\mbox{\commentbegin  Definition  \commentend}{}\<[E]%
\\
\>[B]{}\hsindent{4}{}\<[4]%
\>[4]{}\Varid{l}\mathord{.}\Varid{create}\;(\Varid{l}_{1}\mathord{.}\Varid{create}\;\Varid{b}){}\<[E]%
\\
\>[B]{}\mathrel{=}{}\<[BE]%
\>[7]{}\mbox{\commentbegin  Definition  \commentend}{}\<[E]%
\\
\>[B]{}\hsindent{4}{}\<[4]%
\>[4]{}\mathbf{let}\;\Varid{c}\mathrel{=}\Varid{l}_{2}\mathord{.}\Varid{create}\;(\Varid{l}_{1}\mathord{.}\Varid{get}\;(\Varid{l}_{1}\mathord{.}\Varid{create}\;\Varid{b}))\;\mathbf{in}\;(\Varid{l}_{1}\mathord{.}\Varid{create}\;\Varid{b},\Varid{c}){}\<[E]%
\\
\>[B]{}\mathrel{=}{}\<[BE]%
\>[7]{}\mbox{\commentbegin  \ensuremath{\mathsf{(CreateGet)}}  \commentend}{}\<[E]%
\\
\>[B]{}\hsindent{4}{}\<[4]%
\>[4]{}\mathbf{let}\;\Varid{c}\mathrel{=}\Varid{l}_{2}\mathord{.}\Varid{create}\;\Varid{b}\;\mathbf{in}\;(\Varid{l}_{1}\mathord{.}\Varid{create}\;\Varid{b},\Varid{c}){}\<[E]%
\\
\>[B]{}\mathrel{=}{}\<[BE]%
\>[7]{}\mbox{\commentbegin  Inline \ensuremath{\mathbf{let}}  \commentend}{}\<[E]%
\\
\>[B]{}\hsindent{4}{}\<[4]%
\>[4]{}(\Varid{l}_{1}\mathord{.}\Varid{create}\;\Varid{b},\Varid{l}_{2}\mathord{.}\Varid{create}\;\Varid{b}){}\<[E]%
\\
\>[B]{}\mathrel{=}{}\<[BE]%
\>[7]{}\mbox{\commentbegin  reverse above steps  \commentend}{}\<[E]%
\\
\>[B]{}\hsindent{4}{}\<[4]%
\>[4]{}(\Varid{r}\mathbin{;}\Varid{l}_{2})\mathord{.}\Varid{create}\;\Varid{b}{}\<[E]%
\ColumnHook
\end{hscode}\resethooks
\end{proof}

\restatableTheorem{thm:jr-implies-bisim}
\begin{thm:jr-implies-bisim}
Given \ensuremath{\Varid{sp}_{\mathrm{1}}\mathbin{::}\monadic{\Conid{A}\mathbin{\reflectbox{$\rightsquigarrow$}}\Conid{S}_{1}\rightsquigarrow\Conid{B}}{\Conid{M}},\Varid{sp}_{\mathrm{2}}\mathbin{::}\monadic{\Conid{A}\mathbin{\reflectbox{$\rightsquigarrow$}}\Conid{S}_{2}\rightsquigarrow\Conid{B}}{\Conid{M}}}, if \ensuremath{\Varid{sp}_{\mathrm{1}}\equiv_{\mathrm{s}} \Varid{sp}_{\mathrm{2}}} then \ensuremath{\Varid{sp}_{\mathrm{1}}\equiv_{\mathrm{b}} \Varid{sp}_{\mathrm{2}}}.
\end{thm:jr-implies-bisim}
\begin{proof}
  We give the details for the case \ensuremath{\Varid{sp}_{\mathrm{1}}\curvearrowright \Varid{sp}_{\mathrm{2}}}.  First,
  write \ensuremath{(\Varid{l}_{1},\Varid{r}_{1})\mathrel{=}\Varid{sp}_{\mathrm{1}}} and \ensuremath{(\Varid{l}_{2},\Varid{r}_{2})\mathrel{=}\Varid{sp}_{\mathrm{2}}}, and suppose \ensuremath{\Varid{l}\mathbin{::}\Conid{S}_{1}\mathbin{\leadsto}\Conid{S}_{2}} is a lens satisfying \ensuremath{\Varid{l}_{1}\mathrel{=}\Varid{l}\mathbin{;}\Varid{l}_{2}} and \ensuremath{\Varid{r}_{1}\mathrel{=}\Varid{l}\mathbin{;}\Varid{r}_{2}}.  

We need to define a bisimulation consisting of a set \ensuremath{\Conid{R}\subseteq\Conid{S}_{1} \times \Conid{S}_{2}} and a span
\ensuremath{\Varid{sp}\mathrel{=}(\Varid{l}_{0},\Varid{r}_{0})\mathbin{::}\monadic{\Conid{A}\mathbin{\reflectbox{$\rightsquigarrow$}}\Conid{R}\rightsquigarrow\Conid{B}}{\Conid{M}}} such that \ensuremath{\Varid{fst}} is a base map
from \ensuremath{\Varid{sp}} to \ensuremath{\Varid{sp}_{\mathrm{1}}}
and \ensuremath{\Varid{snd}} is a base map from \ensuremath{\Varid{sp}} to \ensuremath{\Varid{sp}_{\mathrm{2}}}. We take \ensuremath{\Conid{R}\mathrel{=}\{\mskip1.5mu (\Varid{s}_{1},\Varid{s}_{2})\mid \Varid{s}_{2}\mathrel{=}\Varid{l}\mathord{.}\Varid{get}\;(\Varid{s}_{1})\mskip1.5mu\}} and  proceed as follows:
\begin{hscode}\SaveRestoreHook
\column{B}{@{}>{\hspre}l<{\hspost}@{}}%
\column{5}{@{}>{\hspre}l<{\hspost}@{}}%
\column{27}{@{}>{\hspre}l<{\hspost}@{}}%
\column{E}{@{}>{\hspre}l<{\hspost}@{}}%
\>[5]{}\Varid{l}_{0}{}\<[27]%
\>[27]{}\mathbin{::}\monadic{\Conid{R}\rightsquigarrow\Conid{A}}{\Conid{M}}{}\<[E]%
\\
\>[5]{}\Varid{l}_{0}\mathord{.}\Varid{mget}\;(\Varid{s}_{1},\Varid{s}_{2}){}\<[27]%
\>[27]{}\mathrel{=}\Varid{l}_{1}\mathord{.}\Varid{mget}\;\Varid{s}_{1}{}\<[E]%
\\
\>[5]{}\Varid{l}_{0}\mathord{.}\Varid{mput}\;(\Varid{s}_{1},\Varid{s}_{2})\;\Varid{a}{}\<[27]%
\>[27]{}\mathrel{=}\mathbf{do}\;\{\mskip1.5mu \Varid{s}_{1}'\leftarrow \Varid{l}_{1}\mathord{.}\Varid{mput}\;\Varid{s}_{1}\;\Varid{a};\Varid{return}\;(\Varid{s}_{1}',\Varid{l}\mathord{.}\Varid{get}\;\Varid{s}_{1}')\mskip1.5mu\}{}\<[E]%
\\
\>[5]{}\Varid{l}_{0}\mathord{.}\Varid{mcreate}\;\Varid{a}{}\<[27]%
\>[27]{}\mathrel{=}\mathbf{do}\;\{\mskip1.5mu \Varid{s}_{1}\leftarrow \Varid{l}_{1}\mathord{.}\Varid{mcreate}\;\Varid{a};\Varid{return}\;(\Varid{s}_{1},\Varid{l}\mathord{.}\Varid{get}\;\Varid{s}_{1})\mskip1.5mu\}{}\<[E]%
\\
\>[5]{}\Varid{r}_{0}{}\<[27]%
\>[27]{}\mathbin{::}\monadic{\Conid{R}\rightsquigarrow\Conid{B}}{\Conid{M}}{}\<[E]%
\\
\>[5]{}\Varid{r}_{0}\mathord{.}\Varid{mget}\;(\Varid{s}_{1},\Varid{s}_{2}){}\<[27]%
\>[27]{}\mathrel{=}\Varid{r}_{1}\mathord{.}\Varid{mget}\;\Varid{s}_{1}{}\<[E]%
\\
\>[5]{}\Varid{r}_{0}\mathord{.}\Varid{mput}\;(\Varid{s}_{1},\Varid{s}_{2})\;\Varid{b}{}\<[27]%
\>[27]{}\mathrel{=}\mathbf{do}\;\{\mskip1.5mu \Varid{s}_{1}'\leftarrow \Varid{r}_{1}\mathord{.}\Varid{mput}\;\Varid{s}_{1}\;\Varid{b};\Varid{return}\;(\Varid{s}_{1}',\Varid{l}\mathord{.}\Varid{get}\;\Varid{s}_{1}')\mskip1.5mu\}{}\<[E]%
\\
\>[5]{}\Varid{r}_{0}\mathord{.}\Varid{mcreate}\;\Varid{b}{}\<[27]%
\>[27]{}\mathrel{=}\mathbf{do}\;\{\mskip1.5mu \Varid{s}_{1}\leftarrow \Varid{r}_{1}\mathord{.}\Varid{mcreate}\;\Varid{a};\Varid{return}\;(\Varid{s}_{1},\Varid{l}\mathord{.}\Varid{get}\;\Varid{s}_{1})\mskip1.5mu\}{}\<[E]%
\ColumnHook
\end{hscode}\resethooks
We must now show that \ensuremath{\Varid{l}_{0}} and \ensuremath{\Varid{r}_{0}} are well-behaved (full) lenses, and that the projections \ensuremath{\Varid{fst}} and \ensuremath{\Varid{snd}} map \ensuremath{\Varid{sp}\mathrel{=}(\Varid{l}_{0},\Varid{r}_{0})} to \ensuremath{\Varid{sp}_{\mathrm{1}}} and \ensuremath{\Varid{sp}_{\mathrm{2}}} respectively.

We first show that \ensuremath{\Varid{l}_{0}} is well-behaved; the reasoning for \ensuremath{\Varid{r}_{0}} is symmetric.
For \ensuremath{\mathsf{(MGetPut)}} we have:
\begin{hscode}\SaveRestoreHook
\column{B}{@{}>{\hspre}c<{\hspost}@{}}%
\column{BE}{@{}l@{}}%
\column{4}{@{}>{\hspre}l<{\hspost}@{}}%
\column{7}{@{}>{\hspre}l<{\hspost}@{}}%
\column{E}{@{}>{\hspre}l<{\hspost}@{}}%
\>[4]{}\Varid{l}_{0}\mathord{.}\Varid{mput}\;(\Varid{s}_{1},\Varid{s}_{2})\;(\Varid{l}_{0}\mathord{.}\Varid{mget}\;(\Varid{s}_{1},\Varid{s}_{2})){}\<[E]%
\\
\>[B]{}\mathrel{=}{}\<[BE]%
\>[7]{}\mbox{\commentbegin  Definition  \commentend}{}\<[E]%
\\
\>[B]{}\hsindent{4}{}\<[4]%
\>[4]{}\mathbf{do}\;\{\mskip1.5mu \Varid{s}_{1}'\leftarrow \Varid{l}_{1}\mathord{.}\Varid{mput}\;\Varid{s}_{1}\;(l_1\mathord{.}\Varid{mget}\;\Varid{s}_{1});\Varid{return}\;(\Varid{s}_{1}',\Varid{l}\mathord{.}\Varid{get}\;\Varid{s}_{1}')\mskip1.5mu\}{}\<[E]%
\\
\>[B]{}\mathrel{=}{}\<[BE]%
\>[7]{}\mbox{\commentbegin  \ensuremath{\mathsf{(MPutGet)}}  \commentend}{}\<[E]%
\\
\>[B]{}\hsindent{4}{}\<[4]%
\>[4]{}\mathbf{do}\;\{\mskip1.5mu \Varid{s}_{1}'\leftarrow \Varid{return}\;\Varid{s}_{1};\Varid{return}\;(\Varid{s}_{1}',\Varid{l}\mathord{.}\Varid{get}\;\Varid{s}_{1}')\mskip1.5mu\}{}\<[E]%
\\
\>[B]{}\mathrel{=}{}\<[BE]%
\>[7]{}\mbox{\commentbegin  Monad unit  \commentend}{}\<[E]%
\\
\>[B]{}\hsindent{4}{}\<[4]%
\>[4]{}\Varid{return}\;(\Varid{s}_{1},\Varid{l}\mathord{.}\Varid{get}\;\Varid{s}_{1}){}\<[E]%
\\
\>[B]{}\mathrel{=}{}\<[BE]%
\>[7]{}\mbox{\commentbegin  \ensuremath{\Varid{s}_{2}\mathrel{=}\Varid{l}\mathord{.}\Varid{get}\;\Varid{s}_{1}}  \commentend}{}\<[E]%
\\
\>[B]{}\hsindent{4}{}\<[4]%
\>[4]{}\Varid{return}\;(\Varid{s}_{1},\Varid{s}_{2}){}\<[E]%
\ColumnHook
\end{hscode}\resethooks

For \ensuremath{\mathsf{(MPutGet)}} we have:
\begin{hscode}\SaveRestoreHook
\column{B}{@{}>{\hspre}c<{\hspost}@{}}%
\column{BE}{@{}l@{}}%
\column{4}{@{}>{\hspre}l<{\hspost}@{}}%
\column{7}{@{}>{\hspre}l<{\hspost}@{}}%
\column{10}{@{}>{\hspre}l<{\hspost}@{}}%
\column{E}{@{}>{\hspre}l<{\hspost}@{}}%
\>[4]{}\mathbf{do}\;\{\mskip1.5mu {}\<[10]%
\>[10]{}(\Varid{s}_{1}'',\Varid{s}_{2}'')\leftarrow \Varid{l}_{0}\mathord{.}\Varid{mput}\;(\Varid{s}_{1},\Varid{s}_{2})\;\Varid{a};\Varid{return}\;((\Varid{s}_{1}'',\Varid{s}_{2}''),\Varid{l}_{0}\mathord{.}\Varid{mget}\;(\Varid{s}_{1}'',\Varid{s}_{2}''))\mskip1.5mu\}{}\<[E]%
\\
\>[B]{}\mathrel{=}{}\<[BE]%
\>[7]{}\mbox{\commentbegin  Definition  \commentend}{}\<[E]%
\\
\>[B]{}\hsindent{4}{}\<[4]%
\>[4]{}\mathbf{do}\;\{\mskip1.5mu {}\<[10]%
\>[10]{}\Varid{s}_{1}'\leftarrow \Varid{l}_{1}\mathord{.}\Varid{mput}\;\Varid{s}_{1}\;\Varid{a};(\Varid{s}_{1}'',\Varid{s}_{2}'')\leftarrow \Varid{return}\;(\Varid{s}_{1}',\Varid{l}\mathord{.}\Varid{get}\;\Varid{s}_{1}');{}\<[E]%
\\
\>[10]{}\Varid{return}\;((\Varid{s}_{1}'',\Varid{s}_{2}''),\Varid{l}_{1}\mathord{.}\Varid{mget}\;\Varid{s}_{1}')\mskip1.5mu\}{}\<[E]%
\\
\>[B]{}\mathrel{=}{}\<[BE]%
\>[7]{}\mbox{\commentbegin  Monad unit  \commentend}{}\<[E]%
\\
\>[B]{}\hsindent{4}{}\<[4]%
\>[4]{}\mathbf{do}\;\{\mskip1.5mu {}\<[10]%
\>[10]{}\Varid{s}_{1}'\leftarrow \Varid{l}_{1}\mathord{.}\Varid{mput}\;\Varid{s}_{1}\;\Varid{a};\Varid{return}\;((\Varid{s}_{1}',\Varid{l}\mathord{.}\Varid{get}\;\Varid{s}_{1}'),\Varid{l}_{1}\mathord{.}\Varid{mget}\;\Varid{s}_{1}')\mskip1.5mu\}{}\<[E]%
\\
\>[B]{}\mathrel{=}{}\<[BE]%
\>[7]{}\mbox{\commentbegin  \ensuremath{\mathsf{(MPutGet)}}  \commentend}{}\<[E]%
\\
\>[B]{}\hsindent{4}{}\<[4]%
\>[4]{}\mathbf{do}\;\{\mskip1.5mu {}\<[10]%
\>[10]{}\Varid{s}_{1}'\leftarrow \Varid{l}_{1}\mathord{.}\Varid{mput}\;\Varid{s}_{1}\;\Varid{a};\Varid{return}\;((\Varid{s}_{1}',\Varid{l}\mathord{.}\Varid{get}\;\Varid{s}_{1}'),\Varid{a})\mskip1.5mu\}{}\<[E]%
\\
\>[B]{}\mathrel{=}{}\<[BE]%
\>[7]{}\mbox{\commentbegin   Monad unit  \commentend}{}\<[E]%
\\
\>[B]{}\hsindent{4}{}\<[4]%
\>[4]{}\mathbf{do}\;\{\mskip1.5mu {}\<[10]%
\>[10]{}\Varid{s}_{1}'\leftarrow \Varid{l}_{1}\mathord{.}\Varid{mput}\;\Varid{s}_{1}\;\Varid{a};(\Varid{s}_{1}'',\Varid{s}_{2}'')\leftarrow \Varid{return}\;(\Varid{s}_{1}',\Varid{l}\mathord{.}\Varid{get}\;\Varid{s}_{1}');{}\<[E]%
\\
\>[10]{}\Varid{return}\;((\Varid{s}_{1}'',\Varid{s}_{2}''),\Varid{a})\mskip1.5mu\}{}\<[E]%
\\
\>[B]{}\mathrel{=}{}\<[BE]%
\>[7]{}\mbox{\commentbegin  Definition  \commentend}{}\<[E]%
\\
\>[B]{}\hsindent{4}{}\<[4]%
\>[4]{}\mathbf{do}\;\{\mskip1.5mu {}\<[10]%
\>[10]{}(\Varid{s}_{1}'',\Varid{s}_{2}'')\leftarrow \Varid{l}_{0}\mathord{.}\Varid{mput}\;(\Varid{s}_{1},\Varid{s}_{2})\;\Varid{a};\Varid{return}\;((\Varid{s}_{1}'',\Varid{s}_{2}''),\Varid{a})\mskip1.5mu\}{}\<[E]%
\ColumnHook
\end{hscode}\resethooks

Finally, for \ensuremath{\mathsf{(MCreateGet)}} we have:
\begin{hscode}\SaveRestoreHook
\column{B}{@{}>{\hspre}c<{\hspost}@{}}%
\column{BE}{@{}l@{}}%
\column{4}{@{}>{\hspre}l<{\hspost}@{}}%
\column{7}{@{}>{\hspre}l<{\hspost}@{}}%
\column{10}{@{}>{\hspre}l<{\hspost}@{}}%
\column{E}{@{}>{\hspre}l<{\hspost}@{}}%
\>[4]{}\mathbf{do}\;\{\mskip1.5mu {}\<[10]%
\>[10]{}(\Varid{s}_{1},\Varid{s}_{2})\leftarrow \Varid{l}_{0}\mathord{.}\Varid{mcreate}\;\Varid{a};\Varid{return}\;((\Varid{s}_{1},\Varid{s}_{2}),\Varid{l}_{0}\mathord{.}\Varid{mget}\;(\Varid{s}_{1},\Varid{s}_{2}))\mskip1.5mu\}{}\<[E]%
\\
\>[B]{}\mathrel{=}{}\<[BE]%
\>[7]{}\mbox{\commentbegin  Definition  \commentend}{}\<[E]%
\\
\>[B]{}\hsindent{4}{}\<[4]%
\>[4]{}\mathbf{do}\;\{\mskip1.5mu {}\<[10]%
\>[10]{}\Varid{s}_{1}'\leftarrow \Varid{l}_{1}\mathord{.}\Varid{mcreate}\;\Varid{a};(\Varid{s}_{1},\Varid{s}_{2})\leftarrow \Varid{return}\;(\Varid{s}_{1}',\Varid{l}\mathord{.}\Varid{get}\;\Varid{s}_{1}');{}\<[E]%
\\
\>[10]{}\Varid{return}\;((\Varid{s}_{1},\Varid{s}_{2}),\Varid{l}_{1}\mathord{.}\Varid{mget}\;\Varid{s}_{1})\mskip1.5mu\}{}\<[E]%
\\
\>[B]{}\mathrel{=}{}\<[BE]%
\>[7]{}\mbox{\commentbegin  Monad unit  \commentend}{}\<[E]%
\\
\>[B]{}\hsindent{4}{}\<[4]%
\>[4]{}\mathbf{do}\;\{\mskip1.5mu {}\<[10]%
\>[10]{}\Varid{s}_{1}'\leftarrow \Varid{l}_{1}\mathord{.}\Varid{mcreate}\;\Varid{a};\Varid{return}\;((\Varid{s}_{1}',\Varid{l}\mathord{.}\Varid{get}\;\Varid{s}_{1}'),\Varid{l}_{1}\mathord{.}\Varid{mget}\;\Varid{s}_{1}')\mskip1.5mu\}{}\<[E]%
\\
\>[B]{}\mathrel{=}{}\<[BE]%
\>[7]{}\mbox{\commentbegin  \ensuremath{\mathsf{(MCreateGet)}}  \commentend}{}\<[E]%
\\
\>[B]{}\hsindent{4}{}\<[4]%
\>[4]{}\mathbf{do}\;\{\mskip1.5mu {}\<[10]%
\>[10]{}\Varid{s}_{1}'\leftarrow \Varid{l}_{1}\mathord{.}\Varid{mcreate}\;\Varid{a};\Varid{return}\;((\Varid{s}_{1}',\Varid{l}\mathord{.}\Varid{get}\;\Varid{s}_{1}'),\Varid{a})\mskip1.5mu\}{}\<[E]%
\\
\>[B]{}\mathrel{=}{}\<[BE]%
\>[7]{}\mbox{\commentbegin   Monad unit  \commentend}{}\<[E]%
\\
\>[B]{}\hsindent{4}{}\<[4]%
\>[4]{}\mathbf{do}\;\{\mskip1.5mu {}\<[10]%
\>[10]{}\Varid{s}_{1}'\leftarrow \Varid{l}_{1}\mathord{.}\Varid{mcreate}\;\Varid{a};(\Varid{s}_{1},\Varid{s}_{2})\leftarrow \Varid{return}\;(\Varid{s}_{1}',\Varid{l}\mathord{.}\Varid{get}\;\Varid{s}_{1}');{}\<[E]%
\\
\>[10]{}\Varid{return}\;((\Varid{s}_{1},\Varid{s}_{2}),\Varid{a})\mskip1.5mu\}{}\<[E]%
\\
\>[B]{}\mathrel{=}{}\<[BE]%
\>[7]{}\mbox{\commentbegin  Definition  \commentend}{}\<[E]%
\\
\>[B]{}\hsindent{4}{}\<[4]%
\>[4]{}\mathbf{do}\;\{\mskip1.5mu {}\<[10]%
\>[10]{}(\Varid{s}_{1},\Varid{s}_{2})\leftarrow \Varid{l}_{0}\mathord{.}\Varid{mcreate}\;\Varid{a};\Varid{return}\;((\Varid{s}_{1},\Varid{s}_{2}),\Varid{a})\mskip1.5mu\}{}\<[E]%
\ColumnHook
\end{hscode}\resethooks

Next, we show that \ensuremath{\Varid{fst}} is a base map from \ensuremath{\Varid{l}_{0}} to \ensuremath{\Varid{l}_{1}} and \ensuremath{\Varid{snd}} is
a base map from \ensuremath{\Varid{l}_{0}} to \ensuremath{\Varid{l}_{2}}.  It is easy to show that  \ensuremath{\Varid{fst}} is a
base map from \ensuremath{\Varid{l}_{0}} to \ensuremath{\Varid{l}_{1}} by unfolding definitions and applying of monad laws. 
To show that \ensuremath{\Varid{snd}} is a base map  from \ensuremath{\Varid{l}_{0}} to \ensuremath{\Varid{l}_{2}}, we need to verify the following three equations that show that \ensuremath{\Varid{snd}} commutes with \ensuremath{\Varid{mget}}, \ensuremath{\Varid{mput}} and \ensuremath{\Varid{mcreate}}:
  \begin{hscode}\SaveRestoreHook
\column{B}{@{}>{\hspre}l<{\hspost}@{}}%
\column{5}{@{}>{\hspre}l<{\hspost}@{}}%
\column{55}{@{}>{\hspre}l<{\hspost}@{}}%
\column{E}{@{}>{\hspre}l<{\hspost}@{}}%
\>[5]{}\Varid{l}_{0}\mathord{.}\Varid{mget}\;(\Varid{s}_{1},\Varid{s}_{2}){}\<[55]%
\>[55]{}\mathrel{=}\Varid{l}_{2}\mathord{.}\Varid{mget}\;\Varid{s}_{2}{}\<[E]%
\\
\>[5]{}\mathbf{do}\;\{\mskip1.5mu (\Varid{s}_{1}',\Varid{s}_{2}')\leftarrow \Varid{l}_{0}\mathord{.}\Varid{mput}\;(\Varid{s}_{1},\Varid{s}_{2})\;\Varid{a};\Varid{return}\;\Varid{s}_{2}'\mskip1.5mu\}{}\<[55]%
\>[55]{}\mathrel{=}\Varid{l}_{2}\mathord{.}\Varid{mput}\;\Varid{s}_{2}\;\Varid{a}{}\<[E]%
\\
\>[5]{}\mathbf{do}\;\{\mskip1.5mu (\Varid{s}_{1},\Varid{s}_{2})\leftarrow \Varid{l}_{0}\mathord{.}\Varid{mcreate}\;\Varid{a};\Varid{return}\;\Varid{s}_{2}\mskip1.5mu\}{}\<[55]%
\>[55]{}\mathrel{=}\Varid{l}_{2}\mathord{.}\Varid{mcreate}\;\Varid{a}{}\<[E]%
\ColumnHook
\end{hscode}\resethooks
For the \ensuremath{\Varid{mget}} equation:
\begin{hscode}\SaveRestoreHook
\column{B}{@{}>{\hspre}c<{\hspost}@{}}%
\column{BE}{@{}l@{}}%
\column{4}{@{}>{\hspre}l<{\hspost}@{}}%
\column{7}{@{}>{\hspre}l<{\hspost}@{}}%
\column{E}{@{}>{\hspre}l<{\hspost}@{}}%
\>[4]{}\Varid{l}_{0}\mathord{.}\Varid{mget}\;(\Varid{s}_{1},\Varid{s}_{2}){}\<[E]%
\\
\>[B]{}\mathrel{=}{}\<[BE]%
\>[7]{}\mbox{\commentbegin  Definition  \commentend}{}\<[E]%
\\
\>[B]{}\hsindent{4}{}\<[4]%
\>[4]{}\Varid{l}_{1}\mathord{.}\Varid{mget}\;\Varid{s}_{1}{}\<[E]%
\\
\>[B]{}\mathrel{=}{}\<[BE]%
\>[7]{}\mbox{\commentbegin  Assumption \ensuremath{\Varid{l};\Varid{l}_{2}\mathrel{=}\Varid{l}_{1}}  \commentend}{}\<[E]%
\\
\>[B]{}\hsindent{4}{}\<[4]%
\>[4]{}(\Varid{l}\mathbin{;}\Varid{l}_{2})\mathord{.}\Varid{mget}\;\Varid{s}_{1}{}\<[E]%
\\
\>[B]{}\mathrel{=}{}\<[BE]%
\>[7]{}\mbox{\commentbegin  Definition  \commentend}{}\<[E]%
\\
\>[B]{}\hsindent{4}{}\<[4]%
\>[4]{}\Varid{l}_{2}\mathord{.}\Varid{mget}\;(\Varid{l}\mathord{.}\Varid{get}\;\Varid{s}_{1}){}\<[E]%
\\
\>[B]{}\mathrel{=}{}\<[BE]%
\>[7]{}\mbox{\commentbegin  \ensuremath{(\Varid{s}_{1},\Varid{s}_{2})\in\Conid{R}}  \commentend}{}\<[E]%
\\
\>[B]{}\hsindent{4}{}\<[4]%
\>[4]{}\Varid{l}_{2}\mathord{.}\Varid{mget}\;\Varid{s}_{2}{}\<[E]%
\ColumnHook
\end{hscode}\resethooks
For the \ensuremath{\Varid{mput}}  equation:
\begin{hscode}\SaveRestoreHook
\column{B}{@{}>{\hspre}c<{\hspost}@{}}%
\column{BE}{@{}l@{}}%
\column{4}{@{}>{\hspre}l<{\hspost}@{}}%
\column{7}{@{}>{\hspre}l<{\hspost}@{}}%
\column{E}{@{}>{\hspre}l<{\hspost}@{}}%
\>[4]{}\mathbf{do}\;\{\mskip1.5mu (\Varid{s}_{1}',\Varid{s}_{2}')\leftarrow \Varid{l}_{0}\mathord{.}\Varid{mput}\;(\Varid{s}_{1},\Varid{s}_{2})\;\Varid{a};\Varid{return}\;\Varid{s}_{2}'\mskip1.5mu\}{}\<[E]%
\\
\>[B]{}\mathrel{=}{}\<[BE]%
\>[7]{}\mbox{\commentbegin  Definition  \commentend}{}\<[E]%
\\
\>[B]{}\hsindent{4}{}\<[4]%
\>[4]{}\mathbf{do}\;\{\mskip1.5mu \Varid{s}_{1}''\leftarrow \Varid{l}_{1}\mathord{.}\Varid{mput}\;\Varid{s}_{1}\;\Varid{a};(\Varid{s}_{1}',\Varid{s}_{2}')\leftarrow \Varid{return}\;(\Varid{s}_{1}'',\Varid{l}\mathord{.}\Varid{get}\;\Varid{s}_{1}'');\Varid{return}\;\Varid{s}_{2}'\mskip1.5mu\}{}\<[E]%
\\
\>[B]{}\mathrel{=}{}\<[BE]%
\>[7]{}\mbox{\commentbegin  Monad laws  \commentend}{}\<[E]%
\\
\>[B]{}\hsindent{4}{}\<[4]%
\>[4]{}\mathbf{do}\;\{\mskip1.5mu \Varid{s}_{1}''\leftarrow \Varid{l}_{1}\mathord{.}\Varid{mput}\;\Varid{s}_{1}\;\Varid{a};\Varid{return}\;(\Varid{l}\mathord{.}\Varid{get}\;\Varid{s}_{1}'')\mskip1.5mu\}{}\<[E]%
\\
\>[B]{}\mathrel{=}{}\<[BE]%
\>[7]{}\mbox{\commentbegin  \ensuremath{\Varid{l}\mathbin{;}\Varid{l}_{2}\mathrel{=}\Varid{l}_{1}}  \commentend}{}\<[E]%
\\
\>[B]{}\hsindent{4}{}\<[4]%
\>[4]{}\mathbf{do}\;\{\mskip1.5mu \Varid{s}_{1}''\leftarrow (\Varid{l}\mathbin{;}\Varid{l}_{2})\mathord{.}\Varid{mput}\;\Varid{s}_{1}\;\Varid{a};\Varid{return}\;(\Varid{l}\mathord{.}\Varid{get}\;\Varid{s}_{1}'')\mskip1.5mu\}{}\<[E]%
\\
\>[B]{}\mathrel{=}{}\<[BE]%
\>[7]{}\mbox{\commentbegin  Definition  \commentend}{}\<[E]%
\\
\>[B]{}\hsindent{4}{}\<[4]%
\>[4]{}\mathbf{do}\;\{\mskip1.5mu \Varid{s}_{2}''\leftarrow \Varid{l}_{2}\mathord{.}\Varid{mput}\;(\Varid{l}\mathord{.}\Varid{get}\;\Varid{s}_{1})\;\Varid{a};\Varid{s}_{1}''\leftarrow \Varid{return}\;(\Varid{l}\mathord{.}\Varid{put}\;\Varid{s}_{1}\;\Varid{s}_{2}'');\Varid{return}\;(\Varid{l}\mathord{.}\Varid{get}\;\Varid{s}_{1}'')\mskip1.5mu\}{}\<[E]%
\\
\>[B]{}\mathrel{=}{}\<[BE]%
\>[7]{}\mbox{\commentbegin  Monad laws  \commentend}{}\<[E]%
\\
\>[B]{}\hsindent{4}{}\<[4]%
\>[4]{}\mathbf{do}\;\{\mskip1.5mu \Varid{s}_{2}''\leftarrow \Varid{l}_{2}\mathord{.}\Varid{mput}\;(\Varid{l}\mathord{.}\Varid{get}\;\Varid{s}_{1})\;\Varid{a};\Varid{return}\;(\Varid{l}\mathord{.}\Varid{get}\;(\Varid{l}\mathord{.}\Varid{put}\;\Varid{s}_{1}\;\Varid{s}_{2}''))\mskip1.5mu\}{}\<[E]%
\\
\>[B]{}\mathrel{=}{}\<[BE]%
\>[7]{}\mbox{\commentbegin  \ensuremath{\mathsf{(PutGet)}}  \commentend}{}\<[E]%
\\
\>[B]{}\hsindent{4}{}\<[4]%
\>[4]{}\mathbf{do}\;\{\mskip1.5mu \Varid{s}_{2}''\leftarrow \Varid{l}_{2}\mathord{.}\Varid{mput}\;(\Varid{l}\mathord{.}\Varid{get}\;\Varid{s}_{1})\;\Varid{a};\Varid{return}\;\Varid{s}_{2}''\mskip1.5mu\}{}\<[E]%
\\
\>[B]{}\mathrel{=}{}\<[BE]%
\>[7]{}\mbox{\commentbegin  \ensuremath{(\Varid{s}_{1},\Varid{s}_{2})\in\Conid{R}} so \ensuremath{\Varid{l}\mathord{.}\Varid{get}\;\Varid{s}_{1}\mathrel{=}\Varid{s}_{2}}  \commentend}{}\<[E]%
\\
\>[B]{}\hsindent{4}{}\<[4]%
\>[4]{}\mathbf{do}\;\{\mskip1.5mu \Varid{s}_{2}''\leftarrow \Varid{l}_{2}\mathord{.}\Varid{mput}\;\Varid{s}_{2}\;\Varid{a};\Varid{return}\;\Varid{s}_{2}''\mskip1.5mu\}{}\<[E]%
\\
\>[B]{}\mathrel{=}{}\<[BE]%
\>[7]{}\mbox{\commentbegin  Monad laws  \commentend}{}\<[E]%
\\
\>[B]{}\hsindent{4}{}\<[4]%
\>[4]{}\Varid{l}_{2}\mathord{.}\Varid{mput}\;\Varid{s}_{2}\;\Varid{a}{}\<[E]%
\ColumnHook
\end{hscode}\resethooks
For the \ensuremath{\Varid{mcreate}} equation:
\begin{hscode}\SaveRestoreHook
\column{B}{@{}>{\hspre}c<{\hspost}@{}}%
\column{BE}{@{}l@{}}%
\column{4}{@{}>{\hspre}l<{\hspost}@{}}%
\column{7}{@{}>{\hspre}l<{\hspost}@{}}%
\column{E}{@{}>{\hspre}l<{\hspost}@{}}%
\>[4]{}\mathbf{do}\;\{\mskip1.5mu (\Varid{s}_{1},\Varid{s}_{2})\leftarrow \Varid{l}_{0}\mathord{.}\Varid{mcreate}\;\Varid{a};\Varid{return}\;\Varid{s}_{2}\mskip1.5mu\}{}\<[E]%
\\
\>[B]{}\mathrel{=}{}\<[BE]%
\>[7]{}\mbox{\commentbegin  Definition  \commentend}{}\<[E]%
\\
\>[B]{}\hsindent{4}{}\<[4]%
\>[4]{}\mathbf{do}\;\{\mskip1.5mu \Varid{s}_{1}'\leftarrow \Varid{l}_{1}\mathord{.}\Varid{mcreate}\;\Varid{a};(\Varid{s}_{1},\Varid{s}_{2})\leftarrow \Varid{return}\;(\Varid{s}_{1}',\Varid{l}\mathord{.}\Varid{get}\;\Varid{s}_{1}');\Varid{return}\;\Varid{s}_{2}\mskip1.5mu\}{}\<[E]%
\\
\>[B]{}\mathrel{=}{}\<[BE]%
\>[7]{}\mbox{\commentbegin  Monad laws  \commentend}{}\<[E]%
\\
\>[B]{}\hsindent{4}{}\<[4]%
\>[4]{}\mathbf{do}\;\{\mskip1.5mu \Varid{s}_{1}'\leftarrow \Varid{l}_{1}\mathord{.}\Varid{mcreate}\;\Varid{a};\Varid{return}\;(\Varid{l}\mathord{.}\Varid{get}\;\Varid{s}_{1}')\mskip1.5mu\}{}\<[E]%
\\
\>[B]{}\mathrel{=}{}\<[BE]%
\>[7]{}\mbox{\commentbegin  \ensuremath{\Varid{l}\mathbin{;}\Varid{l}_{2}\mathrel{=}\Varid{l}_{1}}  \commentend}{}\<[E]%
\\
\>[B]{}\hsindent{4}{}\<[4]%
\>[4]{}\mathbf{do}\;\{\mskip1.5mu \Varid{s}_{1}'\leftarrow (\Varid{l}\mathbin{;}\Varid{l}_{2})\mathord{.}\Varid{mcreate}\;\Varid{a};\Varid{return}\;(\Varid{l}\mathord{.}\Varid{get}\;\Varid{s}_{1}')\mskip1.5mu\}{}\<[E]%
\\
\>[B]{}\mathrel{=}{}\<[BE]%
\>[7]{}\mbox{\commentbegin  Definition  \commentend}{}\<[E]%
\\
\>[B]{}\hsindent{4}{}\<[4]%
\>[4]{}\mathbf{do}\;\{\mskip1.5mu \Varid{s}_{2}'\leftarrow \Varid{l}_{2}\mathord{.}\Varid{mcreate}\;\Varid{a};\Varid{s}_{1}'\leftarrow \Varid{return}\;(\Varid{l}\mathord{.}\Varid{create}\;\Varid{s}_{2}');\Varid{return}\;(\Varid{l}\mathord{.}\Varid{get}\;\Varid{s}_{1}')\mskip1.5mu\}{}\<[E]%
\\
\>[B]{}\mathrel{=}{}\<[BE]%
\>[7]{}\mbox{\commentbegin   Monad laws  \commentend}{}\<[E]%
\\
\>[B]{}\hsindent{4}{}\<[4]%
\>[4]{}\mathbf{do}\;\{\mskip1.5mu \Varid{s}_{2}'\leftarrow \Varid{l}_{2}\mathord{.}\Varid{mcreate}\;\Varid{a};\Varid{return}\;(\Varid{l}\mathord{.}\Varid{get}\;(\Varid{l}\mathord{.}\Varid{create}\;\Varid{s}_{2}'))\mskip1.5mu\}{}\<[E]%
\\
\>[B]{}\mathrel{=}{}\<[BE]%
\>[7]{}\mbox{\commentbegin  \ensuremath{\mathsf{(CreateGet)}}  \commentend}{}\<[E]%
\\
\>[B]{}\hsindent{4}{}\<[4]%
\>[4]{}\mathbf{do}\;\{\mskip1.5mu \Varid{s}_{2}'\leftarrow \Varid{l}_{2}\mathord{.}\Varid{mcreate}\;\Varid{a};\Varid{return}\;\Varid{s}_{2}'\mskip1.5mu\}{}\<[E]%
\\
\>[B]{}\mathrel{=}{}\<[BE]%
\>[7]{}\mbox{\commentbegin  Monad laws  \commentend}{}\<[E]%
\\
\>[B]{}\hsindent{4}{}\<[4]%
\>[4]{}\Varid{l}_{2}\mathord{.}\Varid{mcreate}\;\Varid{a}{}\<[E]%
\ColumnHook
\end{hscode}\resethooks
Similar reasoning suffices to show that \ensuremath{\Varid{fst}} is a base map from \ensuremath{\Varid{r}_{0}}
to \ensuremath{\Varid{r}_{1}} and \ensuremath{\Varid{snd}} is a base map from \ensuremath{\Varid{r}_{0}} to \ensuremath{\Varid{r}_{2}}, so we can conclude
that \ensuremath{\Conid{R}} and \ensuremath{(\Varid{l},\Varid{r})} constitute a bisimulation between \ensuremath{\Varid{sp}_{\mathrm{1}}} and
\ensuremath{\Varid{sp}_{\mathrm{2}}}, that is, \ensuremath{\Varid{sp}_{\mathrm{1}}\equiv_{\mathrm{b}} \Varid{sp}_{\mathrm{2}}}.
\end{proof}

\restatableTheorem{thm:pure-bisim-implies-jr}
\begin{thm:pure-bisim-implies-jr}
Given \ensuremath{\Varid{sp}_{\mathrm{1}}\mathbin{::}\Conid{A}\mathbin{{\reflectbox{$\rightsquigarrow$}}}\Conid{S}_{1}\mathbin{{\rightsquigarrow}}\Conid{B},\Varid{sp}_{\mathrm{2}}\mathbin{::}\Conid{A}\mathbin{{\reflectbox{$\rightsquigarrow$}}}\Conid{S}_{2}\mathbin{{\rightsquigarrow}}\Conid{B}}, if  \ensuremath{\Varid{sp}_{\mathrm{1}}\equiv_{\mathrm{b}} \Varid{sp}_{\mathrm{2}}} then \ensuremath{\Varid{sp}_{\mathrm{1}}\equiv_{\mathrm{s}} \Varid{sp}_{\mathrm{2}}}.  
\end{thm:pure-bisim-implies-jr}
\begin{proof}
  For convenience, we again write \ensuremath{\Varid{sp}_{\mathrm{1}}\mathrel{=}(\Varid{l}_{1},\Varid{r}_{1})} and \ensuremath{\Varid{sp}_{\mathrm{2}}\mathrel{=}(\Varid{l}_{2},\Varid{r}_{2})}.  
We are given \ensuremath{\Conid{R}} and a span \ensuremath{\Varid{sp}_{\mathrm{0}}\mathbin{::}\Conid{A}\mathbin{{\reflectbox{$\rightsquigarrow$}}}\Conid{R}\mathbin{{\rightsquigarrow}}\Conid{B}} constituting a
  bisimulation \ensuremath{\Varid{sp}_{\mathrm{1}}\equiv_{\mathrm{b}} \Varid{sp}_{\mathrm{2}}}.  Let \ensuremath{\Varid{sp}_{\mathrm{0}}\mathrel{=}(\Varid{l}_{0},\Varid{r}_{0})}.  For later reference, we list
  the properties that must hold by virtue of this bisimulation for any \ensuremath{(\Varid{s}_{1},\Varid{s}_{2})\in\Conid{R}}:
\begin{hscode}\SaveRestoreHook
\column{B}{@{}>{\hspre}l<{\hspost}@{}}%
\column{4}{@{}>{\hspre}l<{\hspost}@{}}%
\column{27}{@{}>{\hspre}l<{\hspost}@{}}%
\column{46}{@{}>{\hspre}c<{\hspost}@{}}%
\column{46E}{@{}l@{}}%
\column{50}{@{}>{\hspre}c<{\hspost}@{}}%
\column{50E}{@{}l@{}}%
\column{56}{@{}>{\hspre}l<{\hspost}@{}}%
\column{59}{@{}>{\hspre}l<{\hspost}@{}}%
\column{62}{@{}>{\hspre}l<{\hspost}@{}}%
\column{80}{@{}>{\hspre}l<{\hspost}@{}}%
\column{E}{@{}>{\hspre}l<{\hspost}@{}}%
\>[4]{}\Varid{l}_{0}\mathord{.}\Varid{get}\;(\Varid{s}_{1},\Varid{s}_{2}){}\<[27]%
\>[27]{}\mathrel{=}\Varid{l}_{1}\mathord{.}\Varid{get}\;\Varid{s}_{1}{}\<[46]%
\>[46]{}\quad{}\<[46E]%
\>[50]{}\quad{}\<[50E]%
\>[62]{}\Varid{l}_{0}\mathord{.}\Varid{get}\;(\Varid{s}_{1},\Varid{s}_{2}){}\<[80]%
\>[80]{}\mathrel{=}\Varid{l}_{2}\mathord{.}\Varid{get}\;\Varid{s}_{2}{}\<[E]%
\\
\>[4]{}\Varid{fst}\;(\Varid{l}_{0}\mathord{.}\Varid{put}\;(\Varid{s}_{1},\Varid{s}_{2})\;\Varid{a}){}\<[27]%
\>[27]{}\mathrel{=}\Varid{l}_{1}\mathord{.}\Varid{put}\;\Varid{s}_{1}\;\Varid{a}{}\<[46]%
\>[46]{}\quad{}\<[46E]%
\>[50]{}\quad{}\<[50E]%
\>[56]{}\Varid{snd}\;(\Varid{l}_{0}\mathord{.}\Varid{put}\;(\Varid{s}_{1},\Varid{s}_{2})\;\Varid{a}){}\<[80]%
\>[80]{}\mathrel{=}\Varid{l}_{2}\mathord{.}\Varid{put}\;\Varid{s}_{2}\;\Varid{a}{}\<[E]%
\\
\>[4]{}\Varid{fst}\;(\Varid{l}_{0}\mathord{.}\Varid{create}\;\Varid{a}){}\<[27]%
\>[27]{}\mathrel{=}\Varid{l}_{1}\mathord{.}\Varid{create}\;\Varid{s}_{1}{}\<[46]%
\>[46]{}\quad{}\<[46E]%
\>[50]{}\quad{}\<[50E]%
\>[59]{}\Varid{snd}\;(\Varid{l}_{0}\mathord{.}\Varid{create}\;\Varid{a}){}\<[80]%
\>[80]{}\mathrel{=}\Varid{l}_{2}\mathord{.}\Varid{create}\;\Varid{a}{}\<[E]%
\\[\blanklineskip]%
\>[4]{}\Varid{r}_{0}\mathord{.}\Varid{get}\;(\Varid{s}_{1},\Varid{s}_{2}){}\<[27]%
\>[27]{}\mathrel{=}\Varid{r}_{1}\mathord{.}\Varid{get}\;\Varid{s}_{1}{}\<[46]%
\>[46]{}\quad{}\<[46E]%
\>[50]{}\quad{}\<[50E]%
\>[62]{}\Varid{r}_{0}\mathord{.}\Varid{get}\;(\Varid{s}_{1},\Varid{s}_{2}){}\<[80]%
\>[80]{}\mathrel{=}\Varid{r}_{2}\mathord{.}\Varid{get}\;\Varid{s}_{2}{}\<[E]%
\\
\>[4]{}\Varid{fst}\;(\Varid{r}_{0}\mathord{.}\Varid{put}\;(\Varid{s}_{1},\Varid{s}_{2})\;\Varid{b}){}\<[27]%
\>[27]{}\mathrel{=}\Varid{r}_{1}\mathord{.}\Varid{put}\;\Varid{s}_{1}\;\Varid{b}{}\<[46]%
\>[46]{}\quad{}\<[46E]%
\>[50]{}\quad{}\<[50E]%
\>[56]{}\Varid{snd}\;(\Varid{r}_{0}\mathord{.}\Varid{put}\;(\Varid{s}_{1},\Varid{s}_{2})\;\Varid{b}){}\<[80]%
\>[80]{}\mathrel{=}\Varid{r}_{2}\mathord{.}\Varid{put}\;\Varid{s}_{2}\;\Varid{b}{}\<[E]%
\\
\>[4]{}\Varid{fst}\;(\Varid{r}_{0}\mathord{.}\Varid{create}\;\Varid{b}){}\<[27]%
\>[27]{}\mathrel{=}\Varid{r}_{1}\mathord{.}\Varid{create}\;\Varid{s}_{1}{}\<[46]%
\>[46]{}\quad{}\<[46E]%
\>[50]{}\quad{}\<[50E]%
\>[59]{}\Varid{snd}\;(\Varid{r}_{0}\mathord{.}\Varid{create}\;\Varid{b}){}\<[80]%
\>[80]{}\mathrel{=}\Varid{r}_{2}\mathord{.}\Varid{create}\;\Varid{b}{}\<[E]%
\ColumnHook
\end{hscode}\resethooks
In addition, it follows that:
\begin{hscode}\SaveRestoreHook
\column{B}{@{}>{\hspre}l<{\hspost}@{}}%
\column{E}{@{}>{\hspre}l<{\hspost}@{}}%
\>[B]{}\Varid{l}_{0}\mathord{.}\Varid{put}\;(\Varid{s}_{1},\Varid{s}_{2})\;\Varid{a}\mathrel{=}(\Varid{l}_{1}\mathord{.}\Varid{put}\;\Varid{s}_{1}\;\Varid{a},\Varid{l}_{2}\mathord{.}\Varid{put}\;\Varid{s}_{2}\;\Varid{a})\in\Conid{R}{}\<[E]%
\\
\>[B]{}\Varid{r}_{0}\mathord{.}\Varid{put}\;(\Varid{s}_{1},\Varid{s}_{2})\;\Varid{b}\mathrel{=}(\Varid{r}_{1}\mathord{.}\Varid{put}\;\Varid{s}_{1}\;\Varid{b},\Varid{r}_{2}\mathord{.}\Varid{put}\;\Varid{s}_{2}\;\Varid{b})\in\Conid{R}{}\<[E]%
\\
\>[B]{}\Varid{l}_{0}\mathord{.}\Varid{create}\;\Varid{a}\mathrel{=}(\Varid{l}_{1}\mathord{.}\Varid{create}\;\Varid{a},\Varid{l}_{2}\mathord{.}\Varid{create}\;\Varid{a})\in\Conid{R}{}\<[E]%
\\
\>[B]{}\Varid{r}_{0}\mathord{.}\Varid{create}\;\Varid{b}\mathrel{=}(\Varid{r}_{1}\mathord{.}\Varid{create}\;\Varid{b},\Varid{r}_{2}\mathord{.}\Varid{create}\;\Varid{b})\in\Conid{R}{}\<[E]%
\ColumnHook
\end{hscode}\resethooks
which also implies the following identities, which we call \emph{twists}:
\begin{hscode}\SaveRestoreHook
\column{B}{@{}>{\hspre}l<{\hspost}@{}}%
\column{E}{@{}>{\hspre}l<{\hspost}@{}}%
\>[B]{}\Varid{r}_{1}\mathord{.}\Varid{get}\;(\Varid{l}_{1}\mathord{.}\Varid{put}\;\Varid{s}_{1}\;\Varid{a})\mathrel{=}\Varid{r}_{0}\mathord{.}\Varid{get}\;(\Varid{l}_{1}\mathord{.}\Varid{put}\;\Varid{s}_{1}\;\Varid{a},\Varid{l}_{2}\mathord{.}\Varid{put}\;\Varid{s}_{2}\;\Varid{a})\mathrel{=}\Varid{r}_{2}\mathord{.}\Varid{get}\;(\Varid{l}_{2}\mathord{.}\Varid{put}\;\Varid{s}_{2}\;\Varid{a}){}\<[E]%
\\
\>[B]{}\Varid{l}_{1}\mathord{.}\Varid{get}\;(\Varid{r}_{1}\mathord{.}\Varid{put}\;\Varid{s}_{1}\;\Varid{b})\mathrel{=}\Varid{l}_{0}\mathord{.}\Varid{get}\;(\Varid{r}_{1}\mathord{.}\Varid{put}\;\Varid{s}_{1}\;\Varid{b},\Varid{r}_{2}\mathord{.}\Varid{put}\;\Varid{s}_{2}\;\Varid{b})\mathrel{=}\Varid{l}_{2}\mathord{.}\Varid{get}\;(\Varid{r}_{2}\mathord{.}\Varid{put}\;\Varid{s}_{2}\;\Varid{b}){}\<[E]%
\\
\>[B]{}\Varid{r}_{1}\mathord{.}\Varid{get}\;(\Varid{l}_{1}\mathord{.}\Varid{create}\;\Varid{a})\mathrel{=}\Varid{r}_{0}\mathord{.}\Varid{get}\;(\Varid{l}_{1}\mathord{.}\Varid{create}\;\Varid{a},\Varid{l}_{2}\mathord{.}\Varid{create}\;\Varid{a})\mathrel{=}\Varid{r}_{2}\mathord{.}\Varid{get}\;(\Varid{l}_{2}\mathord{.}\Varid{create}\;\Varid{a}){}\<[E]%
\\
\>[B]{}\Varid{l}_{1}\mathord{.}\Varid{get}\;(\Varid{r}_{1}\mathord{.}\Varid{create}\;\Varid{b})\mathrel{=}\Varid{l}_{0}\mathord{.}\Varid{get}\;(\Varid{r}_{1}\mathord{.}\Varid{create}\;\Varid{b},\Varid{r}_{2}\mathord{.}\Varid{create}\;\Varid{b})\mathrel{=}\Varid{l}_{2}\mathord{.}\Varid{get}\;(\Varid{r}_{2}\mathord{.}\Varid{create}\;\Varid{b}){}\<[E]%
\ColumnHook
\end{hscode}\resethooks
 
It suffices to
  construct a span \ensuremath{\Varid{sp}\mathrel{=}(\Varid{l},\Varid{r})\mathbin{::}\Conid{S}_{1}\mathbin{{\reflectbox{$\rightsquigarrow$}}}\Conid{R}\mathbin{{\rightsquigarrow}}\Conid{S}_{2}} satisfying \ensuremath{\Varid{l}\mathbin{;}\Varid{l}_{1}\mathrel{=}\Varid{r}\mathbin{;}\Varid{l}_{2}} and \ensuremath{\Varid{l}\mathbin{;}\Varid{r}_{1}\mathrel{=}\Varid{r}\mathbin{;}\Varid{r}_{2}}.
Define \ensuremath{\Varid{l}} and \ensuremath{\Varid{r}} as follows:
\begin{hscode}\SaveRestoreHook
\column{B}{@{}>{\hspre}l<{\hspost}@{}}%
\column{3}{@{}>{\hspre}l<{\hspost}@{}}%
\column{23}{@{}>{\hspre}l<{\hspost}@{}}%
\column{E}{@{}>{\hspre}l<{\hspost}@{}}%
\>[3]{}\Varid{l}\mathord{.}\Varid{get}{}\<[23]%
\>[23]{}\mathrel{=}\Varid{fst}{}\<[E]%
\\
\>[3]{}\Varid{l}\mathord{.}\Varid{put}\;(\Varid{s}_{1},\Varid{s}_{2})\;\Varid{s}_{1}'{}\<[23]%
\>[23]{}\mathrel{=}\Varid{l}_{0}\mathord{.}\Varid{put}\;(\Varid{s}_{1},\Varid{s}_{2})\;(\Varid{l}_{1}\mathord{.}\Varid{get}\;\Varid{s}_{1}'){}\<[E]%
\\
\>[3]{}\Varid{l}\mathord{.}\Varid{create}\;\Varid{s}_{1}{}\<[23]%
\>[23]{}\mathrel{=}\Varid{l}_{0}\mathord{.}\Varid{create}\;(\Varid{l}_{1}\mathord{.}\Varid{get}\;\Varid{s}_{1}){}\<[E]%
\\[\blanklineskip]%
\>[3]{}\Varid{r}\mathord{.}\Varid{get}{}\<[23]%
\>[23]{}\mathrel{=}\Varid{snd}{}\<[E]%
\\
\>[3]{}\Varid{r}\mathord{.}\Varid{put}\;(\Varid{s}_{1},\Varid{s}_{2})\;\Varid{s}_{2}'{}\<[23]%
\>[23]{}\mathrel{=}\Varid{l}_{0}\mathord{.}\Varid{put}\;(\Varid{s}_{1},\Varid{s}_{2})\;(\Varid{l}_{2}\mathord{.}\Varid{get}\;\Varid{s}_{2}'){}\<[E]%
\\
\>[3]{}\Varid{r}\mathord{.}\Varid{create}\;\Varid{s}_{2}{}\<[23]%
\>[23]{}\mathrel{=}\Varid{l}_{0}\mathord{.}\Varid{create}\;(\Varid{l}_{2}\mathord{.}\Varid{get}\;\Varid{s}_{2}){}\<[E]%
\ColumnHook
\end{hscode}\resethooks

Notice that by construction \ensuremath{\Varid{l}\mathbin{::}\Conid{R}\mathbin{\leadsto}\Conid{S}_{1}} and \ensuremath{\Varid{r}\mathbin{::}\Conid{R}\mathbin{\leadsto}\Conid{S}_{2}}, that
is, since we have used \ensuremath{\Varid{l}_{0}} and \ensuremath{\Varid{r}_{0}} to define \ensuremath{\Varid{l}} and \ensuremath{\Varid{r}}, we do not
need to do any more work to check that the pairs produced by \ensuremath{\Varid{create}}
and \ensuremath{\Varid{put}} remain in \ensuremath{\Conid{R}}.  Notice also that \ensuremath{\Varid{l}} and \ensuremath{\Varid{r}} only use the lenses \ensuremath{\Varid{l}_{1}} and \ensuremath{\Varid{l}_{2}}, not \ensuremath{\Varid{r}_{1}} and \ensuremath{\Varid{r}_{2}}; we will show nevertheless that they satisfy the required properties.

First, to show that \ensuremath{\Varid{l}\mathbin{;}\Varid{l}_{1}\mathrel{=}\Varid{r}\mathbin{;}\Varid{l}_{2}}, we proceed as follows for each
  operation.
For \ensuremath{\Varid{get}}:
\begin{hscode}\SaveRestoreHook
\column{B}{@{}>{\hspre}c<{\hspost}@{}}%
\column{BE}{@{}l@{}}%
\column{4}{@{}>{\hspre}l<{\hspost}@{}}%
\column{6}{@{}>{\hspre}l<{\hspost}@{}}%
\column{E}{@{}>{\hspre}l<{\hspost}@{}}%
\>[4]{}(\Varid{l}\mathbin{;}\Varid{l}_{1})\mathord{.}\Varid{get}\;(\Varid{s}_{1},\Varid{s}_{2}){}\<[E]%
\\
\>[B]{}\mathrel{=}{}\<[BE]%
\>[6]{}\mbox{\commentbegin  definition  \commentend}{}\<[E]%
\\
\>[B]{}\hsindent{4}{}\<[4]%
\>[4]{}\Varid{l}_{1}\mathord{.}\Varid{get}\;(\Varid{l}\mathord{.}\Varid{get}\;(\Varid{s}_{1},\Varid{s}_{2})){}\<[E]%
\\
\>[B]{}\mathrel{=}{}\<[BE]%
\>[6]{}\mbox{\commentbegin  definition of \ensuremath{\Varid{l}\mathord{.}\Varid{get}\mathrel{=}\Varid{fst}}, \ensuremath{\Varid{fst}} commutes with \ensuremath{\Varid{get}}  \commentend}{}\<[E]%
\\
\>[B]{}\hsindent{4}{}\<[4]%
\>[4]{}\Varid{l}_{0}\mathord{.}\Varid{get}\;(\Varid{s}_{1},\Varid{s}_{2}){}\<[E]%
\\
\>[B]{}\mathrel{=}{}\<[BE]%
\>[6]{}\mbox{\commentbegin  reverse reasoning  \commentend}{}\<[E]%
\\
\>[B]{}\hsindent{4}{}\<[4]%
\>[4]{}(\Varid{r}\mathbin{;}\Varid{l}_{2})\mathord{.}\Varid{get}\;(\Varid{s}_{1},\Varid{s}_{2}){}\<[E]%
\ColumnHook
\end{hscode}\resethooks
For \ensuremath{\Varid{put}}, we have:
\begin{hscode}\SaveRestoreHook
\column{B}{@{}>{\hspre}c<{\hspost}@{}}%
\column{BE}{@{}l@{}}%
\column{4}{@{}>{\hspre}l<{\hspost}@{}}%
\column{6}{@{}>{\hspre}l<{\hspost}@{}}%
\column{E}{@{}>{\hspre}l<{\hspost}@{}}%
\>[4]{}(\Varid{l}\mathbin{;}\Varid{l}_{1})\mathord{.}\Varid{put}\;(\Varid{s}_{1},\Varid{s}_{2})\;\Varid{a}{}\<[E]%
\\
\>[B]{}\mathrel{=}{}\<[BE]%
\>[6]{}\mbox{\commentbegin  Definition  \commentend}{}\<[E]%
\\
\>[B]{}\hsindent{4}{}\<[4]%
\>[4]{}\Varid{l}\mathord{.}\Varid{put}\;(\Varid{s}_{1},\Varid{s}_{2})\;(\Varid{l}_{1}\mathord{.}\Varid{put}\;\Varid{s}_{1}\;\Varid{a}){}\<[E]%
\\
\>[B]{}\mathrel{=}{}\<[BE]%
\>[6]{}\mbox{\commentbegin  Definition  \commentend}{}\<[E]%
\\
\>[B]{}\hsindent{4}{}\<[4]%
\>[4]{}\Varid{l}_{0}\mathord{.}\Varid{put}\;(\Varid{s}_{1},\Varid{s}_{2})\;(\Varid{l}_{1}\mathord{.}\Varid{get}\;(\Varid{l}_{1}\mathord{.}\Varid{put}\;\Varid{s}_{1}\;\Varid{a})){}\<[E]%
\\
\>[B]{}\mathrel{=}{}\<[BE]%
\>[6]{}\mbox{\commentbegin  \ensuremath{\mathsf{(PutGet)}} for \ensuremath{\Varid{l}_{1}}  \commentend}{}\<[E]%
\\
\>[B]{}\hsindent{4}{}\<[4]%
\>[4]{}\Varid{l}_{0}\mathord{.}\Varid{put}\;(\Varid{s}_{1},\Varid{s}_{2})\;\Varid{a}{}\<[E]%
\\
\>[B]{}\mathrel{=}{}\<[BE]%
\>[6]{}\mbox{\commentbegin  \ensuremath{\mathsf{(PutGet)}} for \ensuremath{\Varid{l}_{2}}  \commentend}{}\<[E]%
\\
\>[B]{}\hsindent{4}{}\<[4]%
\>[4]{}\Varid{l}_{0}\mathord{.}\Varid{put}\;(\Varid{s}_{1},\Varid{s}_{2})\;(\Varid{l}_{2}\mathord{.}\Varid{get}\;(\Varid{l}_{2}\mathord{.}\Varid{put}\;\Varid{s}_{2}\;\Varid{a})){}\<[E]%
\\
\>[B]{}\mathrel{=}{}\<[BE]%
\>[6]{}\mbox{\commentbegin  Definition \commentend}{}\<[E]%
\\
\>[B]{}\hsindent{4}{}\<[4]%
\>[4]{}\Varid{r}\mathord{.}\Varid{put}\;(\Varid{s}_{1},\Varid{s}_{2})\;(\Varid{l}_{2}\mathord{.}\Varid{put}\;\Varid{s}_{2}\;\Varid{a}){}\<[E]%
\\
\>[B]{}\mathrel{=}{}\<[BE]%
\>[6]{}\mbox{\commentbegin  Definition \commentend}{}\<[E]%
\\
\>[B]{}\hsindent{4}{}\<[4]%
\>[4]{}(\Varid{r}\mathbin{;}\Varid{l}_{2})\mathord{.}\Varid{put}\;(\Varid{s}_{1},\Varid{s}_{2})\;\Varid{a}{}\<[E]%
\ColumnHook
\end{hscode}\resethooks
Finally, for \ensuremath{\Varid{create}} we have:
\begin{hscode}\SaveRestoreHook
\column{B}{@{}>{\hspre}c<{\hspost}@{}}%
\column{BE}{@{}l@{}}%
\column{4}{@{}>{\hspre}l<{\hspost}@{}}%
\column{6}{@{}>{\hspre}l<{\hspost}@{}}%
\column{E}{@{}>{\hspre}l<{\hspost}@{}}%
\>[4]{}(\Varid{l}\mathbin{;}\Varid{l}_{1})\mathord{.}\Varid{create}\;\Varid{a}{}\<[E]%
\\
\>[B]{}\mathrel{=}{}\<[BE]%
\>[6]{}\mbox{\commentbegin  Definition  \commentend}{}\<[E]%
\\
\>[B]{}\hsindent{4}{}\<[4]%
\>[4]{}\Varid{l}\mathord{.}\Varid{create}\;(\Varid{l}_{1}\mathord{.}\Varid{create}\;\Varid{a}){}\<[E]%
\\
\>[B]{}\mathrel{=}{}\<[BE]%
\>[6]{}\mbox{\commentbegin  Definition  \commentend}{}\<[E]%
\\
\>[B]{}\hsindent{4}{}\<[4]%
\>[4]{}\Varid{l}_{0}\mathord{.}\Varid{create}\;(\Varid{l}_{1}\mathord{.}\Varid{get}\;(\Varid{l}_{1}\mathord{.}\Varid{create}\;\Varid{a})){}\<[E]%
\\
\>[B]{}\mathrel{=}{}\<[BE]%
\>[6]{}\mbox{\commentbegin  \ensuremath{\mathsf{(CreateGet)}} for \ensuremath{\Varid{l}_{1}}  \commentend}{}\<[E]%
\\
\>[B]{}\hsindent{4}{}\<[4]%
\>[4]{}\Varid{l}_{0}\mathord{.}\Varid{create}\;\Varid{a}{}\<[E]%
\\
\>[B]{}\mathrel{=}{}\<[BE]%
\>[6]{}\mbox{\commentbegin  \ensuremath{\mathsf{(CreateGet)}} for \ensuremath{\Varid{l}_{2}}  \commentend}{}\<[E]%
\\
\>[B]{}\hsindent{4}{}\<[4]%
\>[4]{}\Varid{l}_{0}\mathord{.}\Varid{create}\;(\Varid{l}_{2}\mathord{.}\Varid{get}\;(\Varid{l}_{2}\mathord{.}\Varid{create}\;\Varid{a})){}\<[E]%
\\
\>[B]{}\mathrel{=}{}\<[BE]%
\>[6]{}\mbox{\commentbegin  Definition \commentend}{}\<[E]%
\\
\>[B]{}\hsindent{4}{}\<[4]%
\>[4]{}\Varid{r}\mathord{.}\Varid{create}\;(\Varid{l}_{2}\mathord{.}\Varid{create}\;\Varid{a}){}\<[E]%
\\
\>[B]{}\mathrel{=}{}\<[BE]%
\>[6]{}\mbox{\commentbegin  Definition \commentend}{}\<[E]%
\\
\>[B]{}\hsindent{4}{}\<[4]%
\>[4]{}(\Varid{r}\mathbin{;}\Varid{l}_{2})\mathord{.}\Varid{create}\;\Varid{a}{}\<[E]%
\ColumnHook
\end{hscode}\resethooks

Next, we show that \ensuremath{\Varid{l}\mathbin{;}\Varid{r}_{1}\mathrel{=}\Varid{r}\mathbin{;}\Varid{r}_{2}}.  
For \ensuremath{\Varid{get}}:
\begin{hscode}\SaveRestoreHook
\column{B}{@{}>{\hspre}c<{\hspost}@{}}%
\column{BE}{@{}l@{}}%
\column{4}{@{}>{\hspre}l<{\hspost}@{}}%
\column{6}{@{}>{\hspre}l<{\hspost}@{}}%
\column{E}{@{}>{\hspre}l<{\hspost}@{}}%
\>[4]{}(\Varid{l}\mathbin{;}\Varid{r}_{1})\mathord{.}\Varid{get}\;(\Varid{s}_{1},\Varid{s}_{2}){}\<[E]%
\\
\>[B]{}\mathrel{=}{}\<[BE]%
\>[6]{}\mbox{\commentbegin  Definition  \commentend}{}\<[E]%
\\
\>[B]{}\hsindent{4}{}\<[4]%
\>[4]{}\Varid{r}_{1}\mathord{.}\Varid{get}\;(\Varid{l}\mathord{.}\Varid{get}\;(\Varid{s}_{1},\Varid{s}_{2})){}\<[E]%
\\
\>[B]{}\mathrel{=}{}\<[BE]%
\>[6]{}\mbox{\commentbegin  definition of \ensuremath{\Varid{l}\mathord{.}\Varid{get}\mathrel{=}\Varid{fst}}, \ensuremath{\Varid{fst}} commutes with \ensuremath{\Varid{r}_{1}\mathord{.}\Varid{get}}  \commentend}{}\<[E]%
\\
\>[B]{}\hsindent{4}{}\<[4]%
\>[4]{}\Varid{r}_{0}\mathord{.}\Varid{get}\;(\Varid{s}_{1},\Varid{s}_{2}){}\<[E]%
\\
\>[B]{}\mathrel{=}{}\<[BE]%
\>[6]{}\mbox{\commentbegin  reverse above reasoning  \commentend}{}\<[E]%
\\
\>[B]{}\hsindent{4}{}\<[4]%
\>[4]{}(\Varid{r}\mathbin{;}\Varid{r}_{2})\mathord{.}\Varid{get}\;(\Varid{s}_{1},\Varid{s}_{2}){}\<[E]%
\ColumnHook
\end{hscode}\resethooks
For \ensuremath{\Varid{put}}, we have:
\begin{hscode}\SaveRestoreHook
\column{B}{@{}>{\hspre}c<{\hspost}@{}}%
\column{BE}{@{}l@{}}%
\column{4}{@{}>{\hspre}l<{\hspost}@{}}%
\column{6}{@{}>{\hspre}l<{\hspost}@{}}%
\column{E}{@{}>{\hspre}l<{\hspost}@{}}%
\>[4]{}(\Varid{l}\mathbin{;}\Varid{r}_{1})\mathord{.}\Varid{put}\;(\Varid{s}_{1},\Varid{s}_{2})\;\Varid{b}{}\<[E]%
\\
\>[B]{}\mathrel{=}{}\<[BE]%
\>[6]{}\mbox{\commentbegin  Definition  \commentend}{}\<[E]%
\\
\>[B]{}\hsindent{4}{}\<[4]%
\>[4]{}\Varid{l}\mathord{.}\Varid{put}\;(\Varid{s}_{1},\Varid{s}_{2})\;(\Varid{r}_{1}\mathord{.}\Varid{put}\;\Varid{s}_{1}\;\Varid{b}){}\<[E]%
\\
\>[B]{}\mathrel{=}{}\<[BE]%
\>[6]{}\mbox{\commentbegin  Definition  \commentend}{}\<[E]%
\\
\>[B]{}\hsindent{4}{}\<[4]%
\>[4]{}\Varid{l}_{0}\mathord{.}\Varid{put}\;(\Varid{s}_{1},\Varid{s}_{2})\;(\Varid{l}_{1}\mathord{.}\Varid{get}\;(\Varid{r}_{1}\mathord{.}\Varid{put}\;\Varid{s}_{1}\;\Varid{b})){}\<[E]%
\\
\>[B]{}\mathrel{=}{}\<[BE]%
\>[6]{}\mbox{\commentbegin  Twist equation  \commentend}{}\<[E]%
\\
\>[B]{}\hsindent{4}{}\<[4]%
\>[4]{}\Varid{l}_{0}\mathord{.}\Varid{put}\;(\Varid{s}_{1},\Varid{s}_{2})\;(\Varid{l}_{2}\mathord{.}\Varid{get}\;(\Varid{r}_{2}\mathord{.}\Varid{put}\;\Varid{s}_{2}\;\Varid{b})){}\<[E]%
\\
\>[B]{}\mathrel{=}{}\<[BE]%
\>[6]{}\mbox{\commentbegin  Definition  \commentend}{}\<[E]%
\\
\>[B]{}\hsindent{4}{}\<[4]%
\>[4]{}\Varid{r}\mathord{.}\Varid{put}\;(\Varid{s}_{1},\Varid{s}_{2})\;(\Varid{r}_{2}\mathord{.}\Varid{put}\;\Varid{s}_{2}\;\Varid{b}){}\<[E]%
\\
\>[B]{}\mathrel{=}{}\<[BE]%
\>[6]{}\mbox{\commentbegin  Definition  \commentend}{}\<[E]%
\\
\>[B]{}\hsindent{4}{}\<[4]%
\>[4]{}(\Varid{r}\mathbin{;}\Varid{r}_{2})\mathord{.}\Varid{put}\;(\Varid{s}_{1},\Varid{s}_{2})\;\Varid{b}{}\<[E]%
\ColumnHook
\end{hscode}\resethooks
Finally, for \ensuremath{\Varid{create}} we have:
\begin{hscode}\SaveRestoreHook
\column{B}{@{}>{\hspre}c<{\hspost}@{}}%
\column{BE}{@{}l@{}}%
\column{4}{@{}>{\hspre}l<{\hspost}@{}}%
\column{6}{@{}>{\hspre}l<{\hspost}@{}}%
\column{E}{@{}>{\hspre}l<{\hspost}@{}}%
\>[4]{}(\Varid{l}\mathbin{;}\Varid{r}_{1})\mathord{.}\Varid{create}\;\Varid{b}{}\<[E]%
\\
\>[B]{}\mathrel{=}{}\<[BE]%
\>[6]{}\mbox{\commentbegin  Definition  \commentend}{}\<[E]%
\\
\>[B]{}\hsindent{4}{}\<[4]%
\>[4]{}\Varid{l}\mathord{.}\Varid{create}\;(\Varid{r}_{1}\mathord{.}\Varid{create}\;\Varid{b}){}\<[E]%
\\
\>[B]{}\mathrel{=}{}\<[BE]%
\>[6]{}\mbox{\commentbegin  Definition  \commentend}{}\<[E]%
\\
\>[B]{}\hsindent{4}{}\<[4]%
\>[4]{}\Varid{l}_{0}\mathord{.}\Varid{create}\;(\Varid{l}_{1}\mathord{.}\Varid{get}\;(\Varid{r}_{1}\mathord{.}\Varid{create}\;\Varid{b})){}\<[E]%
\\
\>[B]{}\mathrel{=}{}\<[BE]%
\>[6]{}\mbox{\commentbegin  Twist equation  \commentend}{}\<[E]%
\\
\>[B]{}\hsindent{4}{}\<[4]%
\>[4]{}\Varid{l}_{0}\mathord{.}\Varid{create}\;(\Varid{l}_{2}\mathord{.}\Varid{get}\;(\Varid{r}_{2}\mathord{.}\Varid{create}\;\Varid{b})){}\<[E]%
\\
\>[B]{}\mathrel{=}{}\<[BE]%
\>[6]{}\mbox{\commentbegin  Definition  \commentend}{}\<[E]%
\\
\>[B]{}\hsindent{4}{}\<[4]%
\>[4]{}\Varid{r}\mathord{.}\Varid{create}\;(\Varid{r}_{2}\mathord{.}\Varid{create}\;\Varid{b}){}\<[E]%
\\
\>[B]{}\mathrel{=}{}\<[BE]%
\>[6]{}\mbox{\commentbegin  Definition  \commentend}{}\<[E]%
\\
\>[B]{}\hsindent{4}{}\<[4]%
\>[4]{}(\Varid{r}\mathbin{;}\Varid{r}_{2})\mathord{.}\Varid{create}\;\Varid{b}{}\<[E]%
\ColumnHook
\end{hscode}\resethooks

We must also show that \ensuremath{\Varid{l}} and \ensuremath{\Varid{r}} are well-behaved full lenses.  
To show that \ensuremath{\Varid{l}} is well-behaved, we proceed as follows.
For \ensuremath{\mathsf{(GetPut)}}:
\begin{hscode}\SaveRestoreHook
\column{B}{@{}>{\hspre}c<{\hspost}@{}}%
\column{BE}{@{}l@{}}%
\column{4}{@{}>{\hspre}l<{\hspost}@{}}%
\column{5}{@{}>{\hspre}l<{\hspost}@{}}%
\column{6}{@{}>{\hspre}l<{\hspost}@{}}%
\column{E}{@{}>{\hspre}l<{\hspost}@{}}%
\>[4]{}\Varid{l}\mathord{.}\Varid{get}\;(\Varid{l}\mathord{.}\Varid{put}\;(\Varid{s}_{1},\Varid{s}_{2})\;\Varid{s}_{1}'){}\<[E]%
\\
\>[B]{}\mathrel{=}{}\<[BE]%
\>[6]{}\mbox{\commentbegin  Definition  \commentend}{}\<[E]%
\\
\>[B]{}\hsindent{4}{}\<[4]%
\>[4]{}\Varid{fst}\;(\Varid{l}_{0}\mathord{.}\Varid{put}\;(\Varid{s}_{1},\Varid{s}_{2})\;(\Varid{l}_{1}\mathord{.}\Varid{get}\;\Varid{s}_{1}')){}\<[E]%
\\
\>[B]{}\mathrel{=}{}\<[BE]%
\>[6]{}\mbox{\commentbegin  \ensuremath{\Varid{fst}} commutes with \ensuremath{\Varid{put}}  \commentend}{}\<[E]%
\\
\>[B]{}\hsindent{4}{}\<[4]%
\>[4]{}\Varid{l}_{1}\mathord{.}\Varid{put}\;\Varid{s}_{1}\;(\Varid{l}_{1}\mathord{.}\Varid{get}\;\Varid{s}_{1}')){}\<[E]%
\\
\>[B]{}\mathrel{=}{}\<[BE]%
\>[6]{}\mbox{\commentbegin  \ensuremath{\mathsf{(GetPut)}} for \ensuremath{\Varid{l}_{1}}  \commentend}{}\<[E]%
\\
\>[B]{}\hsindent{5}{}\<[5]%
\>[5]{}\Varid{s}_{1}'{}\<[E]%
\ColumnHook
\end{hscode}\resethooks
For \ensuremath{\mathsf{(PutGet)}}:
\begin{hscode}\SaveRestoreHook
\column{B}{@{}>{\hspre}c<{\hspost}@{}}%
\column{BE}{@{}l@{}}%
\column{4}{@{}>{\hspre}l<{\hspost}@{}}%
\column{6}{@{}>{\hspre}l<{\hspost}@{}}%
\column{E}{@{}>{\hspre}l<{\hspost}@{}}%
\>[4]{}\Varid{l}\mathord{.}\Varid{put}\;(\Varid{s}_{1},\Varid{s}_{2})\;(\Varid{l}\mathord{.}\Varid{get}\;(\Varid{s}_{1},\Varid{s}_{2})){}\<[E]%
\\
\>[B]{}\mathrel{=}{}\<[BE]%
\>[6]{}\mbox{\commentbegin  Definition  \commentend}{}\<[E]%
\\
\>[B]{}\hsindent{4}{}\<[4]%
\>[4]{}\Varid{l}_{0}\mathord{.}\Varid{put}\;(\Varid{s}_{1},\Varid{s}_{2})\;(\Varid{l}_{1}\mathord{.}\Varid{get}\;\Varid{s}_{1}){}\<[E]%
\\
\>[B]{}\mathrel{=}{}\<[BE]%
\>[6]{}\mbox{\commentbegin  Eta-expansion for pairs  \commentend}{}\<[E]%
\\
\>[B]{}\hsindent{4}{}\<[4]%
\>[4]{}(\Varid{fst}\;(\Varid{l}_{0}\mathord{.}\Varid{put}\;(\Varid{s}_{1},\Varid{s}_{2})\;(\Varid{l}_{1}\mathord{.}\Varid{get}\;\Varid{s}_{1})),\Varid{snd}\;(\Varid{l}_{0}\mathord{.}\Varid{put}\;(\Varid{s}_{1},\Varid{s}_{2})\;(\Varid{l}_{1}\mathord{.}\Varid{get}\;\Varid{s}_{1}))){}\<[E]%
\\
\>[B]{}\mathrel{=}{}\<[BE]%
\>[6]{}\mbox{\commentbegin  \ensuremath{\Varid{fst}}, \ensuremath{\Varid{snd}} commutes with \ensuremath{\Varid{put}}  \commentend}{}\<[E]%
\\
\>[B]{}\hsindent{4}{}\<[4]%
\>[4]{}(\Varid{l}_{1}\mathord{.}\Varid{put}\;\Varid{s}_{1}\;(\Varid{l}_{1}\mathord{.}\Varid{get}\;\Varid{s}_{1}),\Varid{l}_{2}\mathord{.}\Varid{put}\;\Varid{s}_{2}\;(\Varid{l}_{1}\mathord{.}\Varid{get}\;\Varid{s}_{1})){}\<[E]%
\\
\>[B]{}\mathrel{=}{}\<[BE]%
\>[6]{}\mbox{\commentbegin  \ensuremath{\Varid{l}_{1}\mathord{.}\Varid{get}\;\Varid{s}_{1}\mathrel{=}\Varid{l}_{2}\mathord{.}\Varid{get}\;\Varid{s}_{2}}  \commentend}{}\<[E]%
\\
\>[B]{}\hsindent{4}{}\<[4]%
\>[4]{}(\Varid{l}_{1}\mathord{.}\Varid{put}\;\Varid{s}_{1}\;(\Varid{l}_{1}\mathord{.}\Varid{get}\;\Varid{s}_{1}),\Varid{l}_{2}\mathord{.}\Varid{put}\;\Varid{s}_{2}\;(\Varid{l}_{2}\mathord{.}\Varid{get}\;\Varid{s}_{2})){}\<[E]%
\\
\>[B]{}\mathrel{=}{}\<[BE]%
\>[6]{}\mbox{\commentbegin  \ensuremath{\mathsf{(PutGet)}} for \ensuremath{\Varid{l}_{1},\Varid{l}_{2}}  \commentend}{}\<[E]%
\\
\>[B]{}\hsindent{4}{}\<[4]%
\>[4]{}(\Varid{s}_{1},\Varid{s}_{2}){}\<[E]%
\ColumnHook
\end{hscode}\resethooks
For \ensuremath{\mathsf{(CreateGet)}}:
\begin{hscode}\SaveRestoreHook
\column{B}{@{}>{\hspre}c<{\hspost}@{}}%
\column{BE}{@{}l@{}}%
\column{4}{@{}>{\hspre}l<{\hspost}@{}}%
\column{6}{@{}>{\hspre}l<{\hspost}@{}}%
\column{7}{@{}>{\hspre}l<{\hspost}@{}}%
\column{E}{@{}>{\hspre}l<{\hspost}@{}}%
\>[4]{}\Varid{l}\mathord{.}\Varid{create}\;(\Varid{l}\mathord{.}\Varid{get}\;(\Varid{s}_{1},\Varid{s}_{2})){}\<[E]%
\\
\>[B]{}\mathrel{=}{}\<[BE]%
\>[7]{}\mbox{\commentbegin  Definition  \commentend}{}\<[E]%
\\
\>[B]{}\hsindent{4}{}\<[4]%
\>[4]{}\Varid{l}_{0}\mathord{.}\Varid{create}\;(\Varid{l}_{1}\mathord{.}\Varid{get}\;\Varid{s}_{1}){}\<[E]%
\\
\>[B]{}\mathrel{=}{}\<[BE]%
\>[7]{}\mbox{\commentbegin  Eta-expansion for pairs  \commentend}{}\<[E]%
\\
\>[B]{}\hsindent{4}{}\<[4]%
\>[4]{}(\Varid{fst}\;(\Varid{l}_{0}\mathord{.}\Varid{create}\;(\Varid{l}_{1}\mathord{.}\Varid{get}\;\Varid{s}_{1})),\Varid{snd}\;(\Varid{l}_{0}\mathord{.}\Varid{create}\;(\Varid{l}_{1}\mathord{.}\Varid{get}\;\Varid{s}_{1}))){}\<[E]%
\\
\>[B]{}\mathrel{=}{}\<[BE]%
\>[7]{}\mbox{\commentbegin  \ensuremath{\Varid{fst}}, \ensuremath{\Varid{snd}} commutes with \ensuremath{\Varid{put}}  \commentend}{}\<[E]%
\\
\>[B]{}\hsindent{4}{}\<[4]%
\>[4]{}(\Varid{l}_{1}\mathord{.}\Varid{create}\;(\Varid{l}_{1}\mathord{.}\Varid{get}\;\Varid{s}_{1}),\Varid{l}_{1}\mathord{.}\Varid{create}\;(\Varid{l}_{1}\mathord{.}\Varid{get}\;\Varid{s}_{1})){}\<[E]%
\\
\>[B]{}\mathrel{=}{}\<[BE]%
\>[6]{}\mbox{\commentbegin  \ensuremath{\Varid{l}_{1}\mathord{.}\Varid{get}\;\Varid{s}_{1}\mathrel{=}\Varid{l}_{2}\mathord{.}\Varid{get}\;\Varid{s}_{2}}  \commentend}{}\<[E]%
\\
\>[B]{}\hsindent{4}{}\<[4]%
\>[4]{}(\Varid{l}_{1}\mathord{.}\Varid{create}\;(\Varid{l}_{1}\mathord{.}\Varid{get}\;\Varid{s}_{1}),\Varid{l}_{1}\mathord{.}\Varid{create}\;(\Varid{l}_{2}\mathord{.}\Varid{get}\;\Varid{s}_{2})){}\<[E]%
\\
\>[B]{}\mathrel{=}{}\<[BE]%
\>[6]{}\mbox{\commentbegin  \ensuremath{\mathsf{(CreateGet)}}  \commentend}{}\<[E]%
\\
\>[B]{}\hsindent{4}{}\<[4]%
\>[4]{}(\Varid{s}_{1},\Varid{s}_{2}){}\<[E]%
\ColumnHook
\end{hscode}\resethooks
Finally, notice that \ensuremath{\Varid{l}} and \ensuremath{\Varid{r}} are defined symmetrically so
essentially the same reasoning shows \ensuremath{\Varid{r}} is well-behaved.

To conclude, \ensuremath{\Varid{sp}\mathrel{=}(\Varid{l},\Varid{r})} constitutes a span of lenses witnessing that
\ensuremath{\Varid{sp}_{\mathrm{1}}\equiv_{\mathrm{s}} \Varid{sp}_{\mathrm{2}}}.
 \end{proof}

\if 0

\jrcnote{TODO: The following is out of date, and needs to be updated
  if we are keeping these results.}
\restatableTheorem{thm:roundtripping-span-slens1}
\begin{thm:roundtripping-span-slens1}
  If \ensuremath{\Varid{sp}\mathbin{::}\Conid{A}\mathbin{{\reflectbox{$\rightsquigarrow$}}}\Conid{S}\mathbin{{\rightsquigarrow}}\Conid{B}} is well-behaved, then \ensuremath{\Varid{sp}\equiv \Varid{slens2span}\;(\Varid{span2slens}\;\Varid{sp})}.  
\end{thm:roundtripping-span-slens1}
\begin{proof}
  Take \ensuremath{\Varid{sl}\mathrel{=}\Varid{span2slens}\;\Varid{sp}} and \ensuremath{\Varid{sp'}\mathrel{=}\Varid{slens2span}\;\Varid{sl}}.  Observe that their types are:
\begin{hscode}\SaveRestoreHook
\column{B}{@{}>{\hspre}l<{\hspost}@{}}%
\column{E}{@{}>{\hspre}l<{\hspost}@{}}%
\>[B]{}\Varid{sl}\mathbin{::}\Conid{A}\mathbin{\stackrel{\Conid{Maybe}\;\Conid{S}}{\longleftrightarrow}}\Conid{B}{}\<[E]%
\\
\>[B]{}\Varid{sp'}\mathbin{::}\Conid{A}\mathbin{{\reflectbox{$\rightsquigarrow$}}}\Conid{S'}\mathbin{{\rightsquigarrow}}\Conid{B}{}\<[E]%
\ColumnHook
\end{hscode}\resethooks
where \ensuremath{\Conid{S'}} is the set of all consistent triples \ensuremath{(\Varid{a},\Varid{b},\Varid{mc})} such that \ensuremath{\Varid{sl}\mathord{.}\Varid{mput}_\mathrm{R}\;(\Varid{a},\Varid{mc})\mathrel{=}(\Varid{b},\Varid{mc})} and \ensuremath{\Varid{sl}\mathord{.}\Varid{mput}_\mathrm{L}\;(\Varid{b},\Varid{mc})\mathrel{=}(\Varid{a},\Varid{mc})}.  It is easy to see by the definition of \ensuremath{\Varid{span2slens}} that \ensuremath{\Varid{mc}} is of the form \ensuremath{\Conid{Just}\;\Varid{s}} for some \ensuremath{\Varid{s}} in every consistent triple \ensuremath{(\Varid{a},\Varid{b},\Varid{mc})}, and moreover that consistency entails that \ensuremath{\Varid{a}\mathrel{=}\Varid{sp}\mathord{.}\Varid{left}\mathord{.}\Varid{get}\;\Varid{s}} and \ensuremath{\Varid{b}\mathrel{=}\Varid{sp}\mathord{.}\Varid{right}\mathord{.}\Varid{get}\;\Varid{s}}.

Therefore, it suffices to exhibit an isomorphism between \ensuremath{\Conid{S}} and \ensuremath{\Conid{S'}} that maps the operations of \ensuremath{\Varid{sp}} onto those of \ensuremath{\Varid{sp'}}.  We define an isomorphism \ensuremath{\Varid{h}\mathbin{::}\Conid{S}\to \Conid{S'}} on the state spaces as follows:
\begin{hscode}\SaveRestoreHook
\column{B}{@{}>{\hspre}l<{\hspost}@{}}%
\column{E}{@{}>{\hspre}l<{\hspost}@{}}%
\>[B]{}\Varid{h}\;\Varid{s}\mathrel{=}(\Varid{sp}\mathord{.}\Varid{left}\mathord{.}\Varid{get}\;\Varid{s},\Varid{sp}\mathord{.}\Varid{right}\mathord{.}\Varid{get}\;\Varid{s},\Conid{Just}\;\Varid{s}){}\<[E]%
\\
\>[B]{}\Varid{h}^{-1}\;(\Varid{a},\Varid{b},\Conid{Just}\;\Varid{s})\mathrel{=}\Varid{s}{}\<[E]%
\ColumnHook
\end{hscode}\resethooks

It is straightforward (but tedious) to verify that \ensuremath{\Varid{h}\mathbin{;}\Varid{sp}\mathord{.}\Varid{left}\mathrel{=}\Varid{sp'}\mathord{.}\Varid{left}} and \ensuremath{\Varid{h}\mathbin{;}\Varid{sp}\mathord{.}\Varid{right}\mathrel{=}\Varid{sp'}\mathord{.}\Varid{right}}.
\jrcnote{Show one side?}
\end{proof}

\restatableTheorem{thm:roundtripping-span-slens2}
\begin{thm:roundtripping-span-slens2}
  If \ensuremath{\Varid{sl}\mathbin{::}\Conid{SLens}\;\Conid{C}\;\Conid{A}\;\Conid{B}} is well-behaved then \ensuremath{\Varid{sl}\equiv_{\mathrm{sl}} \Varid{span2slens}\;(\Varid{slens2span}\;\Varid{sl})}.
\end{thm:roundtripping-span-slens2}
\begin{proof}
Take  \ensuremath{\Varid{sp}\mathrel{=}\Varid{slens2span}\;\Varid{sl}} and \ensuremath{\Varid{sl'}\mathrel{=}\Varid{bx2slens}\;\Varid{sp}}.  Observe that their types are:
\begin{hscode}\SaveRestoreHook
\column{B}{@{}>{\hspre}l<{\hspost}@{}}%
\column{E}{@{}>{\hspre}l<{\hspost}@{}}%
\>[B]{}\Varid{sp}\mathbin{::}\Conid{A}\mathbin{{\reflectbox{$\rightsquigarrow$}}}\Conid{S}\mathbin{{\rightsquigarrow}}\Conid{B}{}\<[E]%
\\
\>[B]{}\Varid{sl}\mathbin{::}\Conid{A}\mathbin{\stackrel{\Conid{Maybe}\;\Conid{S'}}{\longleftrightarrow}}\Conid{B}{}\<[E]%
\ColumnHook
\end{hscode}\resethooks
where \ensuremath{\Conid{S'}} is the set of consistent triples \ensuremath{(\Varid{a},\Varid{b},\Varid{c})} where \ensuremath{\Varid{sl}\mathord{.}\Varid{put}_\mathrm{R}\;(\Varid{a},\Varid{c})\mathrel{=}(\Varid{b},\Varid{c})} and \ensuremath{\Varid{sl}\mathord{.}\Varid{put}_\mathrm{L}\;(\Varid{b},\Varid{c})\mathrel{=}(\Varid{a},\Varid{c})}.

Towards showing that \ensuremath{\Varid{sl}} and \ensuremath{\Varid{sl'}} are equivalent (according to
Definition~\ref{def:hpw-equiv}), we need a relation on the state
spaces \ensuremath{\Conid{S}} and \ensuremath{\Conid{Maybe}\;\Conid{S'}}. Define relation \ensuremath{\Varid{r}} as \ensuremath{\{\mskip1.5mu (\Varid{sl}\mathord{.}\Varid{missing},\Conid{Nothing})\mskip1.5mu\}\cup\{\mskip1.5mu (\Varid{s},\Conid{Just}\;(\Varid{a},\Varid{b},\Varid{s}))\mid (\Varid{a},\Varid{b},\Varid{s})\in\Conid{S'}\mskip1.5mu\}}.

We now proceed to check the conditions on \ensuremath{\Varid{r}} needed to conclude \ensuremath{\Varid{sl}\equiv_{\mathrm{sl}} \Varid{sl'}}.  First, for part (a), note that \ensuremath{\Varid{sl'}\mathord{.}\Varid{missing}\mathrel{=}\Conid{Nothing}}, so \ensuremath{(\Varid{sl}\mathord{.}\Varid{missing},\Varid{sl'}\mathord{.}\Varid{missing})\in\Varid{r}}.

Next, for part (b), we wish to show that \ensuremath{\Varid{sl}\mathord{.}\Varid{put}_\mathrm{R} \sim_{\Varid{r}} \Varid{sl'}\mathord{.}\Varid{put}_\mathrm{R}}.
Suppose that \ensuremath{(\Varid{s}_{1},\Varid{s}_{2})\in\Varid{r}}, and let \ensuremath{\Varid{x}} be given. 
Let
\begin{hscode}\SaveRestoreHook
\column{B}{@{}>{\hspre}l<{\hspost}@{}}%
\column{11}{@{}>{\hspre}l<{\hspost}@{}}%
\column{E}{@{}>{\hspre}l<{\hspost}@{}}%
\>[B]{}(\Varid{y},\Varid{s}_{1}'){}\<[11]%
\>[11]{}\mathrel{=}\Varid{sl}\mathord{.}\Varid{put}_\mathrm{R}\;(\Varid{x},\Varid{s}_{1}){}\<[E]%
\\
\>[B]{}(\Varid{y'},\Varid{s}_{2}'){}\<[11]%
\>[11]{}\mathrel{=}\Varid{sl'}\mathord{.}\Varid{put}_\mathrm{R}\;(\Varid{x},\Varid{s}_{2}){}\<[E]%
\ColumnHook
\end{hscode}\resethooks
We need to show that \ensuremath{\Varid{y}\mathrel{=}\Varid{y'}} and \ensuremath{(\Varid{s}_{1}',\Varid{s}_{2}')\in\Varid{r}}.  There are two cases: either \ensuremath{\Varid{s}_{2}\mathrel{=}\Conid{Nothing}} or \ensuremath{\Varid{s}_{2}\mathrel{=}\Conid{Just}\;\Varid{s}}.

Case \ensuremath{\Varid{s}_{2}\mathrel{=}\Conid{Nothing}}: \jrcnote{TODO: Redo this?}
We first simplify as follows:
\begin{hscode}\SaveRestoreHook
\column{B}{@{}>{\hspre}c<{\hspost}@{}}%
\column{BE}{@{}l@{}}%
\column{4}{@{}>{\hspre}l<{\hspost}@{}}%
\column{6}{@{}>{\hspre}l<{\hspost}@{}}%
\column{9}{@{}>{\hspre}l<{\hspost}@{}}%
\column{42}{@{}>{\hspre}l<{\hspost}@{}}%
\column{E}{@{}>{\hspre}l<{\hspost}@{}}%
\>[4]{}\Varid{sl'}\mathord{.}\Varid{put}_\mathrm{R}\;(\Varid{x},\Conid{Nothing}){}\<[E]%
\\
\>[B]{}\mathrel{=}{}\<[BE]%
\>[6]{}\mbox{\commentbegin  Definition  \commentend}{}\<[E]%
\\
\>[B]{}\hsindent{4}{}\<[4]%
\>[4]{}\mathbf{let}\;{}\<[9]%
\>[9]{}(\Varid{b'},\Varid{s'})\mathrel{=}bx\mathord{.}\Varid{igetR}\;(bx\mathord{.}\Varid{initL}\;\Varid{x})\;\mathbf{in}\;(\Varid{b'},\Conid{Just}\;\Varid{s'}){}\<[E]%
\\
\>[B]{}\mathrel{=}{}\<[BE]%
\>[6]{}\mbox{\commentbegin  definition of \ensuremath{bx\mathord{.}\Varid{igetR}}  \commentend}{}\<[E]%
\\
\>[B]{}\hsindent{4}{}\<[4]%
\>[4]{}\mathbf{let}\;(\Varid{b'},\Varid{s'})\mathrel{=}\Varid{gets}\;\Varid{snd3}\;(bx\mathord{.}\Varid{initL}\;\Varid{x})\;\mathbf{in}\;(\Varid{b'},\Conid{Just}\;\Varid{s'}){}\<[E]%
\\
\>[B]{}\mathrel{=}{}\<[BE]%
\>[6]{}\mbox{\commentbegin  definition of \ensuremath{\Varid{initL}}  \commentend}{}\<[E]%
\\
\>[B]{}\hsindent{4}{}\<[4]%
\>[4]{}\mathbf{let}\;(\Varid{b'},\Varid{s'})\mathrel{=}\Varid{gets}\;\Varid{snd3}\;({}\<[42]%
\>[42]{}\mathbf{let}\;(\Varid{b},\Varid{c})\mathrel{=}\Varid{sl}\mathord{.}\Varid{put}_\mathrm{R}\;(\Varid{x},\Varid{sl}\mathord{.}\Varid{missing})\;\mathbf{in}\;(\Varid{x},\Varid{b},\Varid{c})){}\<[E]%
\\
\>[B]{}\hsindent{4}{}\<[4]%
\>[4]{}\mathbf{in}\;(\Varid{b'},\Conid{Just}\;\Varid{s'}){}\<[E]%
\\
\>[B]{}\mathrel{=}{}\<[BE]%
\>[6]{}\mbox{\commentbegin  rearrange \ensuremath{\mathbf{let}}  \commentend}{}\<[E]%
\\
\>[B]{}\hsindent{4}{}\<[4]%
\>[4]{}\mathbf{let}\;{}\<[9]%
\>[9]{}(\Varid{b},\Varid{c})\mathrel{=}\Varid{sl}\mathord{.}\Varid{put}_\mathrm{R}\;(\Varid{x},\Varid{sl}\mathord{.}\Varid{missing}){}\<[E]%
\\
\>[9]{}(\Varid{b'},\Varid{s'})\mathrel{=}\Varid{gets}\;\Varid{snd3}\;(\Varid{x},\Varid{b},\Varid{c}){}\<[E]%
\\
\>[B]{}\hsindent{4}{}\<[4]%
\>[4]{}\mathbf{in}\;(\Varid{b'},\Conid{Just}\;\Varid{s'}){}\<[E]%
\\
\>[B]{}\mathrel{=}{}\<[BE]%
\>[6]{}\mbox{\commentbegin  simplify \ensuremath{\Varid{gets}\;\Varid{snd3}\;(\Varid{x},\Varid{b},\Varid{c})\mathrel{=}(\Varid{b},(\Varid{x},\Varid{b},\Varid{c}))}  \commentend}{}\<[E]%
\\
\>[B]{}\hsindent{4}{}\<[4]%
\>[4]{}\mathbf{let}\;{}\<[9]%
\>[9]{}(\Varid{b},\Varid{c})\mathrel{=}\Varid{sl}\mathord{.}\Varid{put}_\mathrm{R}\;(\Varid{x},\Varid{sl}\mathord{.}\Varid{missing}){}\<[E]%
\\
\>[9]{}(\Varid{b'},\Varid{s'})\mathrel{=}(\Varid{b},(\Varid{x},\Varid{b},\Varid{c})){}\<[E]%
\\
\>[B]{}\hsindent{4}{}\<[4]%
\>[4]{}\mathbf{in}\;(\Varid{b'},\Conid{Just}\;\Varid{s'}){}\<[E]%
\\
\>[B]{}\mathrel{=}{}\<[BE]%
\>[6]{}\mbox{\commentbegin  simplify  \commentend}{}\<[E]%
\\
\>[B]{}\hsindent{4}{}\<[4]%
\>[4]{}\mathbf{let}\;{}\<[9]%
\>[9]{}(\Varid{b},\Varid{c})\mathrel{=}\Varid{sl}\mathord{.}\Varid{put}_\mathrm{R}\;(\Varid{x},\Varid{sl}\mathord{.}\Varid{missing})\;\mathbf{in}\;(\Varid{b},\Conid{Just}\;(\Varid{x},\Varid{b},\Varid{c})){}\<[E]%
\\
\>[B]{}\mathrel{=}{}\<[BE]%
\>[6]{}\mbox{\commentbegin  assumption  \commentend}{}\<[E]%
\\
\>[B]{}\hsindent{4}{}\<[4]%
\>[4]{}\mathbf{let}\;(\Varid{b},\Varid{c})\mathrel{=}(\Varid{y},\Varid{s}_{1}')\;\mathbf{in}\;(\Varid{b},\Conid{Just}\;(\Varid{x},\Varid{b},\Varid{c})){}\<[E]%
\\
\>[B]{}\mathrel{=}{}\<[BE]%
\>[6]{}\mbox{\commentbegin  simplify  \commentend}{}\<[E]%
\\
\>[B]{}\hsindent{4}{}\<[4]%
\>[4]{}(\Varid{y},\Conid{Just}\;(\Varid{x},\Varid{y},\Varid{s}_{1}')){}\<[E]%
\ColumnHook
\end{hscode}\resethooks
Thus, \ensuremath{\Varid{y'}\mathrel{=}\Varid{y}} and \ensuremath{\Varid{s}_{2}'\mathrel{=}\Conid{Just}\;(\Varid{x},\Varid{y},\Varid{s}_{1}')}.  Moreover, \ensuremath{(\Varid{s}_{1}',\Conid{Just}\;(\Varid{x},\Varid{b},\Varid{s}_{1}'))\in\Varid{r}}.

Case \ensuremath{\Varid{s}_{2}\mathrel{=}\Conid{Just}\;\Varid{s}}. \jrcnote{TODO: Redo?} We first simplify as follows:
\begin{hscode}\SaveRestoreHook
\column{B}{@{}>{\hspre}c<{\hspost}@{}}%
\column{BE}{@{}l@{}}%
\column{4}{@{}>{\hspre}l<{\hspost}@{}}%
\column{6}{@{}>{\hspre}l<{\hspost}@{}}%
\column{9}{@{}>{\hspre}l<{\hspost}@{}}%
\column{33}{@{}>{\hspre}l<{\hspost}@{}}%
\column{34}{@{}>{\hspre}l<{\hspost}@{}}%
\column{E}{@{}>{\hspre}l<{\hspost}@{}}%
\>[4]{}\Varid{sl'}\mathord{.}\Varid{put}_\mathrm{R}\;(\Varid{x},\Conid{Just}\;\Varid{s}){}\<[E]%
\\
\>[B]{}\mathrel{=}{}\<[BE]%
\>[6]{}\mbox{\commentbegin  Definition  \commentend}{}\<[E]%
\\
\>[B]{}\hsindent{4}{}\<[4]%
\>[4]{}\mathbf{let}\;(\Varid{b},\Varid{s'})\mathrel{=}\mathbf{do}\;\{\mskip1.5mu bx\mathord{.}\Varid{isetL}\;\Varid{x};bx\mathord{.}\Varid{igetR}\mskip1.5mu\}\;\Varid{s}\;\mathbf{in}\;(\Varid{b},\Conid{Just}\;\Varid{s'}){}\<[E]%
\\
\>[B]{}\mathrel{=}{}\<[BE]%
\>[6]{}\mbox{\commentbegin  Definition  \commentend}{}\<[E]%
\\
\>[B]{}\hsindent{4}{}\<[4]%
\>[4]{}\mathbf{let}\;(\Varid{b},\Varid{s'})\mathrel{=}\mathbf{do}\;\{\mskip1.5mu {}\<[33]%
\>[33]{}(\Varid{a},\Varid{b},\Varid{c})\leftarrow \Varid{get};{}\<[E]%
\\
\>[33]{}\mathbf{let}\;(\Varid{b'},\Varid{c'})\mathrel{=}\Varid{sl}\mathord{.}\Varid{put}_\mathrm{R}\;(\Varid{x},\Varid{c});{}\<[E]%
\\
\>[33]{}\Varid{set}\;(\Varid{x},\Varid{b'},\Varid{c'});{}\<[E]%
\\
\>[33]{}(\Varid{a''},\Varid{b''},\Varid{c''})\leftarrow \Varid{get};\Varid{return}\;\Varid{b''}\mskip1.5mu\}\;\Varid{s}{}\<[E]%
\\
\>[B]{}\hsindent{4}{}\<[4]%
\>[4]{}\mathbf{in}\;(\Varid{b},\Conid{Just}\;\Varid{s'}){}\<[E]%
\\
\>[B]{}\mathrel{=}{}\<[BE]%
\>[6]{}\mbox{\commentbegin  \ensuremath{\Varid{labelSG}}  \commentend}{}\<[E]%
\\
\>[B]{}\hsindent{4}{}\<[4]%
\>[4]{}\mathbf{let}\;(\Varid{b},\Varid{s'})\mathrel{=}\mathbf{do}\;\{\mskip1.5mu {}\<[33]%
\>[33]{}(\Varid{a},\Varid{b},\Varid{c})\leftarrow \Varid{get};{}\<[E]%
\\
\>[33]{}\mathbf{let}\;(\Varid{b'},\Varid{c'})\mathrel{=}\Varid{sl}\mathord{.}\Varid{put}_\mathrm{R}\;(\Varid{x},\Varid{c});{}\<[E]%
\\
\>[33]{}\Varid{set}\;(\Varid{x},\Varid{b'},\Varid{c'});\Varid{return}\;\Varid{b'}\mskip1.5mu\}\;\Varid{s}{}\<[E]%
\\
\>[B]{}\hsindent{4}{}\<[4]%
\>[4]{}\mathbf{in}\;(\Varid{b},\Conid{Just}\;\Varid{s'}){}\<[E]%
\\
\>[B]{}\mathrel{=}{}\<[BE]%
\>[6]{}\mbox{\commentbegin  \ensuremath{\Varid{s}\mathrel{=}(\Varid{a}_{0},\Varid{b}_{0},\Varid{c}_{0})}  \commentend}{}\<[E]%
\\
\>[B]{}\hsindent{4}{}\<[4]%
\>[4]{}\mathbf{let}\;(\Varid{b},\Varid{s'})\mathrel{=}\mathbf{do}\;\{\mskip1.5mu {}\<[33]%
\>[33]{}(\Varid{a},\Varid{b},\Varid{c})\leftarrow \Varid{get};{}\<[E]%
\\
\>[33]{}\mathbf{let}\;(\Varid{b'},\Varid{c'})\mathrel{=}\Varid{sl}\mathord{.}\Varid{put}_\mathrm{R}\;(\Varid{x},\Varid{c});{}\<[E]%
\\
\>[33]{}\Varid{set}\;(\Varid{x},\Varid{b'},\Varid{c'});\Varid{return}\;\Varid{b'}\mskip1.5mu\}\;(\Varid{a}_{0},\Varid{b}_{0},\Varid{c}_{0}){}\<[E]%
\\
\>[B]{}\hsindent{4}{}\<[4]%
\>[4]{}\mathbf{in}\;(\Varid{b},\Conid{Just}\;\Varid{s'}){}\<[E]%
\\
\>[B]{}\mathrel{=}{}\<[BE]%
\>[6]{}\mbox{\commentbegin  definition of \ensuremath{\Varid{get}}  \commentend}{}\<[E]%
\\
\>[B]{}\hsindent{4}{}\<[4]%
\>[4]{}\mathbf{let}\;(\Varid{b},\Varid{s'})\leftarrow \mathbf{do}\;\{\mskip1.5mu {}\<[34]%
\>[34]{}\mathbf{let}\;(\Varid{b'},\Varid{c'})\mathrel{=}\Varid{sl}\mathord{.}\Varid{put}_\mathrm{R}\;(\Varid{x},\Varid{c}_{0});{}\<[E]%
\\
\>[34]{}\Varid{set}\;(\Varid{x},\Varid{b'},\Varid{c'});\Varid{return}\;\Varid{b'}\mskip1.5mu\}\;(\Varid{a}_{0},\Varid{b}_{0},\Varid{c}_{0}){}\<[E]%
\\
\>[B]{}\hsindent{4}{}\<[4]%
\>[4]{}\mathbf{in}\;(\Varid{b},\Conid{Just}\;\Varid{s'}){}\<[E]%
\\
\>[B]{}\mathrel{=}{}\<[BE]%
\>[6]{}\mbox{\commentbegin  lift \ensuremath{\mathbf{let}} out of \ensuremath{\mathbf{do}}  \commentend}{}\<[E]%
\\
\>[B]{}\hsindent{4}{}\<[4]%
\>[4]{}\mathbf{let}\;{}\<[9]%
\>[9]{}(\Varid{b'},\Varid{c'})\mathrel{=}\Varid{sl}\mathord{.}\Varid{put}_\mathrm{R}\;(\Varid{x},\Varid{c}_{0}){}\<[E]%
\\
\>[9]{}(\Varid{b},\Varid{s'})\mathrel{=}\mathbf{do}\;\{\mskip1.5mu {}\<[34]%
\>[34]{}\Varid{set}\;(\Varid{x},\Varid{b'},\Varid{c'});\Varid{return}\;\Varid{b'}\mskip1.5mu\}\;(\Varid{a}_{0},\Varid{b}_{0},\Varid{c}_{0}){}\<[E]%
\\
\>[B]{}\hsindent{4}{}\<[4]%
\>[4]{}\mathbf{in}\;(\Varid{b},\Conid{Just}\;\Varid{s'}){}\<[E]%
\\
\>[B]{}\mathrel{=}{}\<[BE]%
\>[6]{}\mbox{\commentbegin  definition of \ensuremath{\Varid{set}}  \commentend}{}\<[E]%
\\
\>[B]{}\hsindent{4}{}\<[4]%
\>[4]{}\mathbf{let}\;{}\<[9]%
\>[9]{}(\Varid{b'},\Varid{c'})\mathrel{=}\Varid{sl}\mathord{.}\Varid{put}_\mathrm{R}\;(\Varid{x},\Varid{c}_{0}){}\<[E]%
\\
\>[9]{}(\Varid{b},\Varid{s'})\mathrel{=}\mathbf{do}\;\{\mskip1.5mu {}\<[34]%
\>[34]{}\Varid{return}\;\Varid{b'}\mskip1.5mu\}\;(\Varid{x},\Varid{b'},\Varid{c'})\;\mathbf{in}\;(\Varid{b},\Conid{Just}\;\Varid{s'}){}\<[E]%
\\
\>[B]{}\mathrel{=}{}\<[BE]%
\>[6]{}\mbox{\commentbegin  definition of \ensuremath{\Varid{return}}  \commentend}{}\<[E]%
\\
\>[B]{}\hsindent{4}{}\<[4]%
\>[4]{}\mathbf{let}\;{}\<[9]%
\>[9]{}(\Varid{b'},\Varid{c'})\mathrel{=}\Varid{sl}\mathord{.}\Varid{put}_\mathrm{R}\;(\Varid{x},\Varid{c}_{0}){}\<[E]%
\\
\>[9]{}(\Varid{b},\Varid{s'})\mathrel{=}\Varid{return}\;(\Varid{b'},(\Varid{x},\Varid{b'},\Varid{c'}))\;\mathbf{in}\;\Varid{return}\;(\Varid{b},\Conid{Just}\;\Varid{s'}){}\<[E]%
\\
\>[B]{}\mathrel{=}{}\<[BE]%
\>[6]{}\mbox{\commentbegin  inline \ensuremath{\mathbf{let}}  \commentend}{}\<[E]%
\\
\>[B]{}\hsindent{4}{}\<[4]%
\>[4]{}\mathbf{let}\;(\Varid{b'},\Varid{c'})\mathrel{=}\Varid{sl}\mathord{.}\Varid{put}_\mathrm{R}\;(\Varid{x},\Varid{c}_{0})\;\mathbf{in}\;\Varid{return}\;(\Varid{b'},\Conid{Just}\;(\Varid{x},\Varid{b'},\Varid{c'})){}\<[E]%
\\
\>[B]{}\mathrel{=}{}\<[BE]%
\>[6]{}\mbox{\commentbegin  assumption  \commentend}{}\<[E]%
\\
\>[B]{}\hsindent{4}{}\<[4]%
\>[4]{}\mathbf{let}\;(\Varid{b'},\Varid{c'})\mathrel{=}(\Varid{y},\Varid{s}_{1}')\;\mathbf{in}\;(\Varid{b'},\Conid{Just}\;(\Varid{x},\Varid{b'},\Varid{c'})){}\<[E]%
\\
\>[B]{}\mathrel{=}{}\<[BE]%
\>[6]{}\mbox{\commentbegin  inline \ensuremath{\mathbf{let}}  \commentend}{}\<[E]%
\\
\>[B]{}\hsindent{4}{}\<[4]%
\>[4]{}(\Varid{y},\Conid{Just}\;(\Varid{x},\Varid{y},\Varid{s}_{1}')){}\<[E]%
\ColumnHook
\end{hscode}\resethooks
Thus, \ensuremath{\Varid{y'}\mathrel{=}\Varid{y}} and \ensuremath{\Varid{s}_{2}\mathrel{=}\Conid{Just}\;(\Varid{x},\Varid{y},\Varid{s}_{1}')}.  Moreover, \ensuremath{(\Varid{s}_{1}',\Conid{Just}\;(\Varid{x},\Varid{y},\Varid{s}_{1}'))\in\Varid{r}}.  
Therefore, in either case \ensuremath{\Varid{sl}\mathord{.}\Varid{put}_\mathrm{R} \sim_{\Varid{r}} \Varid{sl'}\mathord{.}\Varid{put}_\mathrm{R}} holds.  
The proof of part (c), that \ensuremath{\Varid{sl}\mathord{.}\Varid{put}_\mathrm{L} \sim_{\Varid{r}} \Varid{sl'}\mathord{.}\Varid{put}_\mathrm{L}} holds, is similar, so we conclude that \ensuremath{\Varid{sl}\equiv \Varid{sl'}}.
\end{proof}
\fi

\end{document}